\def\*#1{\mathbf{#1}}
\def\+#1{\mathbb{#1}}
\newcommand{\total}{\mathrm{total}}
\newcommand{\cum}{\mathrm{cum}}
\newcommand{\s}{\mathrm{s}}
\newcommand{\RNum}[1]{\uppercase\expandafter{\romannumeral #1\relax}}
\newcommand{\gate}{\mathrm{gate}}
\newcommand{\BBigg}{\bBigg@{3}}
\newcommand{\vast}{\bBigg@{4}}
\newcommand{\Vast}{\bBigg@{5}}
\newcommand{\dem}{\mathrm{dem}}
\newcommand*\circled[1]{\tikz[baseline=(char.base)]{
    \node[shape=circle, draw, inner sep=1pt, 
        minimum height={\f@size*1.6},] (char) {\vphantom{WAH1g}#1};}}
\let\oldsection\section
\renewcommand{\section}{
\setcounter{equation}{0}
  \renewcommand{\theequation}{\thesection.\arabic{equation}}
  \oldsection}
\tikzset{brace/.style={decorate, decoration={brace}},
	brace mirrored/.style={decorate, decoration={brace,mirror}},
}
\newcounter{brace}
\newcounter{arrow}
\newcommand{\R}{\mathbb{R}} 	
\renewcommand{\P}{\mathbf{P}} 	
\newcommand{\bias}{\operatorname{Bias}}   
\newcommand{\var}{\operatorname{Var}}   
\newcommand{\Var}{\operatorname{Var}}   
\newcommand{\cov}{\operatorname{Cov}} 	
\newtheorem{condition}{Condition}
\definecolor{Red}{rgb}{1,1,1}
\definecolor{Blue}{rgb}{0,0,1}
\definecolor{Olive}{rgb}{0.41,0.55,0.13}
\definecolor{Green}{rgb}{0,1,0}
\definecolor{MGreen}{rgb}{0,0.8,0}
\definecolor{DGreen}{rgb}{0,0.55,0}
\definecolor{Yellow}{rgb}{1,1,0}
\definecolor{Cyan}{rgb}{0,1,1}
\definecolor{Magenta}{rgb}{1,0,1}
\definecolor{Orange}{rgb}{1,.5,0}
\definecolor{Violet}{rgb}{.5,0,.5}
\definecolor{Purple}{rgb}{.75,0,.25}
\definecolor{Brown}{rgb}{.75,.5,.25}
\definecolor{Grey}{rgb}{.5,.5,.5}
\definecolor{Pink}{rgb}{1,0,1}
\definecolor{DBrown}{rgb}{.5,.34,.16}
\definecolor{Black}{rgb}{0,0,0}
\def\grey{\color{Grey}}
\def\cmtrm#1{{\grey{}}}
\def\gate{\mathrm{gate}}
\def\total{\mathrm{gate}}
\def\inst{\mathrm{inst}}
\def\co{\mathrm{co}}
\def\simul{\mathrm{simul}}
\definecolor{MyTriangle}{RGB}{164,212,255}
\definecolor{MyArrow}{RGB}{84,106,160}
\begin{document}
    \RUNAUTHOR{}
	\RUNTITLE{Data-Driven Switchback Experiments}
 \TITLE{Data-Driven Switchback Experiments: \\ Theoretical Tradeoffs and Empirical Bayes Designs\footnote{The preliminary version of this paper is circulated under the title ``Bias-variance tradeoffs for designing simultaneous temporal experiments.'' This version has more general theoretical results, a new design approach, and an empirical application. We would like to thank Susan Athey, Mohsen Bayati, Haoge Chang, Dean Eckles, Adam Glynn, and Ramesh Johari for their useful comments and suggestions.}
}

    \ARTICLEAUTHORS{
    \AUTHOR{Ruoxuan Xiong}
		\AFF{Department of Quantitative Theory and Methods, Emory University, \EMAIL{ruoxuan.xiong@emory.edu}}
    \AUTHOR{Alex Chin}
		\AFF{Motif Analytics, \EMAIL{alexchin12@gmail.com}}
    \AUTHOR{Sean J. Taylor}
		\AFF{Motif Analytics, \EMAIL{seanjtaylor@gmail.com}}
    }
	
	\ABSTRACT{
            We study the design and analysis of switchback experiments conducted on a single aggregate unit. The design problem is to partition the continuous time space into intervals and switch treatments between intervals, in order to minimize the estimation error of the treatment effect. We show that the estimation error depends on four factors: carryover effects, periodicity, serially correlated outcomes, and impacts from simultaneous experiments. We derive a rigorous bias-variance decomposition and show the tradeoffs of the estimation error from these factors. The decomposition provides three new insights in choosing a design: First, balancing the periodicity between treated and control intervals reduces the variance; second, switching less frequently reduces the bias from carryover effects while increasing the variance from correlated outcomes, and vice versa; third, randomizing interval start and end points reduces both bias and variance from simultaneous experiments. Combining these insights, we propose a new empirical Bayes design approach. This approach uses prior data and experiments for designing future experiments. We illustrate this approach using real data from a ride-sharing platform, yielding a design that reduces MSE by 33\% compared to the status quo design used on the platform.

	}
 \KEYWORDS{Time-Based Experiment, Carryover Effect, Simultaneous Intervention, Treatment Effect Estimation, Ride-Sharing Platform}

	\maketitle

    \section{Introduction}

Experimentation has become an increasingly popular and effective tool for testing and improving social and business policy in digitally-mediated economic and social settings. However, the scale and complexity of modern digital applications have given rise to scientific and statistical challenges for the design and analysis of experiments. 

The leading example considered in this paper is a ride-hailing platform where a product team would like to measure the effects of their product (e.g., matching or pricing algorithm) changes through an experiment. The experiment is run on a geographically determined market for two weeks. The product team chooses an experimental design ex-ante that determines when the current and test product versions are used during the experiment.

Such a product change may affect users' behavior in ways that create interference and alter outcomes for other users on both the rider and driver side of the marketplace. To mitigate user-level interference, companies often aggregate all users in a market into one unit and employ switchback designs on this unit.\footnote{See examples in Amazon \citep{masoero2023efficient,Cooprider2023amazon}, DoorDash \citep{Kastelman2018switchback}, Lyft \citep{chamandy2016experiment}, Tubi~\citep{silbert2022switchback}, LinkedIn and Netflix \citep{bojinov2020avoid}. } Prior to the recent applications in marketplaces, switchback designs were originally used in agriculture \citep{cochran1941double} and medicine \citep{mirza2017history}. 
These designs randomly switch between treatment and control for the same unit over time \citep{chamandy2016experiment}.Post experiment, a quantity we call the global average treatment effect (GATE) is commonly estimated. The GATE measures the difference in average outcomes over users and time between when the product change is deployed indefinitely (global treatment) versus when it is absent (global control).

Precise estimation of GATE is important in deciding whether to launch the product change indefinitely. The precision can be improved by using a better switchback design. Prior work (\cite{bojinov2020design,hu2022switchback} among others) has studied the design problem concerning the \textit{carryover effects}, which is a major source of the estimation error. Carryover effects measure the impact of past interventions on future outcomes. They are nonzero when interventions take time to change the marketplace to a new equilibrium state where the globally treated outcomes can be observed.\footnote{Carryover effects can be viewed as interference in the temporal dimension, which is distinct from interference across users.} To reduce the estimation error from carryover effects, switching less frequently can help.

In this paper, we show that the estimation error more generally depends on four factors: \emph{carryover effects} considered in prior work, \textit{periodicity}, \textit{correlated outcomes}, and \textit{simultaneous interventions}, as illustrated in Figure \ref{fig:four-factors}. Specifically, periodicity arises from the significant variation in rider demand and driver availability over time in a day and days in a week. Outcomes close in time are correlated due to weather, traffic, or other external factors such as supply and demand shocks. Other product teams may test other product changes simultaneously in the same market, complicating the measurement of the marginal effect of each one.

 \begin{figure}[t!]
	\centering
  \includegraphics[width=1\linewidth]{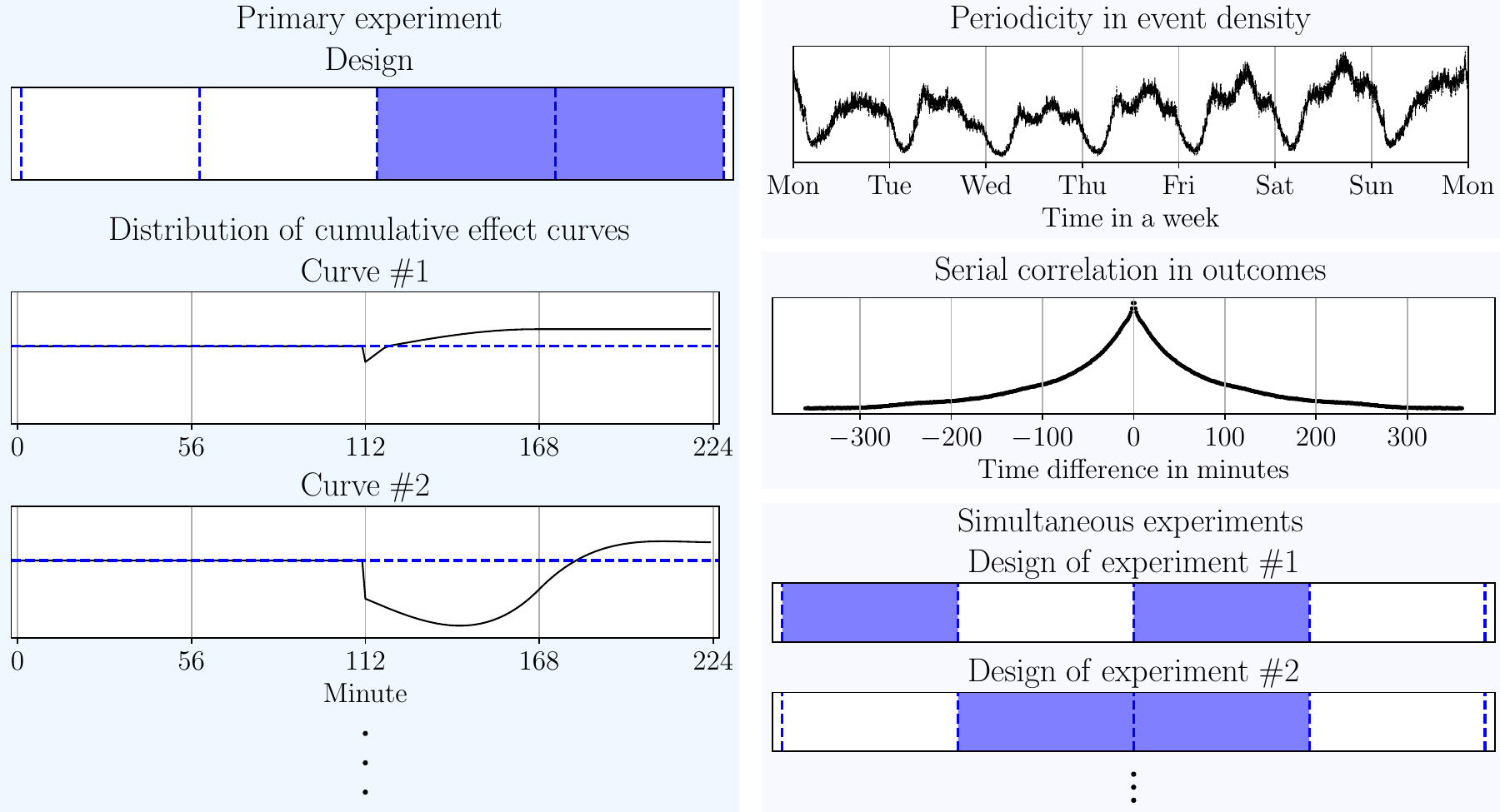}
	\caption{Illustration of four factors that affect the estimation error of GATE. This figure is generated using the data on a ride-sharing platform. In the illustration of (toy) switchback designs for both primary and simultaneous experiments, dash lines are switching points, and treated intervals are shaded. Each interval has a fixed length of $56$ minutes. There may be one or more experiments running simultaneously with the primary experiment. In the illustration of cumulative effect curves (CECs), the horizontal line is at the value of $0$. The CEC of the primary experiment is unknown ex-ante, but we can estimate a distribution of CECs from prior experiments. In the illustration of periodicity, the event density for the time in a week is shown, where the event can be a rider opening the app and checking the price. More illustrations of the periodic patterns are shown in Section \ref{sec:empirical}. In the illustration of serially correlated outcomes, the serial correlation in outcomes decreases with the absolute value of the time difference between two outcomes. 
 }
	\label{fig:four-factors}
\end{figure}

\subsection{Summary of Contributions}

We study the design and analysis of switchback experiments, accounting for all four factors. While we use the ride-hailing platform as the leading example, our work also applies to other contexts where some or all of these factors exist. 

For the design, we propose an empirical Bayes approach that uses knowledge from prior experiments to inform the design of new experiments. 
We illustrate this approach through a case study of a large corpus of prior experiments run in various geographically determined markets on a ride-sharing platform. 
This empirical Bayes approach first estimates an empirical distribution of cumulative effect curves (CECs), where each one is estimated from one experiment-market pair. 
The cumulative effect, as a measurement of carryover effects, is the total effect from current and past treatments,\footnote{Formally, the cumulative effect is the sum of instantaneous effect (effect of current treatment) and carryover effects (effects of past treatments).} which converges to GATE as the treatment duration grows to infinity. The shape of CEC shows the relationship between cumulative effect and treatment duration. In our case study, most CECs are non-monotonic, with 68\% of CECs changing signs as the treatment duration changes. See two examples in Figure \ref{fig:four-factors}. Non-monotonicity and changing signs can occur, for example, when testing a new surge pricing algorithm. The surge pricing may immediately reduce the demand and negatively affect the outcome (e.g., conversion rate). However, surge pricing attracts more drivers, gradually increases the supply, reduces the ETA, and then positively affects the outcome. With a non-monotonic curve similar to Curve \#2 in Figure \ref{fig:four-factors}, the estimated GATE can have larger errors from carryover effects than that with a curve like Curve \#1.

This empirical Bayes approach then chooses a switchback design by comparing the performance of candidate designs in synthetic experiments. These synthetic experiments are run on historical data, and the CEC is randomly drawn from the empirical distribution of CECs, ensuring that all four factors are present.
The best-performed design randomizes and, on average, doubles the switching lengths than the status quo design with fixed lengths, yielding a 33\% reduction in MSE.
Furthermore, the order of candidate designs encompasses a {\it hierarchical structure} in the effectiveness of three design principles: (a) {\it balancing} periodicity is the most effective; (b) carefully {\it selecting average switching periods} is moderately effective; (c) {\it tuning exact switching times} is mildly effective. 

For the theoretical analysis, we provide a rigorous decomposition of bias and mean-squared error (MSE) of the estimated GATE from the standard Horvitz–Thompson estimator, showing the tradeoffs of four factors.
The bias is decomposed into two sources of errors: (a) carryover effects from treatment at earlier times; (b) confounding effects from simultaneous interventions. The MSE is decomposed into squared bias and variance, where the variance is affected by three sources of randomness: (a) the measurement errors of outcomes and their covariance, determined by their distance in time; (b) the randomness of treatment assignments of primary and simultaneous interventions; (c) the randomness in event occurrence times. 

The decomposition together with a careful simulation study explains how the design principles reduce the estimation error: (a) balancing periodicity reduces all sources of variance; (b) switching less frequently reduces bias from carryovers; (c) switching more frequently reduces variance from the randomness in measurement errors and treatment assignments; (d) randomizing interval start and end points reduces both bias and variance from simultaneous interventions.
These insights manifest that the hierarchical structure identified in the ride-sharing setting is due to the large noise compared to the size of GATE; once the variance is reduced through balancing, bias dominates because of the non-monotonic CECs, and switching less frequently helps.

\subsection{Related Work}

Our work contributes to the literature on the design and analysis of experiments \citep{wu2011experiments}, and more specifically to literature on the time-based experiments or crossover trials \citep{jones2003design}.
Our work is most closely related to the growing literature on the design and analysis of switchback experiments, starting from \cite{bojinov2020design}. 
We complement this literature \citep{bojinov2020design,hu2022switchback,wu2022dynamic,ni2023design,li2023experimenting,chen2023switchback,masoero2023efficient} and show that, in addition to carryover effects considered in this literature, factors including periodicity, correlation event outcomes, and simultaneous interventions can affect the performance of switchback design. Due to these factors, we show the value of using prior data and experiments, and propose new designs to improve efficiency.

Our design is also related to a number of other designs in time-based experiments. One design is the staggered rollout design for panel experiments, where the design selects an initial (and possibly different) treatment time for each unit, either under temporal interference \citep{xiong2019optimal,basse2019minimax} or network interference \citep{cortez2022staggered,han2022detecting,boyarsky2023modeling}.
Another design is the synthetic control design for panel experiments, where the design selects units to be treated, allocates treatment to all of them in a single period, and forms a synthetic treated and control unit for treatment effect estimation \citep{doudchenko2019designing,doudchenko2021synthetic,abadie2021synthetic}. Concerning periodicity and nonstationarity, we complement \cite{wu2022non,wu2023non} and \cite{simchi2024non}, and provide a solution to balance periodicity when experimenting on an aggregate unit.

In our setup, we aggregate the units such that interference between users can be abstracted away; however, a growing literature directly tackles interference using novel experimental ideas. For example, on network data, cluster-randomized designs are commonly used for mitigating interference \citep{ugander2013graph,eckles2017design,candogan2021near,holtzreducing}, where the clusters are chosen to minimize edges that cut across clusters. Another popular method is the two-stage or multi-stage randomization, which has been used in public health \citep{hudgens2008toward,liu2014large}, digital platforms \citep{ye2023cold}, political science \citep{sinclair2012detecting}, and social science \citep{crepon2013labor,baird2018optimal,basse2018analyzing}. In the two-sided marketplace, recently proposed designs to mitigate interference bias include multiple randomization designs \citep{bajari2021multiple,johari2020experimental} and designs that perturb treatments near equilibrium outcomes \citep{wager2021experimenting}. \cite{li2021interference} characterize the bias and variance of such experiments and describe how the design can be optimized in such settings. Besides using novel designs, there is a growing literature on developing new treatment effect estimators and inferential theory accounting for the interference (\cite{chin2018central,chin2019regression,forastiere2021identification,qu2021efficient,yuan2021causal,yuan2023two,leung2023network,farias2022markovian} among others). Complementing this literature, our paper takes an agnostic approach to marketplace interference structure and leverages time-based experiments for treatment effect estimation.

Finally, this paper accounts for the impact of interventions tested simultaneously, which is common in practice. When multiple interventions are simultaneously applied to the same units, factorial design \citep{fisher1936design} is commonly used \cite{wu2011experiments}, which allows for estimating the effect of any treatment combination, especially useful in cross-sectional experiments with sufficiently low variance.
In general, design and analysis are more complex when the number of interventions is large \citep{dasgupta2015causal}.\footnote{This is because the number of treatment combinations increases exponentially in the number of interventions.} \cite{ye2023deep} provide a novel solution to this problem using debiased deep learning. In this paper, we study the setting of time-based experiments, which is more challenging. We focus on designing and analyzing a single primary intervention while being agnostic to how the other experiments were designed.

    \section{Problem Setup}\label{sec:setup}

Suppose a decision maker runs an experiment on a geographically determined market between time $0$ and time $T$ to study the effect of a new intervention, such as a new pricing or matching algorithm. In this paper, we adopt a continuous time framework, which naturally captures the types of event stream data that are commonly observed on ride-hailing platforms and similar settings.

Before the experiment starts, the decision maker chooses the treatment design for the whole experiment horizon, i.e., $\bm{W} = \{W_{t} \in \{0, 1\},\, \forall t \in [0,T]\}$. Here $W_{t} = 1$ indicates that all users in this market are exposed to the intervention (treatment) at time $t$, and $W_{t}=0$  indicates otherwise (control). Note that here, the treatment decisions are made for a continuous time interval. Therefore, we propose to use a two-step design procedure: the decision maker first partitions the experimental horizon $[0,T]$ into $M$ disjoint intervals and then randomly chooses the treatment assignment of each interval. The design procedure is fully general and can essentially yield any design with an appropriate choice of $M$ and treatment assignment probability. 
Let $0 \leq \iota_{0} \leq \iota_{1} \leq  \cdots \leq \iota_{M-1} \leq \iota_{M} = T$ be the endpoints that define the $M$ intervals, $\mathcal{I}_{m} = [\iota_{m-1}, \iota_{m}]$ be the $m$-th interval, and $|\mathcal{I}_{m}| = \iota_{m} - \iota_{m-1}$ be the length of the $m$-th interval.

As the treatment decisions are made at the interval level, the treatment assignments for all times within an interval are the same, i.e., 
\[W_{t} = W_{t^\prime}, \qquad \text{for all } t, t^\prime \in \mathcal{I}_{m}, ~~ \text{for all $m$}. \]
Any design that allows for varying treatment assignments across intervals is a switchback design \citep{bojinov2020design} and is the main focus of this paper. We will discuss switchback designs in more detail in Section \ref{subsec:design}.

In this paper, we directly analyze the event stream data, which is usually the most granular data available in an experiment. The use of granular data can provide more information for the analysis and design of switchback experiments. For example, we might be interested in the event where a rider opens the app and checks the price.

We suppose there are $n$ events occurring in the marketplace between time $0$ and time $T$. 
For each event $i$, the occurrence time $t_i$ is \emph{random} and, in the continuous time framework, can take any real value between $0$ and $T$. Let the outcome of event $i$ be $Y^{(i)}$.
For example, $Y^{(i)}$ could be a binary variable indicating whether the rider requests a ride or not. Let $f(t): [0,T]  \rightarrow \R^+$ be the density function from which events are sampled. We assume that $f(t)$ is bounded from below and from above for all $t$. For simplicity, it is possible to consider the uniform event density as in Example \ref{example:uniform-density}. However, in many realistic settings, the density of events will exhibit periodic patterns due to the seasonality of human behavior. For instance, in ride-hailing, many ride requests occur during commute times, and relatively few occur during the late evening on weeknights. For these settings, a periodic event density, as in Example \ref{example:period-density}, may be a more appropriate choice.

\begin{example}[Uniform event density]\label{example:uniform-density}
    If events are equally likely to occur at any time in the experiment, then $f(t) = 1/T$ for all $t \in [0, T]$.
\end{example}

\begin{example}[Periodic event density]\label{example:period-density}
    If event density has a periodic pattern,
     then a periodic function, such as $f(t) = a_1 \sin(a_2 t + a_3) + a_4$ for some constants $a_1$, $a_2$, $a_3$ and $a_4$ and $t \in [0, T]$, could capture the periodic event density. See an illustration in Figure \ref{fig:design-density}.
\end{example}

Besides the event outcome, we additionally define the marketplace outcome at time $t$ as $Y_t$. The marketplace outcome $Y_t$ can be viewed as the average outcome of all users in the marketplace, such as the average request rate at time $t$. Then the event outcome is a noisy measurement of the marketplace outcome, i.e., for all $i$, 
\[Y^{(i)} = Y_{t_i} + \varepsilon^{(i)} \, , \]
where the measurement error $\varepsilon^{(i)}$ has mean zero.
For example, when $Y^{(i)}$ is binary indicating whether rider $i$ requests a ride, we can model $Y^{(i)}$ as a random draw from the Bernoulli distribution with probability $\P(Y^{(i)} = 1) = Y_{t_i}$ of being one. Then $\varepsilon^{(i)}$ measures the difference between the binary outcome and the probability, i.e., $\varepsilon^{(i)}$ takes a value between $1-Y_{t_i}$ and $-Y_{t_i}$.

Importantly, measurement errors of events that are close in time can be correlated:
\[\cov(\varepsilon^{(i)}, \varepsilon^{(j)}) \neq 0 \qquad \text{for }~ t_i \neq t_j \, . \]
The correlation can be caused by external factors like weather, supply conditions, and traffic. This correlation creates a nuisance dependence between event outcomes, which can affect the resulting variance of treatment effect estimates.

We further account for the possibility that other decision makers may run experiments simultaneously on the same market to test the effect of other interventions. 
We refer to the experiment that the primary decision maker runs as the primary experiment. For the simplicity of exposition,
suppose only one experiment is run simultaneously in addition to the primary experiment in the main text. However, our theoretical results are shown for any number of simultaneous experiments. The insights for one and multiple simultaneous experiments are generally the same. 

Let the treatment design of the simultaneous experiment be $\bm{W}^\s$, where $\bm{W}^\s = \{W_t^\s \in \{0, 1\},\, \forall t \in [0,T]\}$. We assume that the treatment designs of the simultaneous experiment are chosen independently of the primary experiment. We primarily focus on the case where the primary decision maker is {\it agnostic} to the treatment decisions of the simultaneous experiment. We further make 
the non-anticipating outcome assumption, i.e., the outcome at time $t$ is only affected by the treatment assignments up to time $t$ (\cite{basse2019minimax,bojinov2020design} among others). We use $\bm{W}_t = \{W_u,~ \forall u \in [0, t]\}$ and $\bm{W}^\s_{t} = \{W_u^\s, ~\forall u \in [0, t]\}$ to denote the treatment assignments of the primary and simultaneous experiments up to time $t$. 

Then, accounting for the simultaneous experiment and non-anticipating outcomes, we define the potential outcomes of the marketplace at time $t$ as
\[ Y_t(\bm{w}_t, \bm{w}^\s_{t}) \, , \]
where $\bm{w}_t$ is a realization of $\bm{W}_t$ and $\bm{w}^\s_{t}$ is a realization of $\bm{W}^\s_{t}$.\footnote{Suppose both the primary and simultaneous interventions are not applied to times outside of the experiment, i.e., $w_t = 0$ and $w^\s_{t} = 0$ for $t \not \in  [0,T]$. Therefore, there are no carryover effects from treatments outside of the experiment, $ \R \backslash [0,T]$, to outcomes that occurred within the experiment, $[0,T]$. It is then reasonable to define potential outcomes only using treatment assignments within the experiment. } In this definition, potential outcomes are indexed by the treatment assignments of both primary and simultaneous experiments.
The marketplace outcome satisfies $Y_t = Y_t(\bm{W}_t, \bm{W}^\s_{t})$. Given treatment designs $\bm{W}_t$ and $ \bm{W}^\s_{t}$ and event occurrence time $t_i$, there is no randomness in $Y_{t_i}$ anymore, and the randomness in event outcome $Y^{(i)}$ purely comes from the measurement error $\varepsilon^{(i)}$.

Note that the definition above generalizes the standard, binary definition of potential outcomes under the stable unit treatment value assumption (SUTVA) in two aspects. First, this definition allows potential outcomes to be jointly affected by the primary and simultaneous interventions. Second, this definition allows for temporal interference, i.e., the potential outcome of $t$ is not only affected by the treatment status at $t$ but also the treatment assignments at earlier times. 

\begin{remark}[Marketplace and event outcomes]
    Here, we define the potential outcomes at the market level, while observations are at the event level. This starkly contrasts prior work \citep{bojinov2020design,hu2022switchback} where both potential and observed outcomes are at the market level. Note that we can preprocess event outcomes by averaging them by minute (or hour) to obtain observed marketplace outcomes. If we only analyze the observed marketplace outcomes, then our framework is the same as prior work. However, we choose to directly analyze event outcomes, because some factors affecting the estimation error, such as periodic event density and correlation in measurement errors, can be fleshed out and considered in switchback designs.
\end{remark}

\begin{remark}[Number of aggregate units]
    We study the design and analysis of switchback experiments on one aggregate unit (e.g., market), that is the same as \cite{bojinov2020design} and \cite{hu2022switchback}. If the experiments are run on multiple markets, then we can choose the design for each market separately and independently of the designs for other markets.
\end{remark}

\subsection{Estimands}\label{subsec:estimand}
Post-experiment, the primary decision maker uses the observed event outcomes $\{Y^{(i)}\}_{i\in \{1,\cdots, n\} }$ and treatment assignments $\bm{W}$ to estimate the effect of the primary intervention. The estimated effect will then be used to decide whether to deploy the intervention indefinitely. The estimand of primary interest for making this decision is 
the {\it global average treatment effect} (GATE), which measures the difference in average outcomes over time when an intervention is deployed indefinitely (global treatment) versus when it is absent (global control). 
The GATE is formally defined as
\[\delta^\gate  = \int \delta^\total_{t}  f(t)  dt \, , \]
which is the average of the total treatment effect $\delta^\total_{t}$ at time $t$ weighted by the event density $f(t)$. The total treatment effect $\delta^\total_{t}$ at time $t$ is defined as
\[  \delta^\total_{t} = Y_{t}(\bm{W}_t = \bm{1}_t, \bm{W}_t^\s = \bm{0}_t)  - Y_{t}(\bm{W}_t = \bm{0}_t, \bm{W}_t^\s =  \bm{0}_t)  \, ,\]
where $\bm{1}_t$ and $\bm{0}_t$ denote the marketplace being in the treatment and control state for a time duration of $t$, respectively. 
In the definition of GATE, simultaneous intervention is held in the global control state. This definition makes sense when the primary decision maker is interested in the effect of the primary intervention, while holding other conditions as the status quo. In cases where the primary decision maker is aware of the other intervention and wants to condition them in the treatment state, we can consider alternative definitions of $\delta^\gate$ and analyze them analogously.

Besides GATE, the decision maker may also want to estimate the \emph{cumulative effect}, which is the total effect from current and past treatments.
Formally, the cumulative effect is defined as the treatment effect at time $t$ when the treatment is employed from time $t-\Delta t$ to $t$
\[    \delta^\cum_{t}(\Delta t) = Y_{t}(\bm{W}_t = 
 (\bm{0}_{t-\Delta t}, \bm{1}_{\Delta t}), \bm{W}_t^\s = \bm{0}_t)  - Y_{t}(\bm{W}_t = \bm{0}_t, \bm{W}_t^\s = \bm{0}_t)  \, ,\]
where the notation $(\bm{0}_{t-\Delta t}, \bm{1}_{\Delta t})$ concatenates $\bm{0}_{t-\Delta t}$ and $\bm{1}_{\Delta t}$, meaning that the marketplace is in the control state for a duration of $t-\Delta t$ and then in the treatment state for a duration of $\Delta t$. The definition of $\delta^\cum_{t}(\Delta t)$ is visualized in Figure \ref{fig:cumulative-effect}. 
It is easy to see that the cumulative effect $\delta^\cum_{t}(\Delta t)$ converges to the total treatment effect $\delta^\gate_{t}$ as the treatment duration $\Delta t$ grows to infinity, i.e., 
\[\delta^\gate_{t} = \lim_{\Delta t \rightarrow \infty} \delta^\cum_{t}(\Delta t) \, . \]

However, an infinite treatment duration may not be necessary for the cumulative effect to stabilize and converge to the total treatment effect. Gaining insights into both the necessary duration for convergence and the dynamics of the cumulative effect (i.e., the CEC) is valuable for the decision maker when designing and analyzing the switchback experiments.

\begin{figure}[t]
	\centering
\begin{tikzpicture}[domain=0:1.8, xscale = 6, yscale = 1.5]

\draw[->]  (-0.3,0)  --  (1.5,0)  node[right]  {time};
\node(1) at (0.,0) [circle,draw, fill, scale = 0.5 pt]{};
\node(3) at (0.7,0) [circle,draw, fill, scale = 0.5pt]{};
\node(5) at (1.2, 0) [circle,draw, fill, scale = 0.5pt]{};

\node[below of = 1, yshift = 0.4 cm](leftlabel){$0$};
\node[below of = 3, yshift = 0.4 cm](middlelabel){$t-\Delta t$};
\node[below of = 5, yshift = 0.4 cm](rightlabel){$t$};

\draw [decorate,decoration={brace,amplitude=10pt}, yshift = 0.2cm]
(.02,0) -- (.68,0) node (curly_bracket)[black,midway, yshift = 0.7 cm] 
{$\bm{0}_{t-\Delta t}$};
\draw [decorate,decoration={brace,amplitude=10pt}, yshift = 0.2cm]
(.72,0) -- (1.18,0) node (curly_bracket)[black,midway, yshift = 0.7 cm] 
{$\bm{1}_{\Delta t}$};
\node[below right of = 3, xshift = -3 cm, yshift = -0.5
cm](switch){switch from control to treatment};
\draw[latex-] (3) to[out=-135,in=45,looseness=1.] (switch);
\node[above right of = 1, xshift = -1 cm, yshift = 1.2
cm](effect){$\bm{W}_t = (\bm{0}_{t-\Delta t}, \bm{1}_{\Delta t})$};
\node[above right of = 5, xshift = -1 cm, yshift = 1.2
cm](effect){cumulative effect is $\delta^\cum_{t}(\Delta t)$};
\draw[->]  (1.2,1.)  --  (1.2,.2)  node[above]  {};

\end{tikzpicture}
\caption{An illustration of cumulative effect $\delta^\cum_{t}(\Delta t)$ at time $t$ after being treated for a duration of $\Delta t$ in the primary experiment, while holding the simultaneous intervention in the control state ($\bm{W}^s_t = \bm{0}_t$). When $\Delta t$ grows to infinity, $\delta^\cum_{t}(\Delta t)$ converges to $\delta^\gate_t$. }
\label{fig:cumulative-effect}
\end{figure}
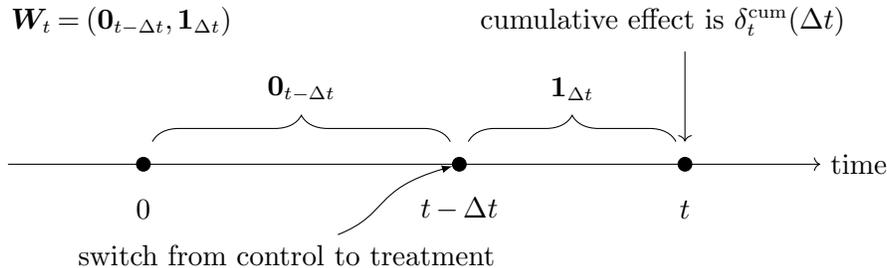

    \subsection{Post-Experiment Estimation}\label{subsec:estimation}

 In this paper, we primarily focus on analyzing the widely used Horvitz-Thompson (HT) estimator for the treatment effect \citep{horvitz1952generalization}. The HT estimator of $\delta^\gate$ using the $n$ observed event outcomes and treatment design takes the form of

		\begin{equation}\label{eqn:ht-estimator}
  \hat{\delta}^\gate = \frac{1}{n} \sum_{i=1}^n \left[\frac{W_{t_i}  Y^{(i)} }{\pi} -  \frac{(1-W_{t_i}) Y^{(i)} }{1-\pi} \right] \, ,
    \end{equation}
    where 
    \[\pi =  \int_{t \in [0, T]} \+E[W_{t}]  f(t) d t \]
    is the fraction of times in the treatment state. If time $t$ has 50\% probability of being treated, then $\P(W_{t} = 1)  = 1/2$ and $\pi = 1/2$. If an event occurs in a treated interval, then its outcome is used to estimate the average outcome under global treatment; otherwise, its outcome is used to estimate the average outcome under global control.

    We focus on the HT estimator instead of alternative treatment effect estimators (discussed in Section \ref{subsec:alternative-estimator}) for four reasons. First, the implementation of the HT estimator is simple. It relies neither on assumptions about carryover and correlation mechanisms of event outcomes, nor on the knowledge of treatment assignments of simultaneous interventions. Second, as shown in our case study, the HT estimator generally performs better than alternative estimators. Third, the HT estimator is simpler to analyze than alternative estimators while still conveying important design insights to improve estimation efficiency. Fourth, our case study shows that these design insights can not only reduce estimation errors for the HT estimator but also for alternative estimators.

	\subsection{Empirical Bayes Switchback Design}\label{subsec:design}

 Before the experiment starts, the decision maker chooses the number of intervals $M$ and the interval switching points $0 \leq \iota_{1} \leq \cdots \leq \iota_{M-1} \leq T$, aiming to reduce the estimation error of GATE, post-experiment. An important metric for the decision maker is the MSE of $\hat{\delta}^\gate$, defined as
 \begin{align}\label{eqn:mse-expression}
     \+E_{W,\varepsilon,t}\Big[ \big(\hat{\delta}^\gate - \delta^\gate\big)^2\Big] \, ,
 \end{align}
 where the expectation is taken with respect to the treatment designs $\bm{W}$ and $\bm{W}^\s$, the measurement errors in event outcomes $\varepsilon^{(1)}, \cdots, \varepsilon^{(n)}$, and the event occurrence times $t_1, \cdots, t_n$. Here we focus on the randomized designs, where each time period is equally likely to be treated or untreated, i.e., $\P(W_{t} = 1)  = 1/2$ and $ \P(W^\s_t = 1) = 1/2$ for all $t$.\footnote{We focus on the $1/2$ probability because this probability is shown to be efficient in many design problems.}

 In Section \ref{sec:analysis}, we provide the expression of MSE as a function of the interval endpoints, which essentially applies to any switchback design. The expression shows that the MSE depends on assumptions on carryovers, outcome covariance, periodicity, and simultaneous interventions.

 We propose a new \emph{empirical Bayes} approach for switchback design that leverages historical data to account for these four factors in choosing a design for a new experiment.
 This approach has four phases, as illustrated in Figure \ref{fig:empirical-bayes}. In phase one, we analyze historical data and model the data-generating process. Specifically, the event density is estimated and then can be used as the input for the design. The periodic pattern is analyzed and then can be used in design to balance the periodicity. In phase two, we estimate an empirical distribution of CECs from prior experiments. Although the CEC of the new intervention is not known ex-ante, a reasonable prior is the empirical distribution of CECs. In phase three, we choose a design by comparing the performance of candidate designs through synthetic experiments. These experiments are run on historical data, and the CEC of the synthetic intervention is randomly drawn from the empirical distribution of CECs. Such experiments ensure that all factors affecting the MSE are present. In phase four, we select the best-performed design in synthetic experiments. The selected design is referred to as the \emph{empirical Bayes} design.

 In this paper, we focus on empirically evaluating the following three types of heuristic designs, in which the decision maker only needs to choose two parameters, and the design problem is more tractable. The comparison of these three designs fleshes out the important design insights to improve estimation efficiency. Figure \ref{fig:design-density} illustrates the switching points of the three designs under the periodic event density. 
 The first type is the fixed duration switchback (Example \ref{example:fixed-duration-switchback}), which has constant interval lengths and is usually the design used in practice.

 \begin{figure}[t!]
	\centering
 \resizebox{1\columnwidth}{!}{%
\begin{tikzpicture}
    \begin{scope}
        \draw[very thick, MyTriangle!40] (0,-0.03) -- (20,-0.03);
        \draw[very thick, MyTriangle!55] (0,-1.5) -- (20,-1.5);
        \draw[very thick, MyTriangle!70] (0,-3) -- (20,-3);
        \draw[very thick, MyTriangle!85] (0,-4.5) -- (20,-4.5);
        \draw[very thick, MyTriangle!85] (0,-6) -- (20,-6);
        \draw[very thick, MyTriangle] (20,-0.02) -- (20,-6);
        \draw[very thick, MyTriangle] (0,-0.02) -- (0,-6);
        \filldraw[very thick,white,fill=MyTriangle!45] (20,0) -- (0,0) --  (0,-6) -- (20,-6);
        \filldraw[very thick,white,fill=MyTriangle!70] (20,0) -- (0,0) --  (0,-4.5) -- (20,-4.5);
        \filldraw[very thick,white,fill=MyTriangle!85] (20,0) -- (0,0)  -- (0,-3) -- (20,-3);
        \filldraw[very thick,white,fill=MyTriangle] (20,0) -- (0,0) -- (0,-1.5) -- (20,-1.5);
        \node[text width=20cm] at (11,-0.75) {\large $\bullet$ Phase 1: Analyze historical data and model data generating process};
        \node[text width=20cm] at (11,-2.25) {\large $\bullet$ Phase 2: Estimate the empirical distribution of CECs from prior experiments};
        \node[text width=20cm] at (11,-3.75) {\large $\bullet$ Phase 3: Run synthetic experiments on historical data to compare candidate designs};
        \node[text width=20cm] at (11,-5.25) {\large $\bullet$ Phase 4: Choose the best-performed design in synthetic experiments };
    \end{scope}

\end{tikzpicture}
}
\caption{Empirical Bayes approach for switchback designs. See Section \ref{subsec:design} for more details.  }
\label{fig:empirical-bayes}
\end{figure}

\begin{example}[Fixed duration switchback]\label{example:fixed-duration-switchback}
    The first interval starts at time $\iota_{0} = q$ for an offset parameter $q < T/M$. The length of all the intervals beside the last one is $p = T/M$. The endpoints are then equal to $\iota_{m} = m \cdot p + q$ for all $m$. 
\end{example}

The second type is the Poisson duration switchback (Example \ref{example:poisson-duration-switchback}), where the interval length is random and generated by the Poisson distribution. The Poisson duration switchback has similar, but slightly different, interval lengths compared to the fixed duration switchback. We study this design motivated by our finding in the MSE decomposition that randomizing interval lengths can reduce the error from the confounding effect of simultaneous interventions.

\begin{example}[Poisson duration switchback]\label{example:poisson-duration-switchback}
    The first interval starts at time $\iota_{0} = q$. The length of each interval $\iota_{m} - \iota_{m-1}$ is randomly drawn from the Poisson distribution with the mean parameter $\lambda = T/M$. We sum the lengths of the first to the $m$-th intervals to obtain the value of the endpoint $\iota_{m}$.\footnote{If the endpoints of some intervals are bigger than the experiment duration (i.e., there exists an $\bar{M}$ such that $\iota_{m^\prime} > T$ for $m^\prime \geq \bar{M}$), then we set the endpoints of these intervals to $T$ (i.e., set $\iota_{m^\prime}$ to $T$ for $m^\prime \geq \bar{M}$) and the lengths of these intervals are zero. }
\end{example}

The third type is the change-of-measure switchback (Example \ref{example:change-of-measure-design}), which has constant interval lengths after changing the measure of nonuniform event density to uniform density. Under the uniform event density, the change-of-measure switchback is the same as the fixed duration switchback. However, under the periodic density, the change-of-measure design has much shorter interval lengths in times of high density and much longer interval lengths in times of low density than in the other two designs. We can determine the interval lengths in change-of-measure switchback using empirical density estimated from historical data. We study this design because we find that such a pattern of interval lengths can reduce an important variance term in MSE.

\begin{example}[Change-of-measure switchback]\label{example:change-of-measure-design}
    The first interval starts at time $\iota_{0} = q$ for some $q$ that satisfies $\int_0^q f(t) dt < 1/M$. For the remaining endpoints, they are chosen in a way that the event occurrence probability is the same across intervals, that is, $\int_{\iota_{m}}^{\iota_{m+1}} f(t) dt = 1/M$. 
\end{example}

 \begin{figure}[t!]
	\centering
  \includegraphics[width=1\linewidth]{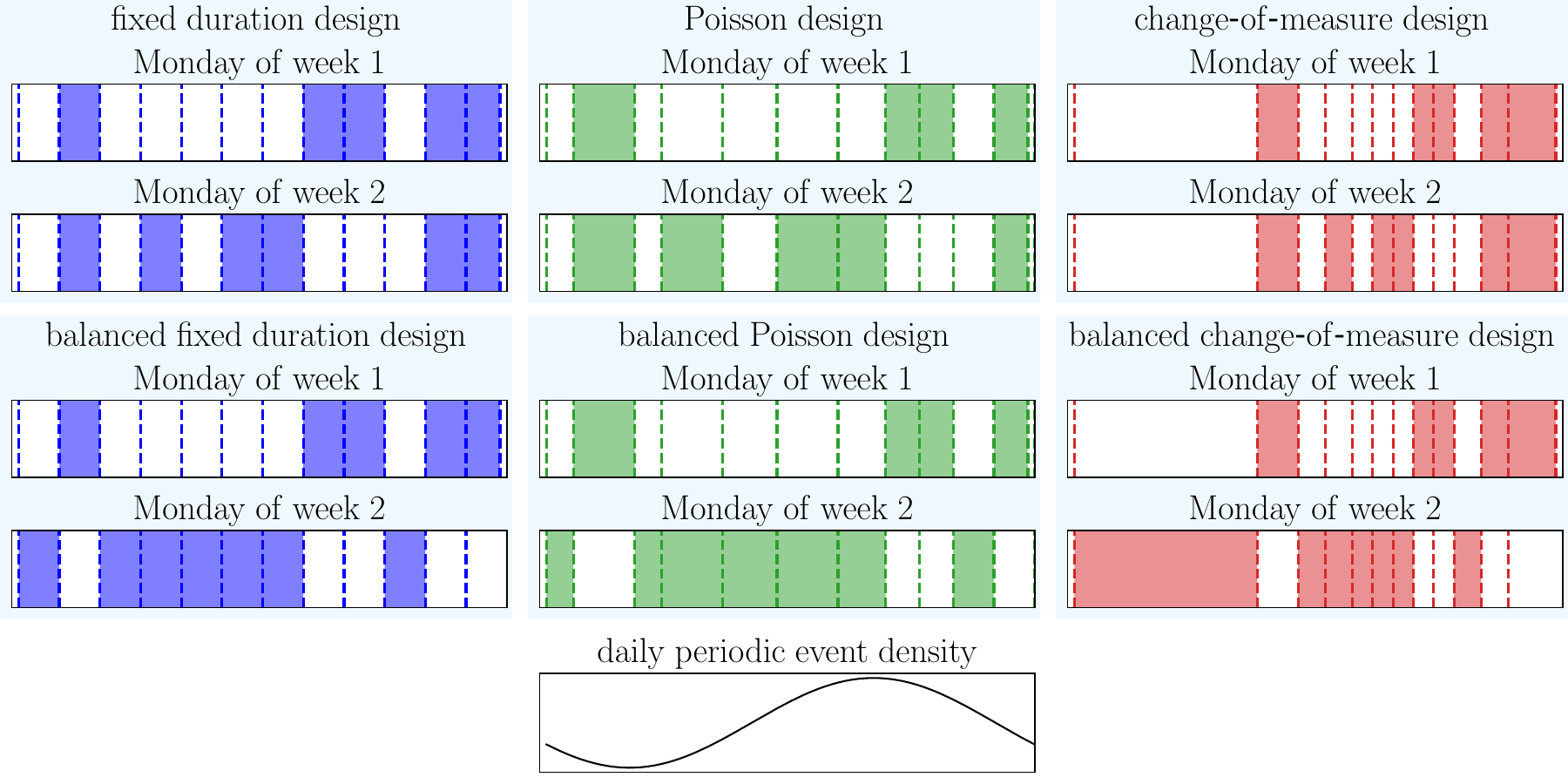}
	\caption{Illustration of various designs and daily periodic event density (dash lines are switching points, and treated intervals are shaded) on Mondays in a two-week experiment. The design on other days in a week is analogous. If a design is not balanced, the design on each day is independent of the design on other days; otherwise, the design in the second week mirrors the design in the first week, and in the first, the design on each day is independent of the design on other days.
 }
	\label{fig:design-density}
\end{figure}

In addition, we consider the following balanced design that imposes restrictions on the randomness of treatment assignments. This balanced design is relevant only if we explicitly account for the periodicity in the analysis and design, which is a key distinction from prior work on switchback designs. We study this design because, based on the MSE decomposition, we find that balancing is very effective for variance reduction, especially when the relative size of the treatment effect compared to noise is small.

\begin{example}[Balanced randomized design]\label{example:balanced-design}
The balanced design balances the heterogeneity in $f(t)$ and $Y_t(\bm{w}_t, \bm{w}^\s_t)$ between control and treated intervals. In a special case where $f(t) = f(t+T/2)$ and $Y_t(\bm{w}_t, \bm{w}^\s_t) = Y_{t + T/2}(\bm{w}_{t+T/2}, \bm{w}^\s_{t+T/2})$ for $t\leq T/2$, the switchback design is balanced if the treatment assignment at time $t$ is opposite to that at time $t - T/2$, that is, $W_t = 1 - W_{t - T/2}$.  See a visualization of such a balanced design in Figure \ref{fig:design-density}.
\end{example}

In the example of a two-week experiment, when a balanced randomized switchback is used, the treatment assignments of the second week are opposite to those of the first week. As the heterogeneities in the outcomes tend to have a periodic pattern by week, the balanced design in a two-week experiment creates matched pairs for the same time in a week.\footnote{Balancing the periodicity is found effective in reducing variance in the switchback experiments via multiple randomization designs \citep{masoero2023efficient} and nonstationary a/b tests \citep{wu2022non}. 
} We note that the balancing approach in Example \ref{example:balanced-design} is used in practice and studied in this paper for illustration purposes. One could consider more complex balanced designs, such as balancing the treatment assignments of two consecutive days, hours, etc., while the insights would generally be the same.

    \section{A Case Study on a Ride-Sharing Platform}\label{sec:empirical}

In this section, we analyze historical data from a ride-sharing platform and explore strategies for designing more efficient experiments. We have access to two sets of historical data. The first data set consists of the event-level data of the top 50 regions between June 2022 and March 2023, referred to as the historical control data hereafter. The second data set includes the event-level data of a large corpus of experiments conducted between June 2021 and March 2023, hereafter referred to as the historical experimental data. In both data sets, each event represents a rider session started from opening the app. The outcome is binary, denoting whether the rider requested the ride ($Y_{t_i} = 1$) or not ($Y_{t_i} = 0$). 

We illustrate our proposed empirical Bayes design approach on the historical data to select a switchback design. By comparing the performance of designs introduced in Section \ref{subsec:design}, we identify a hierarchical structure of the effectiveness of design principles in reducing the estimation error.

\subsection{Analysis of Historical Data}\label{subsec:analysis-historical-data}

In this subsection, we show the estimated event density, mean control outcomes, and variance of measurement errors from historical control data. These estimates demonstrate the behavioral changes based on the time of day and day of the week. We also show the empirical distribution of CECs estimated from prior experiments. 

\subsubsection{Estimates from Historical Control Data}
\paragraph{Event density}

The estimated event density $f(t)$ for every minute of a week is shown in Figure \ref{fig:event-density-error}. 
There are two main observations. First, the event density has a periodic pattern, with high density during the peak hours, such as 6 PM, and low density during the off-peak times, such as 3 AM. Second, the event density is higher during peak hours on weekends (Fridays, Saturdays, and Sundays) than during peak hours on weekdays.

\paragraph{Global control outcome}

The estimated and standardized global control outcome (i.e.,  $Y_t(\bm{0}_t, \bm{0}_t)$ subtracted by its mean and divided by its standard deviation) for every minute in a week is shown in Figure \ref{fig:event-density-error}. The average global control outcome has a periodic pattern and is generally higher during the daytime. 

\paragraph{Heteroskedastic measurement errors}

The estimated and standardized standard error of measurement errors for every minute in a week is shown in Figure \ref{fig:event-density-error}. We estimate the standard errors by assuming that the binary event outcome is randomly drawn from a Bernoulli distribution, with the probability of getting one being the average conversation rate (i.e., global control outcome). The standard error is heteroscedastic and has a periodic pattern, similar to that of the event density and global control outcome. Notably, the standard error of measurement errors tends to be lower in times of high event density, and higher in times of low event density. This is because, during periods of high event density, the mean control outcome is generally larger and lies in a regime where the variance is negatively correlated with the mean control outcome.

\begin{figure}[t!]
    \centering
    \includegraphics[width=1.\textwidth]{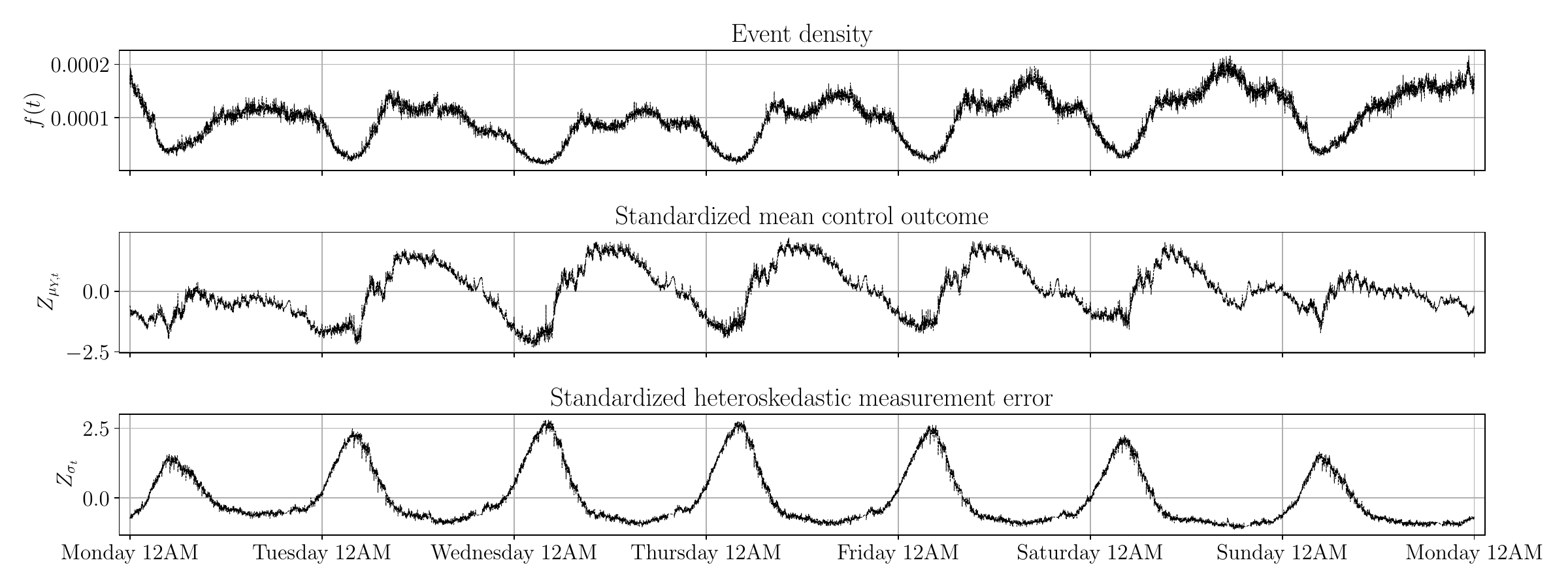}
    \caption{Event density, standardized mean control outcome (denoted by $Z_{\mu_{Y,t}}$), and standardized heteroscedastic measurement error  (denoted by $Z_{\sigma_{t}}$) from Monday 12 AM to Sunday 11:59 PM. 
    }
    \label{fig:event-density-error}
\end{figure}

\subsubsection{Empirical Distribution of CECs from Prior Experiments}
We analyze a large corpus of the experiments run between June 2021 and March 2023 to construct a prior on CEC. We have $149$ two-week experiments run across $114$ markets. Different experiments were run in different subsets of the markets. There are $890$ distinct experiment-market pairs in total.
These experiments employ a balanced, fixed-duration switchback design with a constant interval length of $56$ minutes (referred to as status quo design). 

We first estimate a CEC for each experiment-market pair by taking the difference in outcomes between treated and control intervals independently for every minute since the switch. Then, we obtain a $56$-dimensional CEC vector whose $j$-th entry is the average cumulative effect given that the market has been treated for $j$ minutes.

Figure \ref{fig:fitted-poly2} shows several representative estimated CECs. A notable observation is that the estimated cumulative effects vary substantially minute-to-minute with the treatment duration, indicating a high variance in the treatment effect estimation. Therefore, to reduce variance, we smooth out the estimated CECs by fitting a natural cubic spline to the vector of points. The natural cubic splines are particularly suitable for incorporating our priors on the shape of CECs, including a low tendency of cumulative effects to vary minute-to-minute with treatment duration, and the convergence of cumulative effects to GATE at the right boundary. In Appendix \ref{subsec:curve-fitting-details}, we document and interpret the constraints imposed in curve fitting. The smoothed curves by natural cubic splines are overlaid on the (raw) estimated CECs in Figure \ref{fig:fitted-poly2}. The $890$ smoothed CECs constitute the \emph{empirical distribution of CECs}.

\begin{figure}[t!]
    \centering
    \includegraphics[width=\textwidth]{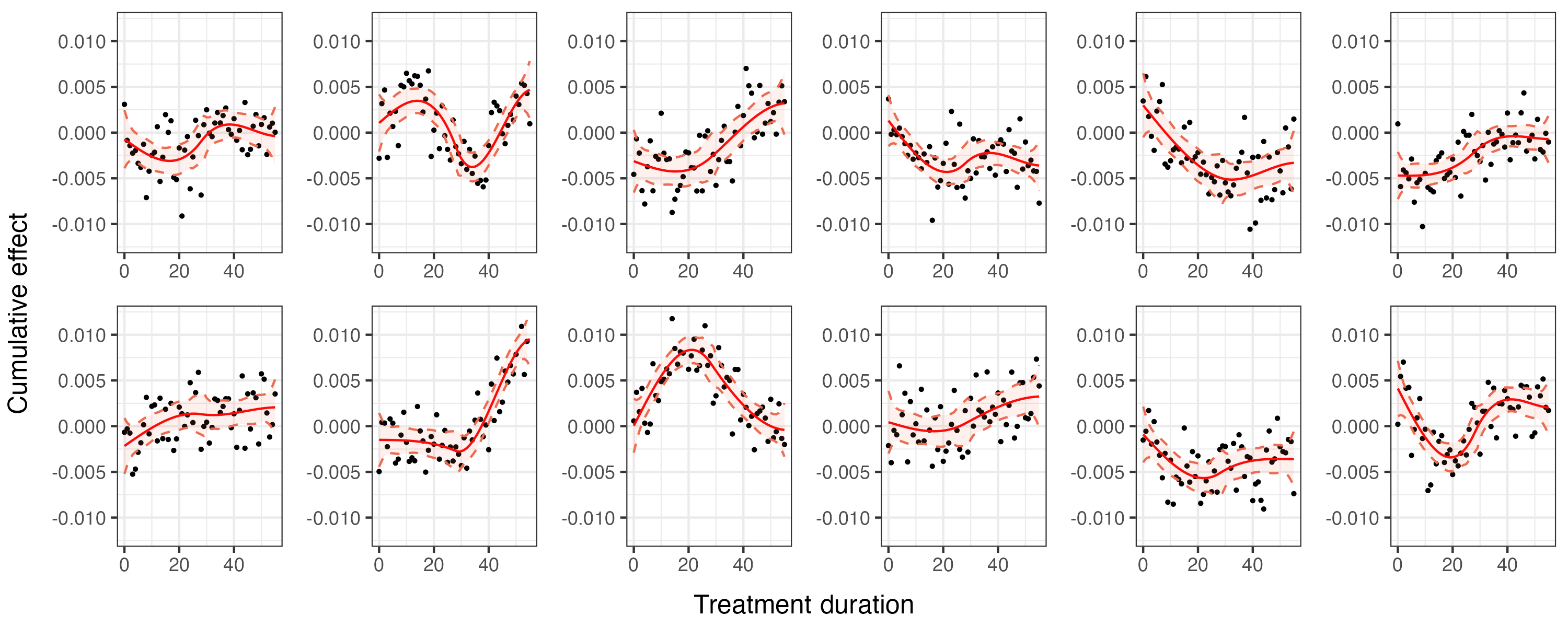}
    \caption{$12$ representative estimated CECs for the treatment duration $\Delta t$ of $\{1, \cdots, 56\}$ minutes (in black dots) and their smooth curves by natural cubic splines (in red). The natural cubic splines have zero gradients at the right boundary. Most CECs are non-monotonic in the treatment duration and can switch signs. The shaded area for each curve shows the $95\%$ confidence interval of the cumulative effect given a treatment duration and the natural cubic spline as the smoothing method.
    }
    \label{fig:fitted-poly2}
\end{figure}

Two important observations arise from the smoothed CECs. First, for $68\%$ of the smoothed CECs, cumulative effects change signs as the treatment duration changes. This suggests that the treatment may first be effective in increasing the conversion rate and then become ineffective. The opposite case, where the treatment is first ineffective and then becomes effective, is also likely to happen. Second, in most cases, the cumulative effects do not vary monotonically with the treatment duration. For example, the cumulative effects can first increase and then decrease, or vice versa, as the treatment duration increases. Overall, it often takes quite a bit of time for the cumulative effects to converge to GATE. 
Because of these observations, we further consider more persistent carryovers and fit the $112$-minute CECs, as shown in Figure \ref{fig:112-various-poly} in Appendix \ref{appendix:empirical}. The two observations for the $56$-minute curves continue to hold for the $112$-minute curves.

To validate our model fitting procedure, we compare the cross-validation error of smoothing by the natural cubic spline with smoothing by alternative methods, such as polynomial regression and local regression. We document our cross-validation procedure in Appendix \ref{subsec:cross-validation}, with the results shown in Figure \ref{fig:model-comparison} in Appendix \ref{appendix:empirical}. The cross-validation error of the natural cubic spline is comparable to all the alternative methods. As the natural cubic spline is well suited to incorporate our priors on the curve shape and can capture complex curve shapes, we focus on it in the main text. See Appendix \ref{subsec:cross-validation} for more discussion. As a robustness check, we have run synthetic experiments using the smoothed effects by alternative methods, and the best-performed design is generally robust to the choice of a flexible smoothing method.

\subsection{Empirical Bayes Design through Synthetic Experiments}\label{subsec:synthetic-experiment}

We illustrate our proposed empirical Bayes approach to choosing a switchback design by running synthetic experiments on historical data. Decision makers can use a similar approach to select specific switchback designs tailored to their specific settings.

\subsubsection{Setup of Synthetic Experiments}\label{subsec:setup-syn-exp}

We consider the following six switchback designs to run two-week synthetic experiments: fixed duration (FD), balanced fixed duration (bal. FD), change-of-measure (CM), balanced change-of-measure (bal. CM), Poisson duration (Poisson), and balanced Poisson duration (bal. Poisson) switchback designs. For each design, we vary the average interval length across three specifications: $28$, $56$, and $112$ minutes. In the Poisson duration switchback, the average interval length corresponds to the mean parameter $\lambda$. In the change-of-measure design, the average interval length equals to the event occurrence probability in an interval multiplied by the experiment duration. 

We conduct two types of synthetic experiments. In the first type, no other experiment is run simultaneously with the primary synthetic experiment. To implement this type, we randomly draw a market from the $50$ markets in the historical control data and then randomly draw two consecutive weeks of historical control data in this market. In the second type, one experiment starts and ends simultaneously with the primary synthetic experiment. To implement the second type, we randomly select one experiment-market pair and use its two-week historical experimental data. In both types, we assume that the primary synthetic intervention has not been applied to the historical (control or experimental) data.

Next, we randomly draw one CEC from the empirical distribution of CECs, and assume that the cumulative effects of the synthetic intervention follow this drawn curve. Then we use this curve to generate the two-week synthetic experimental data. Specifically, given a switchback design and every time point, we calculate the total effect from treatments received on and prior to that time point. Then we add the total effect to the two weeks of historical data drawn from the previous step, and obtain the synthetic experimental data. 

Finally, we apply the HT estimator to the synthetic experimental data to estimate GATE and compute the estimation error of GATE using the drawn curve. We repeat this procedure and obtain the estimation error for each of the six switchback designs with each of the three average interval lengths, given a draw of two weeks of historical data and a CEC. We repeatedly draw the two-week historical data and CEC for $500$ times in total. Then for each switchback design with each average interval length, we obtain $500$ estimation errors in total.

\begin{figure}[t!]
    \centering
    \includegraphics[width=\textwidth]{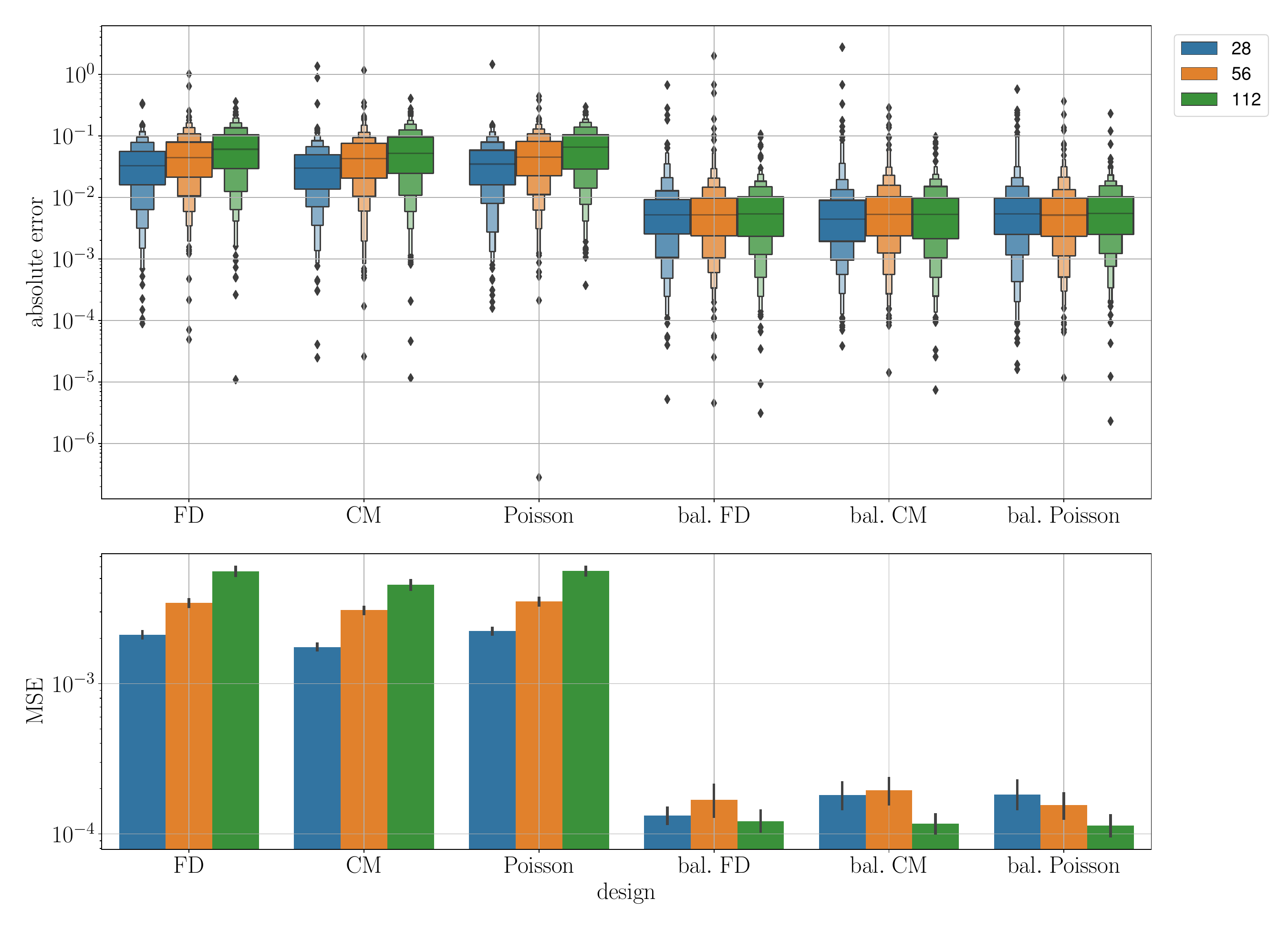}
    \caption{Boxenplot of the absolute estimation error and MSE from two-week synthetic experiments using various switchback designs. The boxenplot is obtained based on $500$ draws of two-week historical data (serving as the observed data absent of synthetic intervention) and a CEC (serving as the treatment effects of the synthetic intervention). In this simulation, one experiment starts and ends simultaneously with the primary synthetic experiment. The corresponding results when no experiments start and end simultaneously with the primary synthetic experiment are shown in Figure \ref{fig:56min-mse} in Appendix \ref{appendix:empirical}. The differences in results between when one versus no experiments run simultaneously are demonstrated in Figure \ref{fig:56min-diff-mse}  in Appendix \ref{appendix:empirical}. The legend denotes the average interval length, and  ``bal.'' is an abbreviation of ``balanced''.  
    }
    \label{fig:syn-exp-fitted-poly2}
\end{figure}

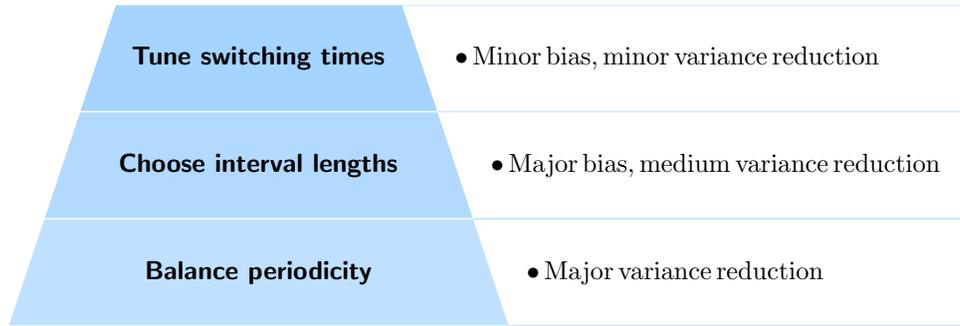
\begin{figure}[t!]
	\centering
 \resizebox{0.8\columnwidth}{!}{%
\begin{tikzpicture}
    \begin{scope}
        \draw[very thick, MyTriangle!40] (0,-0.03) -- (20,-0.03);
        \draw[very thick, MyTriangle!55] (0,-3) -- (20,-3);
        \draw[very thick, MyTriangle!70] (0,-6) -- (20,-6);
        \draw[very thick, MyTriangle!85] (0,-9) -- (20,-9);
        \draw[very thick, MyTriangle] (20,-0.02) -- (20,-9);
        \filldraw[very thick,white,fill=MyTriangle!70] (4,0) -- (-4,0) --  (-7,-9) -- (7,-9);
        \filldraw[very thick,white,fill=MyTriangle!85] (4,0) -- (-4,0)  -- (-6,-6) -- (6,-6);
        \filldraw[very thick,white,fill=MyTriangle] (4,0) -- (-4,0) -- (-5,-3) -- (5,-3);
        \node at (0,-1.5) {\bfseries\sffamily\scshape\LARGE {Tune switching times} };
        \node at (0,-4.5) {\bfseries\sffamily\scshape\LARGE Choose interval lengths};
        \node at (0,-7.5) {\bfseries\sffamily\scshape\LARGE Balance periodicity};
        \node[text width=13cm] at (12,-1.5) {\LARGE $\bullet$ Minor bias, minor variance reduction};
        \node[text width=13cm] at (13,-4.5) {\LARGE $\bullet$ Major bias, medium variance reduction};
        \node[text width=13cm] at (14,-7.5) {\LARGE $\bullet$ Major variance reduction};
    \end{scope}

\end{tikzpicture}
}
\caption{Hierarchical structure in the effectiveness of design principles in reducing MSE in the case study on a ride-sharing platform. Balancing periodicity is the most effective. Carefully choosing average interval lengths is moderately effective. Tuning exact switching times is mildly effective.  }
\label{fig:pyramid}
\end{figure}

\subsubsection{Results}\label{subsec:empirical-results}

Figure \ref{fig:syn-exp-fitted-poly2} visualizes the estimation error of GATE for each switchback design with each average interval length when one experiment runs simultaneously. If we use the balanced Poisson duration switchback with an average interval length of $112$ minutes, then the MSE is reduced by $33$\% compared to the status quo fixed-duration switchback design with an interval length of $56$ minutes.  Figure \ref{fig:56min-mse} in Appendix \ref{appendix:empirical} shows the corresponding results with no experiment running simultaneously. Based on the results in Figures \ref{fig:syn-exp-fitted-poly2} and \ref{fig:56min-mse}, we summarize the effectiveness of various design principles in reducing the estimation error into a hierarchical structure, as shown in Figure \ref{fig:pyramid}. 

As shown in Figures \ref{fig:syn-exp-fitted-poly2} and \ref{fig:56min-mse}, balancing periodicity, achieved by using balanced switchback designs in Example \ref{example:balanced-design}, is the most effective in reducing the estimation error. This observation indicates that variance from periodicity is a major source of error and can be reduced using a balanced design.

Carefully choosing the average interval lengths can also effectively reduce the estimation error. The effect is two-fold. On the one hand, when switchback designs are not balanced, fast switching reduces variance and then reduces the estimation error because variance is the major source of error for an unbalanced design. On the other hand, when switchback designs are balanced, long switching can reduce the chance of a large error and make the designs more robust to various shapes of CEC. We also note that when an experiment starts and ends simultaneously, the $56$-minute interval length can have a larger MSE than the $28$-minute one. This is because of the confounding effect of the simultaneous experiment. The reason is supported by Figure \ref{fig:56min-mse} in Appendix \ref{appendix:empirical} that, without the simultaneous experiment, the $56$-minute interval length then has a lower error.

After the average interval lengths are selected, choosing the exact switching times can further reduce the MSE. A simple approach is to choose between fixed duration, change-of-measure, and Poisson duration designs depending on our prior for CEC. Figure \ref{fig:syn-exp-fitted-poly2} shows that the balanced Poisson switchback with long interval lengths has smaller MSE than other balanced designs, when the prior follows the empirical distribution of CECs. The results indicate that randomizing switching times (e.g., through Poisson switchback) can mitigate the confounding effects of simultaneous experiments, which can effectively reduce the estimation error in practice.

We also directly compare results with and without one simultaneous experiment in Figure \ref{fig:56min-diff-mse} in Appendix \ref{appendix:empirical}. The comparison shows that the presence of a simultaneous experiment generally increases the estimation error. The increase is more obvious if the primary and simultaneous experiments use the same $56$-minute interval length. Therefore, when impacts from simultaneous experiments are nonnegligible, it can be important to use designs that mitigate confounding effects. For example, effective approaches include choosing different interval lengths, staggering switching times across experiments, or launching experiments at different times.

Finally, we run synthetic experiments using the $112$-minute cumulative effect curves smoothed by natural cubic splines. As shown in Figure \ref{fig:112min-mse} in Appendix \ref{appendix:empirical}, the Poisson duration switchback with an average interval length of $112$ minutes continues to have the lowest MSE, and the hierarchical structure continues to hold.

    	\section{Analysis of Switchback Design}\label{sec:analysis}
	
	In this section, we present the precise decomposition of the bias and MSE for any switchback design. The decomposition shows how the four factors affect the MSE and provides the theoretical foundation to explain how various switchback designs affect the MSE.

	In Section \ref{subsec:assumption}, we first lay out the assumptions required for the decomposition. Subsequently, we introduce several interval-level statistics in Section \ref{subsec:block-stats}. These interval-level statistics are then used in the decomposition of bias and MSE in Section \ref{subsec:mse-result}. The decomposition in this section assumes only one simultaneous intervention, but this is only for the purpose of exposition. The general results with multiple simultaneous interventions are stated in Appendix \ref{subsec:multi-simul-exp}, and the proof in Appendix \ref{appendix:proof} allows for multiple simultaneous interventions.

	\subsection{Assumptions}\label{subsec:assumption}

 We begin by assuming that the sampling of events is exogenous and independent of the treatment assignments of both the primary and simultaneous interventions. This assumption holds true for interventions that potential riders cannot discern a difference before opening the app and checking prices, and consequently do not affect the event density, such as changes to pricing or matching algorithms. 

\begin{assumption}[Exogeneity of events]\label{ass:exogeneity}
	Events are sampled randomly and independently from the density function $f(t)$, and $f(t)$ is independent of the treatment assignments of primary and simultaneous experiments, $\bm{W}$ and $\bm{W}^\s$.
\end{assumption}
 
 Moreover, for the purpose of exposition, we impose an assumption on the structure of the treatment effects. This assumption is based on the observation that the treatment effects can be decomposed as
 \begin{align*}
    & Y_{t}(\bm{w}_t, \bm{0}_t) - Y_{t}(\bm{0}_t, \bm{0}_t) = w_t \cdot (\underbrace{Y_{t}(e_{t}, \bm{0}_t)- Y_{t}(\bm{0}_{t}, \bm{0}_t)}_{\delta^\inst_{t}})  +
    \underbrace{Y_{t}(\bm{w}_t, \bm{0}_t) - Y_{t}(e_{t}, \bm{0}_t)}_{\delta^\co_{t}(\bm{w}_t) } \, ,
\end{align*}
where $e_t$ is a one-hot-encoded vector with the entry of time $t$ to be $1$ and all the remaining entries to be $0$, $\delta^\inst_{t}$ is the {\it instantaneous effect} of treatment at time $t$ on the outcome at time $t$, and  $\delta^\co_{t}(\bm{w}_t)$ is the {\it carryover effect} of treatment assignments $\bm{w}_t$ prior to time $t$ on the outcome at time $t$. For notation simplicity, we let $\delta^\co_t \coloneqq  \delta^\co_{t}(\bm{1}_t)$ be the carryover effect under global treatment. Then, for a sufficiently large $t$ and with a sufficiently long time in the treatment state, the total treatment effect has $\delta^\gate_t = \delta^\inst_t + \delta^\co_t$. 

This assumption, as stated in Assumption \ref{ass:interference-structure} below, imposes a structure on carryover effects; that is, they are additive and can be parameterized by a carryover kernel. 
This assumption is for exposition, which can be relaxed at the cost of more complex notations in the main result, while the insights are generally the same. 
	
	\begin{assumption}[Carryover effects]\label{ass:interference-structure}
	    For every $t$, there exists a carryover kernel $d^\co_{t}(t^\prime)$ that measures the intensity of carryover effect from $t^\prime$ to $t$ and satisfies $\int  d^\co_{t}(t^\prime)  f(t^\prime)  d t^\prime  = 1$, such that
	    \begin{align*}
	       \delta^\co_{t}(\bm{w}_t)
	       =&  \delta^\co_{t} \cdot \int w_{t^\prime} \cdot d^\co_{t}(t^\prime)  f(t^\prime) d t^\prime \, .
	    \end{align*}
	\end{assumption}
	
 The carryover kernel $d^\co_{t}(t^\prime)$ can be quite general. Below, we provide two examples of commonly used carryover kernels in practice. In the case of the non-anticipating outcome, $d_{t}(t') = 0$ for all $t' > t$. Moreover, if the treatment assignments can only affect the outcomes for a duration of $h < \infty$ in the future (as assumed in \cite{bojinov2020design,basse2019minimax,xiong2019optimal} among others)), then $d_{t}(t') = 0$ for all $t' < t - h$ and can take arbitrary values for $t - h \leq t^\prime \leq t$. Alternatively, if the duration is infinity, but carryover effects decay at a geometric rate (as assumed in \cite{hu2022switchback}), then $d^\co_{t}(t^\prime)$ decays with $\exp[-(t - t^\prime)]$ for all $t^\prime < t$.

\begin{example}[Uniform carryover kernel]\label{example:uniform-carryover-kernel}
If the carryover intensity is uniform in $t^\prime \in [t-h, t]$, but is zero outside this interval, then $d^\co_{t}(t') \propto 1/h$ for all $t' \in [t-h, t]$ and $d^\co_{t}(t') = 0$ for all $t' \not\in [t-h, t]$.
\end{example}
\begin{example}[Linear decay carryover kernel]\label{example:linear-decay-carryover-kernel}
If the carryover intensity decays linearly in $t - t^\prime$ for $t^\prime \in [t-h, t]$, and is zero outside this interval, then $d^\co_{t}(t') \propto t-t^\prime$ for all $t' \in [t-h, t]$ and $d^\co_{t}(t') = 0$ for all $t' \not\in [t-h, t]$. See examples in Figure \ref{fig:setup-spillovers}.
\end{example}

Below, we introduce a condition under which the effects of primary and simultaneous interventions are additive. This represents a special case of the confounding between primary and simultaneous interventions. We show how this condition simplifies the decomposition of MSE and makes the decomposition more interpretable.

\begin{condition}[Additivity of Intervention Effects]\label{cond:additive}
	The effects of primary and simultaneous interventions are additive, i.e., 
	\begin{align*}
		Y_t(\bm{w}^\prime_t, \bm{w}^\s_t) - Y_t(\bm{w}_t, \bm{w}^\s_t) 
		= 	 Y_t(\bm{w}^\prime_t, \bm{w}^{\s\prime}_t) - Y_t(\bm{w}_t,\bm{w}^{\s\prime}_t)\, ,
	\end{align*}
	where $\bm{w}^\prime$ and $\bm{w}$ are two treatment assignments of the primary intervention, and $\bm{w}^\s$ and $\bm{w}^{\s\prime}$ are two treatment assignments of the simultaneous intervention.
    
\end{condition}

This condition excludes intervention effects from being synergistic (combining two interventions leads to a larger effect than expected) or antagonistic (combining two interventions leads to a smaller effect than expected). Condition \ref{cond:additive} is reasonable for certain classes of distinct interventions; for example, we may often assume that a pricing change and a routing change act via different mechanisms and are thus additive.

    \subsection{Interval-Level Statistics}\label{subsec:block-stats}
    
    We introduce several interval-level statistics that quantify carryover effects, correlations in measurement errors, confounding effects from simultaneous interventions, and other components at the interval level. These interval-level statistics serve as building blocks of the bias and MSE decomposition in Section \ref{subsec:mse-result}.

    \paragraph{Fraction of events.}

    Let
    \[\mu^{(m)} =  \int_{t \in \mathcal{I}_{m} } f(t) d t \]
    be the fraction of events occurring in the interval $\mathcal{I}_{m}$. $\mu^{(m)} $ ranges from 0 to 1 and the sum of $\mu^{(m)}$ over $m$ equals to $1$. 

    \begin{example}
        If event density $f(t)$ is uniform in time $t$, then $\mu^{(m)} = |\mathcal{I}_{m}|/T$. Moreover, if the fixed duration switchback is used, then $\mu^{(m)} = 1/M$.
    \end{example}

            	\begin{figure}[t]
	\centering
	\begin{subfigure}[b]{0.33\textwidth}
			\centering
			\includegraphics[height=4cm, keepaspectratio]{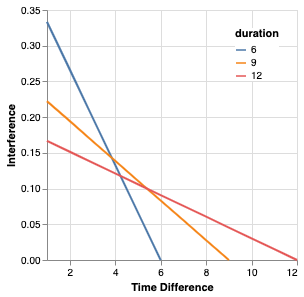}
			\caption{Carryover kernel}   	
 			\label{fig:setup-spillovers}
	\end{subfigure}%
		\begin{subfigure}[b]{0.33\textwidth}
			\centering
			\includegraphics[height=4cm, keepaspectratio]{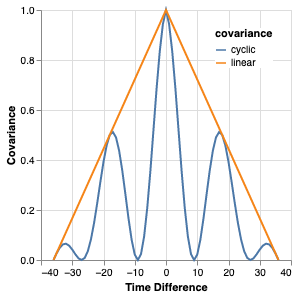}
			\caption{Covariance kernel}   	
 			\label{fig:setup-covariance}
	\end{subfigure}%
	\caption{Illustration of carryover and covariance kernels. In the carryover kernel $d^\co_{t}(t^\prime)$, the time difference denotes $t^\prime - t$. If $t^\prime - t < 0$, then $d^\co_{t}(t^\prime) = 0$. The interpretation of time difference is analogous to the covariance kernel. The covariance kernel can capture the serial correlation demonstrated in Figure \ref{fig:four-factors}.}
 	\label{fig:setup}
\end{figure}

    \paragraph{Mean control outcome.} Let

    \[ \mu^{(m)}_{Y^\mathrm{ctrl}}  = \int_{t \in \mathcal{I}_{m}}  Y_{t}(\bm{0}_t, \bm{0}_t) f(t) dt\]
    be the integrated global control outcome $Y_{t}(\bm{0}_t, \bm{0}_t)$ over time $t$ in the interval $\mathcal{I}_{m}$. 

    \paragraph{Variance and covariance of measurement errors.}
    Let the integrated variance of the measurement error for events in the interval $\mathcal{I}_{m}$ be 
    \[V^{(m)} = \int_{t_i \in \mathcal{I}_{m} } \+E_{\varepsilon}\left[(\varepsilon^{(i)})^2  \mid  t_i \right]  f(t_i) d t_i \, , \]
    where $\+E_{\varepsilon}\left[(\varepsilon^{(i)})^2  \mid  t_i \right]$ represents the variance of measurement error for event $i$ occurring at time $t_i$ (as measurement error has mean zero).

    \begin{example}
        Suppose measurement errors are homoscedastic, that is, $\+E_{\varepsilon}\left[(\varepsilon^{(i)})^2  \mid  t_i \right]= \sigma^2$ for all time $t_i$. If the event density $f(t)$ is uniform in time $t$, then $V^{(m)} = \sigma^2 |\mathcal{I}_{m}|/T$. Additionally, if the fixed duration switchback is used, then $V^{(m)} = \sigma^2/M$. 
    \end{example}

    Next, let the integrated covariance between measurement errors of events in the interval $\mathcal{I}_{\ell m}$ be 
    \[C^{(m)} =  \int_{t_i, t_j \in \mathcal{I}_{m} } \+E_{\varepsilon}\left[\varepsilon^{(i)} \varepsilon^{(j)} \mid  t_i, t_j \right] f(t_i) f(t_j) d t_i d t_j \, ,\]
    where $\+E_{\varepsilon}\left[\varepsilon^{(i)} \varepsilon^{(j)} \mid  t_i, t_j \right] $ is the covariance between the measurement errors of event $i$ occurred at time $t_i$ and event $j$ occurred at time $t_j$.
    
    In practical settings, patterns often exist in how the covariance $\+E_{\varepsilon}\left[\varepsilon^{(i)} \varepsilon^{(j)} \mid  t_i, t_j \right] $ varies with time $t_i$ and time $t_j$, e.g., decays monotonically or periodically with the distance between time $t_i$ and time $t_j$. Therefore, a kernel function can be used to parameterize and capture the patterns in $\+E_{\varepsilon}\left[\varepsilon^{(i)} \varepsilon^{(j)} \mid  t_i, t_j \right] $. See examples in Figure \ref{fig:setup-covariance}.

    \paragraph{Integrated total treatment effects.}

    Let the integrated total treatment effect over time $t$ in the interval $\mathcal{I}_m$ be
    \begin{align*}
        \Xi^{(m)} = \int_{t \in \mathcal{I}_{m}} \delta^\gate_{t}  f(t) d t \, .
    \end{align*}
    Following the definition of $\delta^\gate$, the sum of $\Xi^{(m)}$ over $m$ equals $\delta^\gate$. Moreover, if $\delta^\total_{t}$ is constant in time $t$, then $\Xi^{(m)} = \delta^\gate \mu^{(m)}$ for any $m$.

    \paragraph{Integrated carryover effects.} Let the integrated carryover effect of primary treatments in $\mathcal{I}_k$ to outcomes in $\mathcal{I}_m$ be
        \[I^{(m,k)} = \int_{t \in \mathcal{I}_{m}}  \left[\delta^\co_{t} \int_{t^\prime \in \mathcal{I}_{k}} d^\co_{t}(t^\prime) f(t^\prime) d t^\prime \right]   f(t)   dt \, . \]
    For simplicity in notation, we let $I^{(m)} = I^{(m,m)}$ be the integrated carryover effect of treatments on outcomes in the same interval. The integrated carryover effect $I^{(m,k)}$ increases with the length of both $\mathcal{I}_{m}$ and $\mathcal{I}_{k}$, and increases with the size of carryover effect $\delta^\co_{t}$ for time $t \in \mathcal{I}_{m}$. The sum of $I^{(m,k)}$ over both $m$ and $k$, which is the integrated carryover effect of primary treatments of all intervals on the outcomes of all intervals, is equal to the average of carryover effect $\delta^\co_t$ over time $t$ weighted by event density $f(t)$ (which is denoted by $\delta^\co$ hereafter). Moreover, if the carryover effect $\delta^\co_t$ is constant in $t$, then the sum of $I^{(m,k)}$ over $k$, which is the integrated carryover effect of the treatment of all intervals on the outcomes in the interval $\mathcal{I}_{m}$, is equal to $\delta^\co \mu^{(m)}$. Therefore, $I^{(m,k)}$ can be viewed as the building block of $\delta^\co$.  

    \paragraph{Confounding effects from the simultaneous intervention.} For any time $t$, let
    \[\delta^\simul_t(\bm{W}_t) = \+E_{\bm{W}^\s_t}\left[Y_t(\bm{W}_t, \bm{W}^\s_t) - Y_t(\bm{W}_t, \bm{0}_t) \mid \bm{W}_t, t \right] \]
    be the expected treatment effects from the simultaneous intervention at time $t$, conditional on $\bm{W}_t$. Here, the expectation is taken with respect to the distribution from which the treatment design of the simultaneous intervention is drawn. If the simultaneous intervention has nonzero treatment effects, then $\delta^\simul_t(\bm{W}_t)$ is generally nonzero, which can then bias the HT estimator of $\delta^\gate$. 
    
    We introduce a quantity below that measures the integrated bias from the treatment effects of simultaneous intervention
    \[S^{(m)} = \int_{t \in \mathcal{I}_{m}} \Phi^\simul_t f(t) dt \, , 
    \]
    where $\Phi^\simul_t$ is defined as
    \[\Phi^\simul_t = \+E_{\bm{W}^{(-m)} }\big[ \delta^\simul_t(\bm{W}^{(-m)},  W^{(m)} = 1 ) \big]  - \+E_{\bm{W}^{(-m)} }\big[ \delta^\simul_t(\bm{W}^{(-m)},  W^{(m)} = 0 ) 
        \big] \, .\]
    Here $W^{(m)}$ and $\bm{W}^{(-m)}$ are the treatment assignments of interval $\mathcal{I}_m$ and of all the intervals excluding $\mathcal{I}_m$, respectively. 
    As treatments are assigned at the interval level, $\bm{W}_t$ is uniquely determined given $(W^{(m)},\bm{W}^{(-m)})$, so we can also write $\delta^\simul_t(\bm{W}_t)$ as $\delta^\simul_t(\bm{W}^{(-m)},  W^{(m)})$.

    The quantity $\Phi^\simul_t$ measures the imbalance of expected treatment effects from the simultaneous intervention at time $t$ between when the $m$-th interval is treated ($W^{(m)} = 1$) versus when this interval is not treated ($W^{(m)} = 0$). Here, the expectation is taken with respect to the distribution from which $\bm{W}^{(-m)}$ is drawn. If the imbalance $\Phi^\simul_t$ is larger, then the integrated bias from simultaneous interventions $S^{(m)}$ is generally larger. 

    Note that $\Phi^\simul_t$ can be zero in some special cases, such as when the effects of primary and simultaneous interventions are additive (Condition \ref{cond:additive} holds), as illustrated in the following example. 

    \begin{example}[Additive effects]\label{example:additive-simul}
        When Condition \ref{cond:additive} holds, we have 
        \begin{align*}
            & Y_t((\bm{W}^{(-m)},  W^{(m)} = 1 ) , \bm{W}^\s_t) - Y_t((\bm{W}^{(-m)},  W^{(m)} = 0 )  , \bm{W}^\s_t) \\
            =& Y_t((\bm{W}^{(-m)},  W^{(m)} = 1 ) , \bm{0}_t) - Y_t((\bm{W}^{(-m)},  W^{(m)} = 0 ) , \bm{0}_t) \, .
        \end{align*}
        Moving the terms from the right-hand side to the left-hand side and then taking the expectation over $\bm{W}^\s_t$, we obtain 
        \[\delta^\simul_t(\bm{W}^{(-m)},  W^{(m)} = 1 ) - \delta^\simul_t(\bm{W}^{(-m)},  W^{(m)} = 0 ) = 0 \, , \qquad \text{for any } \bm{W}^{(-m)} \, , \]
        which implies that $\Phi^\simul_t$ is zero.
    \end{example}

    However, when the effects of primary and simultaneous interventions are not additive (e.g., two interventions are useful only when both are present), $\Phi^\simul_t$ is generally nonzero and is a source of bias of the HT estimator.

    \subsection{Main Results}\label{subsec:mse-result}
    
    We provide the decomposition of the bias and MSE of $\hat{\delta}^\gate$ from the HT estimator in terms of the interval-level statistics in Theorems \ref{theorem:switchback-bias} and \ref{theorem:bias-variance-switchback} below. The decomposition shows how different components in the outcome affect the estimation error of $\hat{\delta}^\gate$. In the theorem statements, we use $W^{\s(m)}$ to denote the $m$-th interval's treatment assignment of the simultaneous intervention.

    \begin{theorem}[Estimation Bias]\label{theorem:switchback-bias}
        Suppose Assumptions \ref{ass:exogeneity} and \ref{ass:interference-structure} hold, $W^{(m)}$ is independent in $m$ with $\P(W^{(m)} = 1)  = 1/2$, and $W^{\s(m)}$ is independent in $m$ with $ \P(W^{\s(m)} = 1) = 1/2$. Moreover, $\bm{W}$ is independent of $\bm{W}^\s$. The estimation bias of $\hat{\delta}^\gate$ is
        \begin{align*}
        \+E_{W,\varepsilon,t}\left[\hat{\delta}^\gate  - \delta^\gate\right] =& \bias(\mathcal{E}_\mathrm{carryover}) + \bias(\mathcal{E}_\mathrm{simul}) 
        \, ,
    \end{align*}
    where 
    \begin{align*}
        &\bias(\mathcal{E}_\mathrm{carryover}) = \sum_{m = 1}^M I^{(m)} - \delta^\co \\
	&\bias(\mathcal{E}_\mathrm{simul}) = \sum_{m = 1}^M S^{(m)} \, .
    \end{align*}
    \end{theorem}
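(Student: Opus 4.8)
The plan is to evaluate $\+E_{W,\varepsilon,t}[\hat\delta^\gate]$ by integrating out the three sources of randomness in turn and then subtracting $\delta^\gate$. First I take the expectation over the measurement errors: writing $Y^{(i)} = Y_{t_i}(\bm{W}_{t_i},\bm{W}^\s_{t_i}) + \varepsilon^{(i)}$ with $\+E[\varepsilon^{(i)}\mid t_i]=0$, the errors contribute nothing to the mean, and each summand reduces to $\big(\frac{W_{t_i}}{\pi} - \frac{1-W_{t_i}}{1-\pi}\big) Y_{t_i}(\bm{W}_{t_i},\bm{W}^\s_{t_i})$. By Assumption \ref{ass:exogeneity} the event times are i.i.d.\ from $f$ and independent of $\bm{W},\bm{W}^\s$, so each of the $n$ terms has the same expectation over $t$, and averaging replaces the empirical sum by $\int_{[0,T]}\big(\frac{W_t}{\pi} - \frac{1-W_t}{1-\pi}\big) Y_t(\bm{W}_t,\bm{W}^\s_t)\, f(t)\,dt$, still random in $\bm{W},\bm{W}^\s$.

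Next I integrate out the simultaneous design. Since $\bm{W}\perp\bm{W}^\s$, conditioning on $\bm{W}_t$ and applying the definition of $\delta^\simul_t$ gives $\+E_{\bm{W}^\s}[Y_t(\bm{W}_t,\bm{W}^\s_t)\mid\bm{W}_t] = Y_t(\bm{W}_t,\bm{0}_t) + \delta^\simul_t(\bm{W}_t)$, and Assumption \ref{ass:interference-structure} expands the first term as $Y_t(\bm{0}_t,\bm{0}_t) + W_t\,\delta^\inst_t + \delta^\co_t\int W_{t'}\,d^\co_t(t')f(t')\,dt'$. The key simplification is that with $\pi=1/2$ both inverse-probability weights equal $2$, so for $t\in\mathcal{I}_m$ the remaining expectation over $\bm{W}$ collapses---using independence across intervals and $\P(W^{(m)}=1)=1/2$---into the treated-minus-control contrast $\+E_{\bm{W}^{(-m)}}[\,\cdot\mid W^{(m)}=1] - \+E_{\bm{W}^{(-m)}}[\,\cdot\mid W^{(m)}=0]$ applied to that conditional mean outcome.

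Evaluating this contrast term by term is the heart of the argument and the step I expect to demand the most care. The baseline $Y_t(\bm{0}_t,\bm{0}_t)$ is $\bm{W}$-free and cancels; the instantaneous piece leaves $\delta^\inst_t$; the simultaneous piece leaves precisely $\Phi^\simul_t$ by its definition; and the delicate point is the carryover integral, where flipping $W^{(m)}$ alters $W_{t'}$ only for $t'\in\mathcal{I}_m$, so all cross-interval carryovers cancel and the contrast retains only $\delta^\co_t\int_{t'\in\mathcal{I}_m} d^\co_t(t')f(t')\,dt'$. Integrating against $f$ over $t\in\mathcal{I}_m$ and summing over $m$ identifies the three surviving pieces as $\int\delta^\inst_t f(t)\,dt$, $\sum_m I^{(m)}$, and $\sum_m S^{(m)}$.

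Finally I assemble the pieces. Using $\delta^\gate_t=\delta^\inst_t+\delta^\co_t$ and $\delta^\co=\int\delta^\co_t f(t)\,dt$ gives $\int\delta^\inst_t f(t)\,dt=\delta^\gate-\delta^\co$, so subtracting $\delta^\gate$ yields $\+E_{W,\varepsilon,t}[\hat\delta^\gate-\delta^\gate] = \big(\sum_m I^{(m)}-\delta^\co\big) + \sum_m S^{(m)}$, which are exactly $\bias(\mathcal{E}_\mathrm{carryover})$ and $\bias(\mathcal{E}_\mathrm{simul})$. Conceptually, the main obstacle is the cross-interval carryover cancellation in the treated-minus-control contrast: it is what forces the own-interval total $\sum_m I^{(m)}$ to appear in place of the full $\delta^\co=\sum_{m,k}I^{(m,k)}$, and hence what generates the carryover bias $\sum_m I^{(m)}-\delta^\co$.
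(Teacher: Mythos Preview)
Your argument is correct and carries out the same underlying computation as the paper: the measurement errors vanish in mean, the i.i.d.\ event times reduce the sum to an integral, and the key step---the treated-versus-control contrast over $W^{(m)}$ that kills the baseline and the cross-interval carryovers while retaining $\delta^\inst_t$, the own-interval carryover $\delta^\co_t\int_{\mathcal{I}_m} d^\co_t(t')f(t')\,dt'$, and $\Phi^\simul_t$---is exactly the calculation in the paper's Lemmas~\ref{lemma:mean-measurement-error}--\ref{lemma:simul-bias}.

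The organization differs slightly. The paper first writes an algebraic four-term decomposition $\hat\delta^\gate-\delta^\gate=\mathcal{E}_{\mathrm{inst}}+\mathcal{E}_{\mathrm{carryover}}+\mathcal{E}_{\mathrm{simul}}+\mathcal{E}_{\mathrm{meas}}$ (with $\alpha_t=(W_t-\pi)/[\pi(1-\pi)]$) and then takes expectations of each piece separately; you instead integrate out the randomness source by source ($\varepsilon$, then $t$, then $\bm{W}^\s$, then $\bm{W}$) and read off the contrast directly from the $\pi=1/2$ weights. Your route is a bit more streamlined for the bias alone, whereas the paper's explicit decomposition pays off in Theorem~\ref{theorem:bias-variance-switchback}, where the same named pieces are reused for the MSE and its cross terms.
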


    Theorem \ref{theorem:switchback-bias} shows there are two sources of bias in the HT estimator. The first source of bias comes from the carryover effects and is measured by $\bias(\mathcal{E}_\mathrm{carryover})$. If carryover effects are zero, (i.e., $\delta^\co_{t} = 0$ for all time $t$), then $\bias(\mathcal{E}_\mathrm{carryover}) = 0$. If carryover effects are nonzero, then $\bias(\mathcal{E}_\mathrm{carryover})$ quantifies the bias in the HT estimator arising from using direct treated and control outcomes to approximate globally treated and control outcomes, respectively.

The second source of bias comes from the confounding effects of simultaneous interventions and is measured by $\bias(\mathcal{E}_\mathrm{simul})$. When the effects of main and simultaneous interventions are additive, as illustrated in Example \ref{example:additive-simul}, $S^{(m)}$ is zero and then $\bias(\mathcal{E}_\mathrm{simul})$ is also zero. However, when effects are not additive, $\bias(\mathcal{E}_\mathrm{simul})$ is generally nonzero and tends to increase with the effect size of the simultaneous intervention.

Both sources of bias can be reduced by properly choosing a switchback design. To mitigate the bias from carryover effects, switching less frequently helps. For example, in the setting of uniform event density and fixed-duration switchback, $I^{(m)}=\delta^\co(1/M - h/(2T))$ as shown in Example \ref{example:carryover-value}, and then the carryover bias is $|\bias(\mathcal{E}_\mathrm{carryover})| = \delta^\co Mh/(2T)$, which increases with the number of intervals $M$. To reduce the bias from simultaneous experiments, randomizing and staggering the switching times helps, such as by using Poisson duration switchback. This can be clearly seen in the toy Example \ref{example:simul-misalign} in Appendix \ref{subsec:additional-examples} and simulation results in Figure \ref{subfig:simul-vary-design} below. 

Next, we show the decomposition of the MSE of $\hat{\delta}^\gate$.

    \begin{theorem}[Mean-Squared Error]\label{theorem:bias-variance-switchback}
Suppose the assumptions in Theorem \ref{theorem:switchback-bias} hold.
The MSE of $\hat{\delta}^\gate$ is
    \begin{align*}
    \+E_{W,\varepsilon,t}\left[\left(\hat{\delta}^\gate  - \delta^\gate \right)^2\right] =& \var(\mathcal{E}_\mathrm{meas}) +  \bias(\mathcal{E}_\mathrm{carryover}) ^2  + \var(\mathcal{E}_\mathrm{inst}+\mathcal{E}_\mathrm{carryover}) \\
    & + \+E[\mathcal{E}_\mathrm{simul}^2] + 2 \+E[(\mathcal{E}_\mathrm{inst}+\mathcal{E}_\mathrm{carryover})\cdot \mathcal{E}_\mathrm{simul}] \, ,
    \end{align*}
    where 
    \begin{align*}
    &\var(\mathcal{E}_\mathrm{meas}) = 4 \sum_{m=1}^M \left(V^{(m)}/n  + C^{(m)} \cdot (n-1)/n \right) \\
    &\var(\mathcal{E}_\mathrm{inst}+\mathcal{E}_\mathrm{carryover})  = \sum_{m = 1}^M \left(\Xi^{(m)}  + 2  \mu^{(m)}_{Y^\mathrm{ctrl}}   \right)^2   + \sum_{m=1}^M \sum_{m^\prime \neq m} \left(\left[I^{(m,m^\prime)}\right]^2 + I^{(m,m^\prime)} I^{(m^\prime, m)}\right)
    \end{align*}
    and 
    \begin{align*}
    &\+E[\mathcal{E}_\mathrm{simul}^2] = \sum_{m=1}^M \sum_{m^\prime = 1}^M S^{(m,m^\prime)}_{\mathrm{var}} \\
    &\+E[(\mathcal{E}_\mathrm{inst}+\mathcal{E}_\mathrm{carryover}) \cdot \mathcal{E}_\mathrm{simul}]  = \sum_{m = 1}^M \sum_{m^\prime = 1}^M S^{(m,m^\prime)}_{\mathrm{cov}}
    \end{align*}
     with $S^{(m,m^\prime)}_{\mathrm{var}}$ and $S^{(m,m^\prime)}_{\mathrm{cov}}$ defined in Equations \eqref{eqn:S-m-mprime-var} and \eqref{eqn:S-m-mprime-cov}, respectively, in Appendix \ref{subsec:notations}.
    \end{theorem}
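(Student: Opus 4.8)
The plan is to decompose the estimation error $\hat{\delta}^\gate - \delta^\gate$ into four additive pieces, one per source of randomness, and then expand the square. Substituting $Y^{(i)} = Y_{t_i} + \varepsilon^{(i)}$ and, via Assumption \ref{ass:interference-structure} together with the simultaneous-effect term $\Delta^\simul_{t_i} := Y_{t_i}(\bm{W}_{t_i},\bm{W}^\s_{t_i}) - Y_{t_i}(\bm{W}_{t_i},\bm{0}_{t_i})$, writing $Y_{t_i} = Y_{t_i}(\bm{0}_{t_i},\bm{0}_{t_i}) + W_{t_i}\delta^\inst_{t_i} + \delta^\co_{t_i}(\bm{W}_{t_i}) + \Delta^\simul_{t_i}$, the estimator splits as $\hat{\delta}^\gate - \delta^\gate = \mathcal{E}_\mathrm{meas} + (\mathcal{E}_\mathrm{inst}+\mathcal{E}_\mathrm{carryover}) + \mathcal{E}_\mathrm{simul}$. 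Here $\mathcal{E}_\mathrm{meas} = \frac{2}{n}\sum_i(2W_{t_i}-1)\varepsilon^{(i)}$ collects the measurement noise, $\mathcal{E}_\mathrm{inst}$ (the control-plus-instantaneous terms) is centered to have mean zero over $\bm{W}$, $\mathcal{E}_\mathrm{carryover}$ carries the carryover bias $\bias(\mathcal{E}_\mathrm{carryover}) = \sum_m I^{(m)} - \delta^\co$ established in Theorem \ref{theorem:switchback-bias}, and $\mathcal{E}_\mathrm{simul}$ collects the $\Delta^\simul$ terms.

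The first step after this split is to eliminate every cross term involving $\mathcal{E}_\mathrm{meas}$. Conditioning on $(\bm{W},\bm{W}^\s,\{t_i\})$, the other three pieces are deterministic while $\+E_\varepsilon[\varepsilon^{(i)} \mid \bm{W},\bm{W}^\s,\{t_i\}] = 0$; hence $\+E[\mathcal{E}_\mathrm{meas}\cdot X] = 0$ for any $X$ measurable with respect to $(\bm{W},\bm{W}^\s,\{t_i\})$, and $\var(\mathcal{E}_\mathrm{meas}) = \+E[\mathcal{E}_\mathrm{meas}^2]$. Expanding the square and applying $\+E[(\mathcal{E}_\mathrm{inst}+\mathcal{E}_\mathrm{carryover})^2] = \bias(\mathcal{E}_\mathrm{carryover})^2 + \var(\mathcal{E}_\mathrm{inst}+\mathcal{E}_\mathrm{carryover})$ then reduces the MSE to exactly the five terms in the statement, so it remains to evaluate each.

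For $\var(\mathcal{E}_\mathrm{meas})$ I would separate the diagonal $i=j$, where $(2W_{t_i}-1)^2=1$ and averaging over $t_i$ contributes $\sum_m V^{(m)}$ per event, giving $4\sum_m V^{(m)}/n$; the off-diagonal $i\neq j$ uses that, by interval-level independence and $\+E[2W^{(m)}-1]=0$, $\+E_{\bm{W}}[(2W_{t_i}-1)(2W_{t_j}-1)\mid t_i,t_j]$ equals $1$ when $t_i,t_j$ lie in the same interval and $0$ otherwise, so only within-interval covariances survive and produce $4(n-1)\sum_m C^{(m)}/n$. For the signal term I would condition on $\bm{W}$: using $\int d^\co_t(t')f(t')dt'=1$ and $\sum_k I^{(m,k)} = \int_{\mathcal{I}_m}\delta^\co_t f(t)dt$, the conditional mean is a quadratic form in the independent signs $B_m := 2W^{(m)}-1$ (mean $0$, variance $1$), whose linear coefficients are $\Xi^{(m)}+2\mu^{(m)}_{Y^\mathrm{ctrl}}$ (since $\int_{\mathcal{I}_m}(\delta^\inst_t+\delta^\co_t)f\,dt = \Xi^{(m)}$) and whose bilinear coefficients are the $I^{(m,m')}$. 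Taking the variance and using the fourth-moment pairing $\+E[B_mB_kB_{m'}B_{k'}]$, which is nonzero only when the index multiset pairs up, collapses it to $\sum_m(\Xi^{(m)}+2\mu^{(m)}_{Y^\mathrm{ctrl}})^2 + \sum_m\sum_{m'\neq m}([I^{(m,m')}]^2 + I^{(m,m')}I^{(m',m)})$, where the $I^{(m,m')}I^{(m',m)}$ piece comes precisely from the two ways an unordered pair $\{m,m'\}$ can be matched. The simultaneous terms are handled identically but averaging additionally over $\bm{W}^\s$ (independent of $\bm{W}$), collecting the results into the interval-pair statistics $S^{(m,m')}_\mathrm{var}$ and $S^{(m,m')}_\mathrm{cov}$; under Condition \ref{cond:additive} these vanish because $\Phi^\simul_t\equiv 0$ (Example \ref{example:additive-simul}).

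I expect the main obstacle to be the careful treatment of the event-occurrence-time randomness in the signal. Conditioning on $\bm{W}$ replaces $\frac{1}{n}\sum_i$ by its population integral and yields exactly the interval-level quantities above, but the finite-sample average also carries an $O(1/n)$, mean-zero within-interval fluctuation of the deterministic signal that is absent from the stated clean form; the crux is to verify that this contribution is correctly accounted for---either absorbed into the measurement and covariance term or controlled under the operative regime---so that the decomposition holds as written. A secondary source of bookkeeping is keeping the $\bm{W}$- and $\bm{W}^\s$-dependence of $\mathcal{E}_\mathrm{simul}$ disentangled when forming $\+E[\mathcal{E}_\mathrm{simul}^2]$ and the signal--simultaneous cross term, since both depend on the shared primary assignment $\bm{W}$, which is what makes that cross term generically nonzero.
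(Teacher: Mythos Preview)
Your decomposition and term-by-term evaluation match the paper's proof: split the error into $\mathcal{E}_\mathrm{meas}$, $\mathcal{E}_\mathrm{inst}$, $\mathcal{E}_\mathrm{carryover}$, $\mathcal{E}_\mathrm{simul}$, kill the $\mathcal{E}_\mathrm{meas}$ cross terms by conditioning, then compute each remaining second moment and cross product using $\+E_W[\alpha_{t_i}\alpha_{t_j}\mid t_i,t_j] = 4\cdot\bm{1}(\text{same interval})$ together with the interval-level independence of the $W^{(m)}$. The only organizational difference is that the paper keeps $\mathcal{E}_\mathrm{inst}$ and $\mathcal{E}_\mathrm{carryover}$ separate throughout (computing $\+E[\mathcal{E}_\mathrm{inst}^2]$, $\+E[\mathcal{E}_\mathrm{carryover}^2]$, and $\+E[\mathcal{E}_\mathrm{inst}\mathcal{E}_\mathrm{carryover}]$ as distinct lemmas before recombining), whereas you treat them jointly as a quadratic form in $B_m = 2W^{(m)}-1$; both routes are equivalent.

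One correction: under Condition~\ref{cond:additive} the simultaneous terms do \emph{not} vanish. Only $\bias(\mathcal{E}_\mathrm{simul})$ is zero; $\+E[\mathcal{E}_\mathrm{simul}^2]$ and the cross term remain nonzero and are given explicitly in Proposition~\ref{prop:additive-one-simul-effect}. The observation $\Phi^\simul_t\equiv 0$ kills the first moment of $\mathcal{E}_\mathrm{simul}$, not its second moment.

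On your flagged obstacle: the paper's proof writes $\frac{1}{n^2}\sum_{i,j}\+E_{W,t}[\cdots]$ and then evaluates the expectation with $t_i,t_j$ treated as two independent draws from $f$ for \emph{every} pair; it does not separate diagonal from off-diagonal for the signal pieces (that split is done only for $\mathcal{E}_\mathrm{meas}$, which is where the $V^{(m)}/n$ and $(n-1)C^{(m)}/n$ factors originate). The interval-level formulas in the theorem are thus exactly what the paper's computation produces, so proceed the same way rather than trying to isolate an additional $O(1/n)$ sampling-fluctuation term.
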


    Theorem \ref{theorem:bias-variance-switchback} demonstrates that, in addition to the bias terms, the MSE is affected by three sources of variance. The first source of variance arises from the randomness in event measurement errors and is quantified by $\var(\mathcal{E}_\mathrm{meas})$. Note that $\var(\mathcal{E}_\mathrm{meas})$ consists of two parts: the first part $V^{(m)}$, which measures the variance of event measurement error, and the second part $C^{(m)}$, which measures the covariance between measurement errors of two events. As the number of events grows, the first part diminishes and the second part dominates. If the correlation in measurement errors is persistent, then $C^{(m)}$ is larger, leading to a larger MSE. 

    The second source of variance stems from the randomness in treatment assignments of the primary intervention and is measured by $\var(\mathcal{E}_\mathrm{inst}+\mathcal{E}_\mathrm{carryover})$. The expression of $\var(\mathcal{E}_\mathrm{inst}+\mathcal{E}_\mathrm{carryover})$ shows that: (1) it increases with the size of the instantaneous effect, following that the term $\Xi^{(m)}$ increases with the instantaneous effect; (2) it increases with the size of the carryover effect, following that both $\Xi^{(m)}$ and $I^{(m,m^\prime)}$ increase with the carryover effect; and (3) it increases with the scale of the mean outcome, following that the term $\mu^{(m)}_{Y^\mathrm{ctrl}} $ increases with the mean outcome.

    The third source of variance arises from the randomness in treatment assignments of the simultaneous intervention and affects $\+E[\mathcal{E}_\mathrm{simul}^2]$. In Proposition \ref{prop:additive-one-simul-effect} below, we present the expression for $\+E[\mathcal{E}_\mathrm{simul}^2]$ under the condition of additive primary and simultaneous effects. This expression shows that this source of variance increases with the magnitude of instantaneous and carryover effects of simultaneous intervention. Furthermore, this term increases with the overlap between intervals of the primary and simultaneous interventions.

    In addition to the bias and variance terms, the MSE includes a cross term $\+E[(\mathcal{E}_\mathrm{inst}+\mathcal{E}_\mathrm{carryover}) \cdot \mathcal{E}_\mathrm{simul}] $. Proposition \ref{prop:additive-one-simul-effect} also provides the expression for this term under the condition of additive primary and simultaneous effects.\footnote{A preliminary version of this proposition appears in \cite{xiong2023bias}. Theorems \ref{theorem:switchback-bias} and \ref{theorem:bias-variance-switchback} are more general as they both allow for non-additive effects of the primary and simultaneous interventions. } Generally, if the variance terms from primary and simultaneous interventions are larger, then this cross term is larger.

        \begin{proposition}\label{prop:additive-one-simul-effect}
        Under the assumptions in Theorem \ref{theorem:bias-variance-switchback}, if Condition \ref{cond:additive} holds, then the bias from the simultaneous intervention $\bias(\mathcal{E}_\mathrm{simul}) $ is zero
         and the variance from the simultaneous intervention is 
             \begin{align*}
         \+E[\mathcal{E}_\mathrm{simul}^2] 
         =&  \sum_{m = 1}^M  \bigg( 
\int_{t\in \mathcal{I}_{m}}  \delta^{\s.\gate}_t  f(t) dt \bigg)^2   \\ & + \sum_{m = 1}^M  \sum_{m^\prime =1}^M  \bigg( 
\int_{t\in \mathcal{I}_{m} \cap \mathcal{I}^\s_{m^\prime} }  \delta^{\s.\inst}_t  f(t) dt + \int_{t \in \mathcal{I}_{m}, t^\prime \in \mathcal{I}^\s_{m^\prime} }  \delta^{\s.\co}_{t} d^{\s.\co}_{t}(t^\prime)  f(t) f(t^\prime)  dt d t^\prime\bigg)^2 \, ,
    \end{align*}
    and
    \begin{align*}
     \+E[(\mathcal{E}_\mathrm{inst}+\mathcal{E}_\mathrm{carryover}) \cdot \mathcal{E}_\mathrm{simul}] =& \sum_{m=1}^M \left(\Xi^{(m)} + 2 \mu^{(m)}_{Y^\mathrm{ctrl}}\right)\bigg( 
\int_{t\in \mathcal{I}_{m}} \delta^{\s.\gate}_t  f(t) dt \bigg) \, ,
 \end{align*}
 where $\delta^{\s.\gate}_t$, $\delta^{\s.\inst}_t$, $\delta^{\s.\co}_t$, and $d^{\s.\co}_t$ are the total treatment effect, instantaneous effect, carryover effect, and carryover kernel of the simultaneous intervention at time $t$, respectively, and $\mathcal{I}^\s_{m}$ is the $m$-th interval of the simultaneous intervention. 
    \end{proposition}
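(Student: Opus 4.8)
The plan is to specialize the general MSE decomposition of Theorem \ref{theorem:bias-variance-switchback} to the additive case, exploiting the fact that under Condition \ref{cond:additive} the simultaneous intervention enters the marketplace outcome through a term that depends on $\bm{W}^\s$ alone and is independent of the primary assignment $\bm{W}$. Concretely, under additivity the simultaneous effect admits the same instantaneous-plus-carryover structure as the primary effect in Assumption \ref{ass:interference-structure}, so I would first write the simultaneous signal at time $t$ as $g_t(\bm{W}^\s) = W^\s_t \delta^{\s.\inst}_t + \delta^{\s.\co}_t \int W^\s_{t'} d^{\s.\co}_t(t') f(t') dt'$, which is \emph{linear} in the interval-level assignments $\{W^{\s(m')}\}$. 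The zero-bias claim then follows immediately: by Example \ref{example:additive-simul} the imbalance $\Phi^\simul_t$ vanishes, hence $S^{(m)} = 0$ for every $m$ and $\bias(\mathcal{E}_\mathrm{simul}) = \sum_m S^{(m)} = 0$.

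For the variance $\+E[\mathcal{E}_\mathrm{simul}^2]$, I would start from the representation of $\mathcal{E}_\mathrm{simul}$ from the proof of Theorem \ref{theorem:bias-variance-switchback} as an integral of the recentered sign $(2W_t - 1)$ against $g_t(\bm{W}^\s)$, weighted by $f$. Because $\bm{W} \perp \bm{W}^\s$, I can take the expectations over $\bm{W}$ and over $\bm{W}^\s$ separately. The $\bm{W}$-expectation of $(2W_t-1)(2W_{t'}-1)$ equals one when $t, t'$ lie in the same primary interval $\mathcal{I}_m$ and zero otherwise (since $\P(W^{(m)}=1)=1/2$ and assignments are independent across $m$), which collapses the double integral into $\sum_m$ of an integral over $\mathcal{I}_m \times \mathcal{I}_m$. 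Splitting $\+E_{\bm{W}^\s}[g_t g_{t'}]$ into $\+E[g_t]\+E[g_{t'}] + \cov_{\bm{W}^\s}(g_t, g_{t'})$ then produces the two stated terms: using $\+E_{\bm{W}^\s}[g_t] = \tfrac12 \delta^{\s.\gate}_t$ (the kernel integrates to one) gives the first term $\sum_m (\int_{\mathcal{I}_m} \delta^{\s.\gate}_t f(t) dt)^2$, while writing $g_t = \sum_{m'} W^{\s(m')} a_{m'}(t)$ and using $\var(W^{\s(m')}) = 1/4$ with independence across $m'$ reduces the covariance to $\tfrac14 \sum_{m'} a_{m'}(t) a_{m'}(t')$, yielding $\sum_m \sum_{m'} (\int_{\mathcal{I}_m} a_{m'}(t) f(t) dt)^2$; identifying $\int_{\mathcal{I}_m} a_{m'}(t) f(t) dt$ with the instantaneous and carryover integrals over $\mathcal{I}_m \cap \mathcal{I}^\s_{m'}$ and $\mathcal{I}_m \times \mathcal{I}^\s_{m'}$ recovers the second term exactly.

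For the cross term I would again factor the expectation over $\bm{W}^\s$. Since the primary error $\mathcal{E}_\mathrm{inst}+\mathcal{E}_\mathrm{carryover}$ carries no $\bm{W}^\s$-dependence, only the mean $\+E_{\bm{W}^\s}[g_t] = \tfrac12 \delta^{\s.\gate}_t$ survives, so inside the cross expectation $\mathcal{E}_\mathrm{simul}$ may be replaced by $\int (2W_t - 1) \delta^{\s.\gate}_t f(t) dt$. Taking the remaining $\bm{W}$-expectation and again using the same-interval indicator for $(2W_t-1)(2W_{t'}-1)$ collapses everything to $\sum_m$ of the integrated simultaneous total effect over $\mathcal{I}_m$ times the integrated primary signal over $\mathcal{I}_m$; recognizing the latter as $\Xi^{(m)} + 2\mu^{(m)}_{Y^\mathrm{ctrl}}$ (the integrated total primary effect plus the control-outcome contribution that also appears in $\var(\mathcal{E}_\mathrm{inst}+\mathcal{E}_\mathrm{carryover})$) gives the claimed formula.

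The main obstacle I anticipate is bookkeeping rather than any single hard estimate: I must ensure the split of $g_t$ into its instantaneous and carryover pieces is correctly aligned with both the primary partition $\{\mathcal{I}_m\}$ and the simultaneous partition $\{\mathcal{I}^\s_{m'}\}$ when forming $a_{m'}(t)$, so that the cross-interval carryover contributions land in the right $(m,m')$ summand, and that the factors of $2$ from recentering $(2W_t-1)$ and of $\tfrac12$ from $\+E[g_t]$ combine with $\var(W^{\s(m')})=1/4$ to cancel cleanly. I would verify these constants against the general definitions of $S^{(m,m')}_\mathrm{var}$ and $S^{(m,m')}_\mathrm{cov}$ in Appendix \ref{subsec:notations} as a consistency check.
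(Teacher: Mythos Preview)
Your approach is correct and is actually more direct than the paper's. The paper proves Proposition~\ref{prop:additive-one-simul-effect} as the $K=1$ case of Proposition~\ref{prop:additive-simul-effect}, and its route is to specialize the general quantities $S^{(m,m')}_{\mathrm{var}}$, $S^{(m,m')}_1$, $S^{(m,m')}_2$, $S^{(m,m')}_3$ from Appendix~\ref{subsec:notations}: four auxiliary lemmas (Lemmas~\ref{lemma:S-var}, \ref{lemma:S-cov-1}--\ref{lemma:S-cov-3}) show term-by-term that under additivity $\delta^{\simul}_t(\bm{W})$ and $\delta^{\simul,2}_{t,t'}(\bm{W})$ lose their $\bm{W}$-dependence, which kills the off-diagonal $m\neq m'$ contributions and leaves only the diagonal pieces that are then expanded. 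Your argument bypasses the general machinery entirely by writing the simultaneous contribution as $g_t(\bm{W}^\s)=\sum_{m'}W^{\s(m')}a_{m'}(t)$ and using the mean-plus-covariance split of $\+E_{\bm{W}^\s}[g_t g_{t'}]$; this is cleaner for the variance term and makes the two displayed sums appear naturally as the squared-mean and $\var(W^{\s(m')})=1/4$ pieces. The paper's route has the advantage of showing exactly how each general $S^{(m,m')}$ collapses, which is useful for readers tracking the non-additive theorem, while yours is a faster standalone argument.

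One caution on your cross-term step: the sentence ``using the same-interval indicator for $(2W_t-1)(2W_{t'}-1)$ collapses everything to $\sum_m$'' is not quite enough, because $\mathcal{E}_{\mathrm{inst}}+\mathcal{E}_{\mathrm{carryover}}$ is not of the form $\int(2W_{t'}-1)h_{t'}\,f(t')\,dt'$ with deterministic $h_{t'}$---the carryover piece of $Y_{t'}(\bm{W},\bm{0})$ brings in $W^{(k)}$ for $k$ ranging over all intervals. You therefore need to check that terms like $\+E\big[(2W^{(m)}-1)(2W^{(m')}-1)W^{(k)}\big]$ vanish whenever $m\neq m'$, which they do by independence and $\+E[2W^{(m)}-1]=0$; once this is verified, the surviving $m'=m$ contribution is indeed $\Xi^{(m)}+2\mu^{(m)}_{Y^{\mathrm{ctrl}}}$ as you claim. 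This is exactly the bookkeeping you flag in your last paragraph, so just be sure to carry it out rather than pattern-match from the variance formula.
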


    The MSE decomposition makes it clear how the design affects the MSE.
    First and foremost, the switching interval length has a mixed effect on MSE. Switching frequently reduces most variance and cross terms because the number of interval-level observations increases, thereby increasing the effective sample size.
    
    This can be seen clearly by analyzing the order of the terms in the setting of uniform event density and fixed-duration switchback. Specifically, in this setting, $\var(\mathcal{E}_\mathrm{meas}) = O(1/M)$ decreases with $M$, following that $C^{(m)} = O(1/M^2)$ as shown in Example \ref{example:covariance-value}. Moreover, the term $\sum_{m = 1}^M \big(\Xi^{(m)} + 2  \mu^{(m)}_{Y^\mathrm{ctrl}}\big)^2 = O(1/M)$ also decreases with $M$, given that $\Xi^{(m)} = O(1/M)$ and $\mu^{(m)}_{Y^\mathrm{ctrl}} = O(1/M)$. Analogously, for the variance from the simultaneous intervention, both $\+E[\mathcal{E}_\mathrm{simul}^2] $ and $\+E[(\mathcal{E}_\mathrm{inst}+\mathcal{E}_\mathrm{carryover}) \cdot \mathcal{E}_\mathrm{simul}]$ are $O(1/M)$ under the additivity condition. 
    
    On the other hand, switching frequently increases the carryover bias.
    This is because the carryover effects from treatments in other intervals increase with switching frequency. Therefore, with the tradeoff involved, the optimal value of $M$ depends on the relative size of instantaneous and carryover effects, the scale of the global control outcomes, and the duration and mechanism of carryover effects.

    Second, the choice of interval endpoints also affects the MSE. To see this more clearly, consider the setting with constant $\delta^\gate_{t}$ in $t$ and constant $Y_t(\bm{0}, \cdots, \bm{0})$ in $t$ (equals to $\bar{Y}^{\mathrm{ctrl}}$). Then $\sum_{m = 1}^M \big(\Xi^{(m)} \big)^2 = (\delta^\gate + 2 \bar{Y}^{\mathrm{ctrl}})^2 \sum_{m = 1}^M \big( \mu^{(m)} \big)^2$, which is minimized at $\mu^{(m)} = 1/M$ for all $m$ with the same fraction of events in each interval. This implies that when event density $f(t)$ varies with $t$, this term can be reduced by switching more frequently in times of high event density and less frequently in times of low event density. The change-of-measure switchback has such switching patterns and is therefore considered in our case study.
    
    In addition, the second term of $\+E[\mathcal{E}_\mathrm{simul}^2] $ is affected by how much the intervals of primary and simultaneous interventions overlap, and is the largest when the interval endpoints of primary and simultaneous interventions are the same. Therefore, randomizing and staggering the switching times of different interventions can reduce $\+E[\mathcal{E}_\mathrm{simul}^2] $. This then motivates us to consider the Poisson switchback in our case study.
    
    Last but not least, balancing periodicity can reduce the MSE. Recall the setting of a two-week experiment with the balanced design in Example \ref{example:balanced-design}. Under mild assumptions, the mean control outcome $\mu^{(m)}_{Y^\mathrm{ctrl}}$ can be canceled out in both the variance term  $\var(\mathcal{E}_\mathrm{inst}+\mathcal{E}_\mathrm{carryover}) $ and the cross term $\+E[(\mathcal{E}_\mathrm{inst}+\mathcal{E}_\mathrm{carryover}) \cdot \mathcal{E}_\mathrm{simul}] $. The variance term is then equal to
        \[\var(\mathcal{E}_\mathrm{inst}+\mathcal{E}_\mathrm{carryover})  = \sum_{m = 1}^M \left(\Xi^{(m)}  \right)^2   + \sum_{m=1}^M \sum_{m^\prime \neq m} \left(\left[I^{(m,m^\prime)}\right]^2 + I^{(m,m^\prime)} I^{(m^\prime, m)}\right) \, .\]
    Furthermore, under Condition \ref{cond:additive}, the cross term is equal to
             \begin{align*}
     \+E[(\mathcal{E}_\mathrm{inst}+\mathcal{E}_\mathrm{carryover}) \cdot \mathcal{E}_\mathrm{simul}] =& \sum_{m=1}^M \Xi^{(m)} \bigg( 
\int_{t\in \mathcal{I}_{m}} \delta^{\s.\gate}_t  f(t) dt \bigg) \, .
 \end{align*}
    Canceling out $\mu^{(m)}_{Y^\mathrm{ctrl}}$ can substantially reduce the variance when the relative treatment effects are small (not more than a few percent of the outcome mean). This is exactly the case in our case study on a ride-sharing platform. 
    In fact, the HT estimator coincides with the Hajek estimator when the balanced designs are used, but this is not the case for unbalanced designs. The Hajek estimator is generally more efficient than the HT estimator; this gives another perspective as to why balanced designs are more efficient than unbalanced designs.  

    \subsection{Inference of Treatment Effects}

    To draw inferences on the treatment effect, one approach is to run a randomization-based test for the sharp null of no effect from the primary intervention, holding the treatment assignments of the simultaneous experiment as the realized ones: 
    \[\mathcal{H}_0 : Y_t(\bm{w}_t, \bm{W}^\s_t) = Y_t(\bm{0}_t, \bm{W}^\s_t ) \, , \qquad \forall \bm{w}_t  ~\text{and}~ t \, . \]
    Under this sharp null hypothesis, the outcomes would have been identical to the observed ones regardless of what treatment assignments $\bm{w}_t$ are used in the primary experiment, and $\delta^\gate = 0$. Then we can calculate the $p$-value of this hypothesis using the same procedure as \cite{bojinov2020design}. Specifically, we construct an empirical distribution of the estimated GATE under the sharp null and use it to compute the $p$-value. To construct the empirical distribution, suppose we draw the treatment designs and re-calculate the GATE for $J$ times. In the $j$-th iteration, let $\bm{W}^{[j]}$ be the drawn treatment design, which is used with the observed outcomes $Y^{(1)}, \cdots, Y^{(n)}$ to estimate $\delta^\gate$
    \begin{equation*}
        \hat{\delta}^{\gate,[j]} = \frac{1}{n} \sum_i \left(\frac{W^{[j]}_{t_i} }{\pi} - \frac{1-W^{[j]}_{t_i} }{1-\pi} \right) Y^{(i)}  \, .
    \end{equation*}

    We then use $\hat{\delta}^{\gate,[1]}, \cdots, \hat{\delta}^{\gate,[J]}$ from the $J$ iterations to estimate the $p$-value 
    \[\hat{p} = \frac{1}{J} \sum_{j = 1}^J \bm{1}\left(|\hat{\delta}^\gate| \geq |\hat{\delta}^{\gate,[j]}| \right) \, , \]
    where $\bm{1}(|\hat{\delta}^\gate| \geq |\hat{\delta}^{\gate,[j]}| )$ is an indicator function that takes value of one if $|\hat{\delta}^\gate| \geq |\hat{\delta}^{\gate,[j]}|$, and zero otherwise. 
    Besides calculating the $p$-value, we can invert the randomization-based test similar to \cite{tritchler1984inverting} to construct the confidence interval of $\hat{\delta}^\gate$ under the sharp null.

    \subsection{Alternative Estimators}\label{subsec:alternative-estimator}

    It is possible to reduce the estimation error of GATE by using alternative estimators that leverage prior knowledge of carryover and correlation mechanisms and information on other interventions. Specifically, to reduce the carryover bias, we can specify and leverage the structure of the carryover mechanisms when estimating GATE. For example, a commonly made assumption is that the fixed-duration carryover effect, i.e., the effect of treatment carries over to no more than $h$ periods in the future, for some $h$, such as in Examples \ref{example:uniform-carryover-kernel} and \ref{example:linear-decay-carryover-kernel}. Under this assumption, we can burn the first $h$ periods since switch (either from control to treatment, or from treatment to control) \citep{hu2022switchback}, and use the following modified HT estimator
    \begin{equation}\label{eqn:ht-estimator-burnin}
        \hat{\delta}^{\gate} = \frac{1}{n} \sum_i \left[\frac{W_{t_i} W_{t_i - h} Y^{(i)} }{\P(W_{t_i} = 1, W_{t_i - h} = 1)} - \frac{(1-W_{t_i}) (1 - W_{t_i - h}) Y^{(i)} }{\P(W_{t_i} = 0, W_{t_i - h} = 0) } \right] 
  .
    \end{equation}
    This estimator only uses the observations at times that have been in the treatment or control states for at least time $h$ and, therefore, encompasses a bias-variance tradeoff: It is unbiased under the assumption of a fixed-duration carryover effect, but it can have a larger variance due to a reduced sample size. In Figure \ref{fig:alternative-estimator} in Appendix \ref{appendix:empirical}, we compare the estimation error with various $h$ under the same synthetic experiment setup as that in Section \ref{subsec:empirical-results}. The hierarchical structure shown in Figure \ref{fig:pyramid} continues to hold with various $h$ and the balanced Poisson design with long switches has the lowest MSE for all $h$. However, it turns out that the estimator with \emph{zero burn-ins}, i.e., the HT estimator analyzed in this paper, has the lowest MSE regardless of which design is used. This empirical evidence supports our choice of estimator, and occurs because our data is noisy.

    To reduce the variance from correlated outcomes, we can specify and use the structure of the correlation mechanisms to reweigh the event outcomes. Moreover, if we know the design of the simultaneous experiment, we can use the generalized least squares (GLS) estimators to jointly estimate the instantaneous and carryover effects for all interventions, while taking advantage of the inverse error covariance weighting. GLS can be more efficient but requires the correct model specification. 
    Nevertheless, the design insights from the HT estimator can carry over to the GLS estimator, with a few numerical examples shown in \cite{xiong2019optimal}.

    Another possible future direction to improve estimation efficiency is to model the demand and supply, and use machine learning to estimate the demand and supply \citep{adam2023machine}. The estimated model of demand and supply could be used as the input in the estimation of GATE.

    \section{Simulation}\label{sec:simulation}

In this section, we compute values for error components in the MSE in the settings where error components are on a similar scale, fleshing out the interesting tradeoffs in designing a switchback experiment. The computation is based on Theorems \ref{theorem:switchback-bias} and \ref{theorem:bias-variance-switchback}. The results confirm the findings in our case study.  We use fixed-duration designs in the base case to characterize the tradeoffs involved, and the general insights carry over to other heuristic designs. In the base case, a linear decay carryover kernel is used with a bandwidth of $h_{\mathrm{carryover}} = 60$ (i.e., the duration of carryover effects is 60 minutes). Moreover, a linear decay covariance kernel is used with a bandwidth of $h_{\mathrm{covariance}} = 60$ (i.e., two event outcomes are correlated only if they are within 60 minutes apart). Both the instantaneous and carryover effects are constant in time and equal to $\delta^\inst_{t} = 1$ and $\delta^\co_{t} = 1$ for all $t$. In the base case, there is no simultaneous experiment, and the primary experiment lasts one day, i.e., $T = 1,440$ minutes. We vary the parameters in the base case and illustrate how error components vary and affect the performance of various switchback designs.

\subsection{Instantaneous and Carryover Effects Only} 

Figure \ref{fig:total-effect} illustrates the tradeoff between the bias from carryover effects, $\bias(\mathcal{E}_\mathrm{carryover}) $, and variance from instantaneous and carryover effects, $\var(\mathcal{E}_\mathrm{inst}+\mathcal{E}_\mathrm{carryover})$. If the carryover effects last longer, i.e., $h_{\mathrm{carryover}}$ is larger, then the bias $\bias(\mathcal{E}_\mathrm{carryover}) $ tends to be larger, and switching less frequently reduces the MSE from instantaneous and carryover effects, as shown in the comparison between Figures \ref{subfig:inst-carryover-base} and \ref{subfig:inst-carryover-vary-duration}. The relative size between instantaneous and carryover effects also matters for the tradeoff. If the instantaneous effect is relatively larger than the carryover effect, then switching more frequently reduces the MSE, as shown in the comparison between Figures \ref{subfig:inst-carryover-base} and \ref{subfig:inst-carryover-vary-effect-size}. 

\begin{figure}[h!]
    \centering
    
    \begin{subfigure}[b]{0.2\textwidth}
        \caption{base case}\label{subfig:inst-carryover-base}
        \includegraphics[width=\linewidth]{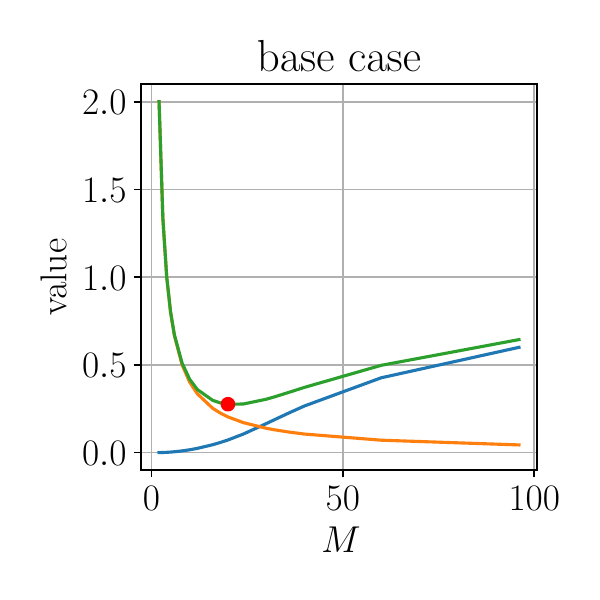}
    \end{subfigure}%
    \begin{subfigure}[b]{0.4\textwidth}
        \caption{vary $h_{\mathrm{carryover}}$ }\label{subfig:inst-carryover-vary-duration}
        \includegraphics[width=\linewidth]{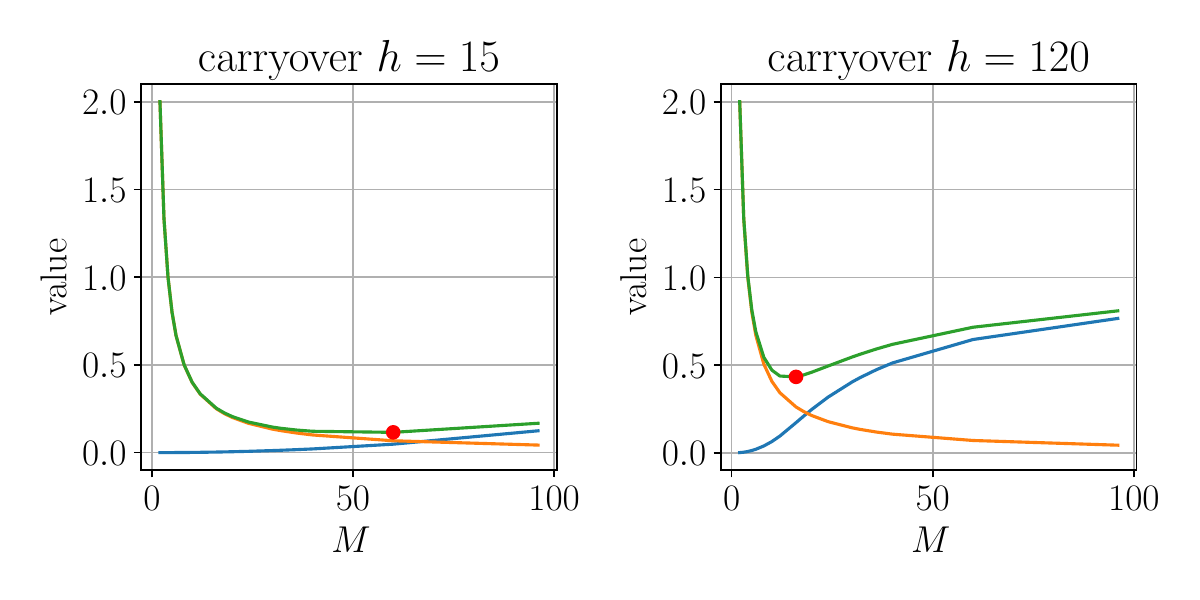}
    \end{subfigure}%
    \begin{subfigure}[b]{0.4\textwidth}
        \caption{vary $\delta^\inst/\delta^\co$}\label{subfig:inst-carryover-vary-effect-size}
        \includegraphics[width=\linewidth]{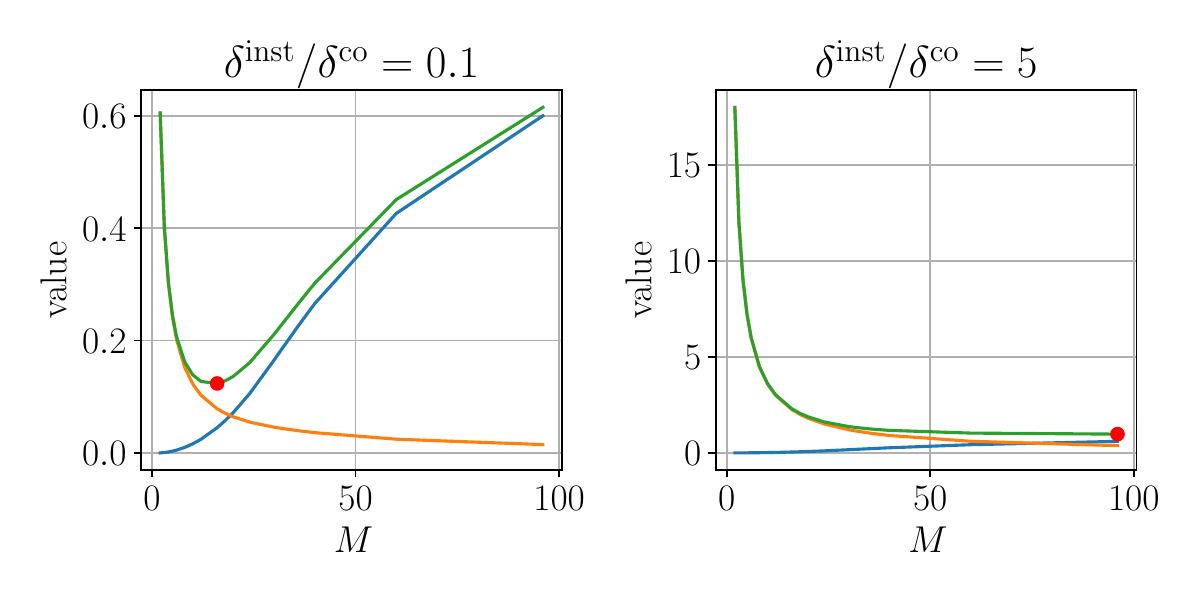}
    \end{subfigure}
    \begin{subfigure}[b]{\textwidth}
    \centering
        \includegraphics[width=0.6\linewidth]{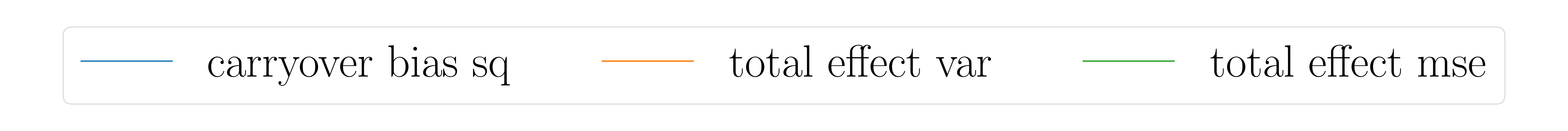}
    \end{subfigure}
    \caption{Tradeoff between error components from instantaneous and carryover effects. ``carryover bias sq'' (blue) denotes $\bias(\mathcal{E}_\mathrm{carryover})^2$, ``total effect var'' (orange) denotes $\var(\mathcal{E}_\mathrm{inst}+\mathcal{E}_\mathrm{carryover})$, and ``total effect mse'' (green) denotes the sum of the two. Figure \ref{subfig:inst-carryover-vary-duration} varies $h_{\mathrm{carryover}}$ and Figure \ref{subfig:inst-carryover-vary-effect-size} varies $\delta^\inst/\delta^\co$, while holding other parameters at the base level. The orange dot denotes the minimum ``total effect mse''. }
    \label{fig:total-effect}
\end{figure}

\subsection{Total Treatment Effects with Measurement Errors}
Figure~\ref{fig:total-effect-meas} summarizes the tradeoffs between error components from randomness in treatment assignments and measurement errors. Switching frequently generates more comparisons, reducing variance from measurement errors and increasing carryover bias from previous intervals. If correlation in measurement errors are persistent (i.e., large $h_{\mathrm{covariance}}$), then $\var(\mathcal{E}_\mathrm{meas}) $ has a larger impact on the MSE, and switching more frequently helps to reduce the MSE, as shown in the comparison between Figures \ref{subfig:inst-carryover-meas-base} and \ref{subfig:inst-carryover-meas-vary-duration}. Moreover, if event outcomes are noisier with a larger $\Var_{\sigma,t}$, then switching more frequently helps, as shown in the comparison between Figures \ref{subfig:inst-carryover-meas-base} and \ref{subfig:inst-carryover-meas-var}.

\begin{figure}[h]
	\centering
 \begin{subfigure}[b]{0.2\textwidth}
        \caption{base case}\label{subfig:inst-carryover-meas-base}
        \includegraphics[width=\linewidth]{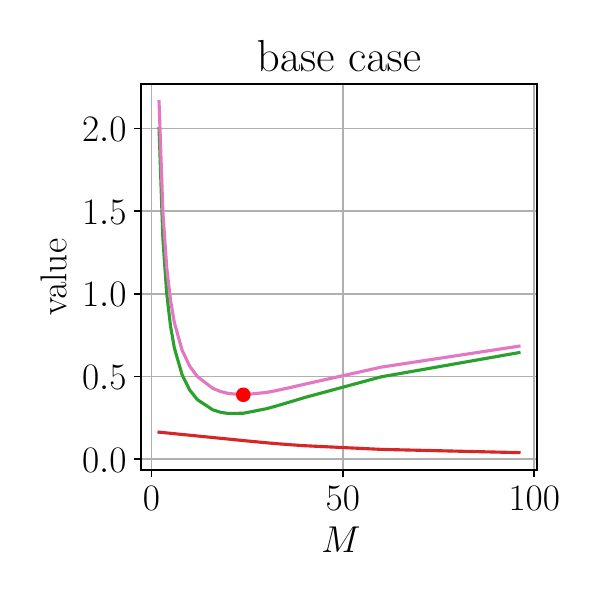}
    \end{subfigure}%
    \begin{subfigure}[b]{0.4\textwidth}
        \caption{vary $h_{\mathrm{covariance}}$}\label{subfig:inst-carryover-meas-vary-duration}
        \includegraphics[width=\linewidth]{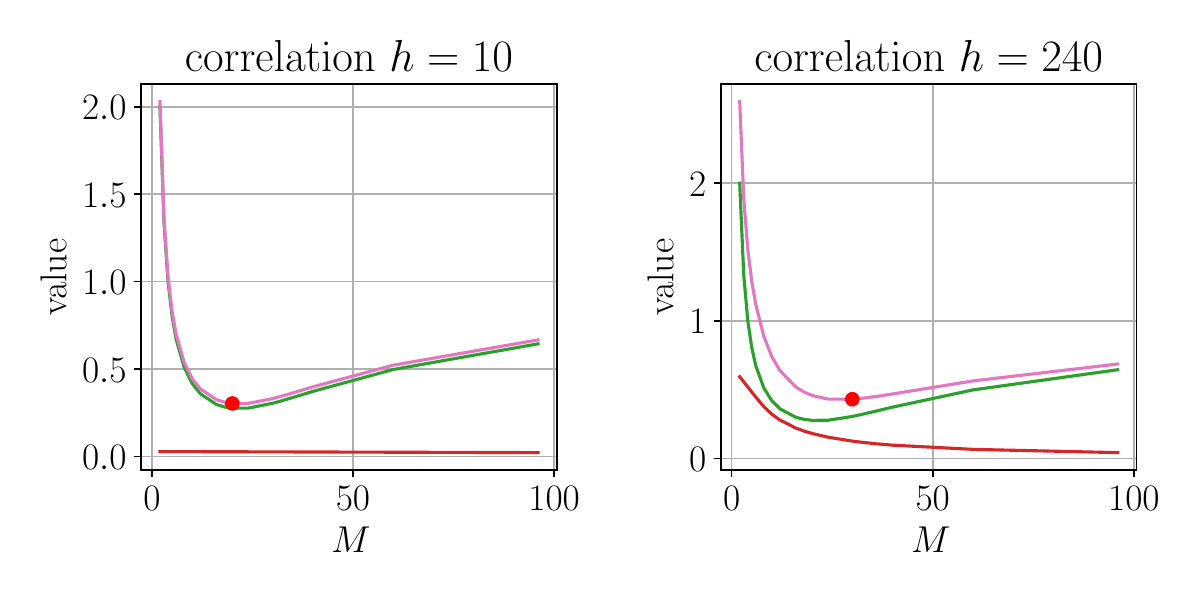}
    \end{subfigure}%
    \begin{subfigure}[b]{0.4\textwidth}
        \caption{vary $\Var_{\sigma,t}$}\label{subfig:inst-carryover-meas-var}
        \includegraphics[width=\linewidth]{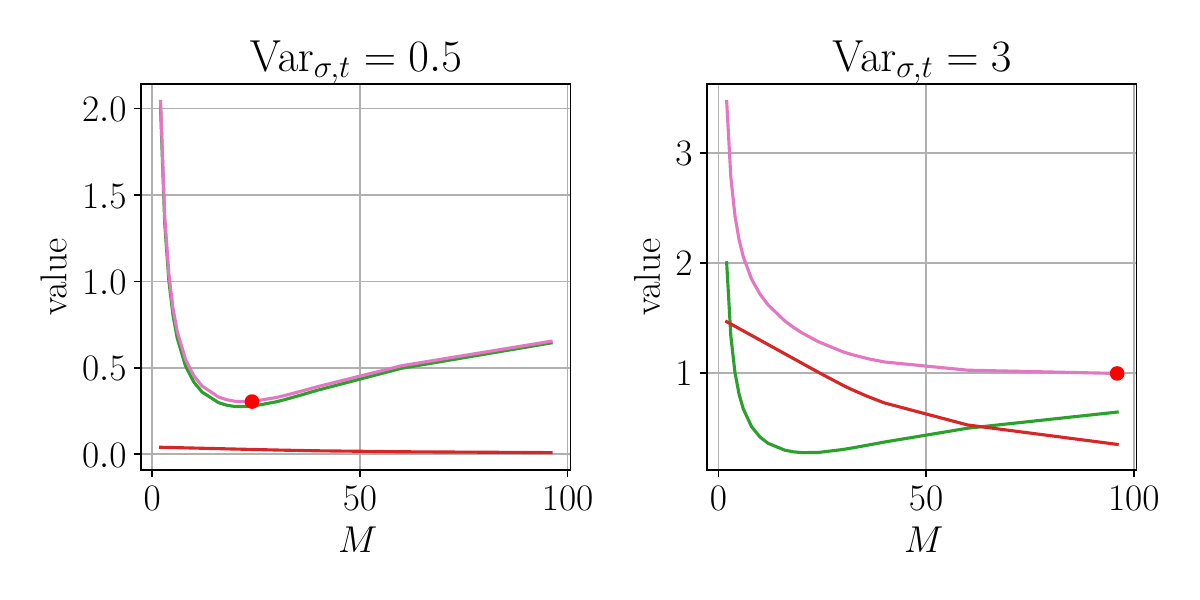}
    \end{subfigure}
    \begin{subfigure}[b]{\textwidth}
    \centering
        \includegraphics[width=0.6\linewidth]{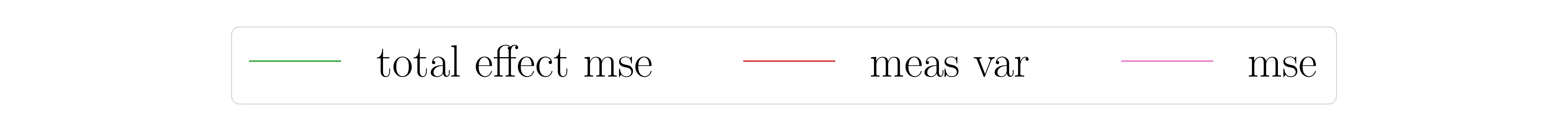}
    \end{subfigure}
	\caption{Tradeoffs between error components from randomness in treatment assignments and measurement errors. ``total effect mse'' (green) denotes the sum of $\bias(\mathcal{E}_\mathrm{carryover})^2$ and $\var(\mathcal{E}_\mathrm{inst}+\mathcal{E}_\mathrm{carryover})$,  ``meas var'' (red) denotes $\var(\mathcal{E}_\mathrm{meas}) $, and ``mse'' (pink) denotes the sum of ``total effect mse'' and ``meas var''. Figure \ref{subfig:inst-carryover-meas-vary-duration} varies $h_{\mathrm{covariance}}$ and Figure \ref{subfig:inst-carryover-meas-var} varies $\Var_{\sigma,t}$, while holding other parameters at the base level. The orange dot denotes the minimum ``mse''. }
	\label{fig:total-effect-meas}
\end{figure}

\subsection{Simultaneous Interventions}

Figure~\ref{fig:simul-exp} shows the tradeoffs involved in the presence of simultaneous interventions. The MSE of the primary intervention is affected by three factors: the number of simultaneous interventions $K$, switching frequency, and the offset in switching times between simultaneous experiments. When more experiments are run simultaneously, the optimal switching frequency increases. This is because switching more frequently helps reduce the confounding effects of simultaneous interventions. Moreover, properly staggering the switching times of the primary and simultaneous interventions also decreases the MSE of the primary intervention, with the effect being more pronounced with a low switching frequency due to the increased finite-sample correlation between the designs. As shown in Figure \ref{subfig:simul-vary-design}, Poisson switchbacks, which stagger through randomizing switching times, can be more effective unless the fixed duration designs are explicitly staggered well.

\begin{figure}[h]
	\centering
 \begin{subfigure}[b]{0.2\textwidth}
        \caption{base case}\label{subfig:simul-base}
        \includegraphics[width=1\linewidth]{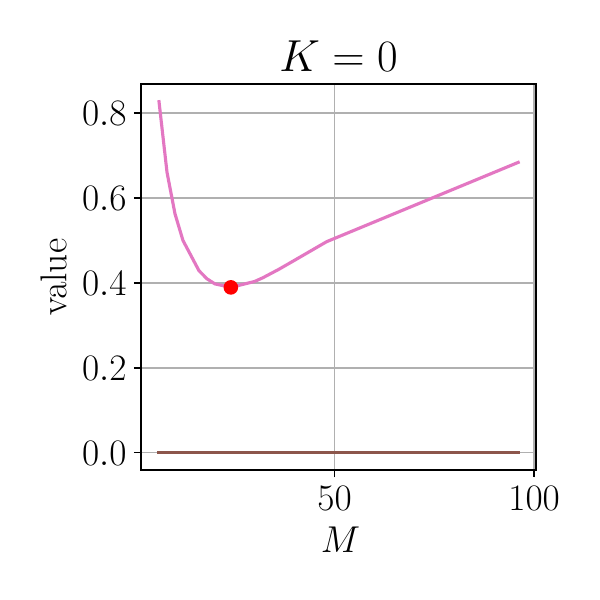}
    \end{subfigure}%
     \begin{subfigure}[b]{0.4\textwidth}
        \caption{vary $K$}\label{subfig:simul-vary-k}
        \includegraphics[width=1\linewidth]{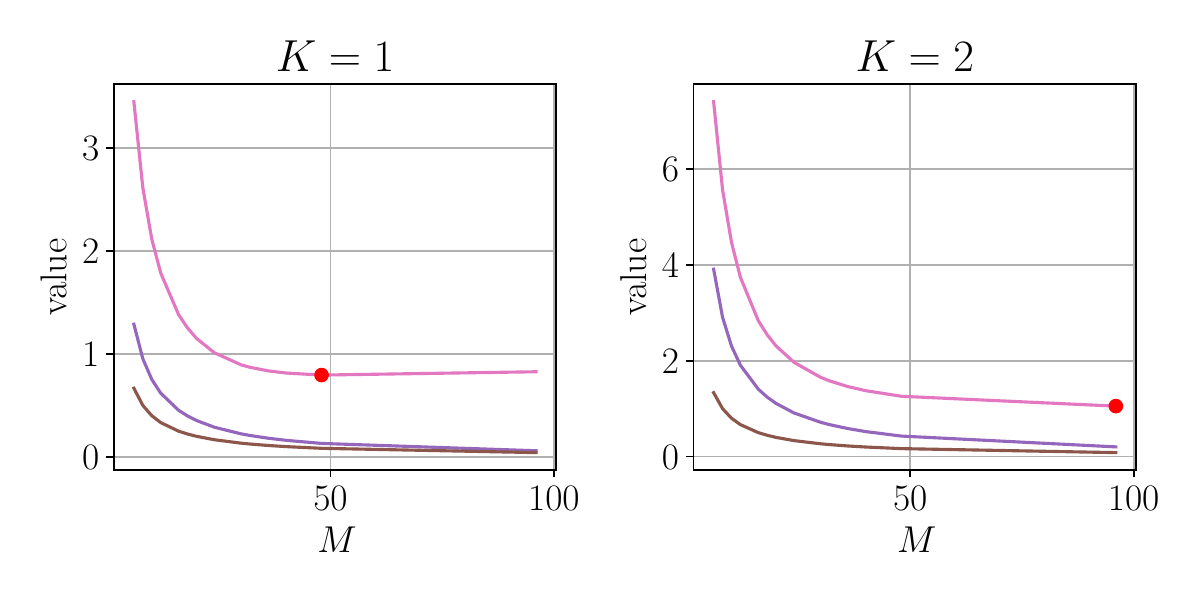}
    \end{subfigure}%
    \begin{subfigure}[b]{0.4\textwidth}
        \caption{compare designs (simul mse)}\label{subfig:simul-vary-design}
        \includegraphics[width=1\linewidth]{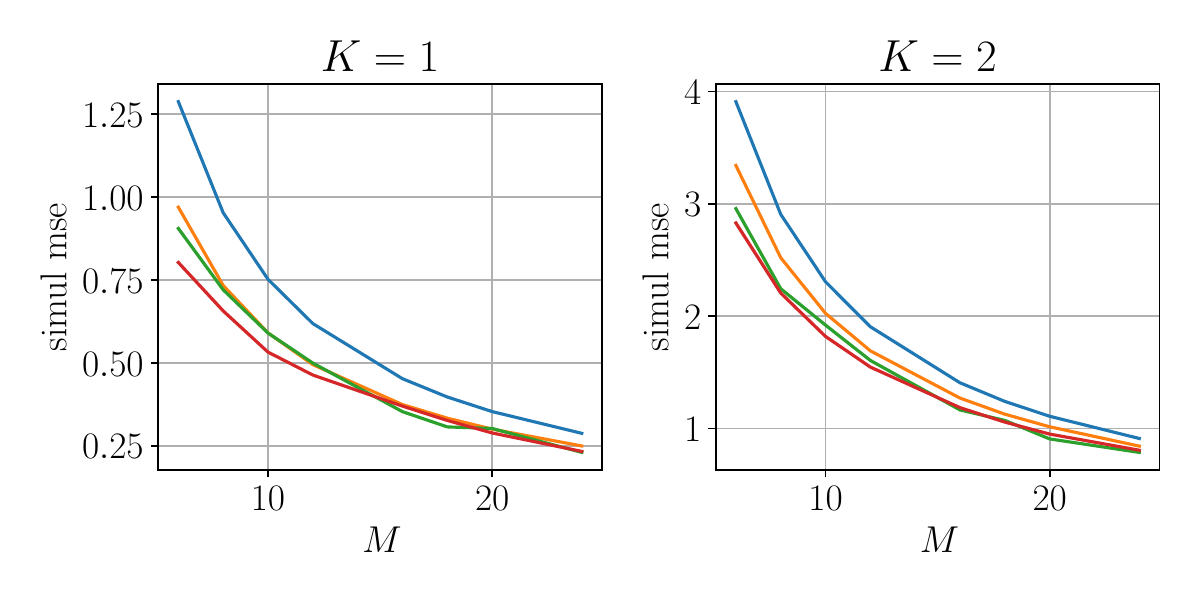}
    \end{subfigure}
    \begin{subfigure}[t]{0.49\textwidth}
        \includegraphics[width=0.9\linewidth]{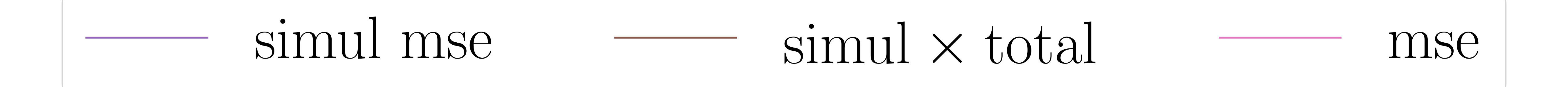}
    \end{subfigure}\hfill
    \begin{subfigure}[t]{0.49\textwidth}
        \raggedleft
        \includegraphics[width=0.9\linewidth]{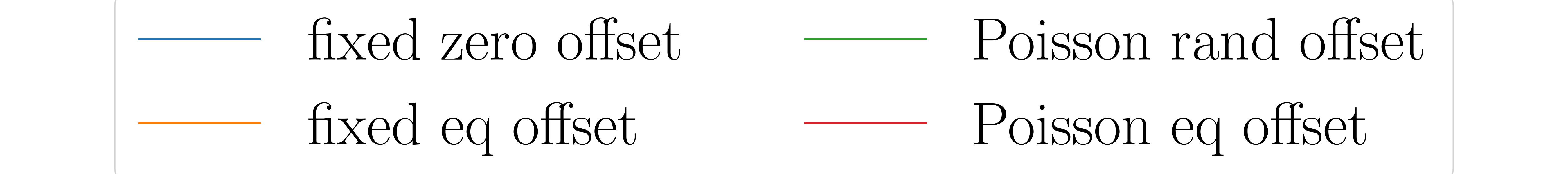}
    \end{subfigure}
		
	\caption{MSE with simultaneous interventions. ``simul mse'' (purple) denotes $\+E[\mathcal{E}_\mathrm{simul}^2]$, ``simul$\times$total'' (brown) denotes $\+E[(\mathcal{E}_\mathrm{inst}+\mathcal{E}_\mathrm{carryover}) \cdot \mathcal{E}_\mathrm{simul}] $, and ``mse'' (pink) denotes $ \+E_{W,\varepsilon,t}[(\hat{\delta}^\gate  - \delta^\gate)^2] $. Figure \ref{subfig:simul-vary-k} varies the number of simultaneous experiments $K$, while holding other parameters at the base level and switching at the same times for all interventions. Figure \ref{subfig:simul-vary-design} compares four designs: fixed duration switchbacks with offset $q = 0$ for all interventions (i.e., fixed zero offset), fixed duration switchbacks with offset $q= p \cdot j/(K + 1)$ for the $j$-th simultaneous intervention (i.e., fixed eq offset), Poisson switchbacks with offset $q \sim \mathrm{Poisson}(T/M)$ (i.e., Poisson rand offset), and Poisson switchbacks with offset $q= p \cdot j/ (K + 1)$ for the $j$-th simultaneous intervention (i.e., Poisson eq offset).}
	\label{fig:simul-exp}
\end{figure}

\subsection{Periodic Event Density} \label{subsec:density} 
Figure~\ref{fig:density} shows results from a periodic density using a fixed duration switchback. When the design has a period that aligns with density, the offset parameter $q$ determines how the alignment alters the bias and variance. Switching at times of high density yields a design with low variance from measurement errors $\var(\mathcal{E}_\mathrm{meas}) $. On the other hand, switching at times of low density reduces carryover bias from the preceding interval $\bias(\mathcal{E}_\mathrm{carryover}) $ and reduces MSE from treatment effects. Combining measurement errors and treatment effects, the optimal switching times are somewhere in between high and low density times. Therefore, knowledge of the density of events can improve the efficiency of the design by leveraging the best absolute times for bias- or variance-minimizing switching points.

\begin{figure}[h]
	\centering
  \includegraphics[width=1\linewidth]{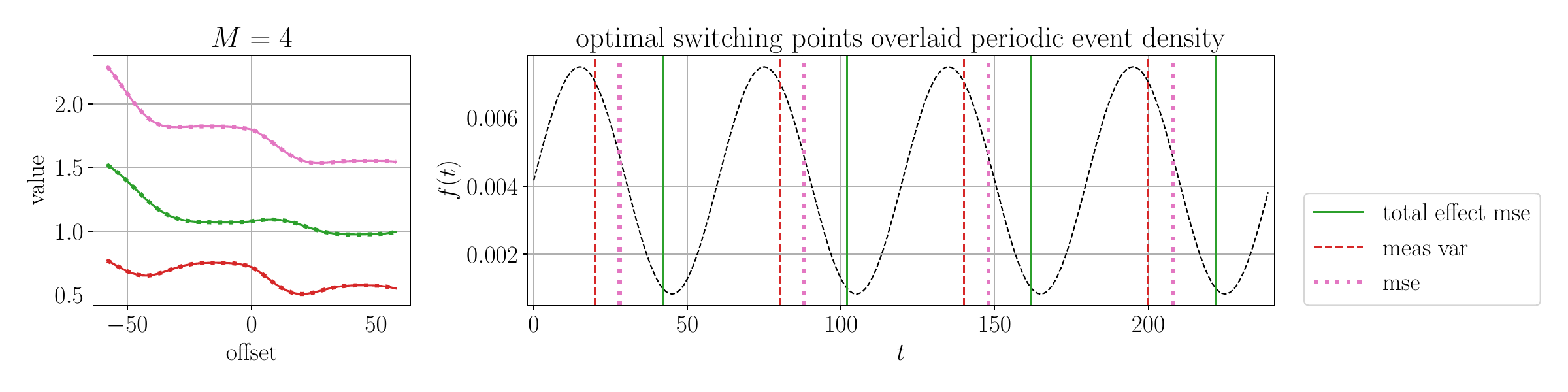}
	\caption{Role of offset parameter $q$ in periodic event density. For the simplicity of exposition, $T = 240$ and $M = 4$, while other parameters are set at the base level.}
	\label{fig:density}
\end{figure}

    \section{Discussion and Conclusion}\label{sec:discussion}

This paper studies the design and analysis of simultaneous switchback experiments from both a theoretical and practical point of view. We provide a theoretical analysis of how the bias and variance of the HT estimator of the GATE are affected by four factors: carryover effects, periodicity, correlations in event outcomes, and effects of simultaneous interventions. Empirical and simulation studies show how these factors trade off each other and provide insights into how one can design efficient switchback experiments. Based on these insights, we propose an empirical Bayes design that utilizes historical data to model and select the parameters of the new experiment's design.

Perhaps the most general conclusion we can draw is that designing experiments in this context involves considering a complex set of tradeoffs and critically depends on the assumptions experimenters would make using prior knowledge. Our case study on a ride-sharing platform illustrates this point, showcasing the value of leveraging prior experiments in the empirical Bayes approach when designing new experiments.  

While we motivate this study by applications in the ride-sharing setting, the theory and practical guidelines presented can find broader applications in other contexts. Indeed, various settings exist where cross-sectional interventions are not possible or outcomes cannot be easily attributed to treatment decisions. Estimating the effectiveness of traditional media advertising aligns well with our problem setup, and a privacy-friendly approach to online advertising might employ temporal variation in campaign spending linked to sales through timestamps only. Prior work has explored time-varying interventions in financial or cryptocurrency markets  \citep{krafft2018experimental} or in self-experimentation for personalized medicine \citep{karkar2016framework}. An important goal of this work is to expand the use of time-based experiments to settings where they are not currently used.

 \bibliographystyle{ormsv080}
	\bibliography{reference}

 \clearpage
\newpage
\renewcommand{\theHsection}{A\arabic{section}}
\bookmarksetup{startatroot}

\begin{APPENDICES}
 
\section{Supplementary Empirical Results}\label{appendix:empirical}

\subsection{Curve Fitting of Natural Cubic Splines }\label{subsec:curve-fitting-details}

We scale the time since the switch from $[0, 56]$ to $[0, 1]$. Let $x$ be the scaled time since the switch. In the natural cubic splines, we specify a knot located in the middle of the interval, i.e., at $k = 1/2$. Then, the natural cubic splines can be written as 
\begin{equation}
    \begin{aligned}
        g(x) =& \bm{1}(x < k) \cdot (a_0  + a_1 x + a_2 x^2 + a_3 x^3) \\
        & + \bm{1}(x \geq k) \cdot (b_0  + b_1 x + b_2 x^2 + b_3 x^3) 
    \end{aligned}
\end{equation}
with eight parameters $a_0, a_1, a_2, a_3, b_0, b_1, b_2, b_3$. These eight parameters are fitted from the estimated cumulative effects with treatment duration from $0$-th to $55$-th minutes. Through this fitting, we reduce the number of parameters from $56$ to $8$, thereby reducing the variance in the treatment effect estimation. Furthermore, we impose additional constraints in the spline fitting based on our prior on the shape of the curve:
\begin{align*}
    g^\prime(1) =&  0 \tag{gradient at the right boundary is zero} \\
    g(k_{-}) =& g(k_{+}) \tag{left and right values are equal at the knot} \\
    g^\prime(k_{-}) =& g^\prime(k_{+}) \tag{left and right gradients are equal at the knot} \\
    g^{\prime\prime}(0) =& 0 \tag{Hessian at the left boundary is zero}
\end{align*}

With the first constraint, the gradient approaches zero close to the right boundary of the interval, so it incorporates our prior that the cumulative effect converges to GATE as the treatment duration increases. With this constraint, we can set the GATE as $g(1)$. The second and third constraints ensure that the value and gradient evaluated at the left and right of the knot are the same. The last constraint ensures that the second-order derivative at the left boundary of the interval is zero, so that the cumulative effect varies linearly with the treatment duration at the left boundary. The last constraint is aligned with the standard constraint of natural cubic splines that the function is linear beyond the boundary knots. 

\subsection{Leave-Two-Out Cross-Validation of Smoothing Methods}\label{subsec:cross-validation}

We perform leave-two-out cross-validation to compare various smoothing methods. The smoothing method can be natural cubic spline, polynomial regression, or local regression. We refer to natural cubic spline as the benchmark method and other methods as alternative methods. For each specific experiment-region pair and a smoothing method, the calculation of the cross-validation error consists of five steps. The first step is to split the two weeks of experimental data into the training and holdout sets. The holdout set consists of two intervals, while the training set has all the remaining intervals. As the experiment uses the balanced design with a fixed switching duration of $56$ minutes and lasts for two weeks, for a given $i$, the treatment assignment of the $i$-th interval in the first week differs from that of the $i$-th interval in the second week. We choose these two intervals in the holdout set, so that CEC can be properly estimated using the observations in the holdout set. The second step is to estimate the CEC, i.e., $\delta^\cum_t(\Delta t)$ for $\Delta t= \{1, \cdots, 56\}$, separately on the training and holdout sets. We let the estimated CEC on the training and holdout sets be $\hat\delta^{\cum,\mathrm{train}}(\Delta t)$ and $\hat\delta^{\cum,\mathrm{holdout}}(\Delta t)$ for $\Delta t= \{1, \cdots, 56\}$.\footnote{The estimated CEC is averaged over $t$.} The third step is to apply the smoothing method to the estimated CEC $\hat\delta^{\cum,\mathrm{train}}(\Delta t)$ on the training set. Let the effect from the smoothing method be $\hat{g}(\Delta t)$. The fourth step is to calculate the estimation error of GATE on the holdout set, which is $\hat\delta^{\cum,\mathrm{holdout}}(\Delta t) - \hat{g}(\Delta t)$ for $\Delta t = 56$-th minute. In this step, $\hat{g}(56)$, the value at the last minute of an interval, is an estimate of GATE on the training set, and $\hat\delta^{\cum,\mathrm{holdout}}(56)$ is the GATE on the holdout set. In the last step, we iterate the first to fourth steps by varying the intervals in the holdout set and calculate the estimation error of GATE $\hat\delta^{\cum,\mathrm{holdout}}(56) - \hat{g}(56)$. We calculate the mean squared error (MSE), which is the average of $(\hat\delta^{\cum,\mathrm{holdout}}(56) - \hat{g}(56))^2$ over the holdout intervals. This MSE is referred to as the leave-two-out cross-validation MSE. 

The histograms in Figure \ref{fig:model-comparison} show the distribution of the ratio of the leave-two-out cross-validation MSEs over all experiment-region pairs. The denominator of the ratio is the MSE using the natural cubic spline. The numerator of the ratio is the MSE using an alternative method, such as polynomial regression of degree $d$ for $d \in \{0, 1, 2, 3, 4\}$ or local regression of degree $d \in \{0,1,2\}$. 

There are four findings from the leave-two-out cross-validation. First and foremost, the ratio of MSE lies between $0.98$ and $1.02$ for most of the experiment-region pairs and for all alternative methods. This implies that the natural cubic spline is comparable to any alternative method. Second, the cross-validation MSE increases with model flexibility. This makes sense as the estimation of GATE has a high variance, and a restrictive model can effectively reduce the variance. Third, the natural cubic spline has a smaller MSE than the polynomial regression of degree $3$. This finding is interesting as imposing a prior in the natural cubic spline can generally reduce the out-of-sample error. Fourth, if we allow for non-constant and non-monotonic smoothed CEC, natural cubic spline has the lowest cross-validation MSE, as compared to polynomial regression of degree $d \geq 2$ and local regression of degree $d \geq 1$.

We additionally note that, among all the smoothing methods, only the smoothed effect from natural cubic spline, polynomial or local regression of degree $0$, and natural cubic spline at the last minute can have a natural interpretation of GATE. This is because the first-order derivative of these three smoothing methods at the right boundary (i.e., the last minute) is zero. For all the other methods, the first-order derivative is nonzero, and extrapolated effect after the $56$-th minute would be different from that of the $56$-th minute. This would make it unnatural to use the effect of the $56$-th minute as an estimate of GATE. If polynomial or local regression of degree $0$ is used for smoothing, then it is likely that a fast-switching design would have a low MSE. However, we focus on the natural cubic spline in the main text because it can capture richer dynamics of the cumulative effects. In addition, from an adversarial perspective, the natural cubic spline and long-switching design are more robust to complex dynamics of the cumulative effects. They are likely to yield a low-bias estimate of GATE, possibly at the expense of a slight increase in variance. However, the increase in variance is small, as the MSE of natural cubic spline is generally within 2\% more than the MSE of polynomial or local regression with degree $0$.

\begin{figure}
    \centering
    \includegraphics[width=1\textwidth]{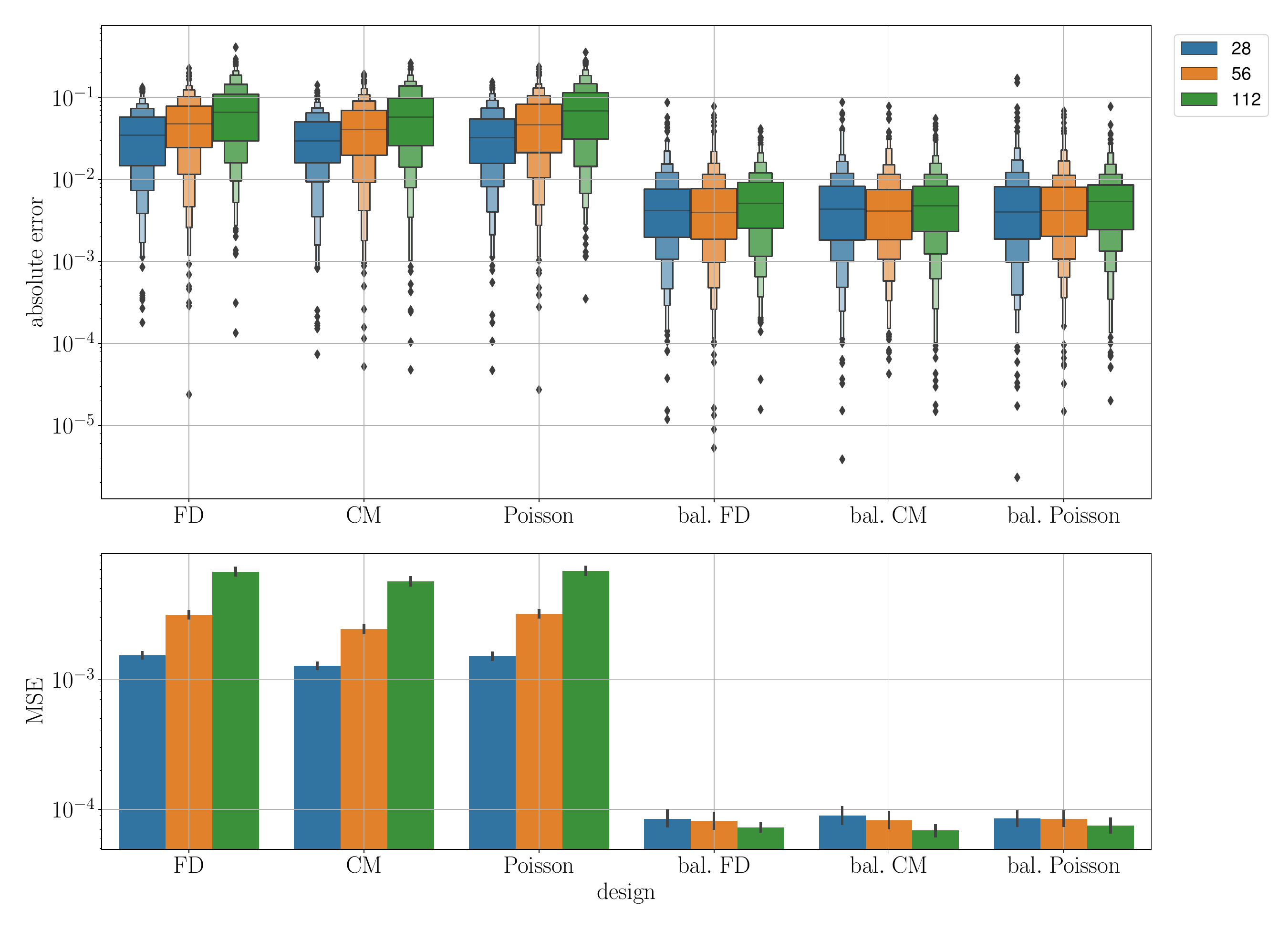}
    \caption{Boxenplot of the absolute estimation error and MSE from two-week synthetic experiments using various designs. The boxenplot is obtained based on $500$ draws of two-week historical control data (serving as the observed data absent of synthetic intervention) and a $56$-minute CEC from the empirical distribution of smoothed CECs (serving as the treatment effects of the synthetic intervention). In this simulation, no experiment is run simultaneously with the primary synthetic experiment. The drawn CECs are identical to those in Figure \ref{fig:syn-exp-fitted-poly2}. The MSE in the second row is the average of squared errors after removing the 1\% outliers of the estimation errors. This is to ensure that the MSE is not dominated by the values of the outliers. }
    \label{fig:56min-mse}
\end{figure}

\begin{figure}
    \centering
    \includegraphics[width=1\textwidth]{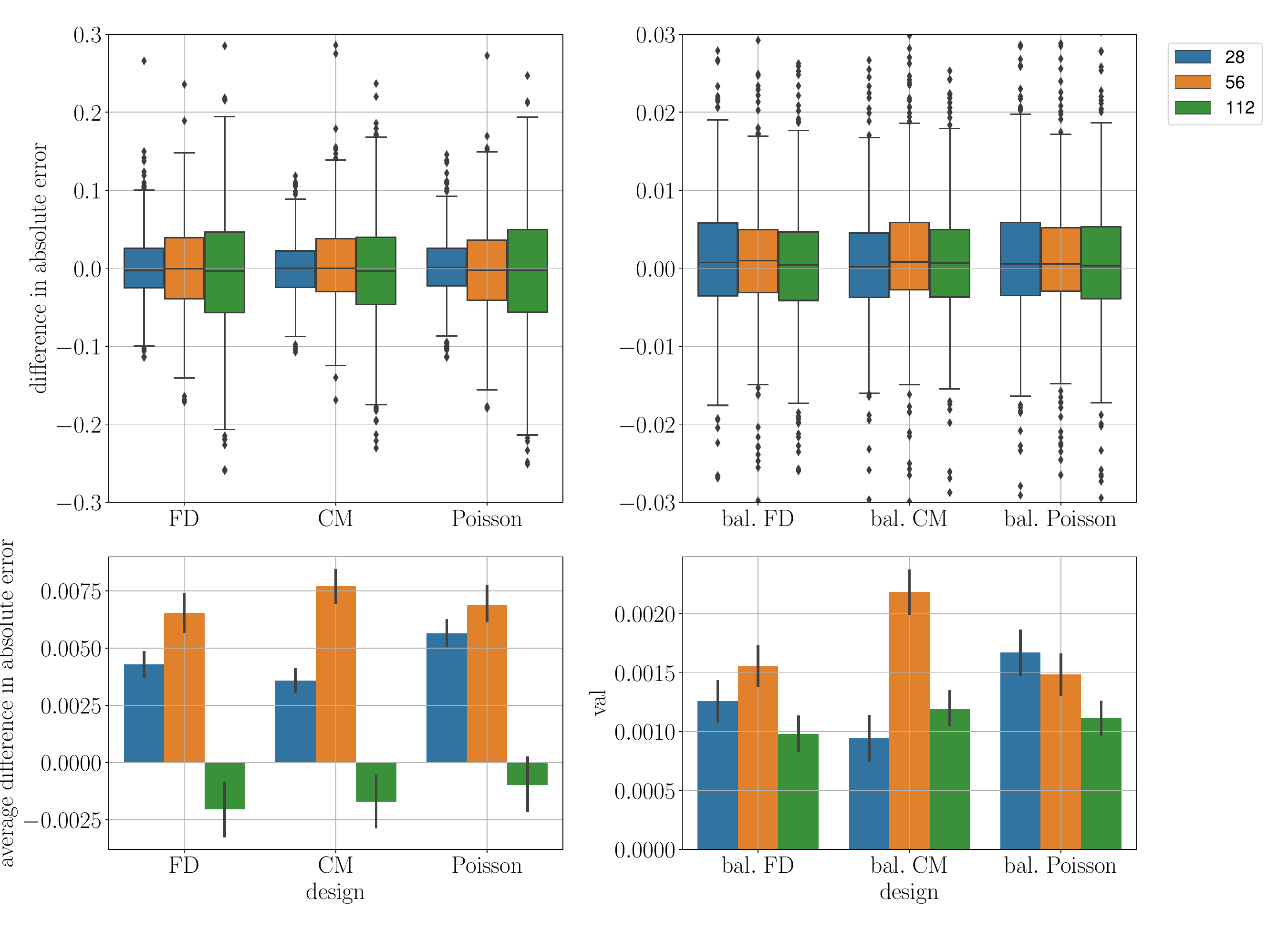}
    \caption{Difference in absolute estimation error when one (in Figure \ref{fig:syn-exp-fitted-poly2}) versus no (in Figure \ref{fig:56min-mse}) experiment is run simultaneously. The average difference in the second row is the average of errors after removing the top 1\% and bottom 1\% outliers of the differences. This is to ensure that the average difference is not dominated by the values of the outliers. }
    \label{fig:56min-diff-mse}
\end{figure}

\begin{figure}[h]
    \centering
    \begin{subfigure}[b]{\textwidth}
        \includegraphics[width=\linewidth]{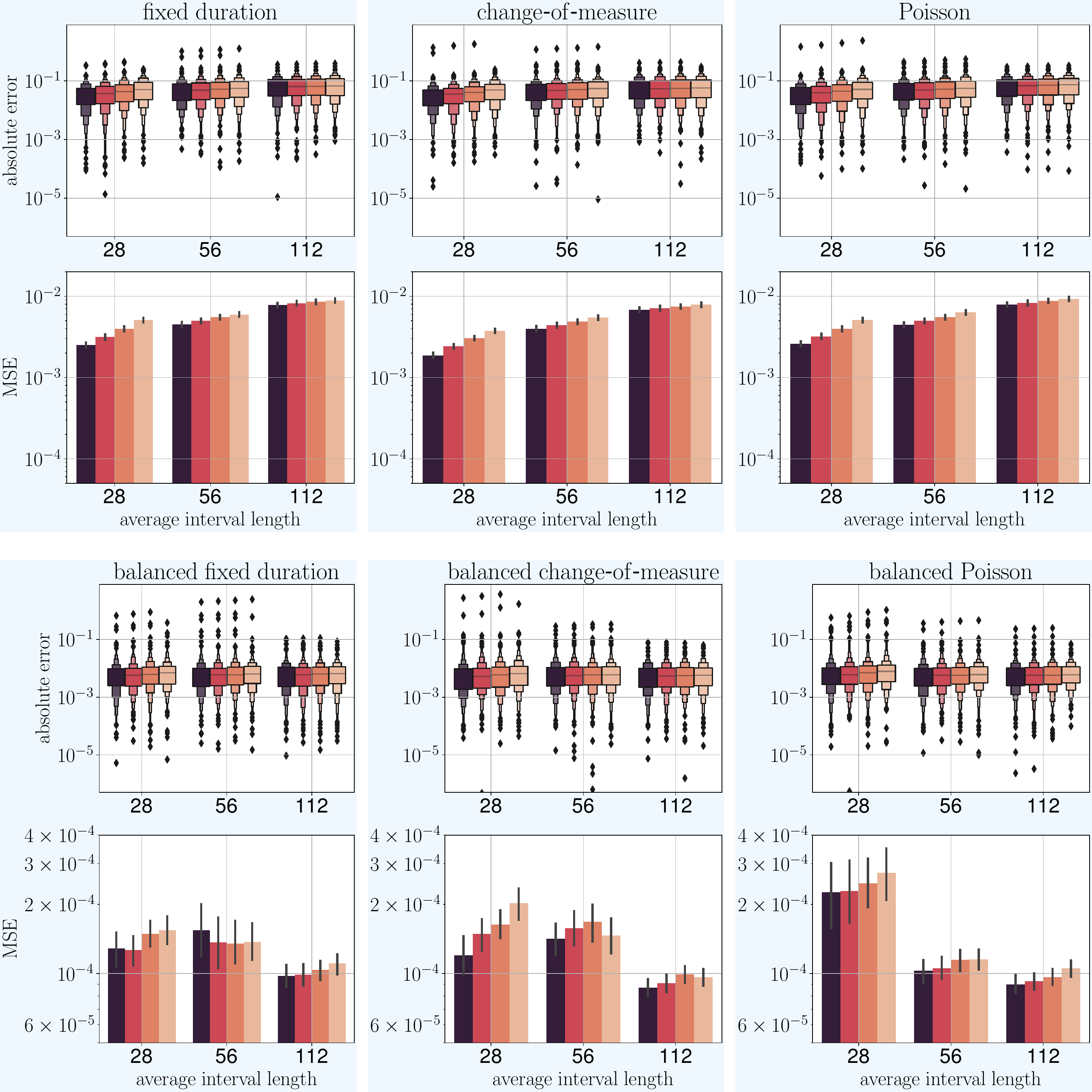}
    \end{subfigure}
    \begin{subfigure}[b]{\textwidth}
    \centering
        \includegraphics[width=0.9\linewidth]{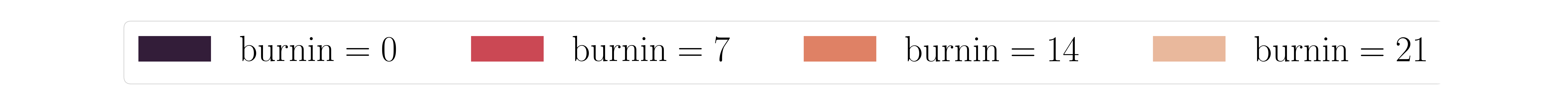}
    \end{subfigure}
    \caption{MSE of Horvitz–Thompson estimators with various burn-ins (i.e., $h$ in Estimator \eqref{eqn:ht-estimator-burnin}) in simulated experiments on \textbf{historical experimental data} (i.e., one experiment running simultaneously), where the CEC is drawn from the empirical distribution of smoothed $56$-minute CECs.
   }
    \label{fig:alternative-estimator}
\end{figure}

\begin{figure}
    \centering
    \includegraphics[width=\textwidth]{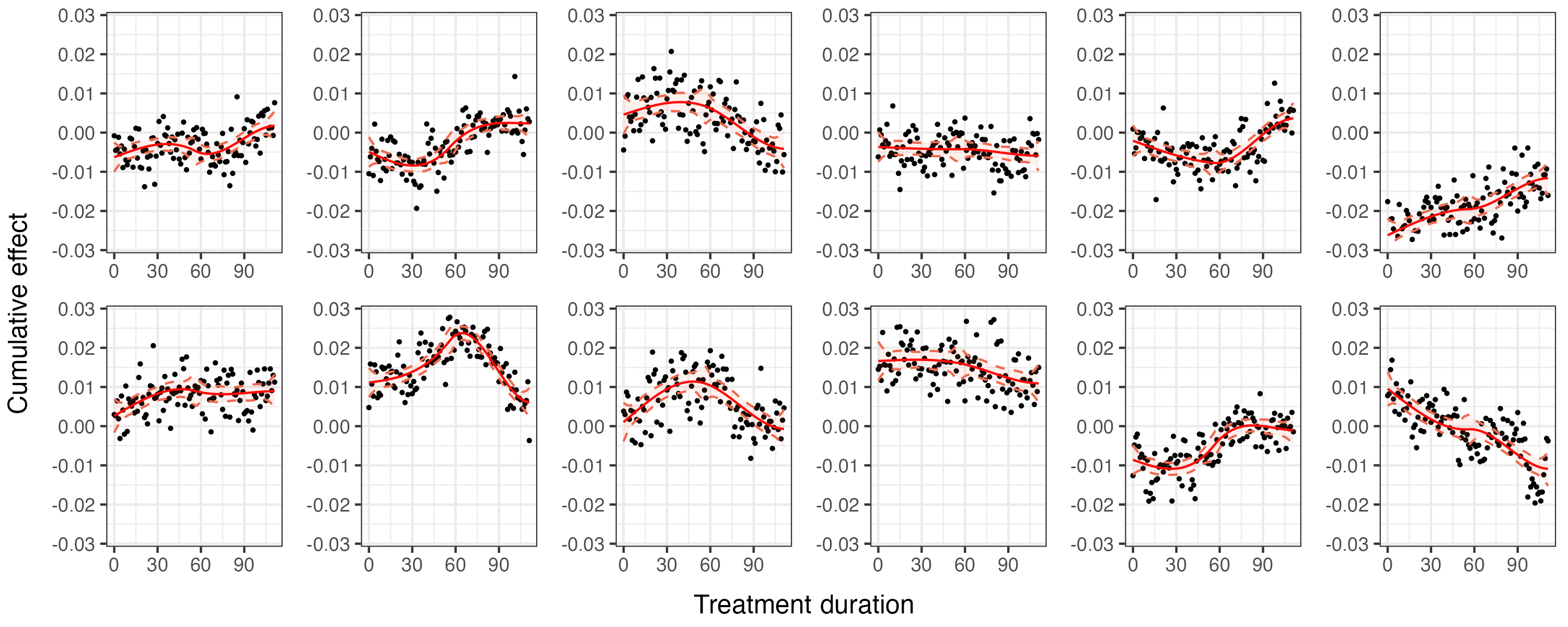}
    \caption{$12$ representative CECs for the treatment duration of $\{1, \cdots, 112\}$ minutes (in black dots), and their smooth curves by natural cubic splines (in red). The shaded area for each curve shows the $95\%$ confidence interval of the cumulative effect given a treatment duration and the natural cubic spline as the smoothing method.}
    \label{fig:112-various-poly}
\end{figure}

\begin{figure}
    \centering
    \includegraphics[width=1\textwidth]{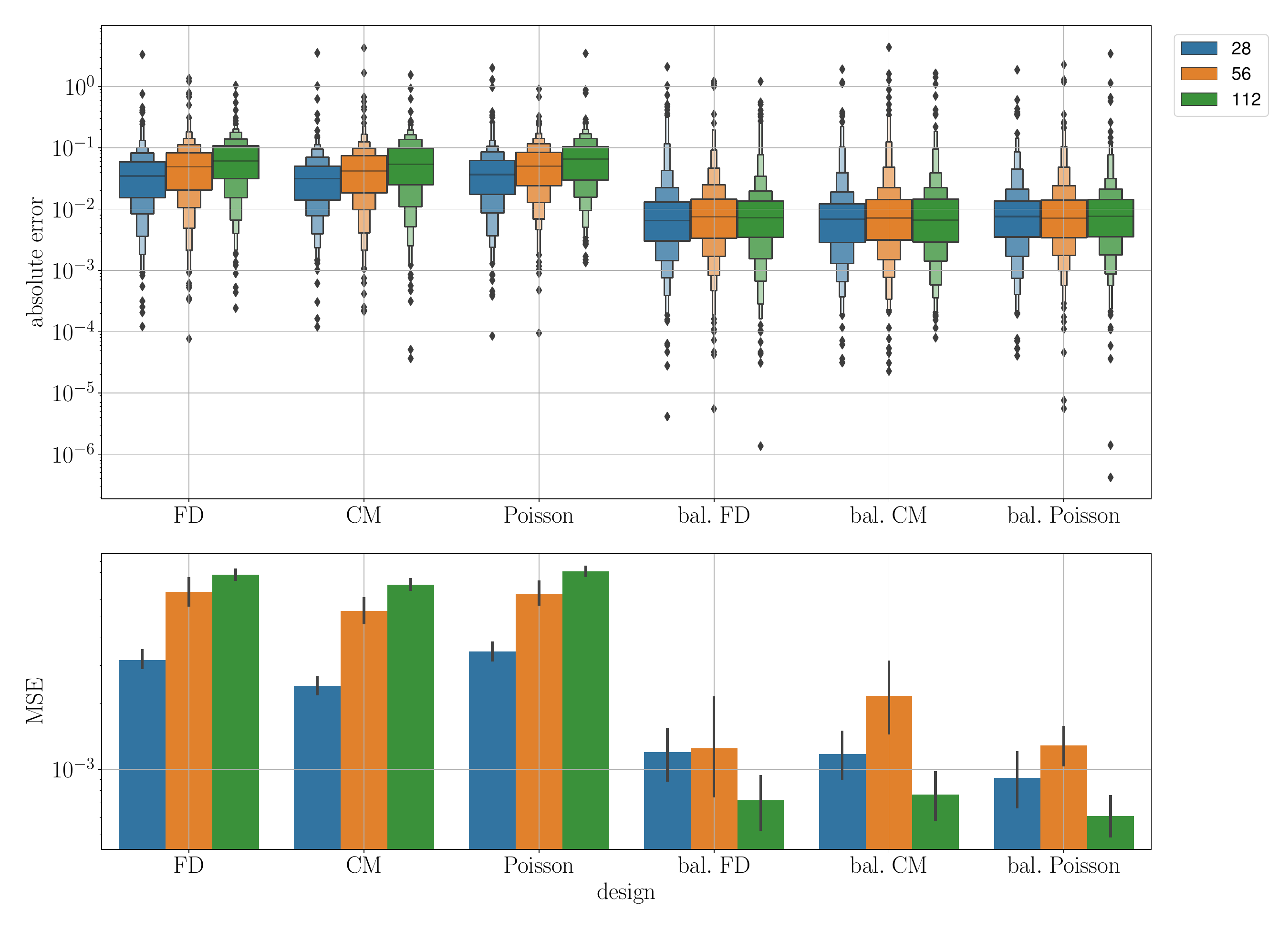}
    \caption{Boxenplot of the absolute estimation error and MSE from two-week synthetic experiments using various designs. The boxenplot is obtained based on $500$ draws of two-week historical data and a \textit{$112$-minute CEC} from the empirical distribution of smoothed CECs. In this simulation, one experiment starts and ends simultaneously with the primary synthetic experiment. The MSE in the second row is the average of squared errors after removing the 1\% outliers of the estimation errors. This is to ensure that the MSE is not dominated by the values of the outliers. }
    \label{fig:112min-mse}
\end{figure}

\begin{figure}[h]
	\centering
 \includegraphics[width=1\textwidth]{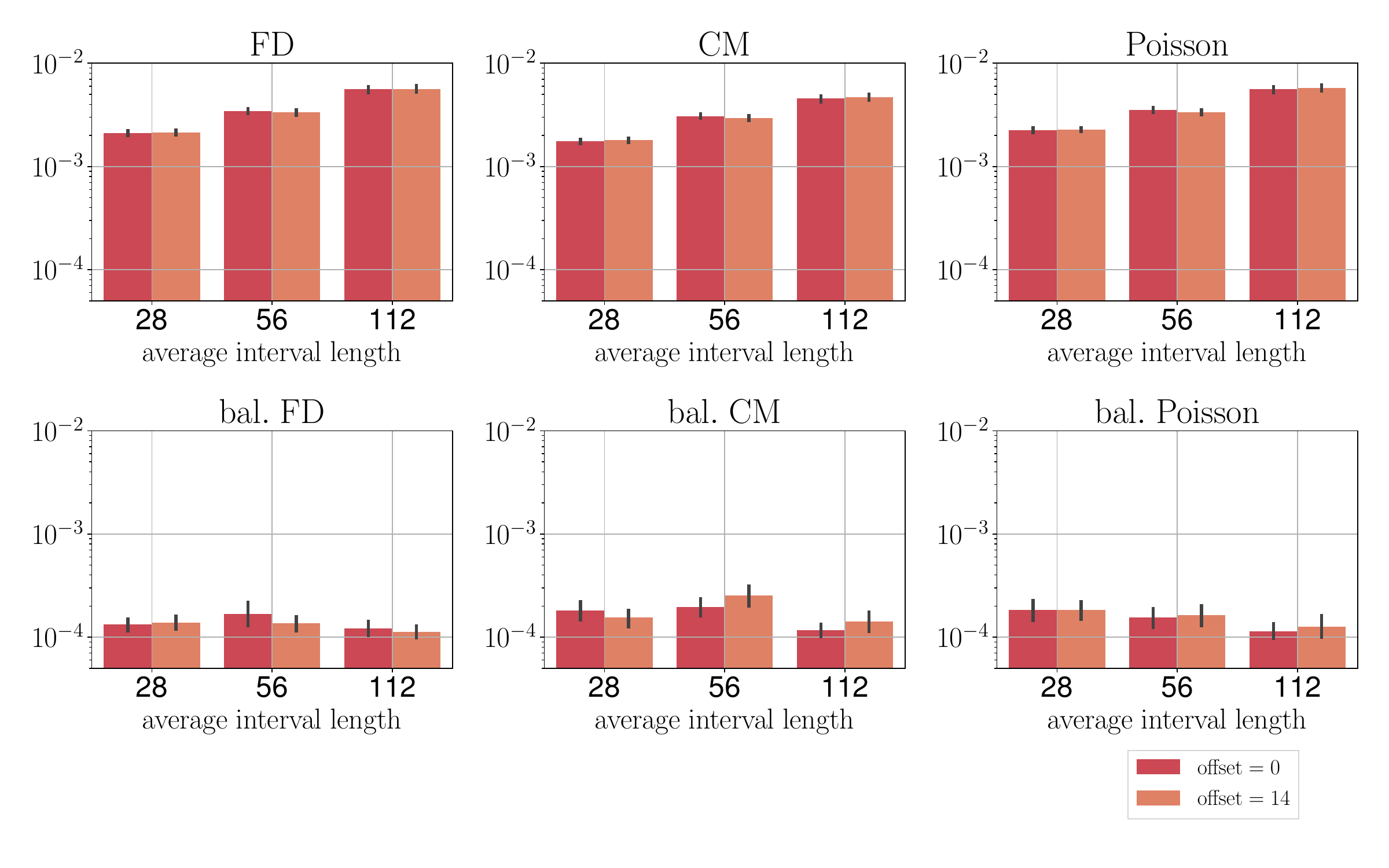}
    \caption{Effect of offset of switching points on estimation error when one experiment is run simultaneously. Offsetting is the most helpful when both the primary and simultaneous experiments use the $56$-minute fixed duration switchbacks. Otherwise, offsetting may not help to reduce the estimation error. }
    \label{fig:aggregate-comparison}
\end{figure}

\begin{figure}[h]
	\centering
 \includegraphics[width=0.9\textwidth]{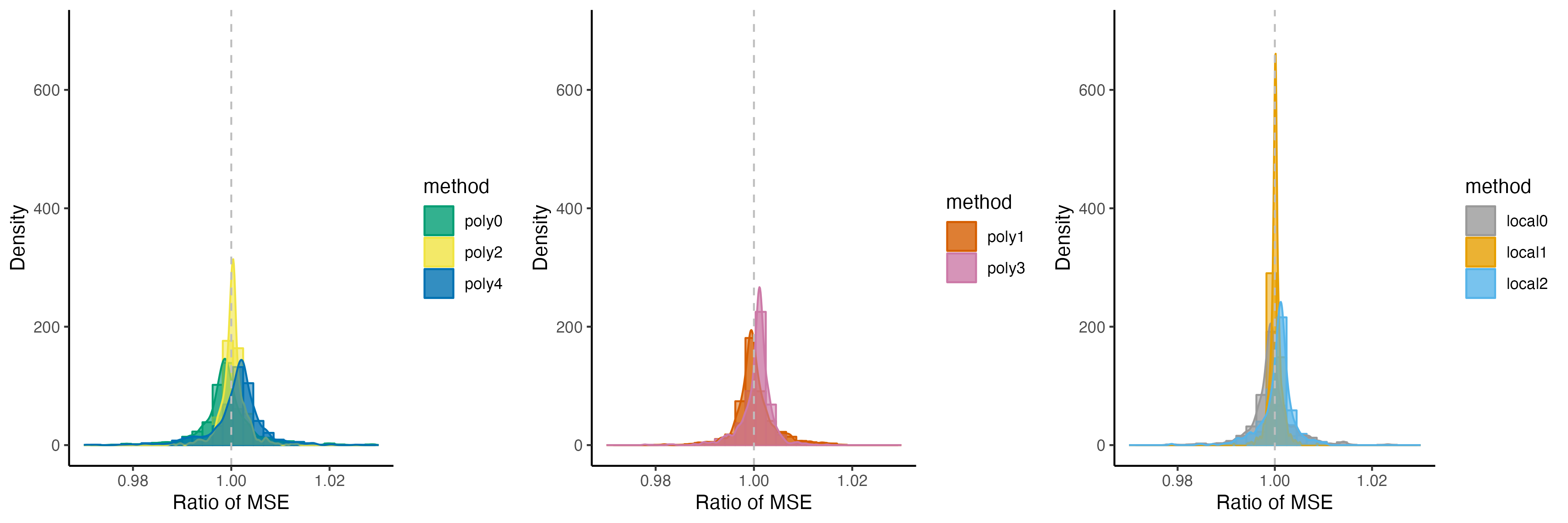}
    \caption{Leave-two-out cross-validation mean-squared error of various smoothing methods. The histogram shows the ratio of cross-validation MSE between an alternative method and natural cubic splines. The alternative method includes polynomial regression of degree $d$ for $d\in\{0,1,2,3,4\}$ and local regression of degree $d$ for $d\in\{0,1,2\}$.
    }
    \label{fig:model-comparison}
\end{figure}

 \clearpage

    \section{Supplementary Theoretical Results}\label{sec:appendix-switchback}

    \subsection{Additional Treatment Effect Estimands}

    We additionally define the average instantaneous and carryover effects, which are building blocks of GATE. The average instantaneous effect $\delta^\inst$ is defined as
\begin{align*}
    \delta^\inst  =& \int \delta^\inst_{t} f(t) dt \, ,
\end{align*}
where $\delta^\inst_{t}$ is the instantaneous treatment effect at time $t$ that is defined as
\begin{align*}
    \delta^\inst_{t} =  Y_{t}(e_{t}, \bm{0}_t, \cdots, \bm{0}_t) - Y_{t}(\bm{0}_t, \bm{0}_t, \cdots, \bm{0}_t) 
\end{align*}
and $e_{t} = (0, \cdots, 0, \underbrace{1}_{\text{time~}t}, 0, \cdots, 0)$ is a one-hot-encoded vector with the entry of time  $t$ to be $1$ and all the remaining entries to be $0$.

The average carryover effect  $\delta^\co_\ell(\bm{w}) $, given treatment assignments $\bm{w}$, is defined as 
\begin{align*}
    \delta^\co(\bm{w})  =& \int \delta^\co_{t}(\bm{w}_t) f(t) dt \, ,
\end{align*}
where $\delta^\co_{t}(\bm{w}_t)$ is the carryover effect at time $t$ that is defined as
\begin{align*}
    \delta^\co_{t}(\bm{w}_t) =  Y_{t}(\bm{w}_t, \bm{0}_t,  \cdots, \bm{0}_t) - Y_{t}(\bm{w}_t \circ e_{t}, \bm{0}_t, \cdots, \bm{0}_t) 
\end{align*}
and ``$\circ$'' denotes the entry-wise product. 
Let $\delta^\co \coloneqq \delta^\co(\bm{1})$ be the average carryover effect under global treatment.
Then we can decompose the GATE as
\[\delta^\gate = \delta^\inst + \delta^\co. \]

Given the instantaneous and carryover effects, we first introduce a few more notations to measure the heterogeneity in treatment effects at the interval level (analogous to $\Xi^{(m)}$).
Let
\begin{align*}
        \Xi^{\inst,(m)} =& \int_{t \in \mathcal{I}_{m}} \delta^\inst_{t}   f(t) d t \\
        \Xi^{\inst,(m)}_\dem =& \Xi^{\inst,(m)} - \delta^\inst \mu^{(m)} \, ,
    \end{align*}
    where $\Xi^{\inst,(m)} $ measures the integrated instantaneous effect $\delta^\inst_{t}$ over time $t$ in the interval $\mathcal{I}_{m}$, and $\Xi^{\inst,(m)}_{\dem}$ measures the deviation of $\Xi^{\inst,(m)} $  from the average instantaneous effect $\delta^\inst$ scaled by the event occurrence probability in the interval $\mathcal{I}_{m}$ with $\sum_{m = 1}^M \Xi^{\inst,(m)}_\dem = 0$ (``dem'' is the abbreviation of ``demeaned''). 
    Analogously, let 
    \begin{align*}
        \Xi^{\co,(m)} =& \int_{t \in \mathcal{I}_{m}} \delta^\co_{t} f(t) d t \\
        \Xi^{\co,(m)}_\dem =& \Xi^{\co,(m)} - \delta^\co \mu^{(m)} \, ,
    \end{align*}
    where $\Xi^{\co,(m)} $ measures the integrated carryover effect $\delta^\co_{t}$ over time $t$ in the interval $\mathcal{I}_{m}$, and $\Xi^{\co,(m)}_{\dem}$ measures the deviation of $\Xi^{\co,(m)} $  from the average carryover effect $\delta^\co$ scaled by the event occurrence probability in the interval $\mathcal{I}_{m}$ with $\sum_{m = 1}^M \Xi^{\co,(m)}_\dem = 0$. Then we can decompose $\Xi^{(m)}$ as
    \[\Xi^{(m)} = \Xi^{\inst,(m)} +  \Xi^{\co,(m)} \, .\]

\subsection{Generalization to More than One Simultaneous Experiment}\label{subsec:multi-simul-exp}

 From this subsection onward, we consider a more general setting with $K$ simultaneous experiments, where $K$ can be any integer. Then we use the subscript $1, \cdots, K$ in $\bm{W}^\s_1, \cdots, \bm{W}^\s_K$ to denote the index of the simultaneous experiment. 
 We further generalize the definition of potential outcomes $Y_t(\bm{W}_t, \bm{W}^\s)$ in Section \ref{sec:setup} to 
 \[Y_t(\bm{W}_t, \bm{W}^\s_1, \cdots, \bm{W}^\s_K) \]
to accommodate more than one simultaneous experiment. In addition, we let $W_\ell^{\s(m)}$ be the treatment assignment of the $m$-th interval of the $\ell$-th simultaneous intervention. We also assume that the assumptions in Theorem \ref{theorem:switchback-bias} and \ref{theorem:bias-variance-switchback} hold for more than one simultaneous experiment. In other words, 
$W_\ell^{\s(m)}$ is independent in $m$ with $ \P(W^{\s(m)} = 1) = 1/2$, and $\bm{W}$ is independent of $\bm{W}^\s_\ell$ for all $\ell$. 

\paragraph{Generalization of Theorem \ref{theorem:switchback-bias}} Theorem \ref{theorem:switchback-bias} continues to hold for more than one simultaneous experiment if we use a more general definition of $S^{(m)}$ in the bias of the HT estimator. Specifically, the term $\Phi^\simul_t$ used to define $S^{(m)}$ needs to allow for $K$ simultaneous experiments, that is, 
\[\Phi^\simul_t = \+E_{\bm{W}^{(-m)} }\big[ \delta^\simul_t(\bm{W}^{(-m)},  W^{(m)} = 1 ) \big]  - \+E_{\bm{W}^{(-m)} }\big[ \delta^\simul_t(\bm{W}^{(-m)},  W^{(m)} = 0 ) 
        \big] \, \]
    where 
\[\delta^\simul_t(\bm{W}_t) = \+E_{\bm{W}^\s_t}\left[Y_t(\bm{W}_t, \bm{W}^\s_{1t}, \cdots, \bm{W}^\s_{Kt}) - Y_t(\bm{W}_t, \bm{0}_t, \cdots, \bm{0}_t) \mid \bm{W}_t, t \right] \]

\paragraph{Generalization of Theorem \ref{theorem:bias-variance-switchback}} In Section \ref{subsec:notations}, we provide the definition of two terms related to simultaneous experiments, $\+E[\mathcal{E}_\mathrm{simul}^2]$ and $\+E[(\mathcal{E}_\mathrm{inst}+\mathcal{E}_\mathrm{carryover}) \cdot \mathcal{E}_\mathrm{simul}] $ in terms of $K$ simultaneous experiments. The definition holds for both $K = 1$ and $K > 1$. With this definition, Theorem \ref{theorem:bias-variance-switchback} holds for more than one simultaneous experiment.

\paragraph{Generalization of Proposition \ref{prop:additive-one-simul-effect}}

 We state a more general form of Proposition \ref{prop:additive-one-simul-effect}, which allows for more than one simultaneous experiment. 

          \begin{proposition}\label{prop:additive-simul-effect}
        Under the assumptions in Theorem \ref{theorem:bias-variance-switchback}, if Condition \ref{cond:additive} holds and the effect of each simultaneous intervention satisfies Assumption \ref{ass:interference-structure}, then the bias from simultaneous interventions $\bias(\mathcal{E}_\mathrm{simul}) $ is zero
         and the variance from simultaneous interventions is 
             \begin{align*}
         &\+E[\mathcal{E}_\mathrm{simul}^2] 
         =  \sum_{m = 1}^M  \bigg( 
\int_{t\in \mathcal{I}_{m}} \bigg[\sum_{\ell=1}^K 
 \delta^{\s.\gate}_{\ell,t} \bigg]  f(t) dt \bigg)^2   \\ & + \sum_{m = 1}^M  \sum_{m^\prime =1}^M \sum_{\ell=1}^K \bigg( 
\int_{t\in \mathcal{I}_{m} \cap \mathcal{I}^\s_{\ell m^\prime} }  \delta^{\s.\inst}_{\ell,t}  f(t) dt + \int_{t \in \mathcal{I}_{m}, t^\prime \in \mathcal{I}^\s_{\ell m^\prime} }  \delta^{\s.\co}_{\ell, t} d^{\s.\co}_{\ell,t}(t^\prime)  f(t) f(t^\prime)  dt d t^\prime\bigg)^2 \, ,
    \end{align*}
    and
    \begin{align*}
     \+E[(\mathcal{E}_\mathrm{inst}+\mathcal{E}_\mathrm{carryover}) \cdot \mathcal{E}_\mathrm{simul}] =& \sum_{m=1}^M \left(\Xi^{(m)} + 2 \mu^{(m)}_{Y^\mathrm{ctrl}}\right)\bigg( 
\int_{t\in \mathcal{I}_{m}} \bigg[\sum_{\ell =1}^K 
 \delta^{\s.\gate}_{\ell,t} \bigg]  f(t) dt \bigg) \, ,
 \end{align*}
 where $\delta^{\s.\gate}_{\ell,t}$, $\delta^{\s.\inst}_{\ell,t}$, $\delta^{\s.\co}_{\ell,t}$, $d^{\s.\co}_{\ell,t}$ are the total treatment effect, instantaneous effect, carryover effect, and carryover kernel of simultaneous intervention $\ell$ at time $t$, respectively, and $\mathcal{I}^\s_{\ell m}$ is the $m$-th interval of simultaneous intervention $\ell$. 
    \end{proposition}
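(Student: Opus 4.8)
The plan is to build directly on Theorem~\ref{theorem:bias-variance-switchback}, which already expresses $\bias(\mathcal{E}_\mathrm{simul}) = \sum_m S^{(m)}$, $\+E[\mathcal{E}_\mathrm{simul}^2] = \sum_{m,m'} S^{(m,m')}_\mathrm{var}$, and $\+E[(\mathcal{E}_\mathrm{inst}+\mathcal{E}_\mathrm{carryover})\cdot\mathcal{E}_\mathrm{simul}] = \sum_{m,m'} S^{(m,m')}_\mathrm{cov}$ for arbitrary (possibly non-additive) interventions. The only new work is therefore to (i) show the bias vanishes under Condition~\ref{cond:additive}, and (ii) evaluate these general variance and cross terms under additivity together with Assumption~\ref{ass:interference-structure}. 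The key preliminary step is a representation of $\mathcal{E}_\mathrm{simul}$ at the interval level: writing the HT estimator with $\pi=1/2$ as $\hat\delta^\gate = \tfrac{2}{n}\sum_i (2W_{t_i}-1) Y^{(i)}$ and isolating the part of the outcome attributable to the simultaneous interventions, additivity and Assumption~\ref{ass:interference-structure} let me write that part at time $t_i$ as $\sum_{\ell=1}^K\big(\delta^{\s.\inst}_{\ell,t_i} W^\s_{\ell,t_i} + \delta^{\s.\co}_{\ell,t_i}\int W^\s_{\ell,t'} d^{\s.\co}_{\ell,t_i}(t') f(t') dt'\big)$, which is \emph{linear} in the interval-level simultaneous assignments $\{W^{\s(m')}_\ell\}$. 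Aggregating events by their primary interval gives $\mathcal{E}_\mathrm{simul} = \sum_{m=1}^M (2W^{(m)}-1)\, A^{(m)}$, where $A^{(m)}$ collects the simultaneous effects of all events in $\mathcal{I}_m$ and depends only on $\bm{W}^\s$ and the event times.

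\textbf{Vanishing bias.} I would first extend Example~\ref{example:additive-simul} from one to $K$ simultaneous interventions: under Condition~\ref{cond:additive} the contrast $Y_t((\bm{W}^{(-m)},W^{(m)}=1),\bm{W}^\s_t) - Y_t((\bm{W}^{(-m)},W^{(m)}=0),\bm{W}^\s_t)$ does not depend on $\bm{W}^\s_t$, so after taking the expectation over each $\bm{W}^\s_\ell$ one obtains $\Phi^\simul_t = 0$, hence $S^{(m)}=0$ for every $m$ and $\bias(\mathcal{E}_\mathrm{simul}) = 0$. Equivalently, in the interval representation the factor $(2W^{(m)}-1)$ is mean zero and, by $\bm{W}\perp\bm{W}^\s$, independent of $A^{(m)}$, so $\+E[\mathcal{E}_\mathrm{simul}] = 0$.

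\textbf{Variance term.} Since $\bm{W}$ is independent of $(\bm{W}^\s, t)$ and $W^{(m)}$ is independent across $m$ with $\+E[(2W^{(m)}-1)(2W^{(m')}-1)] = \bm{1}(m=m')$, the double sum over primary intervals collapses and $\+E[\mathcal{E}_\mathrm{simul}^2] = \sum_m \+E[(A^{(m)})^2] = \sum_m\big((\+E[A^{(m)}])^2 + \var(A^{(m)})\big)$. For the mean, integrating out each $\bm{W}^\s_\ell$ (Bernoulli$(1/2)$) and using the normalization $\int d^{\s.\co}_{\ell,t}(t') f(t') dt' = 1$ collapses instantaneous plus carryover into the total effect, giving $\+E[A^{(m)}] = \int_{\mathcal{I}_m}\sum_\ell \delta^{\s.\gate}_{\ell,t} f(t)\,dt$, which squares to the first term. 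For the variance, I expand the linear-in-$W^\s$ representation of $A^{(m)}$; the second moments of the $W^{\s(m')}_\ell$, together with mutual independence across intervals $m'$ and across the $K$ designs $\ell$, retain only the diagonal $(\ell,m')$ contributions and kill all cross-$\ell$ and cross-$m'$ covariances, leaving exactly $\sum_{m'}\sum_\ell\big(\int_{\mathcal{I}_m\cap\mathcal{I}^\s_{\ell m'}}\delta^{\s.\inst}_{\ell,t} f\,dt + \int_{t\in\mathcal{I}_m,\, t'\in\mathcal{I}^\s_{\ell m'}}\delta^{\s.\co}_{\ell,t} d^{\s.\co}_{\ell,t}(t') f(t) f(t')\,dt\,dt'\big)^2$, the second term.

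\textbf{Cross term and main obstacle.} Both $\mathcal{E}_\mathrm{simul}$ and $\mathcal{E}_\mathrm{inst}+\mathcal{E}_\mathrm{carryover}$ carry the same primary weight $(2W^{(m)}-1)$, so their covariance is again diagonalized by $\+E[(2W^{(m)}-1)(2W^{(m')}-1)] = \bm{1}(m=m')$. Taking $\+E_{\bm{W}^\s}$ of the simultaneous aggregate gives the interval coefficient $\int_{\mathcal{I}_m}\sum_\ell\delta^{\s.\gate}_{\ell,t} f\,dt$, while the primary side contributes the coefficient $\Xi^{(m)}+2\mu^{(m)}_{Y^\mathrm{ctrl}}$ already identified in Theorem~\ref{theorem:bias-variance-switchback}; pairing them yields the single sum $\sum_m (\Xi^{(m)}+2\mu^{(m)}_{Y^\mathrm{ctrl}})\int_{\mathcal{I}_m}\sum_\ell \delta^{\s.\gate}_{\ell,t} f\,dt$. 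I expect the main obstacle to be the careful bookkeeping that separates the two regimes of the simultaneous effect: the kernel normalization must be used to telescope the per-interval instantaneous-plus-carryover contributions into the GATE-level quantity $\delta^{\s.\gate}_{\ell,t}$ for the \emph{mean} (first term and cross term), yet the same contributions must be kept resolved per simultaneous interval $\mathcal{I}^\s_{\ell m'}$ for the \emph{variance} (second term). Managing this together with the several layers of independence ($\bm{W}\perp\bm{W}^\s$, independence across the $K$ designs, and across intervals within each design) and reducing the empirical event-time sums to their density-weighted integrals is where the proof requires the most care.
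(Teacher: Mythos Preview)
Your approach is sound and closely mirrors the paper's. Both proofs rest on the same observation: under Condition~\ref{cond:additive} the simultaneous effect at any time is independent of $\bm W$, so the conditional quantities $\delta^\simul_t(\bm W)$ and $\delta^{\simul,2}_{t,t'}(\bm W)$ are constant in $\bm W$ and all off-diagonal (in the primary interval index) contributions vanish. The paper routes this through the general machinery of Theorem~\ref{theorem:bias-variance-switchback}, showing in separate lemmas that $S^{(m,m')}_{\mathrm{var}}$ and $S^{(m,m')}_{1},S^{(m,m')}_{2},S^{(m,m')}_{3}$ are zero for $m\neq m'$ and then evaluating the diagonal terms; you instead package the same fact as the representation $\mathcal{E}_{\mathrm{simul}}=\sum_m(2W^{(m)}-1)A^{(m)}$ with $A^{(m)}$ free of $\bm W$, which makes the diagonalization immediate. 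Your route is a bit more transparent for the variance term; the paper's route has the advantage that the cross term is handled mechanically once the $S^{(m,m')}_{j}$ are known.

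One place where your sketch needs sharpening is the cross term. You write that $\mathcal{E}_{\mathrm{inst}}+\mathcal{E}_{\mathrm{carryover}}$ ``carries the same primary weight $(2W^{(m)}-1)$'', but this is not literally true: the carryover piece $\delta^\co_{t_j}(\bm W)=\delta^\co_{t_j}\sum_k W^{(k)}\int_{\mathcal I_k}d^\co_{t_j}(u)f(u)\,du$ depends on \emph{all} primary interval assignments, not only the one containing $t_j$. The diagonalization still holds, but the correct reasoning is: since $A^{(m')}$ is $\bm W$-free, first take $\+E_{\bm W^\s}[A^{(m')}\mid t]$ (giving the GATE-weighted integral over $\mathcal I_{m'}$), then compute $\+E_{\bm W}\big[(2W^{(m')}-1)(\mathcal E_{\mathrm{inst}}+\mathcal E_{\mathrm{carryover}})\big]$. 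In that last expectation the three-factor products $\+E\big[(2W^{(m')}-1)(2W^{(m_j)}-1)W^{(k)}\big]$ vanish unless $m_j=m'$, after which the surviving contribution is $\tfrac{2}{n}\sum_{j:\,t_j\in\mathcal I_{m'}}\big[Y_{t_j}(\bm 0,\ldots,\bm 0)+\tfrac{1}{2}\delta^{\gate}_{t_j}\big]$, and integrating over event times produces $\Xi^{(m')}+2\mu^{(m')}_{Y^\mathrm{ctrl}}$. This is exactly what the paper obtains by combining its three lemmas, so your outline leads to the same place once this step is filled in.
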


 \subsection{Definition of Notations in Theorem \ref{theorem:bias-variance-switchback}}\label{subsec:notations}

 \paragraph{The expression of $S^{(m,m^\prime)}_{\mathrm{var}}$ in $\+E[\mathcal{E}_\mathrm{simul}^2]$} \texttt{} 

$S^{(m,m^\prime)}_{\mathrm{var}}$ is defined as 
 \begin{equation}\label{eqn:S-m-mprime-var}
 \begin{aligned}
     S^{(m,m^\prime)}_{\mathrm{var}} =& 4  \int_{t \in \mathcal{I}_{m}, t^\prime \in \mathcal{I}_{m^\prime}}  \left(\bm{1}\left(m = m^\prime \right) \+E_{W}[\delta^{\simul,2}_{t, t^\prime}(\bm{W}) \mid t, t^\prime] + \bm{1}\left(m \neq m^\prime \right) \Phi^{2\dagger}_{t,t^\prime} \right) f(t) f(t^\prime) dt dt^\prime \, ,
 \end{aligned}
 \end{equation}
 where $\delta^{\simul,2}_{t, t^\prime}(\bm{W})$ is equal to 
 \begin{equation}\label{eqn:delta-simul-2}
     \begin{aligned}
        \delta^{\simul,2}_{ t, t^\prime}(\bm{W}) = \+E_{\bm{W}^\s_1, \cdots, \bm{W}^\s_K } &\left[ \left(Y_{t}(\bm{W}, \bm{W}^\s_1, \cdots, \bm{W}^\s_K) - Y_{t}(\bm{W}, \bm{0}, \cdots, \bm{0})\right) \times \right.  \\ & \left.  \left(Y_{t^\prime}(\bm{W}, \bm{W}^\s_1, \cdots, \bm{W}^\s_K) - Y_{t^\prime}(\bm{W}, \bm{0}, \cdots, \bm{0})\right)  \mid \bm{W}, t, t^\prime \right] \, ,
    \end{aligned}
 \end{equation}
    and for $t \in \mathcal{I}_{m}$ and $t^\prime \in \mathcal{I}_{m^\prime}$ with $m \neq m^\prime$,  $\Phi^{2\dagger}_{t,t^\prime}$ is equal to
    \begin{equation}\label{eqn:phi-2}
        \begin{aligned}
        \Phi^{2\dagger}_{t,t^\prime} =& \frac{1}{4}\bigg(\+E_{\bm{W}^{(-m,-m^\prime)} }\bigg[ \delta^{\simul,2}_{t, t^\prime}(\bm{W}^{(-m,-m^\prime)},  \bm{W}^{(m,m^\prime)} = (1,1) ) \bigg]  \\
        & - \+E_{\bm{W}^{(-m,-m^\prime)} }\bigg[ \delta^{\simul,2}_{t, t^\prime}(\bm{W}^{(-m,-m^\prime)},  \bm{W}^{(m,m^\prime)} = (1,0) ) \bigg] \\
        & - \+E_{\bm{W}^{(-m,-m^\prime)} }\bigg[ \delta^{\simul,2}_{ t, t^\prime}(\bm{W}^{(-m,-m^\prime)},  \bm{W}^{(m,m^\prime)} = (0,1) ) \bigg] \\
        & + \+E_{\bm{W}^{(-m,-m^\prime)} }\bigg[ \delta^{\simul,2}_{ t, t^\prime}(\bm{W}^{(-m,-m^\prime)},  \bm{W}^{(m,m^\prime)} = (0,0) ) \bigg] \bigg) \, .
    \end{aligned}
    \end{equation}

    The term $\delta^{\simul,2}_{t, t^\prime}(\bm{W})$ is the expected value of the product between the effects of simultaneous interventions at time $t$ and time $t^\prime$, conditional on $\bm{W}$.  The term $\Phi^{2\dagger}_{t,t^\prime} $ then measures the discrepancy in $\delta^{\simul,2}_{t, t^\prime}(\bm{W})$ by varying the values of $W^{(m)}$ and $W^{(m^\prime)}$ and marginalizing over $\bm{W}^{(-m,-m^\prime)}$.

    \paragraph{The expression of $S^{(m,m^\prime)}_{\mathrm{cov}}$ in $\+E[(\mathcal{E}_\mathrm{inst}+\mathcal{E}_\mathrm{carryover}) \cdot \mathcal{E}_\mathrm{simul}] $} \texttt{} 

    $S^{(m,m^\prime)}_{\mathrm{cov}}$ is defined as 
    \begin{align}\label{eqn:S-m-mprime-cov}
        S^{(m,m^\prime)}_{\mathrm{cov}} = (\delta^\gate \mu^{(m^\prime)} + 2 \mu^{(m^\prime)}_{Y^\mathrm{ctrl}}) \cdot S^{(m,m^\prime)}_{1}  + \left( \Xi^{\inst,(m)}_\dem - \delta^\co \mu^{(m)} \right) S^{(m,m^\prime)}_{2} + S^{(m,m^\prime)}_{3} \, .
    \end{align}

    The term $S^{(m,m^\prime)}_{1}$ is defined as 
    \begin{align}\label{eqn:S-m-mprime-1}
        S^{(m,m^\prime)}_{1} = 2 \int_{t \in \mathcal{I}_{m}}  \left( \bm{1}\left(m = m^\prime \right)  \+E_{W}\Big[ \delta^\simul_{ t}(\bm{W}) 
        \Big] + \bm{1}\left(m \neq m^\prime \right)  \Phi_{t}^{\simul,(-m^\prime)} \right) f(t) dt \, ,
    \end{align}
where $\+E_{W}\big[ \delta^\simul_{ t}(\bm{W}) \big]$ is defined in Section \ref{subsec:block-stats} and, for time $t \in \mathcal{I}_{m}$ and time $t^\prime \in \mathcal{I}_{m^\prime}$ with $m \neq m^\prime$, $\Phi_{t}^{\simul,(-m^\prime)}$ is defined as 
\begin{equation}\label{eqn:phi-simul}
    \begin{aligned}
        \Phi_{t}^{\simul,(-m^\prime)} =&  \frac{1}{4}\bigg(\+E_{\bm{W}^{(-m,-m^\prime)} }\bigg[ \delta^\simul_{ t}(\bm{W}^{(-m,-m^\prime)},  \bm{W}^{(m,m^\prime)} = (1,1) ) \bigg]  \\
        & - \+E_{\bm{W}^{(-m,-m^\prime)} }\bigg[ \delta^\simul_{ t}(\bm{W}^{(-m,-m^\prime)},  \bm{W}^{(m,m^\prime)} = (1,0) ) \bigg] \\
        & - \+E_{\bm{W}^{(-m,-m^\prime)} }\bigg[ \delta^\simul_{ t}(\bm{W}^{(-m,-m^\prime)},  \bm{W}^{(m,m^\prime)} = (0,1) ) \bigg] \\
        & + \+E_{\bm{W}^{(-m,-m^\prime)} }\bigg[ \delta^\simul_{ t}(\bm{W}^{(-m,-m^\prime)},  \bm{W}^{(m,m^\prime)} = (0,0) ) \bigg] \bigg) \, .
    \end{aligned}
\end{equation}
Here 
\begin{align*}
        \bm{W}^{(m,m^\prime)} =&~ (W^{(m)}, W^{(m^\prime)}), \qquad
        \bm{W}^{(-m,-m^\prime)} =~ \bm{W} \backslash \bm{W}^{(m,m^\prime)}  \, ,
    \end{align*}
    where $\bm{W}^{(-m,-m^\prime)}$ is an $M-2$ dimensional vector denoting the treatment status of the primary intervention for all intervals excluding the $m$-th and $m^\prime$-th intervals.
 
    Recall that $\delta^\simul_{ t}(\bm{W}) $ is the expected value of the effects of simultaneous interventions at time $t$, conditional on $\bm{W}$. Then $\+E_{W }\big[ \delta^\simul_{ t}(\bm{W}) \big]$ is the expected value of $\delta^\simul_{ t}(\bm{W}) $ marginalizing over $\bm{W}$. 
    The term $\+E_{\bm{W}^{(-m,-m^\prime)} }\big[ \delta^\simul_{ t}(\bm{W}^{(-m,-m^\prime)},  \bm{W}^{(m,m^\prime)} ) \big]$ is the expected value of $\delta^\simul_{ t}(\bm{W}) $ conditional on $\bm{W}^{(m,m^\prime)}$ and marginalizing over $\bm{W}^{(-m,-m^\prime)}$.

    $\Phi_{t}^{\simul,(-m^\prime)}$ is a measure of the discrepancy in the effects of simultaneous interventions by varying the values of $W^{(m)}$ and $W^{(m^\prime)}$   ($m$ is the interval in which time $t$ lies). $\Phi_{t}^{\simul,(-m^\prime)}$ is closely connected to $\Phi^\simul_{t}$ defined in Section \ref{subsec:block-stats}, but the difference is that $\Phi_{t}^{\simul,(-m^\prime)}$ varies the value of both  $W^{(m)}$ and $W^{(m^\prime)}$, while $\Phi^\simul_{t}$ only varies the value of $W^{(m)}$.

   In addition, the term $S^{(m,m^\prime)}_{2}$ is defined as 
 \begin{equation}\label{eqn:S-m-mprime-2}
 \begin{aligned}
     S^{(m,m^\prime)}_{2} =& 2  \int_{t^\prime \in \mathcal{I}_{m^\prime}}\left(\bm{1}\left(m = m^\prime \right) \+E_{\bm{W}^{(-m)} }\big[ \delta^\simul_{t^\prime}(\bm{W}^{(-m)},  W^{(m)} = 1 ) \big]  + \bm{1}\left(m \neq m^\prime \right) \Phi_{t^\prime}^{\simul,(-m^\prime)\dagger} \right) f(t^\prime) dt^\prime \, ,
 \end{aligned}
 \end{equation}
  where for time $t^\prime \in \mathcal{I}_{m^\prime}$, $\Phi_{t^\prime}^{\simul,(-m^\prime)\dagger}$ is equal to 
    \begin{equation}\label{eqn:phi-simul-2}
        \begin{aligned}
        \Phi_{t^\prime}^{\simul,(-m^\prime)\dagger} =& \frac{1}{2}\bigg(\+E_{\bm{W}^{(-m,-m^\prime)} }\bigg[ \delta^{\dagger}_{  t^\prime}(\bm{W}^{(-m,-m^\prime)},  \bm{W}^{(m,m^\prime)} = (1,1) ) \bigg] \\
        & - \+E_{\bm{W}^{(-m,-m^\prime)} }\bigg[ \delta^{\dagger}_{ t^\prime}(\bm{W}^{(-m,-m^\prime)},  \bm{W}^{(m,m^\prime)} = (1,0) ) \bigg]\bigg) \, .
    \end{aligned}
    \end{equation}
    $S^{(m,m^\prime)}_{2}$ is conceptually very similar to $S^{(m,m^\prime)}_{1}$, but the difference is that $S^{(m,m^\prime)}_{2}$ conditions on $W^{(m)} = 1$, while $S^{(m,m^\prime)}_{1}$ does not.

    Lastly, $S^{(m,m^\prime)}_{3}$ is defined as 
 \begin{equation}\label{eqn:S-m-mprime-3}
 \begin{aligned}
     S^{(m,m^\prime)}_{3} =& 4  \int_{t \in \mathcal{I}_{m}, t^\prime \in \mathcal{I}_{m^\prime}}\left(\bm{1}\left(m = m^\prime \right) \+E_W\left[\delta^\co_{t}(\bm{W})  \delta^\simul_{t^\prime}(\bm{W}) \right]  + \bm{1}\left(m \neq m^\prime \right)\Phi_{t, t^\prime}^{\co,\simul} \right) f(t) f(t^\prime) dt dt^\prime \, ,
 \end{aligned}
 \end{equation}
    where for time $t \in \mathcal{I}_{m}$ and time $t^\prime \in \mathcal{I}_{m^\prime}$, $\Phi_{t, t^\prime}^{\co,\simul}$ is equal to
        \begin{equation}\label{eqn:phi-co-simul}
            \begin{aligned}
        \Phi_{t, t^\prime}^{\co,\simul} =& \frac{1}{4} \bigg(\+E_{\bm{W}^{(-m,-m^\prime)} }\bigg[ \delta^\co_{t}(\bm{W}^{(-m,-m^\prime)},  \bm{W}^{(m,m^\prime)} = (1,1) ) \cdot  \delta^\simul_{t^\prime}(\bm{W}^{(-m,-m^\prime)},  \bm{W}^{(m,m^\prime)} = (1,1) ) \bigg]  \\
        & - \+E_{\bm{W}^{(-m,-m^\prime)} }\bigg[  \delta^\co_{t}(\bm{W}^{(-m,-m^\prime)},  \bm{W}^{(m,m^\prime)} = (1,0) ) \cdot \delta^\simul_{t^\prime}(\bm{W}^{(-m,-m^\prime)},  \bm{W}^{(m,m^\prime)} = (1,0) ) \bigg] \\
        & - \+E_{\bm{W}^{(-m,-m^\prime)} }\bigg[  \delta^\co_{t}(\bm{W}^{(-m,-m^\prime)},  \bm{W}^{(m,m^\prime)} = (0,1) ) \cdot \delta^\simul_{t^\prime}(\bm{W}^{(-m,-m^\prime)},  \bm{W}^{(m,m^\prime)} = (0,1) ) \bigg] \\
        & + \+E_{\bm{W}^{(-m,-m^\prime)} }\bigg[  \delta^\co_{t}(\bm{W}^{(-m,-m^\prime)},  \bm{W}^{(m,m^\prime)} = (0,0) ) \cdot \delta^\simul_{t^\prime}(\bm{W}^{(-m,-m^\prime)},  \bm{W}^{(m,m^\prime)} = (0,0) ) \bigg] \bigg) \, .
    \end{aligned}
        \end{equation}
    
    $S^{(m,m^\prime)}_{3}$ measures the expected value of the product of carryover effect at time $t$ and effects of simultaneous interventions at time $t^\prime$.

 \subsection{Additional Examples}\label{subsec:additional-examples}

 Below, we show an example of the value of $C^{(m)}$ when the covariance decays linearly in the distance between time $t_i$ and time $t_j$. 

    \begin{example}
    \label{example:covariance-value}
        Suppose the covariance $\+E_{\varepsilon}\left[\varepsilon^{(i)} \varepsilon^{(j)} \mid  t_i, t_j \right] $ decays linearly in $|t_i - t_j|$ for all time $t_j \in [t_i - h, t_i + h]$, and is zero outside this interval (i.e., $\+E_{\varepsilon}\left[\varepsilon^{(i)} \varepsilon^{(j)} \mid  t_i, t_j \right] = \sigma^2 (h - |t_i - t_j|)/h$). Suppose the event density $f(t)$ is uniform in time $t$. If $h < |\mathcal{I}_{m}|$, then $C^{(m)} = \sigma^2 \big(|\mathcal{I}_{m}|^2 - |\mathcal{I}_{m}| h + 2h^2/3 \big)/T^2 $; otherwise, $C^{(m)} =  \sigma^2 \big(|\mathcal{I}_{m}|^2 - |\mathcal{I}_{m}|^3/(3h) \big) / T^2$. 
    \end{example}

        \begin{example}
    \label{example:carryover-value}
        Suppose event density $f(t)$ is uniform in time $t$ and the fixed-duration design is used. Furthermore, suppose the carryover effect $ \delta^\co_{t}$ is constant in time $t$ and carryover intensity is constant for time $t^\prime \in [t-h, t]$ for any time $t$ and for $h < T/M$. Then $I^{(m)}=\delta^\co(1/M - h/(2T))$.
    \end{example}

Next we show an example of reducing the bias from simultaneous interventions by staggering the switching times. 
    
\begin{example}[Staggering switching times for simultaneous interventions]\label{example:simul-misalign}
Suppose there is one primary and one simultaneous intervention. Consider a discrete-time setting with $T$ periods. Let $\bm{W} =(W_1, \cdots, W_T)$ be the treatment design of the primary intervention and $\bm{W}^\s = (W^\s_1, \cdots, W^\s_T)$ be the treatment design of the simultaneous intervention. Suppose a simple structure of the treatment effects, i.e., for any $t$,
\[Y_t (\bm{W}_t, \bm{W}_t^\s) - Y_t(\bm{0}_t, \bm{0}_t) = \delta W_t + \delta^\s W^\s_t + \delta^{\mathrm{compd}} W_t W^\s_t \]
implying that there is instantaneous effect only and the effect is constant over time. Moreover, 
\begin{itemize}
    \item When $W_t = 1$ and $W_t^\s = 1$, the treatment effect is $\delta+\delta^\s+\delta^{\mathrm{compd}}$
    \item When $W_t = 1$ and $W_t^\s = 0$, the treatment effect is $\delta$
    \item When $W_t = 0$ and $W_t^\s = 1$, the treatment effect is $\delta^\s$
\end{itemize}

Below we consider two fixed-duration switchback designs with an interval length of two periods for both the primary and simultaneous interventions
\begin{itemize}
    \item Design $1$ (same switching times): designs of both main and simultaneous interventions randomize at time $t = 1, 3, 5, \cdots$.
    \item Design $2$ (staggering switching times): design of the primary intervention randomizes at time $t = 1, 3, 5, \cdots$; design of the simultaneous intervention randomizes at time $t = 2, 4, 6, \cdots$.
\end{itemize}

Next we calculate the bias of the two designs. For the first design, the bias of the HT estimator $\hat{\delta}$ is 
\begin{align*}
    \+E_{W,\varepsilon,t}[\hat{\delta} - \delta]  =& \delta+\delta^\s+\delta^{\mathrm{compd}} - \delta =  \delta^\s+\delta^{\mathrm{compd}}
\end{align*}

For the second design, the bias of the HT estimator $\hat{\delta}$ is 
\begin{align*}
    \+E_{W,\varepsilon,t}[\hat{\delta} - \delta]  =& \left[\left(\frac{1}{2} \cdot \left(\delta+\delta^\s+\delta^{\mathrm{compd}} \right) + \frac{1}{2} \cdot \delta \right)- \left(\frac{1}{2} \cdot \delta^\s + \frac{1}{2} \cdot 0 \right)\right] - \delta = \frac{1}{2} \delta^{\mathrm{compd}}
\end{align*}

Then the sufficient conditions for the bias of the first design to be larger than the bias of the second design is either of following two conditions holds
\begin{itemize}
    \item $\delta^\s \delta^{\mathrm{compd}} \geq 0$, that is, $\delta^\s$ and $\delta^{\mathrm{compd}}$ have the same signs or one of $\delta^\s$ and $\delta^{\mathrm{compd}}$ is zero
    \item $|\delta^\s| \geq 1.5|\delta^{\mathrm{compd}}|$, that is, the scale of $\delta^\s$ is at least $1.5$ times of the scale of $\delta^{\mathrm{compd}}$.
\end{itemize}
\end{example}

     \clearpage

    \section{Proof of Main Results}\label{appendix:proof}

    In this section, we first prove Theorem \ref{theorem:switchback-bias} in Section \ref{subsec:proof-first-theorem}, then prove Theorem \ref{theorem:bias-variance-switchback} in Section \ref{subsec:proof-second-theorem}, and finally prove Proposition \ref{prop:additive-simul-effect} in Section \ref{subsec:proof-proposition}. All the proofs allow for more than one simultaneous experiment, and use the notations defined in Appendices \ref{subsec:multi-simul-exp} and \ref{subsec:notations}.

\subsection{Proof of Proposition \ref{prop:additive-simul-effect}}\label{subsec:proof-proposition}

\begin{proof}{Proof of Proposition \ref{prop:additive-simul-effect}}

The bias $\bias(\mathcal{E}_\mathrm{simul})$ is zero following Example \ref{example:additive-simul} in Section \ref{subsec:block-stats}. In what follows, we first show how the expression of $\mathrm{E}(\mathcal{E}_\mathrm{simul}^2)$ is simplified under the additive condition, and then show how the expression of $\+E[(\mathcal{E}_\mathrm{inst}+\mathcal{E}_\mathrm{carryover})\cdot \mathcal{E}_\mathrm{simul}] $ is simplified under the additive condition.

\subsubsection{Simplification of $\mathbb{E}(\mathcal{E}_\mathrm{simul}^2)$ under Condition \ref{cond:additive}}

    Recall that 
    \begin{align*}
        \+E[\mathcal{E}_\mathrm{simul}^2] = \sum_{m=1}^M \sum_{m^\prime = 1}^M S^{(m,m^\prime)}_{\mathrm{var}}  \, ,
    \end{align*}
    in the lemma below, we show that under the additive Condition \ref{cond:additive}, $S^{(m,m^\prime)}_{\mathrm{var}} = 0$ for $m^\prime \neq m$ and we provide the expression of $S^{(m,m^\prime)}_{\mathrm{var}}$ for $m^\prime = m$. 

    \begin{lemma}\label{lemma:S-var}
        Under the assumptions in Theorem \ref{theorem:bias-variance-switchback} and Condition \ref{cond:additive}, 
        $$\mathbb{E}[\mathcal{E}_\mathrm{simul}^2] = \sum_{m = 1}^M S^{(m,m)}_{\mathrm{var}} \, ,$$
        where 
            \begin{align*}
        S^{(m,m)}_{\mathrm{var}}
         =&  \bigg( 
\int_{t\in \mathcal{I}_{m}} \bigg[\sum_{\ell=1}^K 
 \delta^{\s.\gate}_{\ell,t} \bigg]  f(t) dt \bigg)^2   \\ & +  \sum_{m^\prime =1}^M \sum_{\ell=1}^K \bigg( 
\int_{t\in \mathcal{I}_{m} \cap \mathcal{I}^\s_{\ell m^\prime} }  \delta^{\s.\inst}_{\ell,t}  f(t) dt + \int_{t^\prime\in \mathcal{I}_{m}, u \in \mathcal{I}^\s_{\ell m^\prime} }  \delta^{\s.\co}_{\ell, t^\prime} d^{\s.\co}_{\ell,t^\prime}(u)  f(u)  f(t^\prime) dt^\prime d u\bigg)^2 \, ,
    \end{align*}
    and $S^{(m,m^\prime)}_{\mathrm{var}} = 0$ for $m^\prime \neq m$.

    \end{lemma}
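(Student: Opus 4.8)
The plan is to exploit the single structural consequence of Condition~\ref{cond:additive}: when the primary and simultaneous effects are additive, the difference $D_t := Y_t(\bm{W},\bm{W}^\s_1,\ldots,\bm{W}^\s_K) - Y_t(\bm{W},\bm{0},\ldots,\bm{0})$ does not depend on the primary assignment $\bm{W}$ at all. This is exactly the identity derived in Example~\ref{example:additive-simul}. Consequently the quantity $\delta^{\simul,2}_{t,t'}(\bm{W})$ in \eqref{eqn:delta-simul-2}, which is $\+E_{\bm{W}^\s}[D_t D_{t'}\mid\bm{W}]$, is constant in $\bm{W}$; I will write it simply as $\delta^{\simul,2}_{t,t'}$ and note that $\+E_W[\delta^{\simul,2}_{t,t'}(\bm{W})\mid t,t']=\delta^{\simul,2}_{t,t'}$.

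First I would dispose of the off-diagonal contributions. For $m\neq m'$, the term $\Phi^{2\dagger}_{t,t'}$ in \eqref{eqn:phi-2} is a second-order difference of $\+E_{\bm{W}^{(-m,-m')}}[\delta^{\simul,2}_{t,t'}(\cdot)]$ across the four configurations of $(W^{(m)},W^{(m')})$. Since $\delta^{\simul,2}_{t,t'}(\bm{W})\equiv\delta^{\simul,2}_{t,t'}$ is constant in $\bm{W}$, every one of the four conditional expectations equals the same constant, so the alternating sum vanishes, giving $\Phi^{2\dagger}_{t,t'}=0$. By \eqref{eqn:S-m-mprime-var} this forces $S^{(m,m')}_{\mathrm{var}}=0$ for $m\neq m'$, and hence $\+E[\mathcal{E}_\mathrm{simul}^2]=\sum_{m=1}^M S^{(m,m)}_{\mathrm{var}}$.

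For the diagonal term I would rewrite $S^{(m,m)}_{\mathrm{var}}=4\int_{t,t'\in\mathcal{I}_{m}}\delta^{\simul,2}_{t,t'}f(t)f(t')\,dt\,dt'=4\,\+E_{\bm{W}^\s}[Z^2]$, where $Z:=\int_{t\in\mathcal{I}_{m}}D_t\,f(t)\,dt$ is well defined because the integrand is $\bm{W}$-free. The crux is the decomposition $\+E_{\bm{W}^\s}[Z^2]=(\+E_{\bm{W}^\s}[Z])^2+\var(Z)$. Writing $Z=\sum_{\ell=1}^K Z_\ell$ with $Z_\ell$ the integral of the intervention-$\ell$ effect and invoking Assumption~\ref{ass:interference-structure} for each simultaneous intervention, each $Z_\ell$ is linear in the field $\{W^\s_{\ell,u}\}$; since $\+E[W^\s_{\ell,u}]=1/2$ and the carryover kernel integrates to one, $\+E[Z]=\tfrac12\int_{t\in\mathcal{I}_{m}}\sum_\ell\delta^{\s.\gate}_{\ell,t}f(t)\,dt$, so $4(\+E[Z])^2$ reproduces the first (mean-squared) term of the stated formula.

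The variance term carries the real work. I would collect the coefficient of $W^\s_{\ell,u}$ in $Z_\ell$, namely $g_\ell(u)=\delta^{\s.\inst}_{\ell,u}f(u)\bm{1}(u\in\mathcal{I}_{m})+f(u)\int_{t\in\mathcal{I}_{m}}\delta^{\s.\co}_{\ell,t}d^{\s.\co}_{\ell,t}(u)f(t)\,dt$, and then use that $W^\s_{\ell,\cdot}$ is constant on each simultaneous interval $\mathcal{I}^\s_{\ell m'}$, is independent across those intervals and across $\ell$, and has variance $1/4$. This yields $4\var(Z)=\sum_\ell\sum_{m'}\big(\int_{u\in\mathcal{I}^\s_{\ell m'}}g_\ell(u)\,du\big)^2$, and a direct relabeling shows $\int_{u\in\mathcal{I}^\s_{\ell m'}}g_\ell(u)\,du$ equals the bracketed instantaneous-plus-carryover expression in the statement. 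The main obstacle is precisely this bookkeeping: identifying the linear kernel $g_\ell$ correctly, exploiting the piecewise-constant law of $W^\s_{\ell,\cdot}$ together with mutual independence of the $K$ designs so that no cross-$\ell$ covariance survives, and matching the resulting double integral term-by-term against the target.
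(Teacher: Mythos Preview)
Your proposal is correct. The off-diagonal argument is identical to the paper's: both note that under Condition~\ref{cond:additive} the quantity $\delta^{\simul,2}_{t,t'}(\bm{W})$ does not depend on $\bm{W}$, so the alternating sum $\Phi^{2\dagger}_{t,t'}$ in \eqref{eqn:phi-2} vanishes and hence $S^{(m,m')}_{\mathrm{var}}=0$ for $m\neq m'$.

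For the diagonal term the two routes diverge in presentation. The paper expands the product inside $\delta^{\simul,2}_{t,t'}$ directly, obtaining five families of terms (a $\delta^{\s.\gate}\times\delta^{\s.\gate}$ term, a $\delta^{\s.\co}\times\delta^{\s.\co}$ term, two $\delta^{\s.\inst}\times\delta^{\s.\co}$ cross terms, and a $\delta^{\s.\inst}\times\delta^{\s.\inst}$ term), integrates each over $t,t'\in\mathcal{I}_m$, and then recognises that they recombine into the two squared expressions of the statement. You instead pass through the probabilistic identity $4\+E[Z^2]=4(\+E Z)^2+4\var Z$ for the scalar $Z=\int_{\mathcal{I}_m}D_t f(t)\,dt$, identify $Z_\ell$ as a linear form $\sum_{m'}W^{\s(m')}_\ell a_{\ell,m'}$ in the independent Bernoulli interval indicators, and read off the mean and variance directly. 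Both arguments use the same independence structure (across $\ell$ and across the simultaneous intervals $\mathcal{I}^\s_{\ell m'}$); yours avoids the five-term expansion and the subsequent recombination step, at the cost of the extra Fubini/kernel-collection bookkeeping in defining $g_\ell$. Either way the computation is equivalent, and your variance-decomposition packaging is arguably cleaner.
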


    \begin{proof}{Proof of Lemma \ref{lemma:S-var}}
        When treatment effects of simultaneous interventions are additive, using $ \P(W_\ell^{\s(m)} = 1) = 1/2$ for all $\ell$ and $m$, the term $\delta^{\simul,2}_{t, t^\prime}(\bm{W})$ in Equation \eqref{eqn:S-m-mprime-var} is 
        \begin{align*}
         \delta^{\simul,2}_{t, t^\prime}(\bm{W}) =& \+E_{\bm{W}^\s_1, \cdots, \bm{W}^\s_K } \left[ \left(\sum_{\ell = 1}^K \left[W^\s_{\ell, t}  \delta^{\s.\inst}_{\ell,t}  + \delta^{\s.\co}_{\ell,t}  \cdot \sum_{k = 1}^M W_{\ell}^{\s(k)} \int_{u \in \mathcal{I}^\s_{\ell k}} d^{\s.\co}_{\ell,t}(u)  f(u) d u \right]  \right) \right.\\
        & \left. \times \left(\sum_{\ell =1}^K \left[W^\s_{\ell, t^\prime}  \delta^{\s.\inst}_{\ell,t^\prime}  + \delta^{\s.\co}_{\ell,t^\prime}  \cdot \sum_{k = 1}^M W_{\ell}^{\s(k)} \int_{u \in \mathcal{I}^\s_{\ell k}} d^{\s.\co}_{\ell,t^\prime}(u)  f(u) d u \right]  \right)  \mid \bm{W}, t, t^\prime \right] \\
        =& \frac{1}{4} \left(\sum_{\ell = 1}^K \delta^{\s.\gate}_{\ell,t}   \right) \left(\sum_{\ell = 1}^K \delta^{\s.\gate}_{\ell,t^\prime}  \right) \\
        & + \frac{1}{4} \sum_{\ell = 1}^K \delta^{\s.\co}_{\ell,t} \delta^{\s.\co}_{\ell,t^\prime} \sum_{k = 1}^M \left(\int_{u \in \mathcal{I}^\s_{\ell k}} d^{\s.\co}_{\ell,t}(u)  f(u) d u \right) \left(\int_{u \in \mathcal{I}^\s_{\ell k}} d^{\s.\co}_{\ell,t^\prime}(u)  f(u) d u \right) \\
        & + \frac{1}{4} \sum_{\ell = 1}^K \delta^{\s.\inst}_{\ell,t} \delta^{\s.\co}_{\ell,t^\prime} \int_{u \in \mathcal{I}^\s_{\ell m(t)}} d^{\s.\co}_{\ell,t^\prime}(u)  f(u) d u \tag{$\mathcal{I}^\s_{\ell m(t)}$ denotes the interval of simul. intervention $\ell$ to which  $t$ belongs} \\
        & + \frac{1}{4} \sum_{\ell = 1}^K \delta^{\s.\inst}_{\ell,t^\prime} \delta^{\s.\co}_{\ell,t} \int_{u \in \mathcal{I}^\s_{\ell m(t^\prime)}} d^{\s.\co}_{\ell,t}(u)  f(u) d u \\
        & + \frac{1}{4} \sum_{\ell = 1}^K \bm{1}\left(\text{$t$ and $t^\prime$ in the same interval of simul. intervention $\ell$} \right)\delta^{\s.\inst}_{\ell,t} \delta^{\s.\inst}_{\ell,t^\prime}  \, .
    \end{align*}
    We can see that $\delta^{\simul,2}_{t, t^\prime}(\bm{W}) $ does not depend on $\bm{W}$ for any time $t$ and time $t^\prime$. 

    When time $t$ and time $t^\prime$ are in different intervals of the primary intervention ($m \neq m^\prime$), the term $\Phi^{2\dagger}_{t,t^\prime}$ in Equation \eqref{eqn:S-m-mprime-var} is 0 as $ \delta^{\simul,2}_{t, t^\prime}(\bm{W}^{(-m,-m^\prime)},  \bm{W}^{(m,m^\prime)} = (w^m, w^{m^\prime}) )$ is the same for $(w^m, w^{m^\prime}) = \{(1,1), (1,0), (0,1), (0,0)\}$. Therefore, when $m \neq m^\prime$, 
    \[S^{(m,m^\prime)}_{\mathrm{var}} = 0 \,   \]
    and 
    \begin{align*}
        \mathrm{E}(\mathcal{E}_\mathrm{simul}^2) =&  \sum_{m = 1}^M \sum_{m^\prime = 1}^M S^{(m,m^\prime)}_{\mathrm{var}} = \sum_{m = 1}^M S^{(m,m)}_{\mathrm{var}} \, . 
    \end{align*}

    Next is to provide the expression of $S^{(m,m)}_{\mathrm{var}}$. We have 
    \begin{align*}
         S^{(m,m)}_{\mathrm{var}} =& 4  \int_{t, t^\prime \in \mathcal{I}_{m}}  \+E_{W}[\delta^{\simul,2}_{ t, t^\prime}(\bm{W}) \mid t, t^\prime] f(t) f(t^\prime) dt dt^\prime \\
         =& \int_{t, t^\prime \in \mathcal{I}_{m}} \left(\sum_{\ell = 1}^K \delta^{\s.\gate}_{\ell,t}   \right) \left(\sum_{\ell = 1}^K \delta^{\s.\gate}_{\ell,t^\prime}  \right)  f(t) f(t^\prime) dt dt^\prime  \\
         & + \sum_{\ell = 1}^K \sum_{k=1}^M \int_{t, t^\prime \in \mathcal{I}_{m}} \delta^{\s.\co}_{\ell,t} \delta^{\s.\co}_{\ell,t^\prime}  \left(\int_{u \in \mathcal{I}^\s_{\ell k}} d^{\s.\co}_{\ell,t}(u)  f(u) d u \right) \left(\int_{u \in \mathcal{I}^\s_{\ell k}} d^{\s.\co}_{\ell,t^\prime}(u)  f(u) d u \right) f(t) f(t^\prime) dt dt^\prime  \\
         & + 2 \sum_{\ell = 1}^K \sum_{m^\prime = 1}^M \int_{t \in \mathcal{I}_{m} \cap \mathcal{I}^\s_{\ell m^\prime}, t^\prime \in \mathcal{I}_{m}} \delta^{\s.\inst}_{\ell,t} \delta^{\s.\co}_{\ell,t^\prime} \left(\int_{u \in \mathcal{I}^\s_{\ell m^\prime}} d^{\s.\co}_{\ell,t^\prime}(u)  f(u) d u \right) f(t) f(t^\prime) dt dt^\prime \tag{combining the third and fourth terms in the decomposition of $\delta^{\simul,2}_{t, t^\prime}(\bm{W})$}  \\
         & + \sum_{\ell = 1}^K \sum_{m^\prime =1}^M  \int_{t, t^\prime \in \mathcal{I}_{m}  \cap \mathcal{I}^\s_{\ell m^\prime} 
 } \delta^{\s.\inst}_{\ell,t}  \delta^{\s.\inst}_{\ell,t^\prime} f(t) f(t^\prime) dt dt^\prime \, .
    \end{align*}
    We can further simplify $S^{(m,m)}_{\mathrm{var}} $ to 
    \begin{align*}
        S^{(m,m)}_{\mathrm{var}} 
         =&   \bigg(\int_{t\in \mathcal{I}_{m}} \bigg[\sum_{\ell = 1}^K 
 \delta^{\s.\gate}_{\ell,t} \bigg]  f(t) dt  \bigg)^2 
 + \sum_{\ell = 1}^K \sum_{m^\prime =1}^M \bigg(\int_{t\in \mathcal{I}_{m}, u \in \mathcal{I}^\s_{\ell m^\prime} }  \delta^{\s.\co}_{\ell, t} d^{\s.\co}_{\ell,t}(u)  f(u)  f(t) dt d u   \bigg)^2 \\
 &
          + 2 \sum_{\ell = 1}^K \sum_{m^\prime =1}^M \bigg(\int_{t\in \mathcal{I}_{m} \cap \mathcal{I}^\s_{\ell m^\prime} }  \delta^{\s.\inst}_{\ell, t}  f(t) dt  \bigg)  \bigg(\int_{t^\prime\in \mathcal{I}_{m}, u \in \mathcal{I}^\s_{\ell m^\prime} }  \delta^{\s.\co}_{\ell, t^\prime} d^{\s.\co}_{\ell, t^\prime}(u)  f(u)  f(t^\prime) dt^\prime d u   \bigg) \\
          & +  \sum_{\ell = 1}^K \sum_{m^\prime =1}^M \bigg(\int_{t\in \mathcal{I}_{m} \cap \mathcal{I}^\s_{\ell m^\prime} }  \delta^{\s.\inst}_{\ell,t}  f(t) dt \bigg)^2 \\
          =&  \bigg(\int_{t\in \mathcal{I}_{m}} \bigg[\sum_{\ell = 1}^K 
 \delta^{\s.\gate}_{\ell,t} \bigg]  f(t) dt  \bigg)^2 \\
 & + \sum_{\ell = 1}^K \sum_{m^\prime =1}^M \bigg(\int_{t\in \mathcal{I}_{m} \cap \mathcal{I}^\s_{\ell m^\prime} }  \delta^{\s.\inst}_{\ell,t}  f(t) dt + \int_{t^\prime\in \mathcal{I}_{m}, u \in \mathcal{I}^\s_{\ell m^\prime} }  \delta^{\s.\co}_{\ell, t^\prime} d^{\s.\co}_{\ell,t^\prime}(u)  f(u)  f(t^\prime) dt^\prime d u \bigg)^2 \, ,
    \end{align*}
    which is the same as the expression of $S^{(m,m)}_{\mathrm{var}}$ provided in the statement of Lemma \ref{lemma:S-var} and concludes the proof of Lemma \ref{lemma:S-var}. \halmos
    \end{proof}

    \subsubsection{Simplification of $\+E[(\mathcal{E}_\mathrm{inst}+\mathcal{E}_\mathrm{carryover})\cdot \mathcal{E}_\mathrm{simul}] $ under Condition \ref{cond:additive}} Recall that 
    \[\+E[(\mathcal{E}_\mathrm{inst}+\mathcal{E}_\mathrm{carryover}) \cdot \mathcal{E}_\mathrm{simul}]  = \sum_{m = 1}^M \sum_{m^\prime = 1}^M S^{(m,m^\prime)}_{\mathrm{cov}} \, ,\]
    where 
    \[S^{(m,m^\prime)}_{\mathrm{cov}} = (\delta^\gate \mu^{(m^\prime)} + 2 \mu^{(m^\prime)}_{Y^\mathrm{ctrl}}) \cdot S^{(m,m^\prime)}_{1}  + \left( \Xi^{\inst,(m)}_\dem - \delta^\co \mu^{(m)} \right) S^{(m,m^\prime)}_{2} + S^{(m,m^\prime)}_{3} \, .\]
    
    In Lemmas \ref{lemma:S-cov-1}, \ref{lemma:S-cov-2}, and \ref{lemma:S-cov-3} below, we show the expression of $S^{(m,m^\prime)}_{1}$, $S^{(m,m^\prime)}_{2}$, and $S^{(m,m^\prime)}_{3}$ under Condition \ref{cond:additive}. 

    \begin{lemma}\label{lemma:S-cov-1}
        Under the assumptions in Theorem \ref{theorem:bias-variance-switchback} and Condition \ref{cond:additive}, 
        \begin{align*}
        S^{(m,m)}_{1} =& \sum_{\ell = 1}^K \left(\int_{t \in \mathcal{I}_{m}} \delta^{\s.\gate}_{\ell,t} f(t) dt \right) \, 
    \end{align*}
    and $S^{(m,m^\prime)}_{1} = 0$ for $m \neq m^\prime$. 
    \end{lemma}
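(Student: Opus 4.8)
The plan is to exploit Condition~\ref{cond:additive}, which forces the simultaneous-intervention effect $\delta^\simul_t(\bm{W})$ to be \emph{functionally independent} of the primary treatment assignment $\bm{W}$. First I would record this independence. Rearranging Condition~\ref{cond:additive} and setting $\bm{w}^{\s\prime}_t = \bm{0}_t$ gives
\[ Y_t(\bm{w}^\prime_t, \bm{w}^\s_t) - Y_t(\bm{w}^\prime_t, \bm{0}_t) = Y_t(\bm{w}_t, \bm{w}^\s_t) - Y_t(\bm{w}_t, \bm{0}_t) \]
for all primary assignments $\bm{w}_t, \bm{w}^\prime_t$ (with the obvious analog for $K$ interventions, as in Lemma~\ref{lemma:S-var}). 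Taking the conditional expectation over $\bm{W}^\s_t$ in the definition of $\delta^\simul_t(\bm{W})$, the identity above shows the integrand is unchanged when $\bm{W}_t$ is replaced by $\bm{0}_t$; hence $\delta^\simul_t(\bm{W})$ equals a quantity $\delta^\simul_t$ that does not depend on $\bm{W}$.

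Given this, the off-diagonal case $m \neq m^\prime$ is immediate. I would substitute the $\bm{W}$-independence into the definition~\eqref{eqn:phi-simul} of $\Phi_t^{\simul,(-m^\prime)}$: all four conditional expectations collapse to the common value $\delta^\simul_t$, so the alternating combination equals $\frac{1}{4}(\delta^\simul_t - \delta^\simul_t - \delta^\simul_t + \delta^\simul_t) = 0$. Plugging $\Phi_t^{\simul,(-m^\prime)} = 0$ into~\eqref{eqn:S-m-mprime-1} yields $S^{(m,m^\prime)}_1 = 0$.

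For the diagonal case $m = m^\prime$, I would evaluate $\+E_W[\delta^\simul_t(\bm{W})] = \delta^\simul_t$ explicitly. Because the value is $\bm{W}$-free, I may compute it at $\bm{W} = \bm{0}$; using the additive decomposition of the simultaneous effect across the $K$ interventions together with Assumption~\ref{ass:interference-structure} for each one (exactly the structural input already used in the proof of Lemma~\ref{lemma:S-var}),
\[ \delta^\simul_t = \+E_{\bm{W}^\s_1,\cdots,\bm{W}^\s_K}\left[\sum_{\ell=1}^K\left(W^\s_{\ell,t}\,\delta^{\s.\inst}_{\ell,t} + \delta^{\s.\co}_{\ell,t}\sum_{k=1}^M W_\ell^{\s(k)}\int_{u\in\mathcal{I}^\s_{\ell k}} d^{\s.\co}_{\ell,t}(u)\,f(u)\,du\right)\right]. \]
Taking the expectation with $\P(W^\s_{\ell,t}=1)=\P(W_\ell^{\s(k)}=1)=1/2$ and using the kernel normalization $\sum_{k=1}^M\int_{u\in\mathcal{I}^\s_{\ell k}} d^{\s.\co}_{\ell,t}(u)\,f(u)\,du = \int d^{\s.\co}_{\ell,t}(u)\,f(u)\,du = 1$ from Assumption~\ref{ass:interference-structure}, both the instantaneous and the carryover pieces contribute a factor $1/2$, so $\delta^\simul_t = \frac{1}{2}\sum_{\ell=1}^K(\delta^{\s.\inst}_{\ell,t}+\delta^{\s.\co}_{\ell,t}) = \frac{1}{2}\sum_{\ell=1}^K\delta^{\s.\gate}_{\ell,t}$. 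Substituting into~\eqref{eqn:S-m-mprime-1}, the leading factor $2$ cancels the $1/2$ and delivers $S^{(m,m)}_1 = \sum_{\ell=1}^K\int_{t\in\mathcal{I}_m}\delta^{\s.\gate}_{\ell,t}\,f(t)\,dt$.

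The only delicate point is keeping track of which additivity is used where: the $\bm{W}$-independence of $\delta^\simul_t$ requires only Condition~\ref{cond:additive}, whereas the explicit evaluation of $\delta^\simul_t$ additionally relies on the simultaneous interventions being mutually additive and each obeying Assumption~\ref{ass:interference-structure}. Both are the same inputs already invoked in Lemma~\ref{lemma:S-var}, so no new machinery is needed; the remainder is routine bookkeeping with the $1/2$ assignment probabilities and the carryover-kernel normalization.
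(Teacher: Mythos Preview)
Your proposal is correct and follows essentially the same approach as the paper: both arguments use Condition~\ref{cond:additive} to show $\delta^\simul_t(\bm{W})$ is independent of $\bm{W}$, deduce $\Phi_t^{\simul,(-m^\prime)}=0$ for the off-diagonal case, and then compute $\delta^\simul_t=\tfrac{1}{2}\sum_{\ell}\delta^{\s.\gate}_{\ell,t}$ via the $1/2$ assignment probabilities and kernel normalization to obtain the diagonal result. Your explicit separation of which additivity hypothesis is used at each step is a nice expository touch but does not constitute a different route.
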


    \begin{proof}{Proof of Lemma \ref{lemma:S-cov-1}}
        When the effects of primary and simultaneous interventions are additive, the term $\delta^\simul_{t}(\bm{W})$ in Equation \eqref{eqn:S-m-mprime-1} is equal to
\begin{align*}
    & \delta^\simul_{t}(\bm{W}) = \+E_{\bm{W}^\s_1, \cdots, \bm{W}^\s_K }\left[Y_{t}(\bm{W}, \bm{W}^\s_1, \cdots, \bm{W}^\s_K) - Y_{t}(\bm{W}, \bm{0}, \cdots, \bm{0}) \mid \bm{W}, t \right]  \\ =& 
 \+E_{\bm{W}^\s_1, \cdots, \bm{W}^\s_K } \left[ \sum_{\ell = 1}^K \left(W^\s_{\ell, t}  \delta^{\s.\inst}_{\ell,t}  + \delta^{\s.\co}_{\ell,t}  \cdot \sum_{k = 1}^M W_{\ell}^{\s(k)} \int_{u \in \mathcal{I}^\s_{\ell k}} d^{\s.\co}_{\ell,t}(u)  f(u) d u \right)   \mid \bm{W}, t, t^\prime \right] \\
 =& \frac{1}{2} \sum_{\ell = 1}^K \left[  \delta^{\s.\inst}_{\ell,t}  + \delta^{\s.\co}_{\ell,t}  \cdot \sum_{k = 1}^M  \int_{u \in \mathcal{I}^\s_{\ell k}} d^{\s.\co}_{\ell,t}(u)  f(u) d u \right] = \frac{1}{2} \sum_{\ell = 1}^K \delta^{\s.\gate}_{\ell,t} \, .
\end{align*}
We can see that $\delta^\simul_{t}(\bm{W}) $ does not depend on the value of $\bm{W}$. Then for $m \neq m^\prime$, the term $\Phi_{t}^{\simul,(-m^\prime)}$ in Equation \eqref{eqn:S-m-mprime-1} is $0$, because $\delta^\simul_{ t}(\bm{W}^{(-m,-m^\prime)},  \bm{W}^{(m,m^\prime)} = (w^m,w^{m^\prime} ) ) $ is the same for $(w^m, w^{m^\prime}) = \{(1,1), (1,0), (0,1), (0,0)\}$. Therefore, for $m \neq m^\prime$, 
    \[ S^{(m,m^\prime)}_{1} = 0 \qquad m \neq m^\prime \, . \]
    For $m = m^\prime$, we have
    \begin{align*}
        S^{(m,m)}_{1} =&2  \int_{t \in \mathcal{I}_{m}} \left(\frac{1}{2} \sum_{\ell = 1}^K \delta^{\s.\gate}_{\ell,t} \right)  f(t) dt = \sum_{\ell = 1}^K \left(\int_{t \in \mathcal{I}_{m}} \delta^{\s.\gate}_{\ell,t} f(t) dt \right) \, ,
    \end{align*} which concludes the proof of Lemma \ref{lemma:S-cov-1}. \halmos
    \end{proof}

    \begin{lemma}\label{lemma:S-cov-2}
        Under the assumptions in Theorem \ref{theorem:bias-variance-switchback} and Condition \ref{cond:additive}, 
        \begin{align*}
        S^{(m,m)}_{2} =& \sum_{\ell = 1}^K  \left(\int_{t^\prime \in \mathcal{I}_{m}} \delta^{\s.\gate}_{\ell,t^\prime} f(t^\prime) dt^\prime \right) \, ,
    \end{align*}
    and $S^{(m,m^\prime)}_{2} = 0 $ for $m \neq m^\prime$. 
    \end{lemma}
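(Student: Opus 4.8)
The plan is to mirror the argument just used for Lemma \ref{lemma:S-cov-1}, since the only structural difference between $S^{(m,m^\prime)}_2$ and $S^{(m,m^\prime)}_1$ is which of the primary intervals' treatment assignments are being conditioned on. The essential input is already in hand: in the proof of Lemma \ref{lemma:S-cov-1} it was shown that under Condition \ref{cond:additive}, together with $\P(W^{\s(m)}_\ell = 1) = 1/2$, the conditional expected simultaneous effect collapses to a quantity independent of the primary design, namely $\delta^\simul_{t^\prime}(\bm{W}) = \frac{1}{2}\sum_{\ell=1}^K \delta^{\s.\gate}_{\ell,t^\prime}$ for every $\bm{W}$. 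I would reuse this identity directly rather than recompute it, reading $\delta^{\dagger}_{t^\prime}$ in \eqref{eqn:phi-simul-2} as the conditional simultaneous effect $\delta^\simul_{t^\prime}$.

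First I would dispatch the off-diagonal case $m \neq m^\prime$. Substituting the $\bm{W}$-independence identity into the definition \eqref{eqn:phi-simul-2} of $\Phi_{t^\prime}^{\simul,(-m^\prime)\dagger}$, the two conditional expectations indexed by $\bm{W}^{(m,m^\prime)} = (1,1)$ and $\bm{W}^{(m,m^\prime)} = (1,0)$ both equal the same constant $\frac{1}{2}\sum_{\ell} \delta^{\s.\gate}_{\ell,t^\prime}$, so their difference vanishes and $\Phi_{t^\prime}^{\simul,(-m^\prime)\dagger} = 0$. Consequently the integrand of $S^{(m,m^\prime)}_2$ in \eqref{eqn:S-m-mprime-2} is identically zero on the $m \neq m^\prime$ branch, giving $S^{(m,m^\prime)}_2 = 0$.

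Next I would treat the diagonal case $m = m^\prime$, where the integrand is $\+E_{\bm{W}^{(-m)}}\big[\delta^\simul_{t^\prime}(\bm{W}^{(-m)}, W^{(m)} = 1)\big]$. Since $\delta^\simul_{t^\prime}(\bm{W})$ does not depend on $\bm{W}$, conditioning on $W^{(m)} = 1$ and marginalizing over $\bm{W}^{(-m)}$ leaves it unchanged at $\frac{1}{2}\sum_{\ell} \delta^{\s.\gate}_{\ell,t^\prime}$. Plugging this into \eqref{eqn:S-m-mprime-2}, the leading factor of $2$ cancels the $\frac{1}{2}$, and exchanging the finite sum with the integral yields $S^{(m,m)}_2 = \sum_{\ell=1}^K \int_{t^\prime \in \mathcal{I}_{m}} \delta^{\s.\gate}_{\ell,t^\prime} f(t^\prime)\, dt^\prime$, as claimed.

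I do not anticipate a genuine obstacle here: the whole content is the $\bm{W}$-independence of $\delta^\simul_{t^\prime}(\bm{W})$ under additivity, which is already available from the previous lemma. The only points demanding care are bookkeeping — confirming the reading of $\delta^{\dagger}_{t^\prime}$ in \eqref{eqn:phi-simul-2} and carefully tracking the normalizing constants (the $1/2$ from the treatment probability, and the prefactors $2$ in \eqref{eqn:S-m-mprime-2} and $1/2$ in \eqref{eqn:phi-simul-2}) so that the surviving constants collapse to exactly the stated expression.
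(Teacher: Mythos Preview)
Your proposal is correct and follows essentially the same approach as the paper: both invoke the $\bm{W}$-independence of $\delta^\simul_{t^\prime}(\bm{W})$ established in Lemma~\ref{lemma:S-cov-1} to collapse $\Phi_{t^\prime}^{\simul,(-m^\prime)\dagger}$ to zero for $m\neq m^\prime$ and to evaluate the diagonal integrand as $\tfrac{1}{2}\sum_\ell \delta^{\s.\gate}_{\ell,t^\prime}$, then let the prefactor $2$ cancel the $\tfrac{1}{2}$. Your reading of $\delta^{\dagger}_{t^\prime}$ in \eqref{eqn:phi-simul-2} as $\delta^\simul_{t^\prime}$ matches the paper's intent.
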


    \begin{proof}{Proof of Lemma \ref{lemma:S-cov-2}}
        If the effects of main and simultaneous interventions are additive, the term $\+E_{\bm{W}^{(-m)} }\big[ \delta^\simul_{ t^\prime}(\bm{W}^{(-m)},  W^{(m)} = 1 ) \big]$ in Equation \eqref{eqn:S-m-mprime-2} is equal to 
        \begin{align*}
        \+E_{\bm{W}^{(-m)} }\big[ \delta^\simul_{ t^\prime}(\bm{W}^{(-m)},  W^{(m)} = 1 ) \big] = \frac{1}{2} \sum_{\ell = 1}^K \delta^{\s.\gate}_{\ell,t^\prime}
    \end{align*}
    following the same argument as the proof of Lemma \ref{lemma:S-cov-1}. 
    Then the term $\Phi_{t^\prime}^{\simul,(-m^\prime)\dagger}$ in Equation \eqref{eqn:S-m-mprime-2} is $0$ because $\delta^\simul_{ t}(\bm{W}^{(-m,-m^\prime)},  \bm{W}^{(m,m^\prime)} = (w^m,w^{m^\prime} ) ) $ is the same for $(w^m, w^{m^\prime}) = \{(1,1), (1,0)\}$. Therefore,  for $m \neq m^\prime$
    \begin{align*}
        S^{(m,m^\prime)}_{2} = 0 
    \end{align*}
    and 
    \begin{align*}
        S^{(m,m)}_{2} =& 2  \int_{t^\prime \in \mathcal{I}_{m}}\left(\frac{1}{2} \sum_{\ell = 1}^K \delta^{\s.\gate}_{\ell,t^\prime} \right) f(t^\prime) dt^\prime  =\sum_{\ell = 1}^K  \left(\int_{t^\prime \in \mathcal{I}_{m}} \delta^{\s.\gate}_{\ell,t^\prime} f(t^\prime) dt^\prime \right)  \, ,
    \end{align*}
    which is equal to $S^{(m,m)}_{1}$ and concludes the proof of Lemma \ref{lemma:S-cov-2}. \halmos
    \end{proof}

    \begin{lemma}\label{lemma:S-cov-3}
        Under the assumptions in Theorem \ref{theorem:bias-variance-switchback} and Condition \ref{cond:additive}, 
        \begin{align*}
     S^{(m,m)}_{3} 
     =& \left(\int_{t \in \mathcal{I}_{m}} \delta^\co_{t} f(t) dt \right) \left(\sum_{\ell = 1}^K  
 \int_{t^\prime \in \mathcal{I}_{m}} \delta^{\s.\gate}_{\ell,t^\prime} f(t^\prime) dt^\prime \right) \, 
 \end{align*}
 and $S^{(m,m^\prime)}_{3} = 0$ for $m \neq m^\prime$. 
    \end{lemma}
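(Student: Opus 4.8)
The plan is to follow the same template as the proofs of Lemmas \ref{lemma:S-cov-1} and \ref{lemma:S-cov-2}, exploiting two structural facts that hold under the hypotheses. First, under Condition \ref{cond:additive} the simultaneous effect $\delta^\simul_{t^\prime}(\bm{W})$ does not depend on $\bm{W}$; indeed, the proof of Lemma \ref{lemma:S-cov-1} already establishes the identity $\delta^\simul_{t^\prime}(\bm{W}) = \tfrac{1}{2}\sum_{\ell=1}^K \delta^{\s.\gate}_{\ell,t^\prime}$ for every realization of $\bm{W}$. Second, under Assumption \ref{ass:interference-structure} the primary carryover effect is the \emph{additive} (linear) function of the interval-level assignments $\delta^\co_t(\bm{W}) = \delta^\co_t \sum_{k=1}^M W^{(k)} \int_{\mathcal{I}_k} d^\co_t(u) f(u)\, du$. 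I would begin by recalling the definition of $S^{(m,m^\prime)}_3$ in \eqref{eqn:S-m-mprime-3} and of $\Phi^{\co,\simul}_{t,t^\prime}$ in \eqref{eqn:phi-co-simul}, and importing these two facts.

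For the off-diagonal case $m \neq m^\prime$, I would show $\Phi^{\co,\simul}_{t,t^\prime}=0$. Since $\delta^\simul_{t^\prime}(\bm{W})$ is constant in $\bm{W}$, it can be pulled out of each of the four conditional expectations in \eqref{eqn:phi-co-simul}, leaving the alternating combination of $\+E_{\bm{W}^{(-m,-m^\prime)}}[\delta^\co_t(\bm{W})]$ over the four settings of $(W^{(m)},W^{(m^\prime)})$. Using the linear form above, the dependence of $\delta^\co_t(\bm{W})$ on $(W^{(m)},W^{(m^\prime)})$ is separable and additive, so its mixed second difference, following the sign pattern $\big[(1,1)-(1,0)-(0,1)+(0,0)\big]$, vanishes identically. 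Hence $\Phi^{\co,\simul}_{t,t^\prime}=0$ for $m \neq m^\prime$, and therefore $S^{(m,m^\prime)}_3 = 0$.

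For the diagonal case $m = m^\prime$, I would again factor the constant $\delta^\simul_{t^\prime}(\bm{W})$ out of the expectation in \eqref{eqn:S-m-mprime-3}, reducing the integrand to $\big(\tfrac{1}{2}\sum_\ell \delta^{\s.\gate}_{\ell,t^\prime}\big)\,\+E_W[\delta^\co_t(\bm{W})]$. Applying the same linear form together with $\P(W^{(k)}=1) = 1/2$ and the kernel normalization $\int d^\co_t(u) f(u)\, du = 1$ from Assumption \ref{ass:interference-structure}, one gets $\+E_W[\delta^\co_t(\bm{W})] = \tfrac{1}{2}\delta^\co_t$. Substituting back, the prefactor $4$ cancels the two factors of $\tfrac{1}{2}$, and the double integral over $\mathcal{I}_m \times \mathcal{I}_m$ separates into the product $\big(\int_{\mathcal{I}_m}\delta^\co_t f(t)\,dt\big)\big(\sum_\ell \int_{\mathcal{I}_m}\delta^{\s.\gate}_{\ell,t^\prime} f(t^\prime)\,dt^\prime\big)$, which is exactly the claimed expression.

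The only genuinely delicate step is the vanishing of the off-diagonal terms. Unlike Lemmas \ref{lemma:S-cov-1} and \ref{lemma:S-cov-2}, where the quantity inside the expectation was literally constant in $\bm{W}$, here $\delta^\co_t(\bm{W})$ really does vary with $\bm{W}$; the cancellation comes not from constancy but from the finer fact that $\delta^\co_t(\bm{W})$ carries \emph{no interaction} between distinct intervals under Assumption \ref{ass:interference-structure}. I would therefore state explicitly that it is additivity of the carryover kernel \emph{across intervals} — and not the additivity of primary and simultaneous effects in Condition \ref{cond:additive} — that kills the mixed second difference, since two logically separate notions of ``additivity'' are in play and it is easy to conflate them.
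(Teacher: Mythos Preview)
Your proof is correct and follows the same approach as the paper: factor out the $\bm{W}$-independent $\delta^\simul_{t^\prime}(\bm{W})=\tfrac12\sum_\ell\delta^{\s.\gate}_{\ell,t^\prime}$, compute $\+E_W[\delta^\co_t(\bm{W})]=\tfrac12\delta^\co_t$ for the diagonal term, and show the off-diagonal $\Phi^{\co,\simul}_{t,t^\prime}$ vanishes. Your treatment of the off-diagonal case is in fact more precise than the paper's --- the paper asserts the product $\delta^\co_t(\cdot)\,\delta^\simul_{t^\prime}(\cdot)$ is ``the same'' across the four settings of $(W^{(m)},W^{(m^\prime)})$, which is not literally true since $\delta^\co_t(\bm{W})$ does depend on those coordinates; your observation that the vanishing comes from the mixed second difference of an \emph{additive} function of $(W^{(m)},W^{(m^\prime)})$ is the correct mechanism.
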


    \begin{proof}{Proof of Lemma \ref{lemma:S-cov-3}}
        If the effects of main and simultaneous interventions are additive, from Lemma \ref{lemma:S-cov-1}, we have 
        \[\delta^\simul_{ t^\prime}(\bm{W}) = \frac{1}{2} \sum_{\ell = 1}^K \delta^{\s.\gate}_{\ell,t^\prime} \, , \]
        which does not depend on $\bm{W}$. 
    We then have 
    \begin{align*}
        \+E_W\left[\delta^\co_{t}(\bm{W})  \delta^\simul_{ t^\prime}(\bm{W}) \right] = \+E_W\left[\delta^\co_{t}(\bm{W})   \right] \cdot \frac{1}{2}  \sum_{\ell = 1}^K  
 \delta^{\s.\gate}_{\ell,t^\prime} = \frac{1}{4} \delta^\co_{t}  \sum_{\ell = 1}^K  
 \delta^{\s.\gate}_{\ell,t^\prime}
    \end{align*}
    following that 
    \begin{align*}
        \+E_W\left[\delta^\co_{t}(\bm{W}) \right] =& \+E_W\left[\delta^\co_{t}  \cdot \sum_{k = 1}^M W^{(k)} \int_{u \in \mathcal{I}_{k}} d^\co_{t}(u)  f(u) d u \right] = \frac{1}{2} \delta^\co_{t}  \, .
    \end{align*}

    Therefore, if time $t \in \mathcal{I}_{m}$ and time $t^\prime \in \mathcal{I}_{m^\prime}$ with $m \neq m^\prime$, we have
    $\Phi_{t, t^\prime}^{\co,\simul} = 0$ because $\delta^\co_{t}(\bm{W}^{(-m,-m^\prime)},  \bm{W}^{(m,m^\prime)} = (w^m,w^{m^\prime}) ) \cdot \delta^\simul_{t^\prime}(\bm{W}^{(-m,-m^\prime)},  \bm{W}^{(m,m^\prime)} = (w^m,w^{m^\prime}) )$ is the same for $(w^m, w^{m^\prime}) = \{(1,1), (1,0), (0,1), (0,0)\}$. Therefore, for $m \neq m^\prime$, 
    \begin{align*}
        S^{(m,m^\prime)}_{3} = 0  \, .
    \end{align*}
    For $m = m^\prime$, we have 
    \begin{align*}
     S^{(m,m)}_{3} =&  \int_{t, t^\prime \in \mathcal{I}_{m}} \delta^\co_{t} \left(\sum_{\ell = 1}^K \delta^{\s.\gate}_{\ell,t^\prime} \right)   f(t) f(t^\prime) dt dt^\prime \\
     =& \left(\int_{t \in \mathcal{I}_{m}} \delta^\co_{t} f(t) dt \right) \left(\sum_{\ell = 1}^K  
 \int_{t^\prime \in \mathcal{I}_{m}} \delta^{\s.\gate}_{\ell,t^\prime} f(t^\prime) dt^\prime \right) \, ,
 \end{align*} 
 which concludes the proof of Lemma \ref{lemma:S-cov-3}. \halmos
    \end{proof}

     From Lemmas \ref{lemma:S-cov-1}, \ref{lemma:S-cov-2}, and \ref{lemma:S-cov-3}, we have for $m \neq m^\prime$, 
     \begin{align*}
        S^{(m,m^\prime)}_{\mathrm{cov}} = (\delta^\gate \mu^{(m^\prime)} + 2 \mu^{(m^\prime)}_{Y^\mathrm{ctrl}}) \cdot S^{(m,m^\prime)}_{1}  + \left( \Xi^{\inst,(m)}_\dem - \delta^\co\mu^{(m)} \right) S^{(m,m^\prime)}_{2} + S^{(m,m^\prime)}_{3} = 0 \, .
    \end{align*}
    For $m = m^\prime$, we have 
    \begin{align*}
        S^{(m,m)}_{\mathrm{cov}} =&  (\delta^\gate \mu^{(m)} + 2 \mu^{(m)}_{Y^\mathrm{ctrl}}) \cdot S^{(m,m)}_{1}  + \left( \Xi^{\inst,(m)}_\dem - \delta^\co\mu^{(m)} \right) S^{(m,m)}_{2} + S^{(m,m)}_{3} \\
        =& \left(\delta^\gate \mu^{(m)} + 2 \mu^{(m)}_{Y^\mathrm{ctrl}}\right)\sum_{\ell = 1}^K \left(\int_{t \in \mathcal{I}_{m}} \delta^{\s.\gate}_{\ell,t} f(t) dt \right)  \\
     & + \left( \Xi^{\inst,(m)}_\dem - \delta^\co\mu^{(m)} \right) \sum_{\ell = 1}^K  \left(\int_{t^\prime \in \mathcal{I}_{m}} \delta^{\s.\gate}_{\ell,t^\prime} f(t^\prime) dt^\prime \right) \\
     & + \left(\int_{t \in \mathcal{I}_{m}} \delta^\co_{t} f(t) dt \right) \left(\sum_{\ell = 1}^K  
 \int_{t^\prime \in \mathcal{I}_{m}} \delta^{\s.\gate}_{\ell,t^\prime} f(t^\prime) dt^\prime \right) \\
 =& \left(\Xi^{(m)} + 2 \mu^{(m)}_{Y^\mathrm{ctrl}}\right)\bigg( 
\int_{t\in \mathcal{I}_{m}} \bigg[\sum_{\ell = 1}^K 
 \delta^{\s.\gate}_{\ell,t} \bigg]  f(t) dt \bigg)
    \end{align*}
    following that
    \begin{align*}
        & \delta^\gate \mu^{(m)} + \Xi^{\inst,(m)}_\dem -  \delta^\co\mu^{(m)} + \Xi^{\co,(m)}_\dem \\
        =& \delta^\inst \mu^{(m)} + \Xi^{\inst,(m)}_\dem  + \Xi^{\co,(m)}_\dem = \Xi^{\inst,(m)} + \Xi^{\co,(m)}_\dem =  \Xi^{(m)}  \, .
    \end{align*}

    Then we have
     \begin{align*}
     \+E_{\ell}[(\mathcal{E}_\mathrm{inst}+\mathcal{E}_\mathrm{carryover})\cdot \mathcal{E}_\mathrm{simul}] =& \sum_{m = 1}^M \sum_{m^\prime = 1}^M S^{(m,m^\prime)}_{\mathrm{cov}}  \\
     =& \sum_{m=1}^M \left(\Xi^{(m)} + 2 \mu^{(m)}_{Y^\mathrm{ctrl}}\right)\bigg( 
\int_{t\in \mathcal{I}_{m}} \bigg[\sum_{\ell = 1}^K 
 \delta^{\s.\gate}_{\ell,t} \bigg]  f(t) dt \bigg) \, ,
 \end{align*} which concludes the proof of Proposition \ref{prop:additive-simul-effect}. \halmos
 \end{proof}

\subsection{Proof of Theorem \ref{theorem:switchback-bias}}\label{subsec:proof-first-theorem}

The estimation error of $\hat{\delta}^\gate$ can be decomposed as 
 \begin{equation}
     \begin{aligned}
     \hat{\delta}^\gate  - \delta^\gate 
     =& \underbrace{\delta^\gate \left(\frac{1}{n} \sum_{i = 1}^n \frac{W_{t_i}}{\pi} - 1 \right)}_{\substack{\text{instantaneous effects} \\ \text{denoted by $\mathcal{E}_{\mathrm{inst}}$}}} + \underbrace{\frac{1}{n} \sum_{i = 1}^n \alpha_{t_i} \left(Y_{t_i}(\bm{W}, \bm{0}, \cdots, \bm{0})  - W_{t_i} \delta^\gate\right)}_{\substack{\text{carryover effects} \\ \text{denoted by $\mathcal{E}_{\mathrm{carryover}}$}}}  \\
     & + \underbrace{\frac{1}{n} \sum_{i = 1}^n \alpha_{t_i} \left(Y_{t_i}(\bm{W}, \bm{W}^\s_1, \cdots, \bm{W}^\s_K) - Y_{t_i}(\bm{W}, \bm{0}, \cdots, \bm{0})  \right)}_{\text{effects from other interventions, denoted by $\mathcal{E}_{\mathrm{simul}}$}} \\
     & + \underbrace{\frac{1}{n} \sum_{i = 1}^n \alpha_{t_i} \varepsilon^{(i)}}_{\substack{ \text{measurement errors} \\ \text{denoted by $\mathcal{E}_{\mathrm{meas}}$}}} \, ,
 \end{aligned}\label{eqn:decomposition} 
 \end{equation}
 where 
 \[ \alpha_{t_i} = \frac{W_{t_i} - \pi}{\pi(1 - \pi)} \, . \]

 We show the expected value of $\mathcal{E}_{\mathrm{meas}}$ in Lemma \ref{lemma:mean-measurement-error}, the expected value of $\mathcal{E}_{\mathrm{inst}}$ and $\mathcal{E}_{\mathrm{carryover}}$ in Lemma \ref{lemma:carryover-bias}, and the expected value of $\mathcal{E}_{\mathrm{simul}}$ in Lemma \ref{lemma:simul-bias}. If the expected value of a term is nonzero, then this term results in an estimation bias of $\hat{\delta}^\gate$. 

 \begin{lemma}[Expected value of $\mathcal{E}_{\mathrm{meas}}$]\label{lemma:mean-measurement-error}
     Under the assumptions in Theorem \ref{theorem:switchback-bias}, the mean of measurement errors is 
     \begin{align*}
           \+E_{W,\varepsilon,t} \left[ \mathcal{E}_{\mathrm{meas}}\right]=&~ \+E_{W,\varepsilon,t} \left[ \frac{1}{n} \sum_{i = 1}^n \alpha_{t_i} \varepsilon^{(i)}\right] = 0 \,  .
     \end{align*}
 \end{lemma}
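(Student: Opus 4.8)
The plan is to prove the claim by the law of iterated expectations, peeling off the measurement error as the innermost layer. First I would condition on the treatment design $\bm{W}$ and the vector of event times $t = (t_1,\dots,t_n)$, and write
\[
\+E_{W,\varepsilon,t}\left[\frac{1}{n}\sum_{i=1}^n \alpha_{t_i}\varepsilon^{(i)}\right] = \+E_{W,t}\left[\frac{1}{n}\sum_{i=1}^n \alpha_{t_i}\,\+E_\varepsilon\!\left[\varepsilon^{(i)}\mid \bm{W}, t\right]\right].
\]
This step is legitimate because each weight $\alpha_{t_i} = (W_{t_i}-\pi)/(\pi(1-\pi))$ is a deterministic function of $\bm{W}$ and $t_i$ alone, hence measurable with respect to the $\sigma$-algebra generated by $(\bm{W}, t)$; it therefore factors out of the inner conditional expectation over $\varepsilon$.

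The second step is to argue that the inner conditional mean vanishes, i.e. $\+E_\varepsilon[\varepsilon^{(i)}\mid \bm{W}, t] = 0$ for every $i$. By Assumption \ref{ass:exogeneity}, the event-sampling mechanism (and hence $t$) is independent of the treatment assignments $\bm{W}$ and $\bm{W}^\s$. Moreover, by the modeling convention of Section \ref{sec:setup}, given $\bm{W}$ and $t_i$ the outcome $Y^{(i)}$ carries no residual randomness beyond the mean-zero measurement error $\varepsilon^{(i)}$; conditioning additionally on the full treatment design therefore adds no information about $\varepsilon^{(i)}$, so that $\+E_\varepsilon[\varepsilon^{(i)}\mid \bm{W}, t] = \+E_\varepsilon[\varepsilon^{(i)}\mid t_i] = 0$. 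Combining the two steps, every summand in the outer expectation is zero, and thus $\+E_{W,\varepsilon,t}[\mathcal{E}_{\mathrm{meas}}] = 0$, as claimed.

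I expect the only delicate point to be the clean justification in the second step that conditioning on the full treatment design $\bm{W}$ does not disturb the mean-zero property of $\varepsilon^{(i)}$; this rests on the exogeneity of event sampling in Assumption \ref{ass:exogeneity} together with the convention that measurement error is a zero-mean noise term unaffected by which treatment is in force. Once that is granted, the remaining manipulation is routine bookkeeping with the tower property, and no computation of the correlation structure $\cov(\varepsilon^{(i)},\varepsilon^{(j)})$ is needed here — that structure enters only in the later variance decomposition, not in this first-moment calculation.
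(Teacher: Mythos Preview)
Your proposal is correct and follows essentially the same approach as the paper: both arguments apply the tower property, pull the weight $\alpha_{t_i}$ (which is measurable with respect to the design and event times) outside the inner expectation, and then invoke the mean-zero property of $\varepsilon^{(i)}$ conditional on the treatment design(s) and time. The only cosmetic difference is that the paper explicitly includes the simultaneous designs $\bm{W}^\s_1,\dots,\bm{W}^\s_K$ in the conditioning set of the inner expectation, but since $\alpha_{t_i}$ does not depend on them and $\varepsilon^{(i)}$ is mean-zero regardless, your version is equally valid.
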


 \begin{proof}{Proof of Lemma \ref{lemma:mean-measurement-error}}
     The mean of the measurement errors is
    \begin{align*}
        \+E_{W,\varepsilon,t} \left[ \frac{1}{n} \sum_{i = 1}^n \alpha_{t_i} \varepsilon^{(i)}\right] = \frac{1}{n} \sum_{i = 1}^n 
 \+E_{W,t} \left[ \alpha_{t_i}  \+E_{\varepsilon}\left[\varepsilon^{(i)} \mid \bm{W}, \bm{W}^\s_1, \cdots, \bm{W}^\s_K, t_i\right]\right] = 0
    \end{align*}
    following that $\varepsilon^{(i)}$ has mean zero and is independent of $\bm{W}, \bm{W}^\s_1, \cdots$, and $\bm{W}^\s_K$. 
    \halmos
    
 \end{proof}

 \begin{lemma}[Expected value of $\mathcal{E}_{\mathrm{inst}}$ and $\mathcal{E}_{\mathrm{carryover}}$]\label{lemma:carryover-bias}
    Under the assumptions in Theorem \ref{theorem:switchback-bias}, the expected value of $\mathcal{E}_{\mathrm{inst}}$ and $\mathcal{E}_{\mathrm{carryover}}$ are
    \begin{align*}
    \+E_{W,\varepsilon,t} \left[ \mathcal{E}_{\mathrm{inst}}\right] =&~ \+E_{W,\varepsilon,t} \left[  \delta^\gate \left(\frac{1}{n} \sum_{i = 1}^n \frac{W_{t_i}}{\pi} - 1 \right)\right] = 0 \, \\
        \+E_{W,\varepsilon,t} \left[ \mathcal{E}_{\mathrm{carryover}}\right] =& \+E_{W,\varepsilon,t}\left[\frac{1}{n} \sum_{i = 1}^n \alpha_{t_i} \left(Y_{t_i}(\bm{W}, \bm{0}, \cdots, \bm{0})  - W_{t_i} \delta^\gate\right) \right] = \sum_{m = 1}^M I^{(m)} - \delta^\co  \, .
    \end{align*}
\end{lemma}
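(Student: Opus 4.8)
The plan is to reduce everything to two elementary moment identities for the inverse-probability weights $\alpha_{t_i} = (W_{t_i}-\pi)/(\pi(1-\pi))$ and then to recognize the surviving carryover contribution as the interval-level statistic $I^{(m)}$. First I would record that, conditioning on the occurrence time $t_i$ and using $\P(W^{(m)}=1)=1/2$ so that $\pi=1/2$, the weight satisfies $\+E_W[\alpha_{t_i}\mid t_i]=0$ and $\+E_W[\alpha_{t_i}W_{t_i}\mid t_i]=1$, the latter because $W_{t_i}$ is binary so that $W_{t_i}^2=W_{t_i}$ and $\alpha_{t_i}W_{t_i}=W_{t_i}/\pi$. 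More generally, writing $m=m(t_i)$ for the interval containing $t_i$, the independence of the interval assignments gives the key identity $\+E_W[\alpha^{(m)}W^{(k)}]=\bm{1}(k=m)$. These three facts do essentially all of the work.

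For $\mathcal{E}_{\mathrm{inst}}$ the argument is immediate: by Assumption \ref{ass:exogeneity} and $\+E_W[W_{t_i}\mid t_i]=\pi$ one has $\+E_{W,t}[W_{t_i}/\pi\mid t_i]=1$ for each $i$, so the average over the $n$ events equals $1$ and $\+E[\mathcal{E}_{\mathrm{inst}}]=\delta^\gate(1-1)=0$.

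For $\mathcal{E}_{\mathrm{carryover}}$ the crucial step is to substitute the additive treatment-effect structure. Using the decomposition preceding Assumption \ref{ass:interference-structure} together with the carryover-kernel parametrization of Assumption \ref{ass:interference-structure}, I would write
\[Y_{t_i}(\bm{W},\bm{0},\cdots,\bm{0})=Y_{t_i}(\bm{0},\bm{0})+W_{t_i}\delta^\inst_{t_i}+\delta^\co_{t_i}\sum_{k=1}^M W^{(k)}\int_{t'\in\mathcal{I}_{k}}d^\co_{t_i}(t')f(t')\,dt'.\]
I would then take $\+E_W[\,\cdot\mid t_i]$ term by term. The control term $\alpha_{t_i}Y_{t_i}(\bm{0},\bm{0})$ vanishes because $Y_{t_i}(\bm{0},\bm{0})$ is free of $\bm{W}$ and $\+E_W[\alpha_{t_i}\mid t_i]=0$; the instantaneous term contributes $\delta^\inst_{t_i}$ via $\+E_W[\alpha_{t_i}W_{t_i}\mid t_i]=1$; the subtracted term $-\alpha_{t_i}W_{t_i}\delta^\gate$ contributes $-\delta^\gate$. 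The only delicate term is the carryover sum: since $\alpha_{t_i}$ depends on $\bm{W}$ only through $W^{(m(t_i))}$, the identity $\+E_W[\alpha^{(m)}W^{(k)}]=\bm{1}(k=m)$ collapses the sum over $k$ to its single $k=m(t_i)$ summand, leaving $\delta^\co_{t_i}\int_{t'\in\mathcal{I}_{m(t_i)}}d^\co_{t_i}(t')f(t')\,dt'$.

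Finally I would average over the event times. Because the $t_i$ are i.i.d.\ draws from $f$, the $i$-average reduces to a single expectation in $t$, and $\int\delta^\inst_t f(t)\,dt=\delta^\inst$ while the carryover integral identifies with $\sum_{m=1}^M I^{(m)}$ by the definition $I^{(m)}=I^{(m,m)}$ (the outer $t$-integral restricted to $\mathcal{I}_m$ and the inner $t'$-integral also over $\mathcal{I}_m$). Collecting the pieces yields $\delta^\inst+\sum_{m=1}^M I^{(m)}-\delta^\gate$, and substituting $\delta^\gate=\delta^\inst+\delta^\co$ gives the claimed $\sum_{m=1}^M I^{(m)}-\delta^\co$. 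I expect the main obstacle to be the bookkeeping in the carryover term---correctly tracking which interval $t_i$ lands in and invoking cross-interval independence to discard all $k\neq m(t_i)$ contributions; once that collapse is justified, the remaining steps are routine integral manipulations.
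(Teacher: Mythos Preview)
Your proposal is correct and follows essentially the same route as the paper: both arguments decompose $Y_{t_i}(\bm{W},\bm{0},\dots,\bm{0})$ into the global-control outcome, the instantaneous term, and the carryover term, then exploit the three moment facts $\+E_W[\alpha_{t_i}\mid t_i]=0$, $\+E_W[\alpha_{t_i}W_{t_i}\mid t_i]=1$, and $\+E_W[\alpha^{(m)}W^{(k)}]=\bm{1}(k=m)$ to kill the control term and collapse the carryover sum to the within-interval piece $I^{(m)}$. The only cosmetic difference is that the paper groups the subtraction as $(\delta^\inst_{t_i}-\delta^\inst)W_{t_i}+(\delta^\co_{t_i}(\bm{W})-\delta^\co W_{t_i})$ upfront so that the instantaneous piece integrates to zero directly, whereas you keep $\delta^\gate$ intact and invoke $\delta^\gate=\delta^\inst+\delta^\co$ at the end; the substance is identical.
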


\begin{proof}{Proof of Lemma \ref{lemma:carryover-bias}}
   The expected value of $\mathcal{E}_{\mathrm{inst}}$ is equal to 
    \begin{align*}
        \+E_{W,\varepsilon,t} \left[ \mathcal{E}_{\mathrm{inst}}\right] 
 = \+E_{W,\varepsilon,t} \left[ \frac{1}{n} \sum_{i = 1}^n \frac{W_{t_i}}{\pi} - 1\right] = \frac{1}{n} \sum_{i = 1}^n \frac{\+E_{W,\varepsilon,t}[W_{t_i}] }{\pi} - 1 = 0 \, .
    \end{align*}
As events are sampled i.i.d. from distribution $f(t)$, the expected value of $\mathcal{E}_{\mathrm{carryover}}$ is equal to 
    \begin{align*}
         &\+E_{W,\varepsilon,t} \left[ \mathcal{E}_{\mathrm{carryover}}\right] =\+E_{W,\varepsilon,t}\left[\frac{1}{n} \sum_{i = 1}^n \alpha_{t_i} \left(Y_{t_i}(\bm{W}, \bm{0}, \cdots, \bm{0})  - W_{t_i} \delta^\gate\right) \right] \\
         =&\+E_{W,\varepsilon,t}\left[\frac{1}{n} \sum_{i = 1}^n \alpha_{t_i} \left(Y_{t_i}(\bm{W}, \bm{0}, \cdots, \bm{0})  - Y_{t_i}(\bm{0}, \bm{0}, \cdots, \bm{0}) - W_{t_i} \delta^\gate\right) \right] \\
         & + \+E_{W,\varepsilon,t}\left[\frac{1}{n} \sum_{i = 1}^n \alpha_{t_i} Y_{t_i}(\bm{0}, \bm{0}, \cdots, \bm{0})  \right] \\ 
        =& \+E_{W,t} \left[\alpha_{t_i} (\delta^\inst_{t_i} - \delta^\inst) W_{t_i} + \alpha_{t_i}( \delta^\co_{t_i}(\bm{W}) - \delta^\co W_{t_i}) \right] + \+E_{W,t}\left[\alpha_{t_i} Y_{t_i}(\bm{0}, \bm{0}, \cdots, \bm{0})  \right]  \tag{the expected over $\varepsilon$ is dropped as marketplace outcomes do not depend on $\varepsilon^{(i)}$}  \\
        =&  \+E_t\bigg[(\delta^\inst_{t_i} - \delta^\inst) \cdot \underbrace{\+E_W\left[\frac{W_{t_i}}{\pi} \mid t_i \right]}_{=1} \bigg] +  \+E_{W,t} \left[ \alpha_{t_i}( \delta^\co_{t_i}(\bm{W}) - \delta^\co W_{t_i}) \right] \tag{the first term is zero because $\+E_t[\delta^\inst_{t_i}] = \delta^\inst$}  \\
        & + \+E_{W,t}\big[ \underbrace{\+E_W[\alpha_{t_i} \mid t_i] }_{=0}\cdot Y_{t_i}(\bm{0}, \bm{0}, \cdots, \bm{0})  \big]  \\
        =& \+E_{W,t} \left[ \alpha_{t}( \delta^\co_{t}(\bm{W}) - \delta^\co W_{t}) \right] \, . 
    \end{align*}
    By Assumption \ref{ass:interference-structure}, the last line can be further simplified to 
    \begin{align*}
         &\+E_{W,t} \left[ \alpha_{t}( \delta^\co_{t}(\bm{W}) - \delta^\co W_{t}) \right]  = \+E_{W,t}\Bigg[\frac{W_{t} - \pi}{\pi (1 - \pi)} \sum_{k = 1}^M \delta^\co_{t} W^{(k)} \int_{u \in \mathcal{I}_{k}} d^\co_{t}(u) f(u) d u  \Bigg] - \delta^\co \\
         =& \+E_{W,\varepsilon,t}\Bigg[\frac{W_{t} - \pi}{\pi (1 - \pi)} \delta^\co_{t} \sum_{k = 1}^M  W^{(k)} \mathbbm{1}(t \in \mathcal{I}_{k}) \int_{u \in \mathcal{I}_{k}} d^\co_{t}(u) f(u) d u  \Bigg] - \delta^\co \tag{the treatment assignments of any two intervals are independent and $\+E_{W,\varepsilon,t}\left[\frac{W_{t} - \pi}{\pi (1 - \pi)}\right] = 0$}\\
         =& \sum_{m = 1}^M \int_{t \in \mathcal{I}_{m}} \delta^\co_{t} \left[\int_{u \in \mathcal{I}_{m}} d^\co_{t}(u) f(u) d u \right] f(t) dt - \delta^\co \tag{$\+E_{W,\varepsilon,t}\left[\frac{(W_{t} - \pi) W_{t}}{\pi (1 - \pi)}\right] = 1$} \\ 
         =& \sum_{m = 1}^M I^{(m)} - \delta^\co \, . \tag{by definition of $I^{(m)}$}
    \end{align*} 
    We then finish the proof of Lemma \ref{lemma:carryover-bias}. \halmos
\end{proof}

\begin{lemma}[Expected value of $\mathcal{E}_{\mathrm{simul}}$]\label{lemma:simul-bias}
    Under the assumptions in Theorem \ref{theorem:switchback-bias}, the bias from simultaneous interventions is 
    \begin{align*}
         \+E_{W,\varepsilon,t} \left[ \mathcal{E}_{\mathrm{simul}}\right]
        =& \+E_{W,\varepsilon,t} \left[\frac{1}{n} \sum_{i = 1}^n\alpha_{t} \left(Y_{t_i}(\bm{W}, \bm{W}^\s_1, \cdots, \bm{W}^\s_K) - Y_{t_i}(\bm{W}, \bm{0}, \cdots, \bm{0})  \right) \right] \\ =& \sum_{m = 1}^M \int_{t \in \mathcal{I}_{m}} \Phi^\simul_{t} f(t) dt \, ,
    \end{align*}
    where 
    \begin{align*}
        \Phi^\simul_{t} = \+E_{\bm{W}^{(-m)} }\bigg[ \delta^\dagger_{ t}(\bm{W}^{(-m)},  W^{(m)} = 1 ) \bigg]  - \+E_{\bm{W}^{(-m)} }\bigg[ \delta^\dagger_{ t}(\bm{W}^{(-m)},  W^{(m)} = 0 ) 
        \bigg] \, 
    \end{align*}
    is defined in Section \ref{subsec:block-stats}.
\end{lemma}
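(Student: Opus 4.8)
The plan is to mirror the structure used in the proofs of Lemmas \ref{lemma:mean-measurement-error} and \ref{lemma:carryover-bias}: reduce the sample average over events to a single integral against the event density, and then evaluate the expectation over treatment assignments interval by interval. First I would note that, because the events $t_1, \cdots, t_n$ are sampled i.i.d.\ from $f(t)$ and the potential-outcome difference $Y_t(\bm{W}, \bm{W}^\s_1, \cdots, \bm{W}^\s_K) - Y_t(\bm{W}, \bm{0}, \cdots, \bm{0})$ does not depend on the measurement errors $\varepsilon^{(i)}$, the expectation of $\mathcal{E}_\mathrm{simul}$ collapses to that of a single generic term,
\[\+E_{W,\varepsilon,t}[\mathcal{E}_\mathrm{simul}] = \+E_{W,t}\Big[\alpha_{t}\big(Y_t(\bm{W}, \bm{W}^\s_1, \cdots, \bm{W}^\s_K) - Y_t(\bm{W}, \bm{0}, \cdots, \bm{0})\big)\Big],\]
where the $\varepsilon$ expectation is dropped.

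Next I would take the expectation over the simultaneous designs $\bm{W}^\s_1, \cdots, \bm{W}^\s_K$ conditional on $\bm{W}$ and $t$. By the definition of $\delta^\simul_t(\bm{W})$ given in Section \ref{subsec:block-stats}, this inner expectation converts the potential-outcome difference into $\delta^\simul_t(\bm{W})$, leaving $\+E_{W,t}[\alpha_t \delta^\simul_t(\bm{W})]$. I would then partition the time axis into the intervals $\mathcal{I}_1, \cdots, \mathcal{I}_M$ and rewrite this as $\sum_{m=1}^M \int_{t \in \mathcal{I}_m} \+E_W[\alpha_t \delta^\simul_t(\bm{W})]\, f(t)\, dt$. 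The crucial observation is that, for $t \in \mathcal{I}_m$, the treatment is constant on the interval, so $\alpha_t = (W^{(m)} - \pi)/(\pi(1-\pi))$ depends only on $W^{(m)}$, which by assumption is independent of $\bm{W}^{(-m)}$.

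Finally, I would factor the expectation as $\+E_{\bm{W}^{(-m)}}\big[\+E_{W^{(m)}}[\alpha_t \delta^\simul_t(\bm{W}^{(-m)}, W^{(m)}) \mid \bm{W}^{(-m)}]\big]$ and evaluate the inner expectation over the two equally likely values of $W^{(m)}$. Using $\pi = 1/2$, the weight $(W^{(m)} - \pi)/(\pi(1-\pi))$ equals $+2$ when $W^{(m)} = 1$ and $-2$ when $W^{(m)} = 0$, so the inner expectation reduces to $\delta^\simul_t(\bm{W}^{(-m)}, W^{(m)} = 1) - \delta^\simul_t(\bm{W}^{(-m)}, W^{(m)} = 0)$; taking the outer expectation over $\bm{W}^{(-m)}$ then yields exactly $\Phi^\simul_t$, giving the claimed $\sum_{m=1}^M \int_{t \in \mathcal{I}_m} \Phi^\simul_t\, f(t)\, dt$. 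The main obstacle is bookkeeping rather than anything deep: one must track carefully which assignments each factor depends on, so that the conditional independence of $W^{(m)}$ and $\bm{W}^{(-m)}$ can be invoked, and correctly manage the nested expectations over the primary design, the simultaneous designs, and the event location.
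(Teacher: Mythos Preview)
Your proposal is correct and follows essentially the same approach as the paper's proof: reduce the i.i.d.\ average to a single term, drop the $\varepsilon$-expectation, apply the law of total expectation over the simultaneous designs to obtain $\delta^\simul_t(\bm{W})$, and then evaluate the expectation over $\bm{W}$ interval-by-interval by first averaging over $\bm{W}^{(-m)}$ and then over $W^{(m)}$. Your explicit computation of the $\pm 2$ weights from $\alpha_t$ is slightly more detailed than the paper's final step, but the argument is otherwise identical.
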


\begin{proof}{Proof of Lemma \ref{lemma:simul-bias}}

As events are sampled i.i.d. from $f(t)$, we have
    \begin{align*}
        &\+E_{W,\varepsilon,t} \left[\frac{1}{n} \sum_{i = 1}^n\alpha_{t_i} \left(Y_{t_i}(\bm{W}, \bm{W}^\s_1, \cdots, \bm{W}^\s_K) - Y_{t_i}(\bm{W}, \bm{0}, \cdots, \bm{0})  \right) \right] \\
        =& \+E_{W,t} \left[ \alpha_{t} \left(Y_{t}(\bm{W}, \bm{W}^\s_1, \cdots, \bm{W}^\s_K) - Y_{t}(\bm{W}, \bm{0}, \cdots, \bm{0})  \right) \right] \tag{marketplace potential outcomes do not depend on $\varepsilon^{(i)}$}\\
        =& \+E_{W,t}\left[\alpha_{t}  \+E_{\bm{W}^\s_1, \cdots, \bm{W}^\s_K }\left[Y_{t}(\bm{W}, \bm{W}^\s_1, \cdots, \bm{W}^\s_K) - Y_{t}(\bm{W}, \bm{0}, \cdots, \bm{0}) \mid \bm{W}, t \right]\right] \tag{by the law of total expectation} \\
        =& \+E_{W,t}\left[\alpha_{t}  \delta^\simul_{ t}(\bm{W}) \right] \tag{by definition of $ \delta^\simul_{ t}(\bm{W})$} \\ 
        =& \sum_{m = 1}^M \int_{t \in \mathcal{I}_{m}} \Phi^\simul_{t} f(t) dt \, .\tag{first take the expected value over $\bm{W}^{(-m)}$ and then take the expected value over $W^{(m)}$}
    \end{align*}
    We then finish the proof of Lemma \ref{lemma:simul-bias}. \halmos
\end{proof}

\begin{proof}{Proof of Theorem \ref{theorem:switchback-bias}}
Based on the decomposition of the estimation error of $\hat{\delta}^\gate$ in Equation \eqref{eqn:decomposition}, the bias of $\hat{\delta}^\gate$ equals to
    \begin{align*}
        \+E_{W,\varepsilon,t}\left[\hat{\delta}^\gate  - \delta^\gate\right] =& \+E_{W,\varepsilon,t} \left[ \mathcal{E}_{\mathrm{carryover}}\right] + \+E_{W,\varepsilon,t} \left[ \mathcal{E}_{\mathrm{simul}}\right]\\
        =& \underbrace{\delta^\co \left[\sum_{m = 1}^M I^{(m)} - 1 \right]}_{\bias(\mathcal{E}_\mathrm{carryover})} + \underbrace{\sum_{m = 1}^M \int_{t \in \mathcal{I}_{m}} \Phi^\simul_{t} f(t) dt}_{\bias(\mathcal{E}_\mathrm{simul})}
    \end{align*}
    following Lemmas \ref{lemma:mean-measurement-error}, \ref{lemma:carryover-bias}, and \ref{lemma:simul-bias}. \halmos
\end{proof}

 \subsection{Proof of Theorem \ref{theorem:bias-variance-switchback}}\label{subsec:proof-second-theorem}

 We can decompose the mean-squared error of $\hat{\delta}^\gate$ as follows.
 \begin{align*}
     &\+E_{W,\varepsilon,t}\left[\left(\hat{\delta}^\gate  - \delta^\gate \right)^2\right] = \+E_{W,\varepsilon,t} \left[ \left(\mathcal{E}_{\mathrm{meas}} + \mathcal{E}_{\mathrm{inst}}+ \mathcal{E}_{\mathrm{carryover}} + \mathcal{E}_{\mathrm{simul}} \right)^2 \right]  \\ =&  \+E_{W,\varepsilon,t} \left[ \left( \mathcal{E}_{\mathrm{meas}} \right)^2 \right]  + 2 \+E_{W,\varepsilon,t} \left[ \mathcal{E}_{\mathrm{meas}}\left( \mathcal{E}_{\mathrm{inst}} + \mathcal{E}_{\mathrm{carryover}} + \mathcal{E}_{\mathrm{simul}} \right) \right]   \\
     &  + \+E_{W,\varepsilon,t} \left[ \left(\mathcal{E}_{\mathrm{inst}}\right)^2 \right] + \+E_{W,\varepsilon,t} \left[ \left(\mathcal{E}_{\mathrm{carryover}}\right)^2 \right] +  2 \+E_{W,\varepsilon,t} \left[ \mathcal{E}_{\mathrm{carryover}} \mathcal{E}_{\mathrm{inst}} \right]  \\
     &+ \+E_{W,\varepsilon,t} \left[ \left(\mathcal{E}_{\mathrm{simul}}\right)^2 \right]  + 2\+E_{W,\varepsilon,t} \left[\mathcal{E}_{\mathrm{simul}} \mathcal{E}_{\mathrm{inst}}   \right]  + 2\+E_{W,\varepsilon,t} \left[ \mathcal{E}_{\mathrm{simul}} \mathcal{E}_{\mathrm{carryover}}  \right] \, .
 \end{align*}

Below we show the value of each term in the decomposition separately.

 \begin{lemma}[Second moment of $\mathcal{E}_{\mathrm{meas}}$]\label{lemma:measurement-error}
     Under the assumptions in Theorem \ref{theorem:bias-variance-switchback}, 
     the second moment of the measurement error is
     \begin{align*}
     & \+E_{W,\varepsilon,t} \left[ \left( \mathcal{E}_{\mathrm{meas}}\right)^2 \right] 
         = \+E_{W,\varepsilon,t} \left[\left( \frac{1}{n} \sum_{i = 1}^n \alpha_{t_i} \varepsilon^{(i)}\right)^2\right] =\frac{4}{n} \sum_{m=1}^M \left(V^{(m)}+ (n-1)C^{(m)}\right)  \, .
     \end{align*}
 \end{lemma}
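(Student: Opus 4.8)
The plan is to expand the square of the measurement-error term into a double sum, split it into the diagonal ($i=j$) and off-diagonal ($i\neq j$) contributions, and evaluate each using the independence structure of the three sources of randomness. Writing $\mathcal{E}_{\mathrm{meas}} = \frac{1}{n}\sum_i \alpha_{t_i}\varepsilon^{(i)}$, I would start from
\[
\+E_{W,\varepsilon,t}\left[\left(\mathcal{E}_{\mathrm{meas}}\right)^2\right] = \frac{1}{n^2}\sum_{i=1}^n\sum_{j=1}^n \+E_{W,\varepsilon,t}\left[\alpha_{t_i}\alpha_{t_j}\varepsilon^{(i)}\varepsilon^{(j)}\right].
\]
Because the measurement errors are independent of the treatment assignments $\bm{W}$ and $\bm{W}^\s$, and the event times are exogenous by Assumption \ref{ass:exogeneity}, I would condition on $(t_i,t_j)$ and factor each summand as $\+E_t\big[\+E_W[\alpha_{t_i}\alpha_{t_j}\mid t_i,t_j]\cdot \+E_\varepsilon[\varepsilon^{(i)}\varepsilon^{(j)}\mid t_i,t_j]\big]$.

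The key computation is the treatment-assignment factor $\+E_W[\alpha_{t_i}\alpha_{t_j}\mid t_i,t_j]$. With $\pi=1/2$ one has $\alpha_{t} = \pm 2$, so $\alpha_{t_i}^2 = 4$ holds deterministically; this handles both the diagonal terms and the case in which $t_i,t_j$ lie in a common interval, where $W_{t_i}=W_{t_j}$ (assignments are constant within an interval) forces $\alpha_{t_i}=\alpha_{t_j}$. When $t_i$ and $t_j$ fall in distinct intervals, the assignments are independent and $\+E_W[\alpha_{t}\mid t]=0$, so the cross term vanishes. Hence $\+E_W[\alpha_{t_i}\alpha_{t_j}\mid t_i,t_j] = 4\,\mathbbm{1}(t_i,t_j \text{ lie in the same interval})$, and this indicator localizes the subsequent $t$-integral to each $\mathcal{I}_m$, which is exactly what is needed to reproduce the definitions of $V^{(m)}$ and $C^{(m)}$.

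Assembling the pieces, each of the $n$ diagonal terms contributes $4\int \+E_\varepsilon[(\varepsilon^{(i)})^2\mid t_i]f(t_i)\,dt_i = 4\sum_m V^{(m)}$, giving $\tfrac{4}{n}\sum_m V^{(m)}$ after dividing by $n^2$; each of the $n(n-1)$ off-diagonal terms contributes $4\sum_m C^{(m)}$ through the same-interval indicator, giving $\tfrac{4(n-1)}{n}\sum_m C^{(m)}$. Summing yields $\tfrac{4}{n}\sum_m\big(V^{(m)}+(n-1)C^{(m)}\big)$, as claimed. The only delicate point is bookkeeping the three layers of randomness simultaneously—treatment $\bm{W}$, errors $\varepsilon$, and i.i.d.\ event times $t$—and recognizing that the same-interval indicator is precisely the device that converts a full-horizon integral into the interval-level statistics $V^{(m)}$ and $C^{(m)}$; once the factorization is in place, the remainder is routine counting of diagonal versus off-diagonal pairs.
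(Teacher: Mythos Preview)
Your proposal is correct and follows essentially the same approach as the paper: expand the square into a double sum, factor $\+E_W[\alpha_{t_i}\alpha_{t_j}\mid t_i,t_j]$ using the same-interval indicator, and identify the resulting integrals with $V^{(m)}$ and $C^{(m)}$. The only cosmetic difference is that the paper keeps $\pi$ general until the end (obtaining the factor $1/(\pi(1-\pi))$) and then substitutes $\pi=1/2$, whereas you set $\pi=1/2$ at the outset to get the constant $4$ directly.
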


\begin{proof}{Proof of Lemma \ref{lemma:measurement-error}}

The second moment of $\mathcal{E}_{\mathrm{meas}}$ equals
\begin{align*}
    & \+E_{W,\varepsilon,t} \left[ \left( \mathcal{E}_{\mathrm{meas}} \right)^2 \right] = \frac{1}{n^2} \sum_{i,j} \+E_{W,\varepsilon,t} \left[ \alpha_{t_i} \alpha_{t_j} \varepsilon^{(i)}  \varepsilon^{(j)}\right] \\
        =& \frac{1}{n}\+E_{W,t} \left[\alpha_{t_i}^2 \+E_{\varepsilon}\left[\left(\varepsilon^{(i)}\right)^2 \mid  \bm{W}, \bm{W}^\s_1, \cdots, \bm{W}^\s_K, t_i\right]\right] + \frac{n-1}{n} \+E_{W,t} \left[ \+E_{\varepsilon}\left[\varepsilon^{(i)} \varepsilon^{(j)} \mid  \bm{W}, \bm{W}^\s_1, \cdots, \bm{W}^\s_K, t_i, t_j \right]\right] \\
        =& \frac{1}{n\pi (1 - \pi)} \sum_{m=1}^M \underbrace{\int_{t_i \in \mathcal{I}_{m} } \+E_{\varepsilon}\left[(\varepsilon^{(i)})^2  \mid  t_i \right] f(t_i) d t_i}_{V^{(m)}}  \\
        & + \frac{n-1}{n\pi (1 - \pi)} \sum_{m=1}^M \underbrace{\int_{t_i, t_j \in \mathcal{I}_{m} } \+E_{\varepsilon}\left[\varepsilon^{(i)} \varepsilon^{(j)} \mid  t_i, t_j \right] f(t_i) f(t_j) d t_i d t_j }_{C^{(m)}} \, ,
\end{align*}
where we use the following property to show the expression of the second term in the last equation
\begin{align*}
    \+E_W \left[\alpha_{t_i} \alpha_{t_j} \mid t_i, t_j \right] =& \+E_W \left[\frac{(W_{t_i} - \pi)(W_{t_j} - \pi)}{\pi^2 (1-\pi)^2} \mid t_i, t_j\right] \\ =& \begin{cases}
    \frac{1}{\pi (1- \pi) } & \text{$t_i$ and $t_j$ in the same interval} \\
    0 &  \text{otherwise.}
\end{cases} 
\end{align*}

    Setting $\pi$ as $1/2$, we have 
    \begin{align*}
        \+E_{W,\varepsilon,t} \left[ \left( \mathcal{E}_{\mathrm{meas}} \right)^2 \right] 
         = \frac{4}{n} \sum_{m=1}^M \left(V^{(m)}+ (n-1)C^{(m)}\right) .
    \end{align*} 
    We then finish the proof of Lemma \ref{lemma:measurement-error}. \halmos
\end{proof}

\begin{lemma}[Expected value of the product of $\mathcal{E}_{\mathrm{meas}}$ and $\mathcal{E}_{\mathrm{inst}}+\mathcal{E}_{\mathrm{carryover}}+\mathcal{E}_{\mathrm{simul}}$]\label{lemma:expected-product-meas-other-terms}

    Under the assumptions in Theorem \ref{theorem:bias-variance-switchback}, the expected product of $\mathcal{E}_{\mathrm{meas}}$ and $\mathcal{E}_{\mathrm{inst}}+\mathcal{E}_{\mathrm{carryover}}+\mathcal{E}_{\mathrm{simul}}$ is equal to 
    \begin{align*}
        \+E_{W,\varepsilon,t} \left[  \mathcal{E}_{\mathrm{meas}} \left( \mathcal{E}_{\mathrm{inst}}+\mathcal{E}_{\mathrm{carryover}}+\mathcal{E}_{\mathrm{simul}}\right) \right] =&~ 0 \, .
    \end{align*}
\end{lemma}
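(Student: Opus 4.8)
The plan is to exploit the structural fact that, among the four error components in the decomposition \eqref{eqn:decomposition}, only $\mathcal{E}_{\mathrm{meas}}$ depends on the measurement errors $\varepsilon^{(1)},\dots,\varepsilon^{(n)}$, whereas $\mathcal{E}_{\mathrm{inst}}$, $\mathcal{E}_{\mathrm{carryover}}$, and $\mathcal{E}_{\mathrm{simul}}$ are deterministic functions of the treatment designs $\bm{W},\bm{W}^\s_1,\dots,\bm{W}^\s_K$ and the event times $t_1,\dots,t_n$ alone. This holds because, as noted in the setup, the marketplace potential outcomes $Y_{t_i}(\cdot)$ carry no residual randomness once the treatment assignments and the event time are fixed; the only channel through which the $\varepsilon^{(i)}$ enter the estimation error is the linear term $\mathcal{E}_{\mathrm{meas}} = \frac{1}{n}\sum_{i=1}^n \alpha_{t_i}\varepsilon^{(i)}$.

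First I would condition on the $\sigma$-field generated by $(\bm{W},\bm{W}^\s_1,\dots,\bm{W}^\s_K, t_1,\dots,t_n)$ and apply the law of iterated expectations. Since the sum $\mathcal{E}_{\mathrm{inst}}+\mathcal{E}_{\mathrm{carryover}}+\mathcal{E}_{\mathrm{simul}}$ is measurable with respect to these conditioning variables, it can be pulled outside the inner expectation over $\varepsilon$, giving $\+E_{W,t}\big[(\mathcal{E}_{\mathrm{inst}}+\mathcal{E}_{\mathrm{carryover}}+\mathcal{E}_{\mathrm{simul}})\cdot \+E_{\varepsilon}[\mathcal{E}_{\mathrm{meas}}\mid \bm{W},\bm{W}^\s_1,\dots,\bm{W}^\s_K,t_1,\dots,t_n]\big]$.

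Second, I would evaluate the inner conditional expectation. By linearity, $\+E_{\varepsilon}[\mathcal{E}_{\mathrm{meas}}\mid \cdot] = \frac{1}{n}\sum_{i=1}^n \alpha_{t_i}\,\+E_{\varepsilon}[\varepsilon^{(i)}\mid \bm{W},\bm{W}^\s_1,\dots,\bm{W}^\s_K,t_i]$, and each summand vanishes because $\varepsilon^{(i)}$ has mean zero and is independent of the treatment assignments of the primary and simultaneous interventions. This is exactly the computation already carried out to prove $\+E[\mathcal{E}_{\mathrm{meas}}]=0$ in Lemma \ref{lemma:mean-measurement-error}, now performed after multiplying by a factor that is constant given the conditioning information. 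Consequently the inner expectation is zero, and the full product has expectation zero.

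There is no genuine obstacle in this lemma; it is the most elementary of the MSE decomposition steps. The only point requiring care is to verify cleanly that $\mathcal{E}_{\mathrm{inst}}$, $\mathcal{E}_{\mathrm{carryover}}$, and $\mathcal{E}_{\mathrm{simul}}$ contain no dependence on $\varepsilon$, so that they may legitimately be treated as constants under the inner expectation. This follows directly from the modeling assumption that $Y_{t_i}$ is deterministic given $(\bm{W}_{t_i},\bm{W}^\s_{t_i},t_i)$, together with the exogeneity of event sampling in Assumption \ref{ass:exogeneity}.
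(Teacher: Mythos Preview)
Your proposal is correct and follows essentially the same approach as the paper: condition on the treatment designs and event times, observe that $\mathcal{E}_{\mathrm{inst}}$, $\mathcal{E}_{\mathrm{carryover}}$, and $\mathcal{E}_{\mathrm{simul}}$ are measurable with respect to this conditioning, and use $\+E_{\varepsilon}[\varepsilon^{(i)}\mid \bm{W},\bm{W}^\s_1,\dots,\bm{W}^\s_K,t_i]=0$ to kill the inner expectation. The paper carries this out separately for each of the three cross terms rather than bundling them, but the argument is identical.
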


\begin{proof}{Proof of Lemma \ref{lemma:expected-product-meas-other-terms}}
    
First, for the expected value of the product of $\mathcal{E}_{\mathrm{meas}}$ and $ $$\mathcal{E}_{\mathrm{inst}}$, we have
\begin{align*}
    \+E_{W,\varepsilon,t} \left[  \mathcal{E}_{\mathrm{meas}} \mathcal{E}_{\mathrm{inst}} \right] =& \+E_{W,\varepsilon,t} \left[ \left( \frac{1}{n} \sum_{i = 1}^n \alpha_{t_i} \varepsilon^{(i)}\right) \left( \delta^\gate \left[\frac{1}{n} \sum_{i = 1}^n \frac{W_{t_i}}{\pi} - 1\right] \right)\right] \\=&\+E_{W,\varepsilon,t} \left[ \left( \frac{1}{n} \sum_{i = 1}^n \alpha_{t_i} \+E_\varepsilon\left[\varepsilon^{(i)} \mid \bm{W}, \bm{W}^\s_1, \cdots, \bm{W}^\s_K ,  t \right] \right) \left( \delta^\gate \left[\frac{1}{n} \sum_{i = 1}^n \frac{W_{t_i}}{\pi} - 1\right] \right)\right] \\
    =& 0  \, . \tag{$\+E_\varepsilon\left[\varepsilon^{(i)} \mid \bm{W}, \bm{W}^\s_1, \cdots, \bm{W}^\s_K ,  t_i \right] = 0$ for all $i$}
\end{align*}

Second, the expected value of the product of $\mathcal{E}_{\mathrm{meas}}$ and $ $$\mathcal{E}_{\mathrm{carryover}}$ is equal to  
    \begin{align*}
        &\+E_{W,\varepsilon,t} \left[ \mathcal{E}_{\mathrm{meas}} \mathcal{E}_{\mathrm{carryover}} \right] \\=&\+E_{W,\varepsilon,t}\left[\left(\frac{1}{n} \sum_{i = 1}^n \alpha_{t_i} \+E_\varepsilon\left[\varepsilon^{(i)} \mid \bm{W}, \bm{W}^\s_1, \cdots, \bm{W}^\s_K,  t_i\right] \right) \times\left(\frac{1}{n} \sum_{i = 1}^n \alpha_{t_i} \left(Y_{t_i}(\bm{W}, \bm{0}, \cdots, \bm{0})  - W_{t} \delta^\gate\right)\right) \right] \\ =& 0 \, .
    \end{align*}
    
    Third, the expected value of the product of $\mathcal{E}_{\mathrm{meas}}$ and $ $$\mathcal{E}_{\mathrm{simul}}$ is equal to  
    \begin{align*}
        &\+E_{W,\varepsilon,t} \left[ \mathcal{E}_{\mathrm{meas}} \mathcal{E}_{\mathrm{simul}} \right] \\=& \+E_{W,\varepsilon,t}\left[\left(\frac{1}{n} \sum_{i = 1}^n \alpha_{t_i} \+E_\varepsilon\left[\varepsilon^{(i)} \mid \bm{W}, \bm{W}^\s_1, \cdots, \bm{W}^\s_K,  t_i\right] \right) \cdot \left(\frac{1}{n} \sum_{i = 1}^n\alpha_{t_i} \left(Y_{t_i}(\bm{W}, \bm{W}^\s_1, \cdots, \bm{W}^\s_K) - Y_{t_i}(\bm{W}, \bm{0}, \cdots, \bm{0})  \right)\right) \right] \\ =& 0 \, .
    \end{align*}
    
    Then we have 
    \begin{align*}
        & \+E_{W,\varepsilon,t} \left[  \mathcal{E}_{\mathrm{meas}} \left( \mathcal{E}_{\mathrm{inst}}+\mathcal{E}_{\mathrm{carryover}}+\mathcal{E}_{\mathrm{simul}}\right) \right] \\ =&~ \+E_{W,\varepsilon,t} \left[  \mathcal{E}_{\mathrm{meas}} \mathcal{E}_{\mathrm{inst}} \right] + \+E_{W,\varepsilon,t} \left[  \mathcal{E}_{\mathrm{meas}} \mathcal{E}_{\mathrm{carryover}} \right] + \+E_{W,\varepsilon,t} \left[  \mathcal{E}_{\mathrm{meas}} \mathcal{E}_{\mathrm{simul}}\right] = 0 \, .
    \end{align*} 
    We then finish the proof of Lemma \ref{lemma:expected-product-meas-other-terms}.\halmos
\end{proof}

\begin{lemma}[Second moment of $\mathcal{E}_{\mathrm{inst}}$] \label{lemma:second-moment-inst}

Under the assumptions in Theorem \ref{theorem:switchback-bias}, the second moment of $\mathcal{E}_{\mathrm{inst}}$ is equal to 
    \begin{align*}
        \+E_{W,\varepsilon,t} \left[ \left(  \mathcal{E}_{\mathrm{inst}}\right)^2 \right] =& (\delta^\gate)^2 \sum_{m=1}^M \left[\mu^{(m)}\right]^2 \, .
    \end{align*}
\end{lemma}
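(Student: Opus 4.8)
The plan is to compute the second moment directly, exploiting that $\mathcal{E}_{\mathrm{inst}}$ does not involve the measurement errors, so the expectation over $\varepsilon$ is vacuous and only the randomness in $\bm{W}$ and the event times $t$ matters. First I would recall from Lemma~\ref{lemma:carryover-bias} that $\+E_{W,\varepsilon,t}[\mathcal{E}_{\mathrm{inst}}] = 0$, so the second moment coincides with the variance of $\frac{\delta^\gate}{n}\sum_{i=1}^n W_{t_i}/\pi$. Pulling out the constant $(\delta^\gate)^2$, it then suffices to evaluate $\var\big(\frac{1}{n}\sum_{i=1}^n W_{t_i}/\pi\big)$.

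The key structural observation is that, because treatment is assigned at the interval level, $W_{t_i}$ depends on $t_i$ only through the interval containing it. Grouping the events by interval gives $\frac{1}{n}\sum_{i=1}^n W_{t_i} = \sum_{m=1}^M \hat{\mu}^{(m)} W^{(m)}$, where $\hat{\mu}^{(m)}$ is the fraction of events falling in $\mathcal{I}_m$, whose weight is $\mu^{(m)} = \int_{\mathcal{I}_m} f(t)\,dt$, and $W^{(m)}$ is the corresponding treatment indicator. I would then condition on the event times and take the variance over $\bm{W}$ first. Since the assignments $W^{(1)},\dots,W^{(M)}$ are independent across intervals with $\P(W^{(m)}=1)=\pi$, all cross-interval covariances vanish, leaving
\[
\var_W\!\Big(\tfrac{1}{\pi}\sum_{m=1}^M \mu^{(m)} W^{(m)}\Big) = \frac{1}{\pi^2}\sum_{m=1}^M (\mu^{(m)})^2\,\var(W^{(m)}) = \frac{1-\pi}{\pi}\sum_{m=1}^M (\mu^{(m)})^2 .
\]
Substituting $\pi = 1/2$ collapses the prefactor $(1-\pi)/\pi$ to $1$, and multiplying back by $(\delta^\gate)^2$ yields the claimed $(\delta^\gate)^2\sum_{m=1}^M (\mu^{(m)})^2$.

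Two places require care: the cross terms and the event-time bookkeeping. For the cross terms, I would expand $\big(\sum_m \mu^{(m)} W^{(m)}\big)^2$ and use that $\+E[W^{(m)}W^{(m')}] = \pi^2$ for $m\neq m'$ by independence, while $\+E[(W^{(m)})^2] = \+E[W^{(m)}] = \pi$ since $W^{(m)}$ is a binary indicator; after subtracting the squared mean only the diagonal terms survive, which is exactly the step that produces $\sum_m (\mu^{(m)})^2$ rather than $(\sum_m \mu^{(m)})^2 = 1$. The main obstacle is the event-time randomness: one must justify replacing the empirical fraction $\hat{\mu}^{(m)}$ by the deterministic interval weight $\mu^{(m)}$, in line with the interval-level statistics used throughout the decomposition, so that the averaging over $t$ reduces to integrals against $f$. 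With that identification in place, the remaining algebra is the elementary variance computation above.
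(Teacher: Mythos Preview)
Your proposal is correct and mirrors the paper's argument: the paper expands the square into event pairs $(i,j)$, uses that $\+E_W\!\big[(W_{t_i}/\pi-1)(W_{t_j}/\pi-1)\mid t_i,t_j\big]$ equals $(1-\pi)/\pi$ when $t_i,t_j$ fall in the same interval and $0$ otherwise, and then integrates the two event times independently against $f$ to obtain $(1-\pi)/\pi\sum_m(\mu^{(m)})^2$ before setting $\pi=1/2$. The concern you flag about replacing $\hat\mu^{(m)}$ by $\mu^{(m)}$ is exactly this last step---the paper, like you, works at the level of a generic independent pair and integrates $t_i,t_j$ against $f(t_i)f(t_j)$ rather than tracking the $i=j$ diagonal separately.
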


\begin{proof}{Proof of Lemma \ref{lemma:second-moment-inst}}
    
    The second moment of $\mathcal{E}_{\mathrm{inst}}$ equals 
    \begin{align*}
        \+E_{W,\varepsilon,t} \left[ \left( \mathcal{E}_{\mathrm{inst}} \right)^2 \right] 
        =& \frac{(\delta^\gate)^2}{n^2} \sum_{i,j} \+E_{W,t} \left[ \left( \frac{W_{t_i}}{\pi} - 1\right) \left( \frac{W_{t_j}}{\pi} - 1\right)\right] \\
        =& (\delta^\gate)^2 \+E_{W,t} \left[ \left( \frac{W_{t_i}}{\pi} - 1\right) \left( \frac{W_{t_j}}{\pi} - 1\right)\right] 
 \\ =& (\delta^\gate)^2 \left(\frac{1}{\pi} -1 \right)\sum_{m=1}^M \underbrace{\int_{t_i, t_j \in \mathcal{I}_{m} } f(t_i) f(t_j) d t_i d t_j }_{\left[\mu^{(m)}\right]^2} \, ,
    \end{align*}
    where we use the following property to show the last equation
    \begin{align*}
        \+E_W \left[\left( \frac{W_{t_i}}{\pi} - 1\right) \left( \frac{W_{t_j}}{\pi} - 1\right) \mid t_i, t_j \right] = \begin{cases}
    \frac{\pi - 1}{\pi} & \text{$t_i$ and $t_j$ in the same interval} \\
    0 &  \text{otherwise.}
\end{cases} 
    \end{align*}
    Setting $\pi = 1/2$, we have 
        \begin{align*}
        \+E_{W,\varepsilon,t} \left[ \left(  \mathcal{E}_{\mathrm{inst}}\right)^2 \right] =& (\delta^\gate)^2 \sum_{m=1}^M \left[\mu^{(m)}\right]^2 \, .
    \end{align*}
      We then finish the proof of Lemma \ref{lemma:second-moment-inst}. \halmos
\end{proof}

\begin{lemma}[Second moment of $\mathcal{E}_{\mathrm{carryover}}$]\label{lemma:second-moment-carryover}
    Under the assumptions in Theorem \ref{theorem:bias-variance-switchback}, the second moment of $\mathcal{E}_{\mathrm{carryover}}$ equals to 
    \begin{align*}
        & \+E_{W,\varepsilon,t} \left[ \left(\mathcal{E}_{\mathrm{carryover}}\right)^2\right] \\ =&  \sum_{m=1}^M \left(\Xi^{\inst,(m)}_\dem  + \Xi^{\co,(m)}_\dem  + 2 \mu^{(m)}_{Y^\mathrm{ctrl}}\right)^2 + \left(\sum_{m = 1}^M I^{(m)} - \delta^\co\right)^2   + \sum_{m=1}^M \sum_{m^\prime \neq m} \left(\left[I^{(m,m^\prime)}\right]^2 + I^{(m,m^\prime)} I^{(m^\prime, m)}\right)  \, .
    \end{align*}
\end{lemma}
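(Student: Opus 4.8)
The plan is to exploit the decomposition
\[ \+E_{W,\varepsilon,t}\big[(\mathcal{E}_\mathrm{carryover})^2\big] = \big(\+E_{W,\varepsilon,t}[\mathcal{E}_\mathrm{carryover}]\big)^2 + \var(\mathcal{E}_\mathrm{carryover}), \]
since Lemma \ref{lemma:carryover-bias} already supplies the mean $\+E[\mathcal{E}_\mathrm{carryover}] = \sum_{m=1}^M I^{(m)} - \delta^\co$, which accounts precisely for the middle term $(\sum_m I^{(m)} - \delta^\co)^2$ in the claimed expression. It therefore remains only to compute $\var(\mathcal{E}_\mathrm{carryover})$ and match it to the remaining two summations. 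First I would note that the $\varepsilon$-expectation factors out trivially, since the marketplace outcomes do not depend on $\varepsilon$; and that after expanding the square $\tfrac{1}{n^2}\sum_{i,j}$ and using the i.i.d.\ sampling of event times, the off-diagonal pairs $i\neq j$ dominate while the diagonal $i=j$ contributes an $O(1/n)$ correction. To leading order in $n$, $\mathcal{E}_\mathrm{carryover}$ thus behaves like the $\bm{W}$-only quantity $h(\bm{W}) \coloneqq \int \alpha_t\,(Y_t(\bm{W},\bm{0}) - W_t\delta^\gate)\,f(t)\,dt$, whose variance is what we must evaluate.

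Next I would decompose $h$ at the interval level. Because $\alpha_t = (W_t-\pi)/(\pi(1-\pi))$ is constant on each $\mathcal{I}_m$, writing $\alpha^{(m)}$ for its value there gives $h(\bm{W}) = \sum_m \alpha^{(m)} b^{(m)}(\bm{W})$ with $b^{(m)} = \int_{\mathcal{I}_m}(Y_t(\bm{W},\bm{0}) - W_t\delta^\gate)f(t)\,dt$. Substituting $Y_t(\bm{W},\bm{0}) = Y_t(\bm{0},\bm{0}) + W_t\delta^\inst_t + \delta^\co_t(\bm{W})$ and $\delta^\gate = \delta^\inst + \delta^\co$, then using Assumption \ref{ass:interference-structure} to rewrite $\int_{\mathcal{I}_m}\delta^\co_t(\bm{W})f\,dt = \sum_k W^{(k)} I^{(m,k)}$, yields
\[ b^{(m)}(\bm{W}) = \mu^{(m)}_{Y^\mathrm{ctrl}} + W^{(m)}\big(\Xi^{\inst,(m)}_\dem - \delta^\co\mu^{(m)}\big) + \sum_{k=1}^M W^{(k)} I^{(m,k)}. \]
Here I would also record the kernel-normalization identity $\sum_k I^{(m,k)} = \Xi^{\co,(m)}$, which follows from $\int d^\co_t(t')f(t')\,dt' = 1$ and is exactly what produces the demeaned carryover term $\Xi^{\co,(m)}_\dem$ in the final perfect square.

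The evaluation of $\var(h)$ is then a Rademacher moment computation. Setting $s^{(m)} = 2W^{(m)} - 1$ (so the $s^{(m)}$ are i.i.d.\ with $\+E[s^{(m)}]=0$, $\+E[(s^{(m)})^2]=1$, and $\alpha^{(m)} = 2s^{(m)}$ at $\pi = 1/2$), I would substitute $W^{(m)} = (1+s^{(m)})/2$, expand, and use $(s^{(m)})^2 = 1$ to split $h - \+E[h]$ into a linear part $L = \sum_m s^{(m)} D^{(m)}$ with $D^{(m)} = \Xi^{\inst,(m)}_\dem + \Xi^{\co,(m)}_\dem + 2\mu^{(m)}_{Y^\mathrm{ctrl}}$, and a quadratic part $Q = \sum_m\sum_{k\neq m} s^{(m)}s^{(k)} I^{(m,k)}$. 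Then $\+E[L^2] = \sum_m (D^{(m)})^2$ gives the first claimed sum; the cross term $\+E[LQ]$ vanishes because it is an expectation of an odd number of distinct Rademacher factors; and in $\+E[Q^2]$ only the index pairings $(m,k)=(m',k')$ and $(m,k)=(k',m')$ survive, producing $\sum_m\sum_{m'\neq m}\big([I^{(m,m')}]^2 + I^{(m,m')}I^{(m',m)}\big)$.

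The main obstacle I anticipate is the bookkeeping in assembling $D^{(m)}$ and in $\+E[Q^2]$: one must verify that collecting the $(s^{(m)})^2=1$ simplifications sends the control-outcome, instantaneous, and carryover contributions into \emph{exactly} the perfect square $D^{(m)}$ rather than leaving stray cross terms, and that under the constraints $k\neq m$, $k'\neq m'$ precisely the two pairings contribute, giving both the squared and the transposed carryover products. Care is also needed to confirm that the diagonal $i=j$ term is genuinely $O(1/n)$ and hence negligible at leading order, so that the second moment reduces cleanly to $(\sum_m I^{(m)} - \delta^\co)^2 + \var_{\bm{W}}(h)$.
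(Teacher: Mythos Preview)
Your proposal is correct and takes a genuinely different, more structured route than the paper. The paper computes the second moment directly by expanding $(\mathcal{E}_\mathrm{carryover})^2$ into a treatment part $A^{\mathrm{treat}}$, a control part $A^{\mathrm{control}}$, and a cross part $A^{\mathrm{cross}}$, then further breaks $A^{\mathrm{treat}}$ and $A^{\mathrm{cross}}$ into many sub-terms ($A_{1,ij}, A_{2,ij}, A_{3,ij}$, and further $B_{\cdot,ij}, C_{\cdot,ij}$), evaluating each by casework on whether the two sampled times fall in the same interval. Your bias--variance split combined with the Rademacher substitution $s^{(m)} = 2W^{(m)} - 1$ replaces all of this casework with a single linear-plus-quadratic decomposition $h - \+E[h] = L + Q$ in independent $\pm 1$ variables: the perfect square $\sum_m (D^{(m)})^2$ and the carryover cross terms $\sum_m\sum_{m'\neq m}\big([I^{(m,m')}]^2 + I^{(m,m')}I^{(m',m)}\big)$ then fall out directly as $\+E[L^2]$ and $\+E[Q^2]$, with $\+E[LQ]=0$ by parity. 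Your approach is shorter and makes the structure of the answer visible (why a perfect square appears, why exactly two pairings survive in $\+E[Q^2]$); the paper's approach is more elementary but considerably longer. On the $O(1/n)$ diagonal: the paper's own proof makes exactly the same silent approximation---it passes from $\tfrac{1}{n^2}\sum_{i,j}$ to the generic off-diagonal expectation without isolating the $i=j$ contribution---so your explicit acknowledgment that the diagonal contributes only at order $1/n$ is a clarification rather than a gap.
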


\begin{proof}{Proof of Lemma \ref{lemma:second-moment-carryover}}
    As events are sampled i.i.d. from distribution $f(t)$, we have
    \begin{align*}
        &\+E_{W,\varepsilon,t} \left[ \left(\mathcal{E}_{\mathrm{carryover}}\right)^2\right] \\
        =& \+E_{W,t}\left[\left(\frac{1}{n} \sum_{i = 1}^n \left[\alpha_{t_i} (\delta^\inst_{t_i} - \delta^\inst) W_{t_i} + \alpha_{t_i}( \delta^\co_{t_i}(\bm{W}) - \delta^\co W_{t_i}) \right] + \frac{1}{n} \sum_{i = 1}^n \alpha_{t_i} Y_{t_i}(\bm{0}, \cdots, \bm{0}) \right)^2\right] \tag{carryover effects do not depend on $\varepsilon$} \\
        =& \underbrace{\+E_{W,t}\left[\left(\frac{1}{n} \sum_{i = 1}^n \left[\alpha_{t_i} (\delta^\inst_{t_i} - \delta^\inst) W_{t_i} + \alpha_{t_i}( \delta^\co_{t_i}(\bm{W}) - \delta^\co W_{t_i}) \right] \right)^2\right] }_{A^{\mathrm{treat}}}\\
        & + 2 \underbrace{\+E_{W,t}\left[\left(\frac{1}{n} \sum_{i = 1}^n \left[\alpha_{t_i} (\delta^\inst_{t_i} - \delta^\inst) W_{t_i} + \alpha_{t_i}( \delta^\co_{t_i}(\bm{W}) - \delta^\co W_{t_i}) \right] \right) \left( \frac{1}{n} \sum_{i = 1}^n \alpha_{t_i} Y_{t_i}(\bm{0}, \cdots, \bm{0}) \right)\right]}_{A^{\mathrm{cross}}} \\
        & + \underbrace{\+E_{W,t}\left[\left( \frac{1}{n} \sum_{i = 1}^n \alpha_{t_i} Y_{t_i}(\bm{0}, \cdots, \bm{0}) \right)^2\right]}_{A^{\mathrm{control}}} \, .
    \end{align*}
    Let us first consider the simplest term $A^{\mathrm{control}}$. For this term, we have 
    \begin{align*}
        A^{\mathrm{control}} =& \frac{1}{n^2} \sum_{i,j}\+E_{W,t}\left[\alpha_{t_i} Y_{t_i}(\bm{0}, \cdots, \bm{0}) \cdot \alpha_{t_j} Y_{t_j}(\bm{0}, \cdots, \bm{0}) \right] \\
        =& \+E_{W,t}\left[\alpha_{t_i} Y_{t_i}(\bm{0}, \cdots, \bm{0}) \cdot \alpha_{t_j} Y_{t_j}(\bm{0}, \cdots, \bm{0}) \right] \\
        =& \sum_{m = 1}^M \int_{t_i, t_j \in \mathcal{I}_{m}}  \+E_{W}\left[\alpha_{t_i} \alpha_{t_j} \mid t_i, t_j \right] Y_{t_i}(\bm{0}, \cdots, \bm{0})  Y_{t_j}(\bm{0}, \cdots, \bm{0}) f(t_i) f(t_j) dt_i dt_j \\
        & + \sum_{m = 1}^M \sum_{m^\prime: m^\prime \neq m} \int_{t_i \in \mathcal{I}_{m}, t_j \in \mathcal{I}_{m^\prime}}\+E_{W}\left[\alpha_{t_i} \alpha_{t_j} \mid t_i, t_j \right] Y_{t}(\bm{0}, \cdots, \bm{0})  Y_{t_j}(\bm{0}, \cdots, \bm{0}) f(t_i) f(t_j) dt_i dt_j \\
        =& \frac{1}{\pi (1 - \pi)} \sum_{m = 1}^M \int_{t, t^\prime \in \mathcal{I}_{m}}  Y_{t}(\bm{0}, \cdots, \bm{0})  Y_{t^\prime}(\bm{0}, \cdots, \bm{0}) f(t) f(t^\prime) dt dt^\prime \\
        =& \frac{1}{\pi (1 - \pi)} \sum_{m = 1}^M \left(\int_{t \in \mathcal{I}_{m}}  Y_{t}(\bm{0}, \cdots, \bm{0}) f(t) dt \right)^2 = \frac{1}{\pi (1 - \pi)} \sum_{m = 1}^M \left[\mu^{(m)}_{Y^\mathrm{ctrl}} \right]^2 
    \end{align*}
    following the definition that $\mu^{(m)}_{Y^\mathrm{ctrl}}  = \int_{t \in \mathcal{I}_{m}}  Y_{t}(\bm{0}, \cdots, \bm{0}) f(t) dt$. 

    Next we consider the cross term $A^{\mathrm{cross}}$. For this term, we have 
    \begin{align*}
        A^{\mathrm{cross}} =& \frac{1}{n^2} \sum_{i,j} \+E_{W,t}\left[\left(\alpha_{t_i} (\delta^\inst_{t_i} - \delta^\inst) W_{t_i} + \alpha_{t_i}( \delta^\co_{t_i}(\bm{W}) - \delta^\co W_{t_i}) \right)  \alpha_{t_j} Y_{t_j}(\bm{0}, \cdots, \bm{0}) \right] \\
        =& \underbrace{\+E_{W,t}\left[\alpha_{t_i} (\delta^\inst_{t_i} - \delta^\inst) W_{t_i}  \cdot  \alpha_{t_j} Y_{t_j}(\bm{0}, \cdots, \bm{0}) \right]}_{B_{1,ij}} \\
        & + \underbrace{\+E_{W,t}\left[\alpha_{t_i}( \delta^\co_{t_i}(\bm{W}) - \delta^\co W_{t_i}) \cdot \alpha_{t_j} Y_{t_j}(\bm{0}, \cdots, \bm{0}) \right]}_{B_{2,ij}} \, .
    \end{align*}
    For $B_{1,ij}$ (approximation error of instantaneous effect), we have 
    \begin{align*}
        B_{1,ij} =& \+E_{W,t}\left[\frac{W_{t_i}}{\pi} \frac{W_{t_j} - \pi}{\pi (1 - \pi)}  (\delta^\inst_{t_i} - \delta^\inst)  \cdot Y_{t_j}(\bm{0}, \cdots, \bm{0}) \right] \\
        =& \sum_{m = 1}^M \int_{t_i, t_j \in \mathcal{I}_{m}}  \+E_{W}\left[\frac{W_{t_i}}{\pi} \frac{W_{t_j} - \pi}{\pi (1 - \pi)}  \mid t_i, t_j \right] (\delta^\inst_{t_i} - \delta^\inst)  \cdot Y_{t_j}(\bm{0}, \cdots, \bm{0}) f(t_i) f(t_j) dt dt_j \\
        & + \sum_{m = 1}^M \sum_{m^\prime: m^\prime \neq m} \int_{t_i \in \mathcal{I}_{m},  t_j \in \mathcal{I}_{m^\prime}}  \underbrace{\+E_{W}\left[\frac{W_{t_i}}{\pi} \frac{W_{t_j} - \pi}{\pi (1 - \pi)}  \mid t_i, t_j \right]}_{= 0 } \times (\delta^\inst_{t_i} - \delta^\inst)  \cdot Y_{t_j}(\bm{0}, \cdots, \bm{0}) f(t_i) f(t_j) dt_i dt_j \\
        =& \frac{1}{\pi} \sum_{m = 1}^M \int_{t, t^\prime \in \mathcal{I}_{m}}   (\delta^\inst_{t} - \delta^\inst)  \cdot Y_{t^\prime}(\bm{0}, \cdots, \bm{0}) f(t) f(t^\prime) dt dt^\prime \\
        =& \frac{1}{\pi} \sum_{m = 1}^M \Xi^{\inst,(m)}_\dem  \mu^{(m)}_{Y^\mathrm{ctrl}} \, .
    \end{align*}
    For $B_{2,ij}$ (carryover effect approximation error), we have 
    \begin{align*}
        & B_{2,ij} = 
        \underbrace{\+E_{W,t}\left[\alpha_{t_i}\delta^\co_{t_i}(\bm{W}) \cdot \alpha_{t_j} Y_{t_j}(\bm{0}, \cdots, \bm{0}) \right]}_{C_{1,ij}} - \underbrace{\delta^\co \+E_{W,t}\left[\frac{W_{t_i}}{\pi} \alpha_{t_j} Y_{t_j}(\bm{0}, \cdots, \bm{0}) \right]}_{C_{2,ij}} \, .
    \end{align*}
    $C_{1,ij}$ equals to 
    \begin{align*}
        C_{1,ij} =&   \sum_{m, m^\prime = 1}^M  \int_{t_i \in \mathcal{I}_{m},  t_j \in \mathcal{I}_{m^\prime}}  \+E_{W}\left[\frac{W_{t_i} - \pi}{\pi (1 - \pi)} \frac{W_{t_j} - \pi}{\pi (1 - \pi)} \left(\sum_{k = 1}^M   W^{(k)} \int_{u \in \mathcal{I}_{k}} d^\co_{t_i}(u) f(u) d u \right)  \mid t_i, t_j \right]   \\
        & \qquad \qquad \qquad \times \delta^\co_{t_i} Y_{t_j}(\bm{0}, \cdots, \bm{0}) f(t_i) f(t_j) dt_i dt_j \\
        =& \frac{1}{\pi}  \sum_{m = 1}^M  \int_{t,  t^\prime  \in \mathcal{I}_{m}}  \left[\int_{u \in \mathcal{I}_{m}} d^\co_{t}(u) f(u) d u \right]  \delta^\co_{t}  Y_{t^\prime}(\bm{0}, \cdots, \bm{0}) f(t) f(t^\prime) dt dt^\prime \tag{$m = m^\prime = k$} \\
        & + \frac{1}{1 - \pi}  \sum_{m = 1}^M  \int_{t,  t^\prime  \in \mathcal{I}_{m}}  \left[\sum_{k: k \neq m} \int_{u \in \mathcal{I}_{k}} d^\co_{t}(u) f(u) d u \right]  \delta^\co_{t}   Y_{t^\prime}(\bm{0}, \cdots, \bm{0}) f(t) f(t^\prime) dt dt^\prime \tag{$m = m^\prime \neq k$} \\
        =& 2  \sum_{m = 1}^M  \int_{t,  t^\prime  \in \mathcal{I}_{m}}  \delta^\co_{t}   Y_{t^\prime}(\bm{0}, \cdots, \bm{0}) f(t) f(t^\prime) dt dt^\prime \\
        =& 2 \sum_{m=1}^M \Xi^{\co,(m)} \mu^{(m)}_{Y^\mathrm{ctrl}} \, .
    \end{align*}
    $C_{2,ij}$ equals to 
    \begin{align*}
        C_{2,ij} =& \delta^\co \+E_{W,t}\left[\frac{W_{t_i}}{\pi} \alpha_{t_j} Y_{t_j}(\bm{0}, \cdots, \bm{0}) \right] \\
        =& \delta^\co \sum_{m=1}^M \int_{t_i, t_j \in \mathcal{I}_{m}} \underbrace{\+E_W\left[ \frac{W_{t_i}}{\pi} \alpha_{t_j} \mid t_i, t_j\right]}_{= 1/\pi} Y_{t_j}(\bm{0}, \cdots, \bm{0}) f(t_i) f(t_j) dt dt_j \tag{$t_i$ and $t_j$ in the same interval} \\
        & +\delta^\co \sum_{m=1}^M \sum_{m^\prime: m^\prime \neq m} \int_{t_i \in \mathcal{I}_{m},  t_j \in \mathcal{I}_{m^\prime}} \underbrace{\+E_W\left[ \frac{W_{t_i}}{\pi} \alpha_{t_j} \mid t_i, t_j\right]}_{=0} Y_{t_j}(\bm{0}, \cdots, \bm{0}) f(t_i) f(t_j) dt_i dt_j \tag{$t_i$ and $t_j$ in different intervals} \\
        =& \frac{\delta^\co}{\pi} \sum_{m=1}^M \int_{t, t^\prime \in \mathcal{I}_{m}}  Y_{t^\prime}(\bm{0}, \cdots, \bm{0}) f(t) f(t^\prime) dt dt^\prime \\
        =& 2 \delta^\co \sum_{m=1}^M \mu^{(m)}  \mu^{(m)}_{Y^\mathrm{ctrl}} \, .
    \end{align*}
    Combining $C_{1,ij}$ and $C_{2,ij}$, $B_{2,ij}$ is equal to
    \begin{align*}
        B_{2,ij} =& 2 \sum_{m = 1}^M \left( \Xi^{\co,(m)} - \delta^\co \mu^{(m)}   \right)\mu^{(m)}_{Y^\mathrm{ctrl}} = 2 \sum_{m = 1}^M  \Xi^{\co,(m)}_\dem \mu^{(m)}_{Y^\mathrm{ctrl}} \, .
    \end{align*}
    Combining $B_{1,ij}$ and $B_{2,ij}$, $A^{\mathrm{cross}}$ is then equal to 
    \begin{align*}
        A^{\mathrm{cross}} =& 2 \sum_{m = 1}^M \Xi^{\inst,(m)}_\dem  \mu^{(m)}_{Y^\mathrm{ctrl}} + 2 \sum_{m = 1}^M  \Xi^{\co,(m)}_\dem \mu^{(m)}_{Y^\mathrm{ctrl}} = 2  \sum_{m = 1}^M  (\Xi^{(m)} - \delta^\gate \mu^{(m)} ) \cdot \mu^{(m)}_{Y^\mathrm{ctrl}} \, .
    \end{align*}

    Last, we consider the term $A^{\mathrm{treat}}$ (second moment of the treatment effect approximation error). For this term, we have 
    \begin{align*}
        A^{\mathrm{treat}} =& \frac{1}{n^2} \sum_{i,j} \underbrace{\+E_{W,t}\left[ \alpha_{t_i} (\delta^\inst_{t_i} - \delta^\inst) W_{t_i} \cdot \alpha_{t_j} (\delta^\inst_{t_j} - \delta^\inst) W_{t_j} \right] }_{\coloneqq A_{1,ij}}\\
        & + \frac{2}{n^2} \sum_{i,j} \underbrace{\+E_{W,t}\left[ \alpha_{t_i} (\delta^\inst_{t_i} - \delta^\inst) W_{t} \cdot \alpha_{t_j} ( \delta^\co_{t_j}(\bm{W}) - \delta^\co W_{t_j}) \right]}_{\coloneqq A_{2,ij}} \\
        & + \frac{1}{n^2} \sum_{i,j} \underbrace{\+E_{W,t}\left[ \alpha_{t_i} ( \delta^\co_{t_i}(\bm{W}) - \delta^\co W_{t_i}) \alpha_{t_j} ( \delta^\co_{t_j}(\bm{W}) - \delta^\co W_{t_j}) \right] }_{\coloneqq A_{3,ij}} \, .
    \end{align*}
    Below we compute each of $A_{1,ij}$, $A_{2,ij}$, and $A_{3,ij}$. We first compute $A_{1,ij}$. 
    \begin{align*}
        A_{1,ij} =&\+E_{W,t}\left[ \frac{W_{t_i}}{\pi}  \frac{W_{t_j}}{\pi} (\delta^\inst_{t_i} - \delta^\inst)  (\delta^\inst_{t_j} - \delta^\inst) \right] \\
        =& \int_{t, t^\prime \in [0, T]} (\delta^\inst_{t} - \delta^\inst)  (\delta^\inst_{t^\prime} - \delta^\inst) f(t) f(t^\prime) dt dt^\prime \\
        & + \left(\frac{1}{\pi} - 1 \right) \sum_{m = 1}^M \int_{t, t^\prime \in \mathcal{I}_m} (\delta^\inst_{t} - \delta^\inst)  (\delta^\inst_{t^\prime} - \delta^\inst) f(t) f(t^\prime) dt dt^\prime \tag{additional term for $t_i$ and $t_j$ in the same interval} \\
 =&  \sum_{m = 1}^M \int_{t, t^\prime \in \mathcal{I}_m} (\delta^\inst_{t} - \delta^\inst)  (\delta^\inst_{t^\prime} - \delta^\inst) 
 f(t) f(t^\prime) dt dt^\prime = \sum_{m = 1}^M \left[\Xi^{\inst,(m)}_\dem\right]^2 \, . \tag{the first term is zero following that $\+E_{t}[\delta^\inst_{t^\prime}] = \delta^\inst$ and $\pi = 1/2$}
    \end{align*}

    Next we compute $A_{2,ij}$. 
    \begin{align*}
        A_{2,ij} =& \underbrace{\+E_{W,t}\left[\frac{W_{t_i}}{\pi} \frac{W_{t_j} - \pi}{\pi (1 - \pi)} (\delta^\inst_{t_i} - \delta^\inst)  \delta^\co_{t_j}(\bm{W})\right]}_{\coloneqq B_{1,ij}} - \underbrace{\delta^\co \+E_{W,t}\left[\frac{W_{t_i}}{\pi} \frac{W_{t_j}}{\pi} (\delta^\inst_{t_j} - \delta^\inst)  \right] }_{\coloneqq B_{2,ij}} \, .
    \end{align*}
    For $B_{1,ij}$, we have
    \begin{align*}
        B_{1,ij} =& \sum_{m=1}^M \int_{t_i,t_j \in \mathcal{I}_{m}} (\delta^\inst_{t_i} - \delta^\inst) \delta^\co_{t_j} \times \\ &  \+E_{W}\left[\frac{W_{t_i}}{\pi}  \frac{W_{t_j} - \pi}{\pi (1- \pi)}  \left(\sum_{k = 1}^M   W^{(k)} \int_{u \in \mathcal{I}_{k}} d^\co_{t_j}(u) f(u) d u \right) \mid t_i, t_j\right] f(t_i) f(t_j) dt_i dt_j \tag{$t_i$ and $t_j$ in the same interval}  \\
        & + \sum_{m=1}^M \sum_{m^\prime: m^\prime \neq m}\int_{t_i \in \mathcal{I}_{m}, t_j \in \mathcal{I}_{m^\prime}} (\delta^\inst_{t_i} - \delta^\inst)\delta^\co_{t_j} \times \\ & \left. \+E_{W}\left[\frac{W_{t_i}}{\pi}  \frac{W_{t_j} - \pi}{\pi (1- \pi)}  \left( \sum_{k = 1}^M   W^{(k)} \int_{u \in \mathcal{I}_{k}} d^\co_{t_j}(u) f(u) d u \right) \right.\mid t_i, t_j\right] f(t_i) f(t_j) dt_i dt_j  \tag{$t_i$ and $t_j$ in different intervals}  \\
        =& \sum_{m=1}^M  \underbrace{\left(\int_{t \in \mathcal{I}_{m}} (\delta^\inst_{t} - \delta^\inst)   f(t) d t \right)}_{\Xi^{\inst,(m)}_\dem} \underbrace{\left(\int_{t^\prime \in \mathcal{I}_{m}} \delta^\co_{t^\prime}  \left(\sum_{k = 1}^M   \int_{u \in \mathcal{I}_{k}} d^\co_{t^\prime}(u) f(u) d u \right) f(t^\prime) d t^\prime \right) }_{\Xi^{\co,(m)}}\\
        & + \left(\frac{1}{\pi} - 1 \right) \sum_{m=1}^M \left(\int_{t \in \mathcal{I}_{m}} (\delta^\inst_{t} - \delta^\inst)   f(t) d t \right) \underbrace{\left(\int_{t^\prime \in \mathcal{I}_{m}} \delta^\co_{t^\prime}  \left(\int_{u \in \mathcal{I}_{m}} d^\co_{t^\prime}(u) f(u) d u \right) f(t^\prime) d t^\prime \right)}_{I^{(m)}} \tag{additional term for $m = k$}  \\
        & +  \sum_{m=1}^M \sum_{m^\prime: m^\prime \neq m} \left(\int_{t \in \mathcal{I}_{m}} (\delta^\inst_{t} - \delta^\inst)  f(t) d t\right) \underbrace{\left(\int_{t^\prime \in \mathcal{I}_{m^\prime}} \delta^\co_{t^\prime}  \left(\int_{u \in \mathcal{I}_{m^\prime}} d^\co_{t^\prime}(u) f(u) d u \right) f(t^\prime) d t^\prime \right)}_{I^{(m^\prime)}}  \tag{$m^\prime = k$} \\
        =& \sum_{m = 1}^M \Xi^{\inst,(m)}_\dem \Xi^{\co,(m)} + \delta^\co \left(\sum_{m^\prime = 1}^M I^{(m^\prime)} \right) \underbrace{\left( \sum_{m = 1}^M \int_{t \in \mathcal{I}_{m}} (\delta^\inst_{t} - \delta^\inst)  f(t) d t\right)}_{= 0} \tag{$\pi = 1/2$ and combine the second and third terms} \\
        =& \sum_{m = 1}^M \Xi^{\inst,(m)}_\dem \Xi^{\co,(m)} \, .
    \end{align*}
    For $B_{2,ij}$, we have
    \begin{align*}
        B_{2,ij} =&\delta^\co \sum_{m=1}^M \int_{t_i, t_j \in \mathcal{I}_{m}} (\delta^\inst_{t_j} - \delta^\inst)\+E_{W}\left[\frac{W_{t_i}}{\pi} \frac{W_{t_j}}{\pi}  \mid t_i, t_j \right] f(t_i) f(t_j) d t_i d t_j \tag{$t_i$ and $t_j$ in the same interval} \\
        &+ \delta^\co \sum_{m=1}^M \sum_{m^\prime: m^\prime \neq m} \int_{t_i \in \mathcal{I}_{m} , t_j \in \mathcal{I}_{m^\prime}} (\delta^\inst_{t_j} - \delta^\inst)\+E_{W}\left[\frac{W_{t_i}}{\pi} \frac{W_{t_j}}{\pi}  \mid t_i, t_j \right] f(t_i) f(t_j) d t_i d t_j \tag{$t_i$ and $t_j$ in different intervals}  \\
        =&\frac{\delta^\co }{\pi} \sum_{m=1}^M \mu^{(m)} \int_{t^\prime \in \mathcal{I}_{m}} (\delta^\inst_{t^\prime} - \delta^\inst) f(t^\prime)  d t^\prime +\delta^\co  \sum_{m=1}^M \sum_{m^\prime: m^\prime \neq m}  \mu^{(m)} \int_{ t^\prime \in \mathcal{I}_{m^\prime}} (\delta^\inst_{t^\prime} - \delta^\inst)  f(t^\prime)  d t^\prime \\
        =& \delta^\co \sum_{m=1}^M \mu^{(m)} \int_{t^\prime \in \mathcal{I}_{m}} (\delta^\inst_{t^\prime} - \delta^\inst) f(t^\prime)  d t^\prime = \delta^\co \sum_{m=1}^M \mu^{(m)} \Xi^{\inst,(m)}_\dem \, ,
    \end{align*}
    by the definition of $\delta^\inst_{\ell,t}$ and $\pi = 1/2$. Combining $B_{1,ij}$ and $B_{2,ij}$,  $A_{2,ij}$ is equal to
    \begin{align*}
        A_{2,ij} =&~ B_{1,ij} - B_{2,ij} = \sum_{m = 1}^M \Xi^{\inst,(m)}_\dem \left(\Xi^{\co,(m)} - \delta^\co  \mu^{(m)} \right) = \sum_{m = 1}^M \Xi^{\inst,(m)}_\dem \Xi^{\co,(m)}_\dem  \, .
    \end{align*}

    Finally we compute $A_{3,ij}$. 
    \begin{align*}
        A_{3,ij} =& \underbrace{\+E_{W,t}\left[ \frac{W_{t_i} - \pi}{\pi (1- \pi)} \frac{W_{t_j} - \pi}{\pi (1- \pi)}  \delta^\co_{t_i}(\bm{W}) \delta^\co_{t_j}(\bm{W}) \right]}_{\coloneqq B_{1,ij}} - \underbrace{\delta^\co \+E_{W,t}\left[\frac{W_{t_i} - \pi}{\pi (1- \pi)} \frac{W_{t_j}}{\pi}   \delta^\co_{t_i}(\bm{W}) \right]}_{\coloneqq  B_{2,ij}} \\
        & - \underbrace{\delta^\co \+E_{W,t}\left[\frac{W_{t_i}}{\pi}  \frac{W_{t_j} - \pi}{\pi (1- \pi)}   \delta^\co_{t_j}(\bm{W})\right]}_{\coloneqq B_{3,ij}} + \underbrace{(\delta^\co)^2 \+E_{W,t}\left[ \frac{W_{t_i}}{\pi} \frac{W_{t_j}}{\pi} \right]}_{\coloneqq B_{4,ij}} \, .
    \end{align*}
    For $B_{1,ij}$, we have
    \begin{align*}
        B_{1,ij} =& \sum_{m,m^\prime=1}^M \int_{t_i \in \mathcal{I}_{m}, t_j \in \mathcal{I}_{m^\prime}} \delta^\co_{t_i} \delta^\co_{t_j} \+E_{W}\left[\frac{W_{t_i} - \pi}{\pi (1- \pi)} \frac{W_{t_j} - \pi}{\pi (1- \pi)}  \left(\sum_{k = 1}^M   W^{(k)} \int_{u \in \mathcal{I}_{k}} d^\co_{t_i}(u) f(u) d u \right) \right. \\ & \qquad \left.\left(\sum_{k^\prime = 1}^M   W^{(k^\prime)} \int_{u \in \mathcal{I}_{k^\prime}} d^\co_{t_j}(u) f(u) d u \right) \mid t_i, t_j\right] f(t_i) f(t_j) dt_i dt_j \, .
    \end{align*}
    When $m = m^\prime$, the expectation $\+E_W[\cdot \mid t_i, t_j]$ in the last equation is equal to 
    \begin{align*}
        & \+E_{W}\left[\frac{(W_{t_i} - \pi)^2}{\pi^2 (1- \pi)^2}  \left(\sum_{k = 1}^M   W^{(k)} \int_{u \in \mathcal{I}_{k}} d^\co_{t_i}(u) f(u) d u \right)\left(\sum_{k^\prime = 1}^M   W^{(k^\prime)} \int_{u \in \mathcal{I}_{k^\prime}} d^\co_{t_j}(u) f(u) d u \right) \mid t_i, t_j\right] \\
        =& \frac{1}{\pi} \left(\int_{u \in \mathcal{I}_{m}} d^\co_{t}(u) f(u) d u \right) \left(\int_{u \in \mathcal{I}_{m}} d^\co_{t^\prime}(u) f(u) d u \right) \tag{$k = k^\prime = m$} \\ & + \frac{1}{1 - \pi} \sum_{k: k \neq m}  \left(\int_{u \in \mathcal{I}_{k}} d^\co_{t}(u) f(u) d u \right) \left(\int_{u \in \mathcal{I}_{k}} d^\co_{t^\prime}(u) f(u) d u \right) \tag{$k = k^\prime$ and both do not equal to $m$} \\ & + \sum_{k^\prime: k^\prime \neq m}  \left(\int_{u \in \mathcal{I}_{m}} d^\co_{t}(u) f(u) d u \right) \left(\int_{u \in \mathcal{I}_{k^\prime}} d^\co_{t^\prime}(u) f(u) d u \right) \tag{$k = m$, but $k^\prime \neq m$} \\ 
        & + \sum_{k: k \neq m}  \left(\int_{u \in \mathcal{I}_{k}} d^\co_{t}(u) f(u) d u \right) \left(\int_{u \in \mathcal{I}_{m}} d^\co_{t^\prime}(u) f(u) d u \right) \tag{$k \neq m$, but $k^\prime = m$} \\
        & + \frac{\pi}{ 1-\pi} \sum_{k,k^\prime: k \neq m, k^\prime \neq m, k\neq k^\prime} \left(\int_{u \in \mathcal{I}_{k}} d^\co_{t}(u) f(u) d u \right) \left(\int_{u \in \mathcal{I}_{k^\prime}} d^\co_{t^\prime}(u) f(u) d u \right) \tag{$k \neq m$, $k^\prime \neq m$, $k \neq k^\prime$}  \\
        =&1 + \sum_{k = 1}^M  \left(\int_{u \in \mathcal{I}_{k}} d^\co_{t}(u) f(u) d u \right) \left(\int_{u \in \mathcal{I}_{k}} d^\co_{t^\prime}(u) f(u) d u \right)  \tag{$\pi = 1/2$}
    \end{align*}
    where we use $\sum_{k = 1}^K \int_{u \in \mathcal{I}_{k}} d^\co_{t}(u) f(u) d u = 1$ for any $t$. 

    When $m \neq m^\prime$, the term $\+E_W[\cdot \mid t_i, t_j]$ is equal to 
    \begin{align*}
        \+E_W[\cdot \mid t_i, t_j] =& \left(\int_{u \in \mathcal{I}_{m}} d^\co_{t}(u) f(u) d u \right) \left(\int_{u \in \mathcal{I}_{m^\prime}} d^\co_{t^\prime}(u) f(u) d u \right) \\
        & + \left(\int_{u \in \mathcal{I}_{m^\prime}} d^\co_{t}(u) f(u) d u \right) \left(\int_{u \in \mathcal{I}_{m}} d^\co_{t^\prime}(u) f(u) d u \right) \, .
    \end{align*}
    
    Combining both cases ($m = m^\prime$ and $m \neq m^\prime$), $B_{1,ij}$ is equal to 
    \begin{align*}
        B_{1,ij} =& \sum_{m=1}^M \int_{t,t^\prime \in \mathcal{I}_{m}} \delta^\co_{t} \delta^\co_{t^\prime} \left[1 + \sum_{k = 1}^M  \left(\int_{u \in \mathcal{I}_{k}} d^\co_{t}(u) f(u) d u \right) \left(\int_{u \in \mathcal{I}_{k}} d^\co_{t^\prime}(u) f(u) d u \right) \right] f(t) f(t^\prime) dt dt^\prime \\
        & + \sum_{m,m^\prime: m\neq m^\prime} \int_{t \in \mathcal{I}_{m}, t^\prime \in \mathcal{I}_{m^\prime}} \delta^\co_{t} \delta^\co_{t^\prime} \left(\int_{u \in \mathcal{I}_{m}} d^\co_{t}(u) f(u) d u \right) \left(\int_{u \in \mathcal{I}_{m^\prime}} d^\co_{t^\prime}(u) f(u) d u \right) \\
        & + \delta^\co_{t} \delta^\co_{t^\prime}\left(\int_{u \in \mathcal{I}_{m^\prime}} d^\co_{t}(u) f(u) d u \right) \left(\int_{u \in \mathcal{I}_{m}} d^\co_{t^\prime}(u) f(u) d u \right)  f(t) f(t^\prime) dt dt^\prime  \\
        =&  \sum_{m=1}^M \left[\Xi^{\co,(m)}\right]^2 + \sum_{m=1}^M \sum_{m^\prime = 1}^M \left[I^{(m,m^\prime)}\right]^2 + \sum_{m=1}^M \sum_{m^\prime: m^\prime \neq m} \left( I^{(m)}  I^{(m^\prime)}  + I^{(m,m^\prime)} I^{(m^\prime, m)}\right) \, ,
    \end{align*}
    where 
    \[I^{(m,m^\prime)} = \int_{t \in \mathcal{I}_{m},  u \in \mathcal{I}_{m^\prime}} \delta^\co_{t}  d^\co_{t}(u) f(t) f(u)  dt d u \, .  \]
    
    $B_{2,ij}$ is equal to
    \begin{align*}
        & B_{2,ij} = \delta^\co \+E_{W,t}\left[\frac{W_{t_i} - \pi}{\pi (1- \pi)} \frac{W_{t_j}}{\pi}   \delta^\co_{t_i}(\bm{W}) \right] \\ 
        =& \delta^\co \+E_{W,t}\left[\frac{W_{t_i} - \pi}{\pi (1- \pi)} \frac{W_{t_j}}{\pi} \delta^\co_{t_i}  \sum_{k = 1}^M   W^{(k)} \int_{u \in \mathcal{I}_{k}} d^\co_{t_i}(u) f(u) d u \right] \\
        =& \delta^\co \sum_{m=1}^M \int_{t_i, t_j \in \mathcal{I}_{m}} \delta^\co_{t} \+E_{W}\left[\frac{W_{t}}{\pi^2} \left(\sum_{k = 1}^M   W^{(k)} \int_{u \in \mathcal{I}_{k}} d^\co_{t}(u) f(u) d u \right) \mid t \right]f(t_i)  f(t_j) dt_i dt_j \tag{$t_i$ and $t_j$ in the same interval} \\ 
        & +\delta^\co \sum_{m=1}^M \sum_{m^\prime: m^\prime \neq m} \int_{t_i \in \mathcal{I}_{m}, t_j \in \mathcal{I}_{m^\prime}} \delta^\co_{t_i} \cdot \+E_{W}\left[\frac{W_{t_i} - \pi}{\pi (1- \pi)} \frac{W_{t_j}}{\pi}  \left(\sum_{k = 1}^M   W^{(k)} \int_{u \in \mathcal{I}_{k}} d^\co_{t_i}(u) f(u) d u \right) \mid t \right]f(t_i)  f(t_j) dt_i dt_j \tag{$t_i$ and $t_j$ in different intervals} \\
        =& \delta^\co \sum_{m=1}^M \mu^{(m)} \int_{t \in \mathcal{I}_{m}} \delta^\co_{t} \left(1 +  \int_{u \in \mathcal{I}_{m}} d^\co_{t}(u) f(u) d u \right) f(t)  dt \\ 
        & + \delta^\co \sum_{m=1}^M \sum_{m^\prime: m^\prime \neq m} \mu^{(m^\prime)} \int_{t \in \mathcal{I}_{m}} \delta^\co_{t} \left[\int_{u \in \mathcal{I}_{m}} d^\co_{t}(u) f(u) d u \right] f(t) dt \tag{$m = k$} \\
        =& \delta^\co \sum_{m=1}^M \mu^{(m)} \left(\Xi^{\co,(m)} + I^{(m)} \right) + \delta^\co \sum_{m=1}^M \sum_{m^\prime: m^\prime \neq m} \mu^{(m^\prime)} I^{(m)} \\
        =& \delta^\co \left(\sum_{m = 1}^M I^{(m)} +  \sum_{m=1}^M  \mu^{(m)} \Xi^{\co,(m)}   \right) \, .
    \end{align*}
    Similarly we can show that $B_{3,ij} = B_{2,ij}$. 
    
    For $B_{4,ij}$, we have
    \begin{align*}
        B_{4,ij} =& (\delta^\co)^2  \+E_{W,t}\left[ \frac{W_{t_i}}{\pi} \frac{W_{t_j}}{\pi} \right] \\
        =& (\delta^\co)^2   \int  
 f(t_i) f(t_j) dt_i dt_j   +  (\delta^\co)^2\left(\frac{1}{\pi} - 1 \right)  \sum_{m = 1}^M \int_{t_i, t_j \in \mathcal{I}_m}  
 f(t_i) f(t_j) dt_i dt_j   \\
 =& (\delta^\co)^2 +  (\delta^\co)^2 \sum_{m = 1}^M \left[\mu^{(m)}\right]^2  \, .
    \end{align*}
    
    Combining $B_{1,ij}$, $B_{2,ij}$, $B_{3,ij}$ and $B_{4,ij}$, $A_{3,ij}$ is equal to 
    \begin{align*}
        A_{3,ij} =& \sum_{m=1}^M \left[\Xi^{\co,(m)}\right]^2 + \sum_{m=1}^M \sum_{m^\prime = 1}^M \left[I^{(m,m^\prime)}\right]^2 + \sum_{m=1}^M \sum_{m^\prime: m^\prime \neq m} \left(I^{(m)}  I^{(m^\prime)} + I^{(m,m^\prime)} I^{(m^\prime, m)}\right)\\
        & - 2 \delta^\co \left(\sum_{m = 1}^M  I^{(m)} +  \sum_{m=1}^M  \mu^{(m)} \Xi^{\co,(m)}   \right) + (\delta^\co)^2 +  (\delta^\co)^2 \sum_{m = 1}^M \left[\mu^{(m)}\right]^2  \\
        =& \sum_{m=1}^M \left[\Xi^{\co,(m)}_\dem\right]^2 +  \left(\delta^\co - \sum_{m=1}^M  I^{(m)}  \right)^2 + \sum_{m=1}^M \sum_{m^\prime: m^\prime \neq m} \left(\left[I^{(m,m^\prime)}\right]^2 + I^{(m,m^\prime)} I^{(m^\prime, m)}\right) \, ,
    \end{align*}
       where $\Xi^{\co,(m)}_\dem =  \Xi^{\co,(m)} -\delta^\co \mu^{(m)}$. 

       Combining $A_{1,ij}$, $A_{2,ij}$ and $A_{3,ij}$, $A^{\mathrm{treat}}$ is equal to
    \begin{align*}
        & A^{\mathrm{treat}} = A_{1,ij} + 2 A_{2,ij} + A_{3,ij} \\
        =&  \sum_{m = 1}^M \left[\Xi^{\inst,(m)}_\dem\right]^2 + 2 \sum_{m = 1}^M \Xi^{\inst,(m)}_\dem \Xi^{\co,(m)}_\dem + \sum_{m=1}^M \left[\Xi^{\co,(m)}_\dem\right]^2  \\ & +  \left(\delta^\co - \sum_{m=1}^M  I^{(m)}  \right)^2 +  \sum_{m=1}^M \sum_{m^\prime: m^\prime \neq m} \left(\left[I^{(m,m^\prime)}\right]^2 + I^{(m,m^\prime)} I^{(m^\prime, m)}\right) \\
        =& \sum_{m=1}^M \left(\Xi^{\inst,(m)}_\dem  + \Xi^{\co,(m)}_\dem \right)^2 + \left(\sum_{m = 1}^M I^{(m)} - \delta^\co\right)^2   + \sum_{m=1}^M \sum_{m^\prime \neq m} \left(\left[I^{(m,m^\prime)}\right]^2 + I^{(m,m^\prime)} I^{(m^\prime, m)}\right)  \, .
    \end{align*}

    Combining $A^{\mathrm{treat}}$, $A^{\mathrm{cross}}$, and $A^{\mathrm{control}}$, the second moment of $\mathcal{E}_{\mathrm{carryover}}$ is equal to 
    \begin{align*}
        &\+E_{W,\varepsilon,t} \left[ \left(\mathcal{E}_{\mathrm{carryover}}\right)^2\right] = A^{\mathrm{treat}} + 2A^{\mathrm{cross}} + A^{\mathrm{control}} \\
        =& \sum_{m=1}^M \left(\Xi^{\inst,(m)}_\dem  + \Xi^{\co,(m)}_\dem  \right)^2 + \left(\sum_{m = 1}^M I^{(m)} - \delta^\co\right)^2   + \sum_{m=1}^M \sum_{m^\prime \neq m} \left(\left[I^{(m,m^\prime)}\right]^2 + I^{(m,m^\prime)} I^{(m^\prime, m)}\right) \\
        & +4  \sum_{m = 1}^M  \Xi^{(m)} \mu^{(m)}_{Y^\mathrm{ctrl}} + 4\sum_{m = 1}^M \left[\mu^{(m)}_{Y^\mathrm{ctrl}} \right]^2 \\
        =&  \sum_{m=1}^M \left(\Xi^{\inst,(m)}_\dem  + \Xi^{\co,(m)}_\dem  + 2 \mu^{(m)}_{Y^\mathrm{ctrl}}\right)^2 + \left(\sum_{m = 1}^M I^{(m)} - \delta^\co\right)^2    + \sum_{m=1}^M \sum_{m^\prime \neq m} \left(\left[I^{(m,m^\prime)}\right]^2 + I^{(m,m^\prime)} I^{(m^\prime, m)}\right) \, .
    \end{align*} 
    We then finish the proof of Lemma \ref{lemma:second-moment-carryover}. \halmos
 
\end{proof}

\begin{lemma}[Expected value of the product of $\mathcal{E}_{\mathrm{inst}}$ and $ \mathcal{E}_{\mathrm{carryover}}$] \label{lemma:product-two-carryover-effects}

Under the assumptions in Theorem \ref{theorem:switchback-bias}, the expected value of the product of $\mathcal{E}_{\mathrm{inst}}$ and $\mathcal{E}_{\mathrm{carryover}}$ equals to 
    \begin{align*}
        \+E_{W,\varepsilon,t} \left[\mathcal{E}_{\mathrm{inst}} \mathcal{E}_{\mathrm{carryover}}  \right] =& \delta^\gate \sum_{m = 1}^M \mu^{(m)} \left(\Xi^{\inst,(m)}_\dem + \Xi^{\co,(m)}_\dem + 2  \mu^{(m)}_{Y^\mathrm{ctrl}}  \right) \, .
    \end{align*}
    
\end{lemma}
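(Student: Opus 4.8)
The plan is to exploit that neither $\mathcal{E}_{\mathrm{inst}}$ nor $\mathcal{E}_{\mathrm{carryover}}$ depends on the measurement errors, so the expectation collapses to $\+E_{W,t}$. Writing $\mathcal{E}_{\mathrm{inst}} = \delta^\gate \frac{1}{n}\sum_{j=1}^n(\frac{W_{t_j}}{\pi}-1)$ and reusing the decomposition of $\mathcal{E}_{\mathrm{carryover}}$ already employed in the proof of Lemma \ref{lemma:second-moment-carryover}, namely into an instantaneous-heterogeneity piece $\alpha_{t_i}(\delta^\inst_{t_i}-\delta^\inst)W_{t_i}$, a carryover piece $\alpha_{t_i}(\delta^\co_{t_i}(\bm{W}) - W_{t_i}\delta^\co)$, and a control piece $\alpha_{t_i}Y_{t_i}(\bm{0},\cdots,\bm{0})$, the product $\mathcal{E}_{\mathrm{inst}}\mathcal{E}_{\mathrm{carryover}}$ splits into a double sum over $(i,j)$ of three contributions. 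Since events are i.i.d. from $f$, each contribution reduces to a two-event integral that only distinguishes whether $t_i$ and $t_j$ lie in the same interval; as in the earlier proofs, the diagonal $i=j$ terms and the $O(1/n)$ normalization are negligible and the surviving mass comes from same-interval pairs.

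For the instantaneous and control pieces the computation is short. The first I would evaluate using $\alpha_{t_i}W_{t_i} = W_{t_i}/\pi$ together with the elementary moment $\+E_W[(\frac{W_{t_j}}{\pi}-1)\frac{W_{t_i}}{\pi}\mid t_i,t_j] = (\frac{1}{\pi}-1)\mathbbm{1}(t_i,t_j\text{ in the same interval})$, which at $\pi=1/2$ equals the indicator; integrating $(\delta^\inst_{t_i}-\delta^\inst)$ against this indicator yields $\delta^\gate\sum_m \mu^{(m)}\Xi^{\inst,(m)}_\dem$. For the control piece I would instead use $\+E_W[(\frac{W_{t_j}}{\pi}-1)\alpha_{t_i}\mid t_i,t_j] = \frac{1}{\pi}\mathbbm{1}(\text{same interval})$, producing $\delta^\gate\cdot\frac{1}{\pi}\sum_m \mu^{(m)}\mu^{(m)}_{Y^\mathrm{ctrl}}$, i.e.\ $2\delta^\gate\sum_m\mu^{(m)}\mu^{(m)}_{Y^\mathrm{ctrl}}$ at $\pi=1/2$.

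The carryover piece is the main obstacle and I would handle it last. Substituting the kernel form $\delta^\co_{t_i}(\bm{W}) = \delta^\co_{t_i}\sum_{k=1}^M W^{(k)}\int_{\mathcal{I}_k}d^\co_{t_i}(u)f(u)\,du$ from Assumption \ref{ass:interference-structure} introduces a third treatment indicator $W^{(k)}$ alongside $W_{t_j}$ and the $W_{t_i}$ hidden inside $\alpha_{t_i}$, so $\+E_W[(\frac{W_{t_j}}{\pi}-1)\alpha_{t_i}W^{(k)}]$ must be resolved by a case analysis on how the intervals of $t_i$, $t_j$, and the source interval $\mathcal{I}_k$ coincide. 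The two facts I expect to drive the simplification are: (i) when $t_i$ and $t_j$ fall in different intervals every case vanishes, since one of the two zero-mean factors decouples; and (ii) when they share an interval $\mathcal{I}_m$, the $k=m$ and $k\neq m$ cases carry different weights, but after summing over $k$ and invoking the normalization $\sum_k\int_{\mathcal{I}_k}d^\co_{t_i}(u)f(u)\,du = 1$, the residual factor $\frac{1-2\pi}{\pi}$ multiplying $\int_{\mathcal{I}_m}d^\co_{t_i}(u)f(u)\,du$ is exactly zero at $\pi=1/2$. This leaves $\delta^\co_{t_i}$ integrated against the same-interval indicator, i.e.\ $\delta^\gate\sum_m\mu^{(m)}\Xi^{\co,(m)}$, while the subtracted $-W_{t_i}\delta^\co$ term gives $-\delta^\gate\delta^\co\sum_m[\mu^{(m)}]^2$, so the piece collapses to $\delta^\gate\sum_m\mu^{(m)}\Xi^{\co,(m)}_\dem$.

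Adding the three contributions then yields $\delta^\gate\sum_m\mu^{(m)}(\Xi^{\inst,(m)}_\dem + \Xi^{\co,(m)}_\dem + 2\mu^{(m)}_{Y^\mathrm{ctrl}})$, the claimed identity. The delicate point to get right is the $\pi$-dependence in the carryover piece: the cancellation that removes the kernel-dependent term is special to the balanced probability $\pi=1/2$, so I would keep $\pi$ symbolic throughout the case analysis and only substitute $\pi=1/2$ at the end to make the simplification transparent.
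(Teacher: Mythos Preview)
Your proposal is correct and follows essentially the same route as the paper's own proof: the same three-way decomposition of $\mathcal{E}_{\mathrm{carryover}}$, the same pairwise moment identities for $\+E_W[(\frac{W_{t_j}}{\pi}-1)\alpha_{t_i}W^{(k)}]$ across the interval configurations, and the same use of the kernel normalization $\sum_k\int_{\mathcal{I}_k}d^\co_{t}(u)f(u)\,du=1$ to collapse the carryover piece. Your observation that the residual kernel-dependent coefficient is $\frac{1-2\pi}{\pi}$, vanishing precisely at $\pi=1/2$, makes the mechanism of the simplification in $B_{1,ij}$ more explicit than the paper's presentation, but the computation is the same.
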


\begin{proof}{Proof of Lemma \ref{lemma:product-two-carryover-effects}}

    Next we compute the expected value of the product of $\mathcal{E}_{\mathrm{carryover}} $ and $\mathcal{E}_{\mathrm{inst}}$.
    \begin{align*}
    & \+E_{W,\varepsilon,t} \left[ \mathcal{E}_{\mathrm{inst}} \mathcal{E}_{\mathrm{carryover}}  \right] \\
        =& \+E_{W,\varepsilon,t}\left[\delta^\gate 
 \left(\frac{1}{n} \sum_{j = 1}^n \frac{W_{t_j}}{\pi} - 1 \right) \left(\frac{1}{n} \sum_{i = 1}^n \alpha_{t_i} \left(Y_{t_i}(\bm{W}, \bm{0}, \cdots, \bm{0}) - Y_{t_i}(\bm{0}, \cdots, \bm{0}) - W_{t_i} \delta^\gate\right)\right) \right] \\
 & + \+E_{W,\varepsilon,t}\left[\delta^\gate 
 \left(\frac{1}{n} \sum_{j = 1}^n \frac{W_{t_j}}{\pi} - 1 \right) \left(\frac{1}{n} \sum_{i = 1}^n \alpha_{t_i} Y_{t_i}(\bm{0}, \cdots, \bm{0})\right) \right] \\
        =& \frac{1}{n^2} \sum_{i,j} \underbrace{\delta^\gate \+E_t\bigg[(\delta^\inst_{t_i} - \delta^\inst) \cdot \frac{W_{t_i}}{\pi} \left(\frac{W_{t_j}}{\pi} - 1 \right) \bigg]}_{\coloneqq A_{1,ij}} \\
        & + \frac{1}{n^2} \sum_{i,j} \underbrace{\delta^\gate \+E_{W,t} \left[ \alpha_{t_i}( \delta^\co_{t_i}(\bm{W}) - \delta^\co W_{t_i}) \left(\frac{W_{t_j}}{\pi} - 1 \right)\right] }_{\coloneqq A_{2,ij}} \\
        & + \frac{1}{n^2} \sum_{i,j} \underbrace{\delta^\gate \+E_{W,t} \left[ \alpha_{t_i}  Y_{t_i}(\bm{0}, \cdots, \bm{0})\left(\frac{W_{t_j}}{\pi} - 1 \right)\right] }_{\coloneqq A_{3,ij}} \, .
    \end{align*}
    For $A_{1,ij}$, we have 
    \begin{align*}
        A_{1,ij} =& \delta^\gate \sum_{m = 1}^M \int_{t_i,t_j \in \mathcal{I}_{m}} (\delta^\inst_{t_i} - \delta^\inst) \cdot \+E_{W}\left[\frac{W_{t_i}}{\pi} \left(\frac{W_{t_j}}{\pi} - 1 \right) \mid t_i, t_j\right] f(t_i) f(t_j) dt_i dt_j \\
        & + \delta^\gate \sum_{m = 1}^M \sum_{m^\prime: m^\prime \neq m} \int_{t_i \in \mathcal{I}_{m},t_j \in \mathcal{I}_{m^\prime}} (\delta^\inst_{t_i} - \delta^\inst) \cdot \underbrace{\+E_{W}\left[\frac{W_{t_i}}{\pi} \left(\frac{W_{t_j}}{\pi} - 1 \right) \mid t_i, t_j\right]}_{= 0 ~\text{as $t_i$ and $t_j$ in different interval}} f(t_i) f(t_j) dt_i dt_j \\
        =& \delta^\gate  \frac{1-\pi}{\pi} \sum_{m = 1}^M \int_{t,t^\prime \in \mathcal{I}_{m}} (\delta^\inst_{t} - \delta^\inst)  f(t) f(t^\prime) dt dt^\prime \\
        =&  \delta^\gate \sum_{m = 1}^M \Xi^{\inst,(m)}_\dem \mu^{(m)} \, . \tag{$\pi = 1/2$}
    \end{align*}
    For $A_{2,ij}$, we have 
    \begin{align*}
        A_{2,ij} =& \underbrace{\delta^\gate \+E_{W,t} \left[ \alpha_{t_i} \delta^\co_{t_i}(\bm{W}) \left(\frac{W_{t_j}}{\pi} - 1 \right)\right] }_{\coloneqq B_{1,ij}}  - \underbrace{\delta^\gate \delta^\co\+E_{W,t} \left[ \alpha_{t_i}W_{t_i} \left(\frac{W_{t_j}}{\pi} - 1 \right)\right]}_{\coloneqq B_{2,ij}} \, .
    \end{align*}
    For $B_{2,ij}$, we have
    \begin{align*}
        B_{2,ij} =& \delta^\gate \delta^\co \+E_{W,t} \left[ \alpha_{t_i}W_{t_i} \left(\frac{W_{t_j}}{\pi} - 1 \right)\right] = \delta^\co \+E_{W,t} \left[ \frac{W_{t_i}}{\pi}  \left(\frac{W_{t_j}}{\pi} - 1 \right)\right] \\
        =& \delta^\gate \delta^\co \frac{1 -\pi}{\pi }\sum_{m=1}^M \int_{t, t^\prime \in \mathcal{I}_{m} } f(t) f(t^\prime) d t d t^\prime \tag{$t_i$ and $t_j$ in the same interval} \\
        =& \delta^\gate \delta^\co\sum_{m=1}^M \left[\mu^{(m)}\right]^2 \, . \tag{$\pi = 1/2$}
    \end{align*}
    For $B_{1,ij}$, we have
    \begin{align*}
        &B_{1,ij} = \delta^\gate \+E_{W,t} \left[ \alpha_{t_i} \delta^\co_{t_i}(\bm{W}) \left(\frac{W_{t_j}}{\pi} - 1 \right)\right] \\ 
        =& \delta^\gate \sum_{m = 1}^M \int_{t_i, t_j \in \mathcal{I}_{m}} \+E_W\left[\frac{W_{t_i} - \pi}{\pi (1 - \pi)} \frac{W_{t_j} - \pi}{\pi}  \delta^\co_{t_i}(\bm{W}) \mid t_i, t_j\right] f(t_i) f(t_j) dt_i d t_j  \\
        & + \delta^\gate \sum_{m = 1}^M \sum_{m^\prime: m^\prime \neq m} \int_{t_i \in \mathcal{I}_{m}, t_j \in \mathcal{I}_{m^\prime}} \+E_W\left[\frac{W_{t_i} - \pi}{\pi (1 - \pi)} \frac{W_{t_j} - \pi}{\pi}    \delta^\co_{t_i}(\bm{W}) \mid t_i, t_j\right] f(t_i) f(t_j) dt_i d t_j \tag{this term is zero as $t_i$ and $t_j$ in different intervals}  \\
        =& \delta^\gate \sum_{m = 1}^M \int_{t, t^\prime \in \mathcal{I}_{m}} \delta^\co_{t} \left(\frac{1 - \pi}{\pi} \left[\int_{u \in \mathcal{I}_{m}} d^\co_{t}(u) f(u) d u \right] + \sum_{k: k \neq m} \frac{\pi}{\pi }  \left[\int_{u \in \mathcal{I}_{k}} d^\co_{t}(u) f(u) d u \right] \right)   f(t) f(t^\prime) dt d t^\prime   \\
        =& \delta^\gate \sum_{m = 1}^M \int_{t, t^\prime \in \mathcal{I}_{m}} \delta^\co_{t} f(t) f(t^\prime) dt d t^\prime   =  \delta^\gate \sum_{m=1}^M \mu^{(m)} \Xi^{(m)} \, .
    \end{align*}
    Combining $B_{1,ij}$ and $B_{2,ij}$, $A_{2,ij}$ is equal to 
    \begin{align*}
        A_{2,ij} =\delta^\gate \sum_{m=1}^M \mu^{(m)} \Xi^{(m)} - \delta^\gate \delta^\co\sum_{m=1}^M \left[\mu^{(m)}\right]^2 = \delta^\gate \sum_{m=1}^M \mu^{(m)} \Xi^{(m)}_\dem   \, .
    \end{align*}
    For $A_{3,ij}$, we have 
    \begin{align*}
        A_{3,ij} =& \delta^\gate \+E_{W,t} \left[ \alpha_{t_i}  Y_{t_i}(\bm{0}, \cdots, \bm{0})\left(\frac{W_{t_j}}{\pi} - 1 \right)\right] \\
        =& 2 \delta^\gate \sum_{m = 1}^M \int_{t, t^\prime \in \mathcal{I}_{m}} Y_{t}(\bm{0}, \cdots, \bm{0}) f(t) f(t^\prime) dt d t^\prime   \\
        =& 2 \delta^\gate \sum_{m = 1}^M \mu^{(m)}\mu^{(m)}_{Y^\mathrm{ctrl}} \, .
    \end{align*}
    Combining $A_{1,ij}$, $A_{2,ij}$ and $A_{3,ij}$, $\+E_{W,\varepsilon,t} \left[ \mathcal{E}_{\mathrm{inst}} \cdot \mathcal{E}_{\mathrm{carryover}}  \right]$ is equal to
    \begin{align*}
        \+E_{W,\varepsilon,t} \left[ \mathcal{E}_{\mathrm{inst}} \cdot \mathcal{E}_{\mathrm{carryover}}  \right] =& A_{1,ij} + A_{2,ij}  + A_{3,ij} \\
        =&  \delta^\gate \sum_{m = 1}^M \mu^{(m)} \left(\Xi^{\inst,(m)}_\dem + \Xi^{\co,(m)}_\dem + 2  \mu^{(m)}_{Y^\mathrm{ctrl}}  \right) \, .
    \end{align*}
    We then finish the proof of Lemma \ref{lemma:product-two-carryover-effects}. \halmos
\end{proof}

\begin{lemma}[Second moment of $\mathcal{E}_{\mathrm{simul}}$]\label{lemma:simul-second-moment}
    Under the assumptions in Theorem \ref{theorem:switchback-bias}, the second moment of $\mathcal{E}_{\mathrm{simul}}$ is 
    \begin{align*}
         \+E_{W,\varepsilon,t} \left[ \left(\mathcal{E}_{\mathrm{simul}}\right)^2 \right]
        = \sum_{m = 1}^M \sum_{m^\prime = 1}^M S^{(m,m^\prime)}_{\mathrm{var}} \, ,
    \end{align*}
    where $S^{(m,m^\prime)}_{\mathrm{var}}$ is defined in Equation \eqref{eqn:S-m-mprime-var}.
\end{lemma}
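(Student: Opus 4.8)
The plan is to expand the square of $\mathcal{E}_{\mathrm{simul}}$ from the error decomposition in Equation \eqref{eqn:decomposition} and reduce the resulting double sum over events to a double integral over interval pairs, exactly mirroring the structure already used in Lemmas \ref{lemma:measurement-error} and \ref{lemma:second-moment-inst}. Writing $g_{t} \coloneqq Y_{t}(\bm{W}, \bm{W}^\s_1, \cdots, \bm{W}^\s_K) - Y_{t}(\bm{W}, \bm{0}, \cdots, \bm{0})$ for the marketplace-level effect of the simultaneous interventions, we have $\mathcal{E}_{\mathrm{simul}}^2 = \frac{1}{n^2}\sum_{i,j}\alpha_{t_i}\alpha_{t_j}g_{t_i}g_{t_j}$. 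Since the marketplace potential outcomes do not depend on the measurement errors, the expectation over $\varepsilon$ is trivial; and because events are sampled i.i.d.\ from $f$ (Assumption \ref{ass:exogeneity}), the normalized double sum reduces to $\+E_{W,\varepsilon,t}[\mathcal{E}_{\mathrm{simul}}^2] = \+E_{W,t,t'}[\alpha_{t}\alpha_{t'}\,g_{t}g_{t'}]$ with $t, t'$ two independent draws from $f$.

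First I would take the conditional expectation over the simultaneous designs $\bm{W}^\s_1, \cdots, \bm{W}^\s_K$ given $(\bm{W}, t, t')$: by the definition in Equation \eqref{eqn:delta-simul-2} this produces $\delta^{\simul,2}_{t,t'}(\bm{W})$, leaving $\+E_{W}[\alpha_{t}\alpha_{t'}\,\delta^{\simul,2}_{t,t'}(\bm{W})]$. I would then partition the integration region according to the intervals $t \in \mathcal{I}_m$, $t' \in \mathcal{I}_{m'}$, using that $\alpha_{t}$ depends on $\bm{W}$ only through $W^{(m)}$ (it is constant across an interval), and likewise $\alpha_{t'}$ only through $W^{(m')}$.

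The two cases are then handled separately. When $m = m'$ the two weights coincide and $\alpha_{t}\alpha_{t'} = 1/(\pi(1-\pi)) = 4$ is deterministic, so the integrand collapses to $4\,\+E_W[\delta^{\simul,2}_{t,t'}(\bm{W})\mid t, t']$. When $m \neq m'$, the assignments $W^{(m)}, W^{(m')}$ are independent, each equal to $1$ with probability $1/2$; conditioning on $\bm{W}^{(-m,-m')}$ and averaging over the four equally likely values of $(W^{(m)}, W^{(m')})$ turns $\+E_W[\alpha_{t}\alpha_{t'}\delta^{\simul,2}_{t,t'}(\bm{W})]$ into the signed four-point difference $4\,\Phi^{2\dagger}_{t,t'}$ of Equation \eqref{eqn:phi-2}, the overall factor of $4$ being exactly the product of the $\pm 2$ values taken by each weight at $\pi = 1/2$ against the $1/4$ normalization sitting inside $\Phi^{2\dagger}_{t,t'}$. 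Matching both integrands against Equation \eqref{eqn:S-m-mprime-var} shows that $\int_{t\in\mathcal{I}_m, t'\in\mathcal{I}_{m'}}\+E_W[\alpha_t\alpha_{t'}\delta^{\simul,2}_{t,t'}(\bm{W})]f(t)f(t')\,dt\,dt' = S^{(m,m')}_{\mathrm{var}}$, and summing over $m, m'$ yields the claim.

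The main obstacle will be the bookkeeping in the $m \neq m'$ case: verifying that conditioning on $\bm{W}^{(-m,-m')}$ and marginalizing over $(W^{(m)}, W^{(m')})$ reproduces precisely the alternating-sign four-term structure of $\Phi^{2\dagger}_{t,t'}$, and that every numerical factor (the $\pm 2$ from each weight, the $1/4$ from the uniform prior on $(W^{(m)}, W^{(m')})$, and the prefactor $4$ in $S^{(m,m')}_{\mathrm{var}}$) lines up. A secondary point to track — as in Lemmas \ref{lemma:measurement-error} and \ref{lemma:second-moment-inst} — is the passage from $\frac{1}{n^2}\sum_{i,j}$ to independent draws $t, t'$, i.e.\ that the diagonal $i = j$ contribution is of lower order and vanishes in the continuous-sampling framework.
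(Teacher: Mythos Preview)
Your proposal is correct and follows essentially the same route as the paper's own proof: expand the square, use i.i.d.\ sampling to reduce the normalized double sum to an expectation over independent $t,t'$, condition on the simultaneous designs to produce $\delta^{\simul,2}_{t,t'}(\bm{W})$, and then partition by interval pairs, handling the $m=m'$ and $m\neq m'$ cases exactly as you describe. The paper treats the passage from $\frac{1}{n^2}\sum_{i,j}$ to independent draws without isolating the diagonal, so your remark about the $i=j$ contribution is already more careful than the original.
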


\begin{proof}{Proof of Lemma \ref{lemma:simul-second-moment}}
As events are sampled i.i.d. from $f(t)$, we have
    \begin{align*}
        &\+E_{W,\varepsilon,t} \left[ \left(\mathcal{E}_{\mathrm{simul}}\right)^2 \right]
        = \+E_{W,\varepsilon,t} \left[\left(\frac{1}{n} \sum_{i = 1}^n\alpha_{t_i} \left(Y_{t_i}(\bm{W}, \bm{W}^\s_1, \cdots, \bm{W}^\s_K) - Y_{t_i}(\bm{W}, \bm{0}, \cdots, \bm{0})  \right) \right)^2 \right] \\
        =& \frac{1}{n^2} \sum_{i,j} \+E_{W,t}\left[\frac{W_{t_i} - \pi}{\pi (1 - \pi)} \frac{W_{t_j} - \pi}{\pi (1 - \pi)} \+E_{\bm{W}^\s_1, \cdots, \bm{W}^\s_K } \left[ \left(Y_{t_i}(\bm{W}, \bm{W}^\s_1, \cdots, \bm{W}^\s_K) - Y_{t}(\bm{W}, \bm{0}, \cdots, \bm{0})\right) \times \right. \right. \\ &\quad  \left(Y_{t_j}(\bm{W}, \bm{W}^\s_1, \cdots, \bm{W}^\s_K) - Y_{t_j}(\bm{W}, \bm{0}, \cdots, \bm{0})\right)  \left.\left. \mid \bm{W}, t_i, t_j \right]\right] \tag{by the law of total expectation} \\
        =& \+E_{W,t}\left[\frac{W_{t_i} - \pi}{\pi (1 - \pi)} \frac{W_{t_j} - \pi}{\pi (1 - \pi)} \cdot  \delta^{\simul,2}_{ t_i, t_j}(\bm{W}) \right] \tag{by definition of $\delta^{\simul,2}_{ t_i, t_j}(\bm{W})$ in Equation \eqref{eqn:delta-simul-2}} \\ 
        =&4 \sum_{m = 1}^M \int_{t,t^\prime \in \mathcal{I}_{m}} \+E_{W}[\delta^{\simul,2}_{ t, t^\prime}(\bm{W}) \mid t, t^\prime] f(t) f(t^\prime) dt dt^\prime  \\
        & + 4\sum_{m = 1}^M \sum_{m^\prime: m^\prime \neq m}\int_{t \in \mathcal{I}_{m},t^\prime \in \mathcal{I}_{m^\prime}} \Phi^{2\dagger}_{t,t^\prime} f(t) f(t^\prime) dt dt^\prime \tag{by the definition of $\Phi^{2\dagger}_{t,t^\prime}$ in Equation \eqref{eqn:phi-2} } \\
        =& \sum_{m = 1}^M \sum_{m^\prime = 1}^M S^{(m,m^\prime)}_{\mathrm{var}} \,
    \end{align*}
    following the definition of $S^{(m,m^\prime)}_{\mathrm{var}}$ in Equation \eqref{eqn:S-m-mprime-var}.
    We then finish the proof of Lemma \ref{lemma:simul-second-moment}. \halmos
\end{proof}

\begin{lemma}[Expected value of the product between $\mathcal{E}_{\mathrm{simul}}$ and $\mathcal{E}_{\mathrm{inst}} $]\label{lemma:product-simul-inst}
    Under the assumptions in Theorem \ref{theorem:bias-variance-switchback}, we have
    \begin{align*}
         \+E_{W,\varepsilon,t} \left[ \mathcal{E}_{\mathrm{simul}} \cdot \mathcal{E}_{\mathrm{inst}} \right]  
         =&   \sum_{m = 1}^M \sum_{m^\prime = 1}^M \delta^\gate \mu^{(m^\prime)} S^{(m,m^\prime)}_{1}   \, ,
     \end{align*}
     where $S^{(m,m^\prime)}_{1}$ is defined in Equation \eqref{eqn:S-m-mprime-1}.
\end{lemma}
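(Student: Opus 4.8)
The plan is to follow the template established in the proofs of Lemmas \ref{lemma:simul-bias} and \ref{lemma:product-two-carryover-effects}, combining the decomposition \eqref{eqn:decomposition} with the independence structure of the interval-level treatment assignments. First I would substitute $\mathcal{E}_{\mathrm{inst}} = \delta^\gate\big(\tfrac1n\sum_j(\tfrac{W_{t_j}}{\pi}-1)\big)$ and $\mathcal{E}_{\mathrm{simul}} = \tfrac1n\sum_i\alpha_{t_i}\big(Y_{t_i}(\bm{W},\bm{W}^\s_1,\dots,\bm{W}^\s_K)-Y_{t_i}(\bm{W},\bm{0},\dots,\bm{0})\big)$ and expand the product into a double sum $\tfrac{\delta^\gate}{n^2}\sum_{i,j}$. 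Conditioning on $\bm{W}$ and the event times and applying the law of total expectation over the simultaneous designs, the difference of potential outcomes in the $i$-th summand collapses to $\delta^\simul_{t_i}(\bm{W})$ by its definition in Section \ref{subsec:block-stats}, while the factor $\tfrac{W_{t_j}}{\pi}-1$ is unaffected because it does not depend on $\bm{W}^\s$.

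Next, exactly as in the earlier lemmas, I would use that events are sampled i.i.d.\ from $f$ to replace $\tfrac1{n^2}\sum_{i,j}\+E_{W,t}[\cdot]$ by the generic expectation with two independent draws $t,t'$, and then partition the two time integrals by intervals, writing $t\in\mathcal{I}_m$ and $t'\in\mathcal{I}_{m'}$. On $\mathcal{I}_m$ we have $\alpha_t = (W^{(m)}-\pi)/(\pi(1-\pi))$ and on $\mathcal{I}_{m'}$ we have $\tfrac{W_{t'}}{\pi}-1 = (W^{(m')}-\pi)/\pi$; since $t'$ enters the integrand only through the interval index $m'$ (via $W^{(m')}$) and the density, integrating it out over $\mathcal{I}_{m'}$ produces the factor $\mu^{(m')}$. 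What remains is to evaluate the $W$-expectation $\+E_W[\alpha_t(\tfrac{W_{t'}}{\pi}-1)\,\delta^\simul_t(\bm{W})]$ for each pair $(m,m')$.

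The computation then splits into two cases. For $m=m'$, setting $\pi=1/2$ makes $\alpha_t(\tfrac{W^{(m)}}{\pi}-1) = (W^{(m)}-\tfrac12)^2/(\pi^2(1-\pi))$, and because $(W^{(m)}-\tfrac12)^2\equiv\tfrac14$ this weight is the deterministic constant $2$, yielding $2\,\+E_W[\delta^\simul_t(\bm{W})]$. For $m\neq m'$ the weight $\alpha_t(\tfrac{W^{(m')}}{\pi}-1)$ depends on both $W^{(m)}$ and $W^{(m')}$; conditioning on the four equally likely realizations of $(W^{(m)},W^{(m')})$ and marginalizing over $\bm{W}^{(-m,-m')}$ produces an alternating sum that, after the $\pi=1/2$ normalization, equals exactly $2\,\Phi_t^{\simul,(-m')}$ with $\Phi_t^{\simul,(-m')}$ as in \eqref{eqn:phi-simul}. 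Recognizing that $2\int_{t\in\mathcal{I}_m}\big(\bm{1}(m=m')\+E_W[\delta^\simul_t(\bm{W})]+\bm{1}(m\neq m')\Phi_t^{\simul,(-m')}\big)f(t)\,dt$ is precisely $S^{(m,m')}_1$ from \eqref{eqn:S-m-mprime-1}, and reinstating the $\delta^\gate\mu^{(m')}$ factor, summing over $m,m'$ gives the claimed identity.

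I expect the main obstacle to be the $m\neq m'$ case: one must resist factoring $\+E_W[\alpha_t(\tfrac{W_{t'}}{\pi}-1)\delta^\simul_t(\bm{W})]$ as a product, since $\delta^\simul_t(\bm{W})$ generally depends on $W^{(m)}$ (and on $W^{(m')}$), and instead carry out the full four-term expansion over $(W^{(m)},W^{(m')})$ and verify that the signs and the $1/4$ prefactor line up with the definition of $\Phi_t^{\simul,(-m')}$. The remaining steps — the $\bm{W}^\s$-marginalization, the i.i.d.\ reduction, and the $m=m'$ constant-weight simplification — are routine given the earlier lemmas.
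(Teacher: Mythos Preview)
Your proposal is correct and follows essentially the same route as the paper's proof: expand the product, pass to $\delta^\simul_t(\bm{W})$ via conditioning on $\bm{W}$, split by whether $t$ and $t'$ lie in the same interval, compute the $W$-weight in each case (constant $2$ when $m=m'$, and the four-term alternating sum producing $2\Phi_t^{\simul,(-m')}$ when $m\neq m'$), integrate out $t'$ to extract $\mu^{(m')}$, and identify $S^{(m,m')}_1$. Your anticipation of the $m\neq m'$ subtlety is exactly the point the paper handles via the definition \eqref{eqn:phi-simul}.
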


\begin{proof}{Proof of Lemma \ref{lemma:product-simul-inst}}
    
    The expected value of the product of $\mathcal{E}_{\mathrm{simul}}$ and $\mathcal{E}_{\mathrm{inst}} $ is equal to
    \begin{align*}
        \+E_{W,\varepsilon,t} \left[ \mathcal{E}_{\mathrm{simul}} \mathcal{E}_{\mathrm{inst}}  \right]
        = \frac{1}{n^2} \sum_{i,j} \underbrace{\+E_{W,t} \left[ \alpha_{t_i} \left(Y_{t_i}(\bm{W}, \bm{W}^\s_1, \cdots, \bm{W}^\s_K) - Y_{t_i}(\bm{W}, \bm{0}, \cdots, \bm{0})  \right) \delta^\gate \left(\frac{W_{t_j}}{\pi} - 1 \right) \right] }_{A_{ij}} \, .
    \end{align*}
    Using the definition of $\delta^\simul_{ t_i}(\bm{W})$ in Section \ref{subsec:block-stats} (note that $\delta^\simul_{ t_i}(\bm{W}) =\delta^\simul_{ t_i}(\bm{W}_t)$ using the assumption of non-anticipating outcomes), $A_{ij}$ is equal to
    \begin{align*}
        A_{ij}&
        = \delta^\gate \+E_{W,t}\left[\frac{W_{t_i} - \pi}{\pi (1-\pi)} \left(\frac{W_{t_j}}{\pi} - 1 \right) \delta^\simul_{ t_i}(\bm{W}) \right] \\ =& \delta^\gate \sum_{m = 1}^M \int_{t, t_j \in \mathcal{I}_{m}} \+E_{W}\left[\frac{W_{t_i} - \pi}{\pi (1-\pi)} \left(\frac{W_{t_j}}{\pi} - 1 \right) \delta^\simul_{ t_i}(\bm{W}) \mid t_i, t_j\right] f(t_i) f(t_j) dt_i dt_j \\ & +  \delta^\gate \sum_{m = 1}^M \sum_{m^\prime: m^\prime \neq m} \int_{t_i \in \mathcal{I}_{m}, t_j\in \mathcal{I}_{m^\prime}} \+E_{W}\left[\frac{W_{t_i} - \pi}{\pi (1-\pi)} \left(\frac{W_{t_j}}{\pi} - 1 \right) \delta^\simul_{ t_i}(\bm{W}) \mid t_i, t_j\right] f(t_i) f(t_j) dt_i dt_j  \\ =& 2\delta^\gate  \sum_{m = 1}^M \int_{t, t^\prime \in \mathcal{I}_{m}}  \+E_{W }\left[ \delta^\simul_{ t}(\bm{W}) 
        \right] f(t) f(t^\prime) dt dt^\prime \\
        & +2\delta^\gate  \sum_{m = 1}^M \sum_{m^\prime: m^\prime \neq m} \int_{t \in \mathcal{I}_{m}, t^\prime\in \mathcal{I}_{m^\prime}}  \Phi_{t}^{\simul,(-m^\prime)} f(t) f(t^\prime) dt dt^\prime \, ,
    \end{align*}
    following the definition of $\Phi_{t}^{\simul,(-m^\prime)}$ in Equation \eqref{eqn:phi-simul}.
    Therefore,
    \begin{align*}
        \+E_{W,\varepsilon,t} \left[ \mathcal{E}_{\mathrm{inst}} \cdot \mathcal{E}_{\mathrm{simul}}  \right] =&2\delta^\gate  \sum_{m = 1}^M \mu^{(m)} \int_{t \in \mathcal{I}_{m}}  \+E_{W }\left[ \delta^\simul_{ t}(\bm{W}) 
        \right] f(t)  dt \\
        & + 2\delta^\gate  \sum_{m = 1}^M \sum_{m^\prime: m^\prime \neq m} \mu^{(m^\prime)} \int_{t \in \mathcal{I}_{m}}  \Phi_{t}^{\simul,(-m^\prime)} f(t)  dt \\
        =& \delta^\gate  \sum_{m = 1}^M \sum_{m^\prime = 1}^M \mu^{(m^\prime)} S^{(m,m^\prime)}_{1}  \, 
    \end{align*}
    following the definition of $S^{(m,m^\prime)}_{1}$ in Equation \eqref{eqn:S-m-mprime-1}.
    We then finish the proof of Lemma \ref{lemma:product-simul-inst}. \halmos
\end{proof}

\begin{lemma}[Expected value of the product between $\mathcal{E}_{\mathrm{simul}}$ and $\mathcal{E}_{\mathrm{carryover}} $]\label{lemma:product-simul-carryover}
Under the assumptions in Theorem \ref{theorem:bias-variance-switchback}, we have 
    \begin{align*}
        & \+E_{W,\varepsilon,t} \left[ \mathcal{E}_{\mathrm{simul}} \cdot \mathcal{E}_{\mathrm{carryover}} \right] = \sum_{m = 1}^M 
 \sum_{m^\prime = 1}^M \left[ 2 \mu^{(m^\prime)}_{Y^\mathrm{ctrl}}  S^{(m,m^\prime)}_{1}  + \left( \Xi^{\inst,(m)}_\dem - \delta^\co \mu^{(m)} \right) S^{(m,m^\prime)}_{2} + S^{(m,m^\prime)}_{3}\right]  \, ,
    \end{align*}
    where $S^{(m,m^\prime)}_{1}$, $S^{(m,m^\prime)}_{2}$, and $S^{(m,m^\prime)}_{3}$ are defined in Equations \eqref{eqn:S-m-mprime-1}, \eqref{eqn:S-m-mprime-2}, and \eqref{eqn:S-m-mprime-3} respectively.
\end{lemma}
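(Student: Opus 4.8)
The plan is to mirror the computations in Lemmas \ref{lemma:product-two-carryover-effects} and \ref{lemma:product-simul-inst}. First I would write $\+E_{W,\varepsilon,t}[\mathcal{E}_{\mathrm{simul}}\mathcal{E}_{\mathrm{carryover}}]$ as $\tfrac{1}{n^2}\sum_{i,j}$ of the product of the two event-level summands, and use the i.i.d. sampling of event times to pass to a single representative pair of times, with the carryover-side time $t_i\in\mathcal{I}_m$ and the simultaneous-side time $t_j\in\mathcal{I}_{m^\prime}$. Taking the inner expectation over $\bm{W}^\s_1,\cdots,\bm{W}^\s_K$ by the law of total expectation replaces the simultaneous factor by $\alpha_{t_j}\delta^\simul_{t_j}(\bm{W})$, exactly as in Lemma \ref{lemma:simul-bias} (using the non-anticipating-outcome assumption so that $\delta^\simul_{t_j}(\bm{W})=\delta^\simul_{t_j}(\bm{W}_{t_j})$).

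Next I would substitute the decomposition of the carryover summand already derived in Lemma \ref{lemma:carryover-bias},
\[
Y_{t_i}(\bm{W},\bm{0},\cdots,\bm{0})-W_{t_i}\delta^\gate = (\delta^\inst_{t_i}-\delta^\inst)W_{t_i}+\big(\delta^\co_{t_i}(\bm{W})-\delta^\co W_{t_i}\big)+Y_{t_i}(\bm{0},\cdots,\bm{0}),
\]
which splits the representative term into an instantaneous-deviation piece, a genuine-carryover piece $\delta^\co_{t_i}(\bm{W})$, the linear piece $-\delta^\co W_{t_i}$, and a control piece $Y_{t_i}(\bm{0},\cdots,\bm{0})$. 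Each is paired with $\alpha_{t_j}\delta^\simul_{t_j}(\bm{W})$ and evaluated through $\+E_W[\,\cdot\mid t_i,t_j]$, separating the diagonal $m=m^\prime$ from $m\neq m^\prime$ and using $\alpha_{t_i}W_{t_i}=W_{t_i}/\pi$ with $\pi=1/2$. The instantaneous-deviation and $-\delta^\co W_{t_i}$ pieces both carry a factor $W_{t_i}$, so they condition on $W^{(m)}=1$ and generate $S^{(m,m^\prime)}_{2}$; integrating $(\delta^\inst_{t_i}-\delta^\inst)$ and the constant $-\delta^\co$ over $\mathcal{I}_m$ supplies the coefficient $\Xi^{\inst,(m)}_\dem-\delta^\co\mu^{(m)}$. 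The genuine-carryover piece, a product of the two $\bm{W}$-dependent quantities $\delta^\co_{t_i}(\bm{W})$ and $\delta^\simul_{t_j}(\bm{W})$, produces $S^{(m,m^\prime)}_{3}$ via $\+E_W[\delta^\co_t(\bm{W})\delta^\simul_{t^\prime}(\bm{W})]$ on the diagonal and $\Phi_{t,t^\prime}^{\co,\simul}$ off it. The control piece carries no residual $W_{t_i}$, so the product of the two $\alpha$-factors marginalizes $W^{(m)}$ and yields $\+E_W[\delta^\simul_t(\bm{W})]$ (diagonal) or the second-difference $\Phi$ (off-diagonal) acting on $\delta^\simul_{t_j}$; integrating $Y_{t_i}(\bm{0},\cdots,\bm{0})$ over $\mathcal{I}_m$ gives $2\mu^{(m)}_{Y^\mathrm{ctrl}}S^{(m^\prime,m)}_{1}$, which after relabeling the dummy indices $m\leftrightarrow m^\prime$ is the stated $2\mu^{(m^\prime)}_{Y^\mathrm{ctrl}}S^{(m,m^\prime)}_{1}$ term. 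Summing the contributions over $m,m^\prime$ then assembles the three stated terms.

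The main obstacle is the off-diagonal ($m\neq m^\prime$) bookkeeping, especially the term $S^{(m,m^\prime)}_{3}$. Because $\delta^\co_{t}(\bm{W})$ and $\delta^\simul_{t^\prime}(\bm{W})$ each depend on the full primary assignment $\bm{W}$ through carryover from every interval, the expectation does not factor when $t$ and $t^\prime$ lie in different intervals; instead one must expand $\+E_W$ over all four joint values of $(W^{(m)},W^{(m^\prime)})$ while marginalizing $\bm{W}^{(-m,-m^\prime)}$ to recover exactly the second-difference operators $\Phi_t^{\simul,(-m^\prime)}$, $\Phi_{t^\prime}^{\simul,(-m^\prime)\dagger}$, and $\Phi_{t,t^\prime}^{\co,\simul}$ of Equations \eqref{eqn:phi-simul}, \eqref{eqn:phi-simul-2}, and \eqref{eqn:phi-co-simul}. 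Tracking which of the two $\alpha$-factors is collapsed to $W/\pi$ (a conditioning) versus left as a difference determines whether a given piece lands on $S^{(m,m^\prime)}_{1}$, $S^{(m,m^\prime)}_{2}$, or $S^{(m,m^\prime)}_{3}$, and is the delicate part. A secondary check is that the $1/\pi$ and $1/(1-\pi)$ factors collapse to the coefficients $2$ and $4$ at $\pi=1/2$ that normalize the definitions of $S^{(m,m^\prime)}_{1}$, $S^{(m,m^\prime)}_{2}$, and $S^{(m,m^\prime)}_{3}$.
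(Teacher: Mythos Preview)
Your proposal is correct and follows essentially the same approach as the paper's proof: the same four-piece split of the carryover summand (instantaneous deviation, genuine carryover $\delta^\co_{t}(\bm{W})$, the linear $-\delta^\co W_{t_i}$, and the control baseline), the same diagonal/off-diagonal case analysis producing $S^{(m,m')}_2$, $S^{(m,m')}_3$, and $S^{(m,m')}_1$ respectively, and the same dummy-index swap $m\leftrightarrow m'$ on the control term. The paper organizes the pieces as $A_{1,ij}$, $A_{2,ij}=B_{1,ij}-B_{2,ij}$, and $A_{3,ij}$, but this is only a labeling difference.
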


\begin{proof}{Proof of Lemma \ref{lemma:product-simul-carryover}}
The expected value of the product of $\mathcal{E}_{\mathrm{simul}} $ and $\mathcal{E}_{\mathrm{carryover}}$ is equal to 
    \begin{align*}
        & \+E_{W,\varepsilon,t} \left[ \mathcal{E}_{\mathrm{carryover}}\mathcal{E}_{\mathrm{simul}} \right] \\ =& \frac{1}{n^2} \sum_{i,j} \underbrace{\+E_{W,t}\left[ \alpha_{t_i} (\delta^\inst_{t_i} - \delta^\inst) W_{t_i}  \alpha_{t_j} \left(Y_{t_j}(\bm{W}, \bm{W}^\s_1, \cdots, \bm{W}^\s_K) - Y_{t_j}(\bm{W}, \bm{0}, \cdots, \bm{0})  \right)  \right]}_{\coloneqq A_{1,ij}} \\
        +&  \frac{1}{n^2} \sum_{i,j} \underbrace{\+E_{W,t}\left[\alpha_{t_i}( \delta^\co_{t_i}(\bm{W}) - \delta^\co W_{t_i})  \alpha_{t_j} \left(Y_{t_j}(\bm{W}, \bm{W}^\s_1, \cdots, \bm{W}^\s_K) - Y_{t_j}(\bm{W}, \bm{0}, \cdots, \bm{0})  \right)  \right] }_{\coloneqq A_{2,ij}} \\
        +&  \frac{1}{n^2} \sum_{i,j} \underbrace{\+E_{W,t}\left[\alpha_{t_i} Y_{t_i}(\bm{0}, \cdots, \bm{0})  \alpha_{t_j} \left(Y_{t_j}(\bm{W}, \bm{W}^\s_1, \cdots, \bm{W}^\s_K) - Y_{t_j}(\bm{W}, \bm{0}, \cdots, \bm{0})  \right)  \right] }_{\coloneqq A_{3,ij}} \, .
    \end{align*}

    For $A_{1,ij}$, we have 
    \begin{align*}
        A_{1,ij} =& \+E_{W,t} \left[\frac{W_{t_i}}{\pi} \frac{W_{t_j} - \pi}{\pi (1 - \pi)} (\delta^\inst_{t_i} - \delta^\inst) \left(Y_{t_j}(\bm{W}, \bm{W}^\s_1, \cdots, \bm{W}^\s_K) - Y_{t_j}(\bm{W}, \bm{0}, \cdots, \bm{0})  \right)\right] \\
        =& \+E_{W,t} \left[\frac{W_{t_i}}{\pi} \frac{W_{t_j} - \pi}{\pi (1 - \pi)} (\delta^\inst_{t_i} - \delta^\inst) \delta^\simul_{ t_j}(\bm{W}) \right] \\
        =& \frac{1}{\pi} \sum_{m = 1}^M \int_{t, t^\prime \in \mathcal{I}_{m}} (\delta^\inst_{t} - \delta^\inst) \+E_{\bm{W}^{(-m)} }\Big[ \delta^\simul_{ t^\prime}(\bm{W}^{(-m)},  W^{(m)} = 1 ) \Big] f(t) f(t^\prime) dt dt^\prime \\
        & + \sum_{m = 1}^M \sum_{m^\prime: m^\prime \neq m} \int_{t \in \mathcal{I}_{m}, t^\prime \in \mathcal{I}_{m^\prime}} (\delta^\inst_{t} - \delta^\inst)\Phi_{t^\prime}^{\simul,(-m^\prime)\dagger} f(t) f(t^\prime) dt dt^\prime \tag{by the definition of $\Phi_{t^\prime}^{\simul,(-m^\prime)\dagger}$ in Equation \eqref{eqn:phi-simul-2}} \\
        =& 2 \sum_{m = 1}^M \Xi^{\inst,(m)}_\dem \int_{t^\prime \in \mathcal{I}_{m}} \+E_{\bm{W}^{(-m)} }\Big[ \delta^\simul_{ t^\prime}(\bm{W}^{(-m)},  W^{(m)} = 1 ) \Big] f(t^\prime)  dt^\prime \\
        & + 2\sum_{m = 1}^M \Xi^{\inst,(m)}_\dem 
 \sum_{m^\prime: m^\prime \neq m} \int_{ t^\prime \in \mathcal{I}_{m^\prime}} \Phi_{t^\prime}^{\simul,(-m^\prime)\dagger} f(t^\prime) dt^\prime \\
 =&  \sum_{m = 1}^M 
 \sum_{m^\prime = 1}^M \Xi^{\inst,(m)}_\dem S^{(m,m^\prime)}_{2} \, ,
    \end{align*}
    following the definition of $S^{(m,m^\prime)}_{2}$ in Equation \eqref{eqn:S-m-mprime-2}.

    For $A_{2,ij}$, we have
    \begin{align*}
        A_{2,ij} =&  \underbrace{\+E_{W,t} \left[\frac{W_{t_i} - \pi}{\pi (1 - \pi)} \frac{W_{t_j} - \pi}{\pi (1 - \pi)}  \delta^\co_{t_i}(\bm{W})  \delta^\simul_{ t_j}(\bm{W}) \right]}_{\coloneqq B_{1,ij}}   - \underbrace{\delta^\co\+E_{W,t} \left[\frac{W_{t_i} }{\pi } \frac{W_{t_j} - \pi}{\pi (1 - \pi)}    \delta^\simul_{ t_j}(\bm{W}) \right] }_{\coloneqq B_{2,ij}} \, .
    \end{align*}
    For $B_{1,ij}$, we have 
    \begin{align*}
        B_{1,ij} =& \+E_{W,t} \left[\frac{W_{t_i} - \pi}{\pi (1 - \pi)} \frac{W_{t_j} - \pi}{\pi (1 - \pi)}  \delta^\co_{t_i}(\bm{W})  \delta^\simul_{ t_j}(\bm{W}) \right] \\
        =& 4\sum_{m = 1}^M \int_{t, t^\prime \in \mathcal{I}_{m}} \+E_W\left[\delta^\co_{t}(\bm{W})  \delta^\simul_{ t^\prime}(\bm{W}) \right] f(t) f(t^\prime) dt dt^\prime \tag{$\pi = 1/2$} \\
        & +4\sum_{m = 1} \sum_{m^\prime: m^\prime \neq m} \int_{t \in \mathcal{I}_{m}, t^\prime \in \mathcal{I}_{m^\prime}} \Phi_{t, t^\prime}^{\co,\simul} f(t) f(t^\prime) dt dt^\prime \tag{by the definition of $\Phi_{t, t^\prime}^{\co,\simul}$ in Equation \eqref{eqn:phi-co-simul}} \\
        =& \sum_{m = 1}^M \sum_{m^\prime = 1}^M S^{(m,m^\prime)}_{3} \, ,
    \end{align*}
    following the definition of $S^{(m,m^\prime)}_{3}$ in Equation \eqref{eqn:S-m-mprime-3}.

    For $B_{2,ij}$, we have 
    \begin{align*}
        B_{2,ij} =& \delta^\co\+E_{W,t} \left[\frac{W_{t_i} }{\pi } \frac{W_{t_j} - \pi}{\pi (1 - \pi)}    \delta^\simul_{ t_j}(\bm{W}) \right] \\
        =& 2 \delta^\co \sum_{m = 1}^M  \mu^{(m)}\int_{t^\prime \in \mathcal{I}_{m}} \+E_{\bm{W}^{(-m)} }\bigg[ \delta^\simul_{ t^\prime}(\bm{W}^{(-m)},  W^{(m)} = 1 ) \bigg] f(t^\prime)  dt^\prime \\
        & + 2\delta^\co \sum_{m = 1}^M \mu^{(m)} \sum_{m^\prime: m^\prime \neq m} \int_{ t^\prime \in \mathcal{I}_{m^\prime}} \Phi_{t^\prime}^{\simul,(-m^\prime)\dagger} f(t^\prime) dt^\prime \tag{by the definition of $\Phi_{t^\prime}^{\simul,(-m^\prime)\dagger}$ in Equation \eqref{eqn:phi-simul-2}} \\ =&  \sum_{m = 1}^M 
 \sum_{m^\prime = 1}^M \delta^\co  \mu^{(m)} S^{(m,m^\prime)}_{2}\, .
    \end{align*}
    
    For $A_{3,ij}$, we have 
    \begin{align*}
        A_{3,ij} =& \+E_{W,t}\left[\alpha_{t_i} Y_{t_i}(\bm{0}, \cdots, \bm{0})  \alpha_{t_j} \left(Y_{t_j}(\bm{W}, \bm{W}^\s_1, \cdots, \bm{W}^\s_K) - Y_{t_j}(\bm{W}, \bm{0}, \cdots, \bm{0})  \right)  \right] \\
        =& \+E_{W,t} \left[\frac{W_{t_i} - \pi}{\pi (1 - \pi)}  \frac{W_{t_j} - \pi}{\pi (1 - \pi)} Y_{t_i}(\bm{0}, \cdots, \bm{0}) \left(Y_{t_j}(\bm{W}, \bm{W}^\s_1, \cdots, \bm{W}^\s_K) - Y_{t_j}(\bm{W}, \bm{0}, \cdots, \bm{0})  \right)\right] \\
        =&  4 \sum_{m = 1}^M \int_{t, t^\prime \in \mathcal{I}_{m}} Y_{t}(\bm{0}, \cdots, \bm{0}) \+E_W\left[\delta_{t^\prime}^\simul(\bm{W}) \right] f(t) f(t^\prime) dt dt^\prime \\
        & + 4 \sum_{m = 1}^M \sum_{m^\prime: m^\prime \neq m} \int_{t \in \mathcal{I}_{m}, t^\prime \in \mathcal{I}_{m^\prime}} Y_{t}(\bm{0}, \cdots, \bm{0}) \Phi_{t^\prime}^{\simul,(-m)} f(t) f(t^\prime) dt dt^\prime \tag{by the definition of $\Phi_{t^\prime}^{\simul,(-m)}$ in Equation \eqref{eqn:phi-simul}}  \\
        =& 4 \sum_{m = 1}^M \mu^{(m)}_{Y^\mathrm{ctrl}} \int_{t^\prime \in \mathcal{I}_{m}} \+E_W\left[\delta_{t^\prime}^\simul(\bm{W}) \right] f(t^\prime)  dt^\prime  + 4\sum_{m = 1}^M \mu^{(m)}_{Y^\mathrm{ctrl}}
 \sum_{m^\prime: m^\prime \neq m} \int_{ t^\prime \in \mathcal{I}_{m^\prime}} \Phi_{t^\prime}^{\simul,(-m)} f(t^\prime) dt^\prime \\
 =& 2 \sum_{m = 1}^M \sum_{m^\prime = 1}^M \mu^{(m^\prime)}_{Y^\mathrm{ctrl}} S^{(m,m^\prime)}_{1}  \, ,
    \end{align*}
    following the definition of $S^{(m,m^\prime)}_{1}$ in Equation \eqref{eqn:S-m-mprime-1}. Then we have
    \begin{align*}
        & \+E_{W,\varepsilon,t} \left[ \mathcal{E}_{\mathrm{simul}} \mathcal{E}_{\mathrm{carryover}}  \right] \\ 
        =& A_{1,ij} + \underbrace{B_{1,ij} - B_{2,ij}}_{A_{2,ij}} + A_{3,ij} \\
 =&  \sum_{m = 1}^M 
 \sum_{m^\prime = 1}^M \Xi^{\inst,(m)}_\dem S^{(m,m^\prime)}_{2} + \sum_{m = 1}^M \sum_{m^\prime = 1}^M S^{(m,m^\prime)}_{3} - \sum_{m = 1}^M 
 \sum_{m^\prime = 1}^M \delta^\co  \mu^{(m)} S^{(m,m^\prime)}_{2} + 2 \sum_{m = 1}^M \sum_{m^\prime = 1}^M \mu^{(m^\prime)}_{Y^\mathrm{ctrl}} S^{(m,m^\prime)}_{1} \\
 =& \sum_{m = 1}^M 
 \sum_{m^\prime = 1}^M \left[ 2 \mu^{(m^\prime)}_{Y^\mathrm{ctrl}}  S^{(m,m^\prime)}_{1}  + \left( \Xi^{\inst,(m)}_\dem - \delta^\co \mu^{(m)} \right) S^{(m,m^\prime)}_{2} + S^{(m,m^\prime)}_{3}\right]  \, .
    \end{align*}
We then finish the proof of Lemma \ref{lemma:product-simul-carryover}. 
    \halmos
\end{proof}

\begin{proof}{Proof of Theorem \ref{theorem:bias-variance-switchback}}
Following Lemmas \ref{lemma:second-moment-inst}, \ref{lemma:second-moment-carryover}, and \ref{lemma:product-two-carryover-effects}, the second moment of $\mathcal{E}_{\mathrm{inst}}+\mathcal{E}_{\mathrm{carryover}}$ is 
\begin{align*}
    &\+E_{W,\varepsilon,t} \left[ \left(\mathcal{E}_{\mathrm{inst}}+\mathcal{E}_{\mathrm{carryover}}\right)^2 \right] 
    =\+E_{W,\varepsilon,t} \left[ \left(\mathcal{E}_{\mathrm{inst}}\right)^2 \right] +  \+E_{W,\varepsilon,t} \left[ \left(\mathcal{E}_{\mathrm{carryover}}\right)^2 \right] +   2 \+E_{W,\varepsilon,t} \left[ \mathcal{E}_{\mathrm{inst}} \mathcal{E}_{\mathrm{carryover}} \right] \\
    =& \sum_{m = 1}^M \left(\Xi^{(m)} + 2  \mu^{(m)}_{Y^\mathrm{ctrl}}  \right)^2 + \left(\sum_{m = 1}^M I^{(m)} - \delta^\co\right)^2  + \sum_{m=1}^M \sum_{m^\prime \neq m} \left(\left[I^{(m,m^\prime)}\right]^2 + I^{(m,m^\prime)} I^{(m^\prime, m)}\right) \, ,
\end{align*}
where $\Xi^{(m)} = \Xi^{\inst,(m)}_\dem  + \Xi^{\co,(m)}_\dem +  \delta^\gate \mu^{(m)}  $.

Following Lemmas \ref{lemma:mean-measurement-error}-\ref{lemma:product-simul-carryover}, the mean-squared error of $\hat{\delta}^\gate$ is equal to
    \begin{align*}
        \+E_{W,\varepsilon,t}&\left[\left(\hat{\delta}^\gate  - \delta^\gate \right)^2\right]= \frac{4}{n} \sum_{m=1}^M \left(V^{(m)}+ (n-1)C^{(m)}\right) \tag{equals to $\+E_{W,\varepsilon,t} \left[ \left( \mathcal{E}_{\mathrm{meas}}  \right)^2 \right] $} \\
        & + \sum_{m = 1}^M \left(\Xi^{(m)} + 2  \mu^{(m)}_{Y^\mathrm{ctrl}}  \right)^2 + \left(\sum_{m = 1}^M I^{(m)} - \delta^\co\right)^2   + \sum_{m=1}^M \sum_{m^\prime \neq m} \left(\left[I^{(m,m^\prime)}\right]^2 + I^{(m,m^\prime)} I^{(m^\prime, m)}\right)  \tag{equals to $\+E_{W,\varepsilon,t} \left[ \left(\mathcal{E}_{\mathrm{inst}}+\mathcal{E}_{\mathrm{carryover}}\right)^2 \right] $} \\
        &+ \sum_{m = 1}^M \sum_{m^\prime = 1}^M S^{(m,m^\prime)}_{\mathrm{var}} \tag{equals to $\+E_{W,\varepsilon,t} \left[ \left(\mathcal{E}_{\mathrm{simul}}\right)^2 \right]$} \\
        %
        & + 2 \sum_{m = 1}^M \sum_{m^\prime = 1}^M \delta^\gate \mu^{(m^\prime)} S^{(m,m^\prime)}_{1} \tag{equals to $2 \+E_{W,\varepsilon,t} \left[ \mathcal{E}_{\mathrm{simul}} \mathcal{E}_{\mathrm{inst}} \right]  $} \\
        & + 2 \sum_{m = 1}^M 
 \sum_{m^\prime = 1}^M \left[ 2 \mu^{(m^\prime)}_{Y^\mathrm{ctrl}}  S^{(m,m^\prime)}_{1}  + \left( \Xi^{\inst,(m)}_\dem - \delta^\co \mu^{(m)} \right) S^{(m,m^\prime)}_{2} + S^{(m,m^\prime)}_{3}\right]  \tag{equals to $2 \+E_{W,\varepsilon,t} \left[ \mathcal{E}_{\mathrm{simul}} \mathcal{E}_{\mathrm{carryover}} \right]$}  \\
 =& \var(\mathcal{E}_\mathrm{meas}) +  \bias(\mathcal{E}_\mathrm{carryover}) ^2  + \var(\mathcal{E}_\mathrm{inst}+\mathcal{E}_\mathrm{carryover})  + \+E[\mathcal{E}_\mathrm{simul}^2] + 2 \+E[(\mathcal{E}_\mathrm{inst}+\mathcal{E}_\mathrm{carryover})\cdot \mathcal{E}_\mathrm{simul}] \, .
    \end{align*}
    using the definition of $S^{(m,m^\prime)}_{\mathrm{var}}$ in Equation \eqref{eqn:S-m-mprime-var} and the definition of $S^{(m,m^\prime)}_{\mathrm{cov}}$ in Equation \eqref{eqn:S-m-mprime-cov}.
    We then finish the proof of Theorem \ref{theorem:bias-variance-switchback}. \halmos
\end{proof}
 \end{APPENDICES}

\end{document}